\documentclass[runningheads]{llncs}

\usepackage[numbers]{natbib}
\usepackage{url}
\usepackage{amssymb}
\usepackage{amsmath}
\usepackage{stmaryrd}
\usepackage{syntax}
\usepackage{semantic}
\usepackage{url}
\usepackage{xcolor}
\usepackage{nicefrac}
\bibliographystyle{abbrv}
\newcommand{\relu}{\unlhd}
\newenvironment{restate}[1]%
{\begin{trivlist}\item[]{\normalsize\bf Restatement of #1}\hspace*{4mm}\it}%
{\end{trivlist}}
%\def\labelenumi{{\rm(\arabic{enumi})}}

%\makeatletter
%\def\abitsloppy{%
%  \tolerance 350%
%  \emergencystretch 4em%
%  \hfuzz .5\p@
%  \vfuzz\hfuzz}
%\makeatother

\newcommand{\modownarrow}{\mathord{\downarrow}}

\begin{document}

%\title{(On the difficulty of) Non-terminating probabilistic programs in machine learning}
\title{Weakest Preexpectation Semantics \\ for Bayesian Inference\thanks{This work is supported by the ERC Advanced Grant Project FRAPPANT (project number 787914).}}
\subtitle{Conditioning, Continuous Distributions and Divergence}
\author{Marcin Szymczak \and Joost-Pieter Katoen}
\institute{Software Modelling and Verification Group \\ RWTH Aachen University \\ 52056 Aachen, Germany}

\maketitle

\begin{abstract}
We present a %the first 
semantics of a probabilistic while-language with soft conditioning and continuous distributions which handles programs diverging with positive probability. To this end, we extend the probabilistic guarded command language (pGCL) with draws from continuous distributions and a score operator. The main contribution is an extension of the standard weakest preexpectation semantics to support these constructs. As a sanity check of our semantics, we define an alternative trace-based semantics of the language, and show that the two semantics are equivalent. Various examples illustrate the applicability of the semantics.
\end{abstract} 

\section{Introduction}

%Intro: current research on PP for ML
Research on semantics of probabilistic languages for machine learning 
\cite{DBLP:journals/corr/BorgstromGGMG13, R2, DBLP:conf/esop/TorontoMH15, DBLP:conf/lics/HeunenKSY17,
DBLP:conf/icfp/BorgstromLGS16} has so far
focused almost exclusively on almost-surely terminating programs. These programs terminate on all possible inputs with probability one. 
This seems a reasonable assumption, because
not only most probabilistic models used in practice terminate with probability one, but programs which may diverge
with a positive probability also make not much sense in the context of probabilistic inference.

%Motivation
However, one cannot simply assume that in the context of statistical probabilistic programming, 
divergence is a non-existing issue which can be ignored completely. For one thing, models which are
not guaranteed to terminate actually exist, and are not merely degenerate cases. Even if one cannot 
apply inference in this case, being able to reason about such programs is important, as it helps to define 
suitable approximations and check their correctness. Moreover, the line dividing almost surely terminating
and possibly diverging programs can sometimes be very thin and a small change to some parameter values
may make a terminating program diverge.

%Motivating example
To make a case for potentially diverging probabilistic programs, consider a variation of the tortoise
and hare problem described by Icard~\cite{DBLP:conf/cogsci/Icard17} as a simple problem in intuitive physics:
a tortoise is walking at some low constant speed and a hare, which was initially behind the tortoise
and moves forward with random, fast strides, is trying to catch it.
Assuming that the tortoise is moving at a constant speed of $1{+}\mathtt{e}$ cm per second (where $\mathtt{e}$ is some small constant), and each second 
the hare moves with probability $\frac{1}{4}$ by a random Gaussian-distributed distance, being 4cm on average,
%at a random speed between $1$ and $7$ cm per second in each time step, 
we would like to calculate the average time after which the hare will catch the tortoise. We can model this problem by the following probabilistic program:

%\begin{verbatim}
%t := 5.0; 
%h := 0.0;
%dt := 0.05;
%time := 0.0;
%while (h < t)
%{
%  t := t + 1*dt + e;
%  if (flip(0.25*dt)) 
%    h := h + Uniform(1,7)*dt;
%  time := time + dt;
%}
%return time
%\end{verbatim}
\begin{verbatim}
t := 5.0; 
h := 0.0;
time := 0.0;
while (h < t)
{
  t := t + 1 + e;
  if (flip(0.25)) 
    h := h + Gaussian(4,2);
  time := time + 1;
}
return time
\end{verbatim}

%\noindent where $\mathtt{dt}$ is the time step. 
\noindent where $\mathtt{flip}(p)$ returns $\mathtt{true}$ with probability $p$ and $\mathtt{false}$
with probability $1-p$ and $\mathtt{Gaussian}(\mu,\sigma)$ draws a random value from the Gaussian
distribution with mean $\mu$ and variance $\sigma$.
It can be proven that if $\mathtt{e} = 0$,
the program terminates with probability one, but if $\mathtt{e}> 0$, the program
may diverge with positive probability, no matter how small $\mathtt{e}$ is. In other words,
if the tortoise moves at a speed strictly greater than one, the hare may never catch it.

The above program is a simple forward simulation, which does not use conditioning at all. However,
we may also invert the problem and ask what was the tortoise's head start given that the hare caught
the tortoise in around one minute. This could be modelled by the following probabilistic program:
%
%\begin{verbatim}
%t := Gaussian(5.0,2.0);   
%h := 0.0;
%dt := 0.05;
%time := 0.0;
%while (h < t)
%{
%  t := t + 1*dt + e;
%  if (flip(0.25*dt)) 
%    h := h + Uniform(1,7)*dt;
%  time := time + dt;
%}
%score(Gaussian_pdf(time, 10.0, 60.0));
%return t
%\end{verbatim}

\begin{verbatim}
t := Gaussian(5,2);   
h := 0.0;
time := 0.0;
while (h < t)
{
  t := t + 1 + e;
  if (flip(0.25)) 
    h := h + Gaussian(4,2);
  time := time + 1;
}
score(Gaussian_pdf(time, 10.0, 60.0));
return t
\end{verbatim}
\noindent where $\mathtt{score}$ intuitively multiplies the probability of the current program run by its argument,
and $\mathtt{Gaussian\_pdf}(\mu, \sigma, x) = \frac{1}{\sqrt{2\pi}\sigma} e^{- \frac{1}{2 \sigma^2} (\mu - x)^2}$ is the value of the density function of the Gaussian distribution
with mean $\mu$ and variance $\sigma$ at point $x$.
Now, although we assume that the hare caught the tortoise, the program may still diverge with a positive probability.
In order to reason about programs like this, we need a framework which supports soft 
conditioning---as modelled by $\mathtt{score}$ in our setting---and is able to handle diverging programs.

%TODO: Double check and improve it
As a more complicated example, let us consider the inverse intuitive physics example from \cite{probmods2}\footnote{available online under \url{http://probmods.org/chapters/conditioning.html}.}.
In this model, using noisy approximate Newtonian dynamics, a ball is falling on the ground from a certain height, potentially hitting some fixed obstacles on the 
way. Given the observed final position of the ball, we want to find the distribution on initial locations of the ball. Similarly
to the above example, this model is implemented as a simulation of the ball's movement from the
random initial position (sampled from the prior), followed by soft conditioning on the ball's final position. Depending on
the shapes and locations of the obstacles and the size of the floor on which the ball is supposed to land,
the program may not terminate---the ball may get stuck in the air, blocked by obstacles, or may fail to land on the floor
and keep falling indefinitely.

%THE ALTERNATIVE MODEL:
%// makes a floor with evenly spaced buckets
%var bins = function (xmin, xmax, width) {
%  return ((xmax < xmin + width)
%          // floor
%          ? {shape: 'rect', static: true, dims: [400, 10], x: 175, y: 500}
%          // bins
%          : [{shape: 'rect', static: true, dims: [1, 10], x: xmin, y: 490}].concat(bins(xmin + width, xmax, width))
%         )
%}
%
%// add three fixed circles
%var world = [{shape: 'circle', static: true, dims: [60], x: 60, y: 200},
%             {shape: 'circle', static: true, dims: [40], x: 170, y: 200},
%             {shape: 'circle', static: true, dims: [30], x: 300, y: 300}].concat(bins(-1000, 1000, 25))
%
%var randomBlock = function () {
%  return {shape: 'circle', static: false, dims: [10], x: uniform(0, worldWidth), y: 0}
%}
%
%physics.animate(1000, [randomBlock()].concat(world))

Issues with program divergence may also appear when implementing models which are not designed to be possibly diverging---be it because of mistakes in the implementation or intricacies and subtleties of the model itself.
For instance, the implementation of the Pitman-Yor process~\cite{Ishwaran01} on \url{forestdb.org}, an online repository of probabilistic models in Church \cite{DBLP:conf/uai/GoodmanMRBT08} and WebPPL \cite{dippl}, occasionally fails.
According to a note on the website, a possible cause is that the program may not almost surely terminate.
%which maintainers of the website attribute to the fact that the implementation possibly does not terminate with 
%probability 1.

Another issue related to program divergence is that some implementations of sampling-based algorithms do not handle detected divergence correctly---instead of throwing an appropriate error message, they simply ignore diverging runs after a given number of steps, which leads to misleading inference results. 
For instance, consider the following WebPPL program taken from~\cite{DBLP:conf/lics/OlmedoKKM16}:
\begin{verbatim}
var three_calls = function () {
  if (flip(0.5)) {
     return 0;
  }
  else {
    return 1 + three_calls() + three_calls() + three_calls();
  }
}
var model = function () {three_calls()}
\end{verbatim}
This program does not almost surely terminate and its expected outcome is infinite. 
However, applying WebPPL's enumeration (exact) inference algorithm with a bounded maximum number of executions to this program gives a distribution assigning a probability of over 0.8 to outcome 0 and minuscule probabilities to other outcomes. 
No warning about the maximum number of evaluation steps being reached is given.

%diverging runs are quietly ignored, as the execution is finally cut off without communicating the issue to the 
%user. This may lead to the inference engine 

%Current research on non-terminating programs
There has been research on the semantics of non-terminating probabilistic programs \cite{Morgan96, Morgan96b,
DBLP:journals/pe/GretzKM14}.
However, this research was mostly aimed at analysing randomised algorithms, rather than Bayesian inference. As
a consequence, most languages used in this line of research have no features such as  continuous distributions
and soft conditioning, which are the cornerstone of Bayesian probabilistic programming. While some authors
consider non-terminating programs in the context of Bayesian reasoning~\cite{Jansen15, Olmedo18, %Batz18-no divergence here!, 
KaminskiPhD}, they normally
restrict their attention to discrete programs with hard conditioning by means of Boolean predicates.
So far, to our knowledge, the only work which comprehensively treats non-termination in the context
of semantics of Bayesian probabilistic programming with continuous distributions is \cite{BichselGV18}.
This paper defines a semantics which calculates the probability of divergence and the probability of
failing a hard constraint explicitly. Soft constraints are not considered for diverging programs, as
the authors argue that the probability of divergence normalised by soft constraints may be undefined
for some programs if unbounded scores are allowed. The authors do not attempt to restrict the
language so that scores would make sense for diverging programs.

%TODO explain what we do diferently!

%In this paper, we discuss how the possibility of non-termination affects the semantics of Bayesian 
%probabilistic programs. We aim to define the first semantics of a probabilistic language supporting both continuous which is designed to handle diverging programs correctly.

%In this paper, we aim to define the first semantics of a probabilistic language supporting both continuous distributions
%and hard and soft conditioning which is designed to handle diverging programs correctly. 

%In this paper, we investigate how the possibility of divergence affects the semantics of Bayesian probabilistic
%programs, or, equivalently, how the addition of 

In this paper, we investigate how the addition of continuous distributions and soft conditioning, necessary for
most  machine learning applications, affects the semantics of potentially diverging procedural probabilistic programs.
We discuss why
dealing with divergence in programs with soft conditioning is very difficult (if at all possible) and why one cannot
expect any sampling-based semantics to fully correspond to the intuitive meaning of a potentially diverging program.
Nevertheless, we also aim to define the first semantics of a probabilistic language supporting both continuous distributions
and hard and soft conditioning which is designed to handle diverging programs. We discuss the strengths
and limitations of this semantics and state in what sense it can be considered correct.

We provide both a denotational weakest preexpectation semantics \`a la Kozen \cite{DBLP:journals/jcss/Kozen81} and McIver and Morgan~\cite{Morgan96b} %??
together with an operational sampling-based semantics, and prove that the two semantics are equivalent. 
Hence, \emph{this paper extends the standard weakest preexpectation framework to programs with continuous distributions and soft conditioning while being able to treat program divergence.}

\section{A Bayesian probabilistic while-language} \label{section:language}

\newcommand{\ccpgclns}{\texttt{PL}}
\newcommand{\ccpgcl}{\ccpgclns\ }

We start off by presenting the syntax of a simple probabilistic while-language, simply called \ccpgclns, which will be used throughout this paper.
Besides the usual ingredients such as skip and diverge statements, assignments, sequential composition, conditional statements and guarded loops, the language contains three additional constructs: (a) \emph{random draws} from \emph{continuous distributions}, (b) \emph{observations} encoding hard conditioning, and (c) a $\emph{score}$ function used for soft conditioning. These forms of conditioning are central to Bayesian inference.
\begin{figure}[htb]
\centering
\begin{minipage}{10cm}
\begin{grammar}
<$C$>  ::=
%\alt 
$\mathtt{skip}$ \hfill no-operation
\alt $\mathtt{diverge}$ \hfill divergence
\alt $x := E$ \hfill variable assignment
\alt $x :\approx U$ \hfill random variable assignment
\alt $\mathtt{observe}(\phi)$ \hfill hard conditioning
%\alt $\mathtt{for}\ i\ \mathtt{in}\ 1..x \{C\}$
\alt $\mathtt{score}(E)$ \hfill soft conditioning
\alt $C_1;C_2$ \hfill sequential composition
\alt $\mathtt{if}(\phi)\{ C\}$ \hfill conditional
\alt $\mathtt{while}(\phi)\{C\}$ \hfill guarded loop
\end{grammar}
\end{minipage}
\caption{Syntax of \ccpgcl}
\label{figure:syntax-ccpgcl}
\end{figure}

The syntax is presented in Fig.~\ref{figure:syntax-ccpgcl} where $C$, $C_1$, and $C_2$ are programs, $x$ is a program variable, $U$ denotes the continuous uniform distribution on the unit interval, $\phi$ is a predicate over the program variables, and $E$ is an arithmetic expression over the program variables.
We do not specify the syntax of expressions $E$ and predicates $\phi$---we assume these may
be arbitrary, as long as the corresponding evaluation functions are measurable (as explained later).

A few remarks concerning the syntax are in order.
In order to simplify the approximation of while loops (as used later), we use the $\mathtt{if}$ operator without an $\mathtt{else}$ clause.
%% rather than the usual $\mathtt{if}(\phi)\{ C_1\} \mathtt{else}\{C_2\}$. 
This does not change the expressiveness of the language.
For the same reason, the explicit $\mathtt{diverge}$ statement is used as syntactic sugar for $\mathtt{while}(\mathtt{true})\{\mathtt{skip}\}$.
In random assignments, we only allow sampling from the uniform distribution $U$ on the unit interval $[0,1]$. 
This does not limit the expressiveness of the language, as samples from an arbitrary continuous distribution can be obtained by sampling from the unit interval and applying the inverse cumulative distribution function (inverse $\mathtt{cdf}$) of the given distribution to the generated sample.
%
%For instance, in order to draw a sample from the Gaussian distribution with mean $\mathtt{mu}$ and variance $\mathtt{sigma}$, we can do the following:
For instance, we can generate a sample from the Gaussian distribution with mean $\mathtt{mu}$ and variance $\mathtt{sigma}$ as follows 
\begin{verbatim}
u := U;
x := Gaussian_inv_cdf(mu,sigma,u);
\end{verbatim}
\noindent where $\mathtt{Gaussian\_inv\_cdf}(\mu, \sigma, u)$ returns the value of the inverse cumulative
distribution function of the Gaussian distribution with mean $\mu$ and variance $\sigma$ at point $u$---in other words,
$\mathtt{Gaussian\_inv\_cdf}(\mu, \sigma, u)$ is a value $v \in \mathbb{R}$ such that 
$\int_{-\infty}^{v} \mathtt{Gaussian\_pdf}(\mu, \sigma, x)\, dx = u$ 
\footnote{Note that the value of $\mathtt{Gaussian\_inv\_cdf}(\mu, \sigma, u)$ is technically only defined
for $u \in (0,1)$, but we can safely extend it to $[0,1]$ by setting $\mathtt{Gaussian\_inv\_cdf}(\mu, \sigma, 0)$ 
and $\mathtt{Gaussian\_inv\_cdf}(\mu, \sigma, 1)$ to some arbitrary value (say, $0$), as the probability of
drawing $0$ or $1$ from the continuous uniform distribution on $[0,1]$ is zero, anyway.}. 

Random draws from discrete probability distributions can also be encoded by uniform draws from the unit interval, see e.g. \citep{park08sampling}.
%For instance, the statement $\mathtt{flip(0.25*dt)}$ as used in the introduction is a shorthand for $\ldots$ \marginpar{TBC}.
For instance, the statement $\mathtt{if}(\mathtt{flip(0.25)})\{C\}$ as used in the introduction is a shorthand for
$\mathtt{u} :\approx \mathtt{U}; \mathtt{if}(\mathtt{u < 0.25})\{C\}$.

Let us briefly describe the semantics of the three new syntactic constructs at an intuitive level; the rest of this paper is devoted to make this precise.
The execution of the random variable assignment $x :\approx U$ incorporates taking a sample from the uniform distribution $U$ and assigning this sample to the program variable $x$.
The $\mathtt{observe}(\phi)$ statement is similar to the \texttt{assert}$(\phi)$ statement: it has no effect for program runs satisfying the predicate $\phi$, but program runs violating $\phi$ are invalid.
Such invalid runs are discontinued (aka: stopped).
The crucial difference to the \texttt{assert} statement is that probabilities of valid program runs are normalised with respect to the total probability mass of all valid runs. 
For instance, the only valid runs of program
\begin{verbatim}
x := 0; y := 0;
if (flip(0.5)) 
    x := 1;
if (flip(0.5))
    y := 1;
observe(x+y=1)
\end{verbatim}
are $\mathtt{x}{=}0, \mathtt{y}{=}1$ and $\mathtt{x}{=}1, \mathtt{y}{=}0$.
Although in absence of the \texttt{observe}-statement the probability of each such run is $\nicefrac 1 4$, their probability now becomes $\nicefrac 1 2$ due to normalising $\nicefrac 1 4$ with the probability of obtaining a valid run, i.e., $\nicefrac 1 2$.
(As discussed extensively in~\cite{Olmedo18}, the semantics becomes more tricky when program divergences are taken into account.)
As runs are abandoned that violate the predicate $\phi$, this is called \emph{hard} conditioning.

In contrast, the statement $\mathtt{score}(E)$ models \emph{soft} conditioning.
As effect of executing this statement the probability of the current program run is scaled (i.e. multiplied) by the current value of the expression $E$.
The higher the value of $E$, the more likely the combination of random variables sampled so far is considered to be.
%
%For instance, consider the following program:

To illustrate how soft conditioning works, suppose that we have a function $\mathtt{softeq}(a,b) = e^{-(a-b)^2}$,
whose value is $1$ if both arguments are the same and moves closer to $0$ as the arguments move further apart.
%
%Suppose also that $\mathtt{Gaussian\_inv\_cdf}(\mu, \sigma, u)$ returns the value of the inverse cumulative
%distribution function of the Gaussian distribution with mean $\mu$ and variance $\sigma$ at point $u$---in other words,
%$\mathtt{Gaussian\_inv\_cdf}(\mu, \sigma, u)$ is a value $v \in \mathbb{R}$ such that 
%$\int_{-\infty}^{v} \mathtt{Gaussian\_pdf}(\mu, \sigma, x)\, dx = u$. 
%
Now, consider the following program:
\begin{verbatim}
u1 := U;
x := Gaussian_inv_cdf(0,2,u1);
u2 := U;
y := Gaussian_inv_cdf(1,2,u2);
score(softeq(x,y));
\end{verbatim}
%where $\mathtt{softeq}$ is a function defined as $\mathtt{softeq}(a,b) = e^{-(a-b)^2}$,
%whose value is $1$ if both arguments are the same and moves closer to $0$ as the arguments move further apart.
The use of $\mathtt{score}$ has the effect that program runs in which $\mathtt{x}$ and $\mathtt{y}$ are closer to each other
are more likely.

%\section{Weakest preexpectation semantics}
\section{Denotational semantics} \label{section:denotational-semantics}
We will now define the semantics of \ccpgcl in a weakest precondition style manner. 
This semantics builds upon the semantics of the probabilistic guarded command language $\mathtt{pGCL}$~\cite{MM05:AbstractionRefinementProofForProbabilisticSystems} extended with hard conditioning as defined in~\cite{Olmedo18}.
The key object $\mathtt{wp}|[C |](f)(\sigma)$ defines the expected value of a function $f$ with respect to the probability distribution of final states of program $C$, provided the program starts in the initial state $\sigma$. 
The key difference to~\cite{MM05:AbstractionRefinementProofForProbabilisticSystems,Olmedo18} is that dealing with continuous distributions requires some sort of integration, and the integrated functions must be well-behaved.

Defining a denotational semantics of a language allowing unbounded computations requires the use of domain theory,
which helps to ensure that all semantic functions used are well-defined. Some basic definitions from domain theory,
which are needed to understand this paper, are included in Appendix~\ref{appendix:domain-theory}.

Probability theory with continuous random variables is usually formalised using measure theory and the semantics of \ccpgcl follows this route.
For the sake of completeness, the main relevant ingredients of measure theory are summarised in Appendix~\ref{app:basics-measure-theory}.
We start off by defining a measurable state space, and the domain of measurable expectations---the quantitative analogue of predicates.
After shortly defining the (standard) semantics of expressions and predicates, we define a weakest preexpectation semantics of \ccpgcl and subsequently generalise this towards a weakest liberal preexpectation semantics that takes program divergence explicitly into account.

\subsection{Measurable space of states}

In the same vein as~\cite{MM05:AbstractionRefinementProofForProbabilisticSystems,Olmedo18}, the semantics $\mathtt{wp}|[C |](f)(\sigma)$ will be defined as the expected value of the measurable
function $f$ mapping states to nonnegative reals (extended with $\infty$). In order to reason about measurable functions on states, we first define a measurable space of program states.

\newcommand{\failure}{\lightning}
\newcommand{\diverge}{\uparrow}

\newcommand{\statespace}{\Omega_{\sigma}}
\newcommand{\statesa}{\Sigma_{\sigma}}

\newcommand{\fullstatespace}{\hat{\Omega}_{\sigma}}
\newcommand{\fullstatesa}{\hat{\Sigma}_{\sigma}}

\newcommand{\extposreals}{\overline{\mathbb{R}}_{+}}

Let $\mathcal{N}$ be a countable set of variable names ranged over by $x_i$.
A program state maps program variables to their current value.
Formally, state $\sigma$ is a set $\{(x_1, V_1), \dots, (x_n, V_n)\}$ of pairs of unique variable names $x_i$ and their corresponding values $V_i \in \mathbb{R}$.
The set $\statespace$ has the following form:
\[
\Omega_{\sigma} = \biguplus_{n \in \mathbb{N}} \left ( \left \{ \{(x_1, V_1), \dots, (x_n, V_n)\} \ |\ 
\forall i \in 1..n\ x_i \in \mathcal{N}, V_i \in \mathbb{R}. \forall j \neq i \  x_i \neq x_j\right \} \right )
%x_1, \dots, x_n \in \mathcal{N}, V_1, \dots, V_n \in \mathbb{R} \uplus \mathbb{Z}, \forall i,j \in 1..n \ i \neq j =>   x_i \neq x_j \right \} \right )
\]
The state space $\statespace$ is equipped with the functions:
$\mathtt{dom}(\cdot) \colon \statespace -> P(\mathcal{N})$, 
returning the domain of a state (i.e., the set of variables which are assigned values), and
$\mathtt{elem}(\cdot ,\cdot) \colon \statespace \times \mathcal{N} -> \mathbb{R} \uplus \{ \bot \}$
such that $\mathtt{elem}(\sigma, x)$ (for convenience, abbreviated $\sigma(x)$)
returns the value assigned to variable $x$ in $\sigma$ or $\bot$ if $x \notin \mathtt{dom}(\sigma)$.
%
%Possible concrete instantiations of $\mathcal{N}$ and $\statespace$ and corresponding functions include:
%
% $\mathcal{N}$ being an arbitrary countable set and each element of $\statespace$ being represented by a set of 
%pairs of (distinct) variable names and values. Formally:
The functions $\mathtt{dom}$ and $\mathtt{elem}$ are defined for $\sigma = \{(x_1, V_1), \dots, (x_n, V_n)\}$ as:
\[
\begin{array}{rcl}
\mathtt{dom}(\sigma) & = & \{ x_1, \dots, x_n \} \\[1ex]
\mathtt{elem}(\sigma,y) \ = \ \sigma(y) & = & 
\begin{cases}
V_i & \text{if}\ y=x_i \text{ for some } i \\
\bot & \text{otherwise.}
\end{cases}
\end{array}
\]
%%
%This comes later, when we define the operational semantics:
%We subsequently define $\fullstatespace$ to be the set of \emph{all} states---that is, 
%$\fullstatespace = \statespace \uplus \{\failure, \diverge \}$.
Let the metric $d_\sigma$ on $\statespace$ be defined as follows:
\[
d_\sigma(\sigma_1, \sigma_2) =
\begin{cases}
%0 & \text{if}\ \sigma_1(k) = \sigma_2(k) = ()\\
\sum_{x \in \mathtt{dom}(\sigma_1)} |\sigma_1(x) - \sigma_2(x)| & \text{if}\ \mathtt{dom}(\sigma_1) = \mathtt{dom}(\sigma_2) \\
%\infty & \text{if}\ \sigma_1(k) = ()\ \text{xor}\ \sigma_2(k) = () \text{ or types do not match}\\
%d(\sigma_1(k), \sigma_2(k)) & \text{otherwise.}
\infty & \text{otherwise}
\end{cases}
\]
%%
%\[
%d_\sigma(\sigma_1, \sigma_2) =
%\begin{cases}
%%0 & \text{if}\ \sigma_1(k) = \sigma_2(k) = ()\\
%\Sigma_{x \in \mathtt{dom}(\sigma_1)}  d_T(\sigma_1(x), \sigma_2(x)) & \text{if}\ \mathtt{dom}(\sigma_1) = \mathtt{dom}(\sigma_2) \\
%%\infty & \text{if}\ \sigma_1(k) = ()\ \text{xor}\ \sigma_2(k) = () \text{ or types do not match}\\
%%d(\sigma_1(k), \sigma_2(k)) & \text{otherwise}
%\infty & \text{otherwise}
%\end{cases}
%\]
%
%\noindent where
%\[
%d_T(V_1, V_2) = 
%\begin{cases}
%0 & \text{if } V_1 = V_2 = () \\
%%d_{\mathbb}{R}(V_1, V_2)  
%|V_1 - V_2| & \text{if } V_1, V_2 \in \mathbb{R} \\
%%|V_1 - V_2| & \text{if } V_1, V_2 \in \mathbb{Z} \\
%\infty & \text{otherwise}
%\end{cases}
%\]
%%
It is easy to verify that $d_\sigma$ is indeed a metric. 
Note that on the subset of states with a fixed domain $\{x_1, \dots, x_n\}$, $d_\sigma$ is essentially the Manhattan distance.

%We extend this metric to $\fullstatespace$ by defining:
%
%\[
%\hat{d}_\sigma(\sigma_1, \sigma_2) =
%\begin{cases}
%0 & \text{if } \sigma_1 = \sigma_2  = \failure \\
%0 & \text{if } \sigma_1 = \sigma_2 = \diverge \\
%d_\sigma(\sigma_1, \sigma_2)& \text{if } \sigma_1, \sigma_2 \in \statespace \\
%\infty & \text{otherwise}
%\end{cases}
%\]

\begin{lemma}
The metric space $(\statespace, d_\sigma)$ is separable.
\end{lemma}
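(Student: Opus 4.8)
The plan is to exhibit an explicit countable dense subset, namely the set of all states that assign only \emph{rational} values. The structural fact that makes this work is that $d_\sigma(\sigma_1,\sigma_2) = \infty$ whenever $\mathtt{dom}(\sigma_1) \neq \mathtt{dom}(\sigma_2)$, so the space decomposes into ``components'' indexed by the domain, each of which is a finite subset of $\mathcal{N}$. On a fixed domain $\{x_1,\dots,x_n\}$ the metric $d_\sigma$ is, as already noted, the Manhattan distance on the assigned values, i.e.\ the component is isometric to $(\mathbb{R}^n, \ell^1)$. Since $\mathbb{Q}^n$ is dense in $(\mathbb{R}^n,\ell^1)$, approximating within a single component is routine; the whole difficulty is organising the countably many components.

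First I would define the candidate set
\[
\mathcal{D} \;=\; \{\, \sigma \in \statespace \mid \sigma(x) \in \mathbb{Q} \text{ for all } x \in \mathtt{dom}(\sigma) \,\}
\]
and argue that it is countable. For each fixed finite domain $D = \{x_1,\dots,x_n\} \subseteq \mathcal{N}$, the rational-valued states with that domain are in bijection with $\mathbb{Q}^n$, hence countable. The set of possible domains is the set of finite subsets of $\mathcal{N}$; since $\mathcal{N}$ is countable, this is a countable set. Thus $\mathcal{D}$ is a countable union (over domains) of countable sets, and is therefore countable.

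Next I would show $\mathcal{D}$ is dense. Given any $\sigma \in \statespace$ and $\varepsilon > 0$, write $\mathtt{dom}(\sigma) = \{x_1,\dots,x_n\}$. If $n = 0$ the empty state already lies in $\mathcal{D}$; otherwise, for each $i$ choose a rational $q_i$ with $|\sigma(x_i) - q_i| < \varepsilon/n$, and let $\tau$ be the state with $\mathtt{dom}(\tau) = \mathtt{dom}(\sigma)$ and $\tau(x_i) = q_i$. Since $\tau$ has the same domain as $\sigma$, the metric uses the first case of its definition, giving $d_\sigma(\sigma,\tau) = \sum_{i} |\sigma(x_i) - q_i| < \varepsilon$, and $\tau \in \mathcal{D}$.

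I do not expect a genuine obstacle here: the infinite-distance behaviour across distinct domains, which might look like a complication, in fact makes density \emph{easier}, because every open ball of finite radius around $\sigma$ is contained in $\sigma$'s own domain component, so one never needs to approximate across domains. The only point requiring care is the bookkeeping that ensures the index set of domains is countable, which reduces to the countability of the finite subsets of the countable set $\mathcal{N}$.
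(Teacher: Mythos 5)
Your proposal is correct and follows exactly the paper's approach: the paper also takes the set $\statespace^{\mathbb{Q}}$ of rational-valued states as the countable dense subset, merely asserting countability and density without proof. Your write-up simply supplies the routine verification (countable union over finite domains, and componentwise rational approximation within a fixed domain) that the paper leaves to the reader.
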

\begin{proof}
%We need to find a countable dense subset of $\fullstatespace$ w.r.t. $\hat{d}_\sigma$.
Consider a subset $\statespace^{\mathbb{Q}}$ of $\statespace$ where all values are rational. Then
the set $\statespace^{\mathbb{Q}}$ is countable and it can be easily verified that it is a dense subset of $\statespace$. Hence,
$(\statespace,  d_\sigma)$ is separable.
 \qed \end{proof}
%\marginpar{what is $\hat{d}_\sigma?$}

Finally, let $\statesa$ be the Borel $\sigma$-algebra on $\statespace$ induced by the metric $d_\sigma$.
The pair $(\statespace, \statesa)$ is our measurable space of states.

%\paragraph{Note about $\statespace$ (second definition)} For a fixed domain $D = \{x_1, \dots, x_n\}$, we can define
%the set of states in this domain as $\statespace^D  =\{\{(x_1, V_1), \dots, 
%(x_n, V_n)\}\ |\ V_1, \dots, V_n \in \mathbb{R} \uplus \mathbb{Z} \}$.
%Then the full set of proper states is $\statespace = \biguplus_D \statespace^D$.
%Note that the set of all domains is countable. The metric space $(\statespace, d_\sigma)$ is a disjoint union of
%spaces $(\statespace^D, d_\sigma)$.

\subsection{Domain of measurable expectations}
As the weakest preexpectation semantics of \ccpgcl is defined in terms of an operator transforming measurable functions, we need to show that measurable functions from $\statespace$ to $\extposreals = \mathbb{R}_{+} \cup \{+\infty\}$ form a valid domain. More specifically,
these functions must form a \emph{$\omega$-complete partial order} (whose definition is included in Appendix~\ref{appendix:domain-theory}).
Similarly, we need to show that the domain of \emph{bounded} measurable expectations,
which will be used in the weakest liberal preexpectation semantics, is valid.
Fortunately, these facts follow immediately from basic properties of measure theory.

\begin{lemma} \label{lemma:meas-cpo}
The set of measurable functions $f\colon \statespace -> \extposreals$ with point-wise ordering
forms an $\omega$-complete partial order ($\omega$-cpo). Similarly, the set of bounded measurable functions
$f \colon \statespace -> [0,1]$ forms an $\omega$-cpo.
\end{lemma}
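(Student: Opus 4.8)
The plan is to check the two defining properties of an $\omega$-complete partial order in turn: that pointwise ordering is indeed a partial order on the set of measurable functions, and that every ascending $\omega$-chain has a least upper bound lying within the set (together with the presence of a least element, the constant function $\sigma \mapsto 0$, which is trivially measurable). The partial-order axioms---reflexivity, antisymmetry and transitivity---are inherited directly from the order on $\extposreals$ applied pointwise, so this part is routine and I would dispatch it in one line.

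The substance of the argument lies in closure under suprema of chains. Given an ascending chain $f_0 \le f_1 \le f_2 \le \cdots$ of measurable functions, I would define the candidate supremum pointwise by $f(\sigma) = \sup_{n \in \mathbb{N}} f_n(\sigma)$. Since $\extposreals$ is a complete lattice, this supremum exists for every $\sigma$ (taking the value $+\infty$ where the chain is unbounded), so $f$ is a well-defined function $\statespace \to \extposreals$. That $f$ is an upper bound of the chain is immediate, and that it is the least one follows because any measurable upper bound $g$ satisfies $g(\sigma) \ge f_n(\sigma)$ for all $n$, hence $g(\sigma) \ge \sup_n f_n(\sigma) = f(\sigma)$.

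The one point requiring a genuine measure-theoretic fact---and the step I expect to be the crux---is that $f$ is again \emph{measurable}, so that it actually lies in the poset rather than merely being a pointwise limit outside it. Here I would invoke the standard result that a pointwise supremum of a countable family of measurable functions into $\extposreals$ is measurable, witnessed by the identity
\[
\{ \sigma \in \statespace \mid f(\sigma) > a \} \;=\; \bigcup_{n \in \mathbb{N}} \{ \sigma \in \statespace \mid f_n(\sigma) > a \}
\]
valid for every $a \in \mathbb{R}$: each set on the right is measurable because $f_n$ is, and a countable union of measurable sets is measurable, so the preimage of every generating ray $(a, +\infty]$ lies in $\statesa$. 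As these rays generate the Borel $\sigma$-algebra on $\extposreals$, this suffices to conclude measurability of $f$.

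Finally, the second statement follows from exactly the same construction: if each $f_n$ maps into $[0,1]$, then the pointwise supremum again takes values in $[0,1]$, since a supremum of numbers each lying in $[0,1]$ is itself in $[0,1]$; measurability and the least-upper-bound property are established verbatim as above, and the constant zero function again serves as the least element. Thus both sets form $\omega$-cpos.
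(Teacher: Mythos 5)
Your proposal is correct and follows essentially the same route as the paper's proof: identify $\lambda \sigma . 0$ as the least element, take pointwise suprema of chains (which exist since $\extposreals$ is complete), and observe that such suprema remain measurable, with the bounded case handled identically. The only difference is one of detail: where the paper simply cites the measurability of pointwise suprema as a standard result of measure theory, you prove it explicitly via the level-set identity $\{f > a\} = \bigcup_n \{f_n > a\}$, which is exactly the standard proof being invoked.
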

\begin{proof}
The bottom element of the set of measurable functions $f\colon \statespace -> \extposreals$ is the function
$\lambda \sigma . 0$, mapping every state to $0$. It is known that any increasing chain of functions 
with co-domain $\extposreals$ has a supremum,  so this also holds for chains of measurable functions.
The fact that point-wise supremum of measurable functions to $\extposreals$ is measurable is a standard
result in measure theory. %TODO: add citation?
The argument for bounded measurable functions is the same.
 \qed \end{proof}

\subsection{Expression and predicate evaluation}
%We do not specify the syntax of expressions $E$ and predicates $\phi$---we assume these may
%be arbitrary, as long as the corresponding evaluation functions are measurable (as explained later).\

The semantics makes use of two evaluation functions, $\sigma(E)$ and $\sigma(\phi)$,
which evaluate the real-valued expression $E$ and predicate $\phi$, respectively, in state $\sigma$.
We assume that for each $E$, the evaluation function on states $\lambda (\sigma, E) . \sigma(E)$ %: \statespace -> \mathbb{R}$ 
is measurable
and, similarly, for all $\phi$, the function $\lambda (\sigma, \phi) . \sigma(\phi)$ %: \statespace -> \{\mathtt{true}, \mathtt{false}\}$ 
is 
measurable\footnote{This assumption requires a $\sigma$-algebra on expressions and predicates. This can be defined as a Borel
$\sigma$-algebra induced by a simple metric on syntactic terms, as in~\cite{DBLP:conf/icfp/BorgstromLGS16}.}.
We also assume that the evaluation functions are total---this means that in case of evaluation errors, such
as some variable in $E$ not being in the domain of $\sigma$, some value (typically $0$ or $\mathtt{false}$) still
needs to be returned. We convert truth values to reals by Iverson brackets $[\cdot]$: $[\mathtt{true}] = 1$ and $[\mathtt{false}] = 0$.
We write $E$ for $\lambda \sigma . \sigma(E)$ and $\phi$ for $\lambda \sigma . \sigma(\phi)$ if it is clear from the context
that $E$ or $\phi$ denotes a function.

\subsection{Weakest preexpectation semantics}

We now have all ingredients in place to define the weakest preexpectation semantics of \ccpgclns. 
This semantics is defined by the operator $\mathtt{wp}|[C |](\cdot)$, which takes a measurable function $f$ from $\statespace$ to $\extposreals$---called the \emph{postexpectation}---and returns a measurable function in the same domain---called the \emph{preexpectation})---which, for every initial state $\sigma_0$, computes the expected value of $f$ after executing the program $C$ starting in state $\sigma_0$. 
In other words, if $f \colon \statespace -> \extposreals$ is a measurable function on states and $\sigma_0 \in \statespace$ is the initial state, then $\mathtt{wp}|[C |](f)(\sigma_0)$ yields the expected value of $f(\sigma)$, where $\sigma$ is a final program state of $C$.

\begin{figure}[t]
\centering
%\begin{minipage}{2.3cm}
\begin{eqnarray*}
\mathtt{wp}|[ \mathtt{skip} |](f) &=& f \\[1ex]
\mathtt{wp}|[ \mathtt{diverge} |](f) &=& 0 \\[1ex]
\mathtt{wp}|[ x := E |](f) &=& \lambda \sigma . f(\sigma[x \mapsto \sigma(E)]) \\[1ex]
\mathtt{wp}|[ x :\approx U |](f) & \ = \ & \lambda \sigma . \int_{[0,1]} 
%(\lambda v. f(\sigma[x \mapsto v ]) ) \, \lambda(dv)\\
f(\sigma[x \mapsto v ])  \, \lambda(dv)\\[1ex]
\mathtt{wp}|[ \mathtt{observe}(\phi)|](f)  &=&\lambda \sigma . [\sigma(\phi)]{\cdot}f(\sigma)\\[1ex]
\mathtt{wp}|[ \mathtt{score}(E) |](f) &=& \lambda \sigma . [\sigma(E) \in (0,1]]{\cdot}\sigma(E){\cdot} f(\sigma)\\[1ex]
\mathtt{wp}|[ C_1;C_2 |](f) &=& \mathtt{wp}|[ C_1|](\mathtt{wp}|[C_2 |](f))\\[1ex]
\mathtt{wp}|[\mathtt{if}(\phi)\{ C\} |](f) &=& [\phi]{\cdot}\mathtt{wp}|[C|](f) + [\neg \phi]{\cdot}f \\[1ex]
\mathtt{wp}|[ \mathtt{while}(\phi)\{C\} |](f) &=& \mathtt{lfp}\ X . [\neg \phi]{\cdot}f + [\phi]{\cdot} \mathtt{wp}|[C|](X)
\end{eqnarray*}
%\end{minipage}
\caption{Weakest preexpectation semantics of \ccpgcl}
 \label{figure:wp-ccpgcl}
\end{figure}

The wp-semantics of \ccpgcl is defined by structural induction and is shown in Fig.~\ref{figure:wp-ccpgcl}. 
The semantics of most constructs matches the wp-semantics in \cite{Olmedo18}, with the distinction that it is defined on the domain of nonnegative measurable functions on states, rather than arbitrary nonnegative functions. 
Let us briefly explain the individual cases one by one.
\begin{description}
\item[Skip.]
The $\mathtt{skip}$ statement leaves the expectation $f$ unchanged.
\item[Divergence.]
The expectation of any function $f$ with respect to the $\mathtt{diverge}$ expression is $0$, as no final state at which $f$ can be evaluated is ever reached by the program.
\item[Assignment.]
For assignment $x := E$, the semantics just evaluates $E$, updates $x$ with the new value
in the state and passes this updated state to the expectation.
\item[Random draw.]
The expected value of a measurable function $f$ on states with respect to the uniform random assignment $x :\approx U$, applied to the initial state $\sigma$, is the Lebesgue integral of $f(\sigma[x \mapsto v])$ (as a function of $v$) with respect to the Lebesgue measure $\mu_L$ on $[0,1]$\footnote{
%$\lambda \sigma$ denotes a function with formal parameter $\sigma$, whereas $\lambda(dv)$ denotes the Lebesgue measure on $[0,1]$.
The Lebesgue measure is usually denoted by $\lambda$ in the literature. We write $\mu_L$ instead to avoid confusion with 
the use of $\lambda \sigma$ to define a function with formal parameter $\sigma$.
}.
By the Fubini-Tonelli theorem, $\mathtt{wp}|[ x :\approx U |](f)$ is itself a measurable function.
\item[Hard.]
The $\mathtt{observe}$ statement defines hard conditioning---it states that all runs of the program which do not satisfy $\phi$ should be discarded and should not affect the expectation of $f$.
\item[Soft.]
Scoring multiplies the expectation by the argument to $\mathtt{score}$, expected to evaluate
to a number in $(0,1]$. 
\item[Sequencing.]
The semantics of a sequence $C_1;C_2$ of two commands is just the composition of the semantics of respective commands--- the semantics of $C_2$ with respect to the given input function $f$ is the input to the semantics of~$C_1$.
\item[Conditional.]
The semantics of an $\mathtt{if}(\phi)\{C\}$-expression is, for initial states satisfying the condition $\phi$, the semantics of the body $C$. For other states, the semantics is equivalent to the $\mathtt{skip}$ statement, as the expression does not do anything.
%
%As $\mathtt{diverge}$ is essentially an abbreviation of $\mathtt{while}(\mathtt{true})\{\mathtt{skip}\}$
%
\item[Loops.]
The semantics of a $\mathtt{while}$-loop is defined as the least fixpoint of a function which
simply returns the input continuation $f$ if $\phi$ is false (corresponding to exiting the loop)
and applies the semantics of the body to the argument $X$ otherwise (which corresponds to 
performing another iteration). 
As explained in the following paragraph (cf. Lemma~\ref{lemma:while-kleene}), %TODO: Where?
this has the desired effect that the semantics of a $\mathtt{while}$-loop
is equivalent to the semantics of the infinite unfolding of the loop.
\end{description}

%$\mathtt{wp}|[C |](f)(\sigma_0)$

\paragraph{Well-definedness and key properties of $\mathtt{wp}$.}
When defining the weakest preexpectation semantics of \ccpgclns, we implicitly assumed
that all mathematical objects used are well defined. Specifically, we assumed that
the $\mathtt{wp}$ transformer preserves measurability and that the least fixpoint in the semantics
of $\mathtt{while}$-loops exists. These properties can be proven by structural induction on
the program $C$. 
%TODO: do we need the following?????
The key observations used in the proof (which would not be needed in the discrete case) are
that $\lambda \sigma . \int_{[0,1]} (\sup_i f_i)(\sigma[x \mapsto v]) \,\mu_L(dv)
= \lambda \sigma .\ \sup_i \int_{[0,1]} f_i(\sigma[x \mapsto v]) \,\mu_L(dv)$
by Beppo Levi's theorem and that 
$ \lambda \sigma . \int_{[0,1]}  f(\sigma[x \mapsto v ])  \, \mu_L(dv)$ is measurable
(as a function of $\sigma$) by the Fubini-Tonelli theorem.
As both $\omega$-continuity (as defined in Appendix~\ref{appendix:domain-theory}) and measurability are required for $\mathtt{wp}|[C|]$ to be well defined,
we need to prove both properties simultaneously, so that the induction hypothesis
is strong enough.

\begin{lemma} \label{lemma:wp-continuous-measurable}
For every program $C$:
\begin{enumerate}
\item 
the function $\mathtt{wp}|[C|](\cdot)$ is $\omega$-continuous, and
\item
for every measurable $f \colon \statespace -> \extposreals$, $\mathtt{wp}|[C|](f)(\cdot)$ is measurable.
\end{enumerate}
\end{lemma}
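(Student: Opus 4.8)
The plan is to prove both statements simultaneously by structural induction on the program $C$, since $\omega$-continuity of $\mathtt{wp}|[C|]$ and measurability of its output are intertwined: the measurability case for compound constructs often relies on continuity (to commute suprema with integrals), while continuity relies on having a well-defined measurable transformer to iterate. I would state the strengthened induction hypothesis exactly as the two clauses of the lemma and carry both through every syntactic case.

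First I would dispatch the base cases. For $\mathtt{skip}$, $x := E$, $\mathtt{observe}(\phi)$, and $\mathtt{score}(E)$ the transformers are built by pointwise composition with, or multiplication by, the evaluation functions $E$ and $\phi$, which are measurable by the standing assumption of Section on expression evaluation; measurability of the output then follows from closure of measurable functions under composition and products, and $\omega$-continuity holds because these operations commute with pointwise suprema of an increasing chain (finite multiplication and substitution are continuous). The $\mathtt{diverge}$ case is trivial since the transformer is the constant $0$. The genuinely new base case is the random draw $x :\approx U$: here measurability of $\lambda\sigma.\int_{[0,1]} f(\sigma[x\mapsto v])\,\mu_L(dv)$ follows from the Fubini--Tonelli theorem (as already noted in the excerpt), and $\omega$-continuity follows from Beppo Levi's monotone convergence theorem, which lets the supremum over an increasing chain $f_i$ pass through the integral. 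Both facts are exactly the two key observations flagged in the paragraph preceding the lemma, so I would cite them directly.

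For the inductive step I would handle sequencing, conditionals, and loops. Sequencing $C_1;C_2$ is a composition of transformers, and both measurability and $\omega$-continuity are preserved under composition of $\omega$-continuous measurability-preserving maps, so the two induction hypotheses for $C_1$ and $C_2$ combine immediately. The conditional is a sum of two terms $[\phi]\cdot\mathtt{wp}|[C|](f)$ and $[\neg\phi]\cdot f$, each measurable by the IH and measurability of $[\phi]$, and continuity follows because addition and multiplication by a fixed $\{0,1\}$-valued function are continuous. For the $\mathtt{while}$-loop I would argue that the characteristic functional $\Phi(X) = [\neg\phi]\cdot f + [\phi]\cdot\mathtt{wp}|[C|](X)$ is itself $\omega$-continuous and measurability-preserving (using the IH for the body $C$ together with the conditional argument just given), so by Lemma~\ref{lemma:meas-cpo} the $\omega$-cpo of measurable functions contains the least fixpoint, which by the Kleene fixpoint theorem equals $\sup_n \Phi^n(\lambda\sigma.0)$; measurability of this supremum follows from the $\omega$-cpo structure, and $\omega$-continuity of $X \mapsto \mathtt{lfp}\,\Phi$ in the parameter $f$ follows by a standard exchange-of-suprema argument on the iterates.

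The main obstacle I expect is the loop case, specifically establishing $\omega$-continuity of the whole loop transformer $f \mapsto \mathtt{lfp}\,\Phi_f$ as a function of the postexpectation $f$ (not merely existence of the fixpoint for fixed $f$). This requires showing that the double supremum over an increasing chain $f_j$ and over the Kleene iterates $\Phi^n$ can be interchanged, which is where the monotone-convergence machinery from the random-draw case is reused inside the fixpoint iteration. I would make this precise by showing $\Phi$ is continuous jointly, then appealing to the standard lemma that least fixpoints of jointly $\omega$-continuous functionals are themselves $\omega$-continuous in the parameter.
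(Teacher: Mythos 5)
Your proposal is correct and follows essentially the same route as the paper's proof: a simultaneous structural induction carrying both properties, with Beppo Levi's theorem and Fubini--Tonelli handling continuity and measurability for $x :\approx U$, composition of the induction hypotheses for sequencing, and Kleene iteration plus an exchange-of-suprema argument for $\mathtt{while}$-loops (the paper phrases your ``fixpoints of jointly continuous functionals are continuous in the parameter'' step via separate continuity in each argument together with continuity of $\Phi \mapsto \sup_n \Phi^n(0)$, citing Abramsky--Jung). The only detail you gloss over is that the paper first proves a small standalone lemma that the state-update map $(x,\sigma,v) \mapsto \sigma[x \mapsto v]$ is measurable (via continuity), which is what justifies the measurability claims in your assignment and random-draw cases.
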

\begin{proof}
By induction on the structure of $C$. %% Details in the long version.
 \qed \end{proof}
The continuity of $\mathtt{wp}|[C|]$ also ensures that the expression $ [\neg \phi]{\cdot}f + [\phi]{\cdot}\mathtt{wp}|[C|](X)$ in the semantics of $\mathtt{while}$ loops is continuous as a function of $X$. Applying Kleene's Fixpoint Theorem immediately gives us the following result:
\begin{lemma} \label{lemma:while-kleene}
Let $f \colon \statespace -> \extposreals$ be measurable, $C$ be a \ccpgcl program and $\phi$ be a predicate.
Let ${}^{\mathtt{wp}}_{\langle \phi, C \rangle} \Phi_f(X) = [\neg \phi]{\cdot}f + [\phi]{\cdot}\mathtt{wp}|[C|](X)$.
Then $\mathtt{lfp}\ X . {}^{\mathtt{wp}}_{\langle \phi, C \rangle} \Phi_f(X)$ exists and is equal to
$\sup_n {}^{\mathtt{wp}}_{\langle \phi, C \rangle} \Phi_f^n(0)$. Thus,
$\mathtt{wp}|[ \mathtt{while}(\phi)\{C\} |](f)$ exists and
\[
\mathtt{wp}|[ \mathtt{while}(\phi)\{C\} |](f) = \sup_n {}^{\mathtt{wp}}_{\langle \phi, C \rangle} \Phi_f^n(0)
\]

%Then the expression: 
%\[ \mathtt{lfp}\ X . [\neg \phi]{\cdot}f + [\phi]{\cdot}\mathtt{wp}|[C|](X)\] 
%\noindent exists, and is equal to
%$\sup_n {}^{\mathtt{wp}}_{\langle \phi, C \rangle} \Phi_f^n(0)$, where
%\[ {}^{\mathtt{wp}}_{\langle \phi, C \rangle} \Phi_f(X) = [\neg \phi]{\cdot}f + [\phi]{\cdot}\mathtt{wp}|[C|](X)\].

%
%\[
%\mathtt{wp}|[ \mathtt{while}(\phi)\{C\} |](f) =
%\mathtt{lfp}\ X . [\neg \phi]{\cdot}f + [\phi]{\cdot}\mathtt{wp}|[C|](X)= \sup_n {}^{\mathtt{wp}}_{\langle \phi, C \rangle} \Phi_f^n(0)
%\]
%\noindent where ${}^{\mathtt{wp}}_{\langle \phi, C \rangle} \Phi_f(X) = [\neg \phi]{\cdot}f + [\phi]{\cdot}\mathtt{wp}|[C|](X)$.
\end{lemma}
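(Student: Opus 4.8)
We must show that for measurable $f$, the functional ${}^{\mathtt{wp}}_{\langle \phi, C \rangle} \Phi_f$ has a least fixpoint on the $\omega$-cpo of measurable functions $\statespace \to \extposreals$, and that this least fixpoint coincides with $\sup_n {}^{\mathtt{wp}}_{\langle \phi, C \rangle} \Phi_f^n(0)$, where $0$ denotes the bottom element $\lambda\sigma.0$.

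**The plan.** The whole statement is essentially an instance of Kleene's Fixpoint Theorem, so the plan is to verify the theorem's hypotheses and then invoke it. Kleene's theorem requires two ingredients: (a) the underlying structure is an $\omega$-cpo, and (b) the functional whose fixpoint we seek is $\omega$-continuous. Both are available from earlier in the excerpt, so the proof is really a matter of assembling them correctly. First I would recall, by Lemma~\ref{lemma:meas-cpo}, that the set of measurable functions $f\colon \statespace \to \extposreals$ with the point-wise order forms an $\omega$-cpo, with bottom element $\lambda\sigma.0$. This takes care of ingredient (a) and also guarantees that the chain $\bigl({}^{\mathtt{wp}}_{\langle \phi, C \rangle} \Phi_f^n(0)\bigr)_{n\in\mathbb N}$ has a supremum inside the domain.

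**Establishing $\omega$-continuity.** The second and main step is to argue that $X \mapsto {}^{\mathtt{wp}}_{\langle \phi, C \rangle} \Phi_f(X) = [\neg \phi]{\cdot}f + [\phi]{\cdot}\mathtt{wp}|[C|](X)$ is $\omega$-continuous in $X$. Here $[\neg\phi]{\cdot}f$ is a constant (independent of $X$) and $[\phi]{\cdot}(\cdot)$ is multiplication by a fixed nonnegative function, so the only non-trivial contribution is the term $\mathtt{wp}|[C|](X)$. Its $\omega$-continuity is exactly part~1 of Lemma~\ref{lemma:wp-continuous-measurable}. I would then note that pre-composing with multiplication by the fixed function $[\phi]$ and adding the fixed function $[\neg\phi]{\cdot}f$ both preserve $\omega$-continuity: for an increasing chain $(X_i)_i$ one has $\sup_i\bigl([\neg\phi]{\cdot}f + [\phi]{\cdot}\mathtt{wp}|[C|](X_i)\bigr) = [\neg\phi]{\cdot}f + [\phi]{\cdot}\sup_i \mathtt{wp}|[C|](X_i)$, because suprema of point-wise-ordered chains commute with point-wise addition and with point-wise multiplication by a fixed nonnegative factor. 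This yields ingredient (b).

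**Concluding.** With (a) and (b) in hand, Kleene's Fixpoint Theorem applies and gives both the existence of $\mathtt{lfp}\ X . {}^{\mathtt{wp}}_{\langle \phi, C \rangle} \Phi_f(X)$ and its explicit form as the supremum of the iterates $\sup_n {}^{\mathtt{wp}}_{\langle \phi, C \rangle} \Phi_f^n(0)$. Since the semantics of $\mathtt{wp}|[ \mathtt{while}(\phi)\{C\} |](f)$ is \emph{defined} to be this least fixpoint, the two displayed equalities follow immediately, and we may conclude that the while-semantics is well defined. The step I expect to require the most care is the bookkeeping around $\omega$-continuity of the composite functional—specifically, checking that the constant summand $[\neg\phi]{\cdot}f$ and the multiplicative factor $[\phi]$ do not disturb the continuity supplied by Lemma~\ref{lemma:wp-continuous-measurable}; everything else is a direct citation of the cpo structure and of Kleene's theorem. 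I would also remark that part~2 of Lemma~\ref{lemma:wp-continuous-measurable}, together with the closure of measurable functions under the relevant operations, ensures the iterates and their supremum stay within the $\omega$-cpo of \emph{measurable} functions, so the fixpoint computed by Kleene's theorem is indeed measurable as required.
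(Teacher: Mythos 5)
Your proof is correct and takes essentially the same route as the paper: the paper also obtains the result by combining the $\omega$-cpo structure of measurable expectations (Lemma~\ref{lemma:meas-cpo}), the $\omega$-continuity of $\mathtt{wp}|[C|]$ from Lemma~\ref{lemma:wp-continuous-measurable}, and Kleene's Fixpoint Theorem. The only difference is one of detail: the paper treats the continuity of the characteristic functional as an immediate consequence of the continuity of $\mathtt{wp}|[C|]$, whereas you explicitly verify that the constant summand $[\neg \phi]{\cdot}f$ and the multiplicative factor $[\phi]$ preserve continuity and that the iterates remain measurable --- worthwhile bookkeeping, but not a different argument.
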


\subsection{Examples} 

Having defined the weakest preexpectation semantics, we explain it using a few examples. We first
introduce two simple examples, which illustrate the key concepts, and then show how the semantics can be applied
to the tortoise and hare program from the introduction.

\paragraph{Notation.}  To distinguish between program variables and metavariables, we write the former in fixed-width
font (such as $\mathtt{x1}$) and the latter in the usual italic form (such as $x_1$). 
In functions where only the original, non-updated input state appears in the body, we sometimes make the state 
implicit by removing ``$\lambda \sigma .$'' and replacing variable lookups of the form $\sigma(\mathtt{x})$ by
just variables. For instance we write $\lambda \sigma . \sigma(\mathtt{x}) + \sigma(\mathtt{y})$ simply as
$\mathtt{x} + \mathtt{y}$.

\begin{example} \label{example:blr}
%TODO: very simple loop-free example

Let us first consider a very simple instance of Bayesian linear regression. We want to fit  a linear function approximately to two
points $(0,2)$ and $(1,3)$, assuming that the coefficients of the function have Gaussian prior distributions.
A \ccpgcl implementation of such a regression, using the $\mathtt{softeq}$ distance squashing function mentioned
at the end of Section~\ref{section:language}, has the following form:
\begin{verbatim}
u1 := U;
a := Gaussian_inv_cdf(0,2,u1);
u2 := U;
b := Gaussian_inv_cdf(0,2,u2);
score(softeq(a*0 + b, 2));
score(softeq(a*1 + b, 3));
\end{verbatim}

Let us now suppose that we want to calculate the expected value of the square of the$\mathtt{a}$ coefficient
(recall that we can only compute expectations of nonnegative functions). We can do that
by computing the weakest preexpectation of $\lambda \sigma . \sigma(\mathtt{a})^2$ (written $\mathtt{a}^2$ in short) with respect
to the above program --- that is, $\mathtt{wp}|[ C |](\mathtt{a}^2)$, where $C$ is the given program. In the following
derivation (as well as subsequent examples in this chapter), we adopt the notation used in \cite{KaminskiPhD},
where the function directly below a statement $C$ is a postexpectation, and the function directly above $C$ is the
corresponding preexpectation. That is, a block of the form:
\begin{eqnarray*}
{\color{red} //} && {\color{red}  f_1} \\
&&C\\
{\color{red} //} && {\color{red} f_2}
\end{eqnarray*}
\noindent states that $f_1 = \mathtt{wp}|[C|](f_2)$. We also use the letter $\mathtt{G}$ as an abbreviation for the Gaussian inverse cdf 
%\footnote{We integrate the Gaussian inverse over the interval $(0,1)$ rather than $[0,1]$ as it is ${-}\infty$ for 0 and $\infty$ for 1.}.
\footnote{We can integrate the Gaussian inverse over the interval $(0,1)$ instead of $[0,1]$, because the value
of the Lebesgue integral at a single point does not contribute to the result}.

%$\mathtt{softeq}(a,b) = e^{-(a-b)^2}$,
We can derive the expected value of $\mathtt{a}^2$ as shown below.
Note that the expressions between
program lines are functions on program states, written using the implicit notation explained before.

\begin{eqnarray*}
{\color{red} //} && {\color{red} \int_{(0,1)} \int_{(0,1)} e^{-( \mathtt{G}(0, 2, v_2) - 2)^2-( \mathtt{G}(0, 2, v_1) + \mathtt{G}(0, 2, v_2) - 3)^2} {\cdot}\mathtt{G}(0, 2,v_1)^2\, \mu_L(dv_2)\mu_L(dv_1)} \\
&&\mathtt{u1 := U};\\
{\color{red} //} && {\color{red} \int_{(0,1)} e^{-( \mathtt{G}(0, 2, v_2)- 2)^2-( \mathtt{G}(0, 2,\mathtt{u1}) + \mathtt{G}(0, 2, v_2) - 3)^2} {\cdot}\mathtt{G}(0, 2,\mathtt{u1})^2\, \mu_L(dv_2)} \\
&&\mathtt{a := Gaussian\_inv\_cdf(0,2,u1)};\\
{\color{red} //} && {\color{red} \int_{(0,1)} e^{-( \mathtt{G}(0, 2, v_2) - 2)^2-(\mathtt{a} + \mathtt{G}(0, 2, v_2) - 3)^2} 
{\cdot}\mathtt{a}^2\, \mu_L(dv_2)} \\
&&\mathtt{u2 := U};\\
{\color{red} //} && {\color{red} e^{-( \mathtt{G}(0, 2, \mathtt{u2})- 2)^2 -(\mathtt{a} + \mathtt{G}(0, 2, \mathtt{u2})- 3)^2} {\cdot}\mathtt{a}^2} \\
&&\mathtt{b := Gaussian\_inv\_cdf(0,2,u2)};\\
{\color{red} //} && {\color{red}   e^{-( \mathtt{b}- 2)^2-(\mathtt{a} + \mathtt{b} - 3)^2} {\cdot}\mathtt{a}^2} \\
&&\mathtt{score(softeq(a*0 + b, 2))};\\
{\color{red} //} && {\color{red} e^{-(\mathtt{a} + \mathtt{b}- 3)^2} {\cdot}\mathtt{a}^2} \\
&&\mathtt{score(softeq(a*1 + b, 3))};\\
{\color{red} //} && {\color{red} \mathtt{a}^2}
\end{eqnarray*}

We observe that the expected value of $\mathtt{a}^2$ is independent on the initial state, which is not surprising as the program
has no free variables. For any initial state $\sigma$, the expected value $\mathtt{wp}|[ C |](\mathtt{a}^2)(\sigma)$ of $\mathtt{a}$ is:
\[
\int_{(0,1)} \int_{(0,1)} e^{-( \mathtt{G}(0, 2, v_2) - 2)^2 -( \mathtt{G}(0, 2,v_1) + \mathtt{G}(0, 2, v_2) - 3)^2}
 {\cdot}\mathtt{G}(0, 2,v_1)^2\, \mu_L(dv_2)\mu_L(dv_1).
\]

We can also represent this expression as a double integral with respect to the Gaussian probability distribution 
$\mathcal{D}_{\mathtt{G}}$ with mean $0$ and variance $2$, using the fact that a continuous probability distribution is
a pushforward of the Lebesgue measure by the inverse cdf of the given distribution:

\[
\mathtt{wp}|[ C |](\mathtt{a})(\sigma) =
\int \int e^{-( x_2 - 2)^2 -( x_1+ x_2 - 3)^2} {\cdot}x_1^2\, \mathcal{D}_{\mathtt{G}}(dx_2)\mathcal{D}_{\mathtt{G}}(dx_1).
\]

This expression can also be represented as a double integral of Gaussian densities (denoted $\mathtt{G}_{pdf}$) over $\mathbb{R}$:

\[
%% \mathtt{wp}|[ C |](\mathtt{a})(\sigma) =
\int \int e^{-( x_2 - 2)^2 -( x_1+ x_2 - 3)^2} \mathtt{G}_{pdf}(0,2,x_1) \mathtt{G}_{pdf}(0,2,x_2) {\cdot}x_1^2\, \mu_L(dx_2) \mu_L(dx_1).
\]

\end{example}

\begin{example} \label{example:diverge-wp}
Let us now consider a very simple example of a potentially diverging program with continuous variables
and soft conditioning. This example may be rather contrived and does not represent any machine learning model,
but it illustrates well how the semantics works. Take the following program $C$:
\begin{verbatim}
b := 0;
k := 0;
while (b=0)
{
  u := U;
  k := k+1;
  if(u < 1/(k+1)^2)
  {
    b := 1;
    score(k/(k+1));
  }
}
\end{verbatim}

Suppose we want to compute $\mathtt{wp}|[C|](1)$, that is, the weakest preexpectation of the
constant function $\lambda \sigma . 1$ with respect to the program $C$. This may be interpreted as the
probability that the program terminates, weighted by the scores.

%While-loop
%We begin by calculating the characteristic function of the while-loop. By plugging the predicate $\mathtt{h} < \mathtt{t}$
%%and function $\lambda \sigma . \sigma(t)$ (which we abbreviate to $t$) 
%into the formula for the characteristic function, we get 
%${}^{\mathtt{wp}}_{\langle h < t , C' \rangle} \Phi_{f}(X) = [\mathtt{h} \geq \mathtt{t}]{\cdot}f +[\mathtt{h} < \mathtt{t}]{\cdot}\mathtt{wp}|[C'|](X)$,
%where $C'$ is the body of the while loop. By Lemma~\ref{lemma:while-kleene}, we can then conclude that
%$\mathtt{wp}|[\mathtt{while}(h<t)\{C'\}|](f)
%= \sup_n\, {}^{\mathtt{wp}}_{\langle h < t , C' \rangle} \Phi^{n}_{f}(0)$.
%
%We now need to compute $\mathtt{wp}|[C'|](X)$. To simplify presentation, we assume that $X$ does not
%depend directly on variables $\mathtt{u2}$ and $\mathtt{u3}$---we can show by a simple induction that this
%holds for $X = {}^{\mathtt{wp}}_{\langle h < t , C' \rangle} \Phi^{n}_{f}(0)$ for any $n$ (if $f$ is independent 
%of $\mathtt{u2}$ and $\mathtt{u3}$), and we only need to  apply
%$\mathtt{wp}|[C'|](\cdot)$ to functions $X$ of this form.

As the program has a while-loop, we need to find the  characteristic function 
${}^{\mathtt{wp}}_{\langle b = 0, C' \rangle} \Phi_1$ of this loop (whose body we denote by $C'$), with respect to the constant
postexpectation $1$. In this case, the characteristic function is 
$$
{}^{\mathtt{wp}}_{\langle b = 0 , C' \rangle} \Phi_{1}(X) = [\mathtt{b} \neq 0] +
[\mathtt{b} = 0]{\cdot}\mathtt{wp}|[C'|](X).
$$ 
We first need to compute $\mathtt{wp}|[C'|](X)$:
\begin{eqnarray*}
{\color{red} //} && {\color{red} \lambda \sigma .   %\frac{1}{(\sigma(\mathtt{k}){+}2)^2} 
       \frac{\sigma(\mathtt{k}){+}1}{(\sigma(\mathtt{k}){+}2)^3} {\cdot} X(\sigma[k \mapsto \sigma(k){+}1][\mathtt{b} \mapsto 1])
+  \frac{(\sigma(\mathtt{k}){+}2)^2 - 1}{(\sigma(\mathtt{k}){+}2)^2} {\cdot} X(\sigma[k \mapsto \sigma(k){+}1]) }\\
&&{\color{red} =}\\
{\color{red} //} && {\color{red} \lambda \sigma .  \int_{[0,1]}  \left [v < \frac{1}{(\sigma(\mathtt{k}){+}2)^2} \right] {\cdot}
       \frac{\sigma(\mathtt{k}){+}1}{\sigma(\mathtt{k}){+}2} {\cdot} X(\sigma[k \mapsto \sigma(k){+}1][\mathtt{b} \mapsto 1])}\\
{\color{red} //} && {\color{red} \qquad + \left [v \geq \frac{1}{(\sigma(\mathtt{k}){+}2)^2} \right] {\cdot} X(\sigma[k \mapsto \sigma(k){+}1]) \, \mu_L(dv)}\\
&&\mathtt{u := U;}\\
{\color{red} //} && {\color{red} \lambda \sigma . \left [\sigma(\mathtt{u}) < \frac{1}{(\sigma(\mathtt{k}){+}2)^2} \right] {\cdot}
       \frac{\sigma(\mathtt{k}){+}1}{\sigma(\mathtt{k}){+}2}{\cdot} X(\sigma[k \mapsto \sigma(k){+}1][\mathtt{b} \mapsto 1])}\\
&&{\color{red} \qquad + \left [\sigma(\mathtt{u}) \geq \frac{1}{(\sigma(\mathtt{k}){+}2)^2} \right] {\cdot} X(\sigma[k \mapsto \sigma(k){+}1]) }\\
&&\mathtt{k := k+1;}\\
{\color{red} //} && {\color{red} \lambda \sigma . \left [\sigma(\mathtt{u}) < \frac{1}{(\sigma(\mathtt{k}){+}1)^2} \right] {\cdot}
       \frac{\sigma(\mathtt{k})}{\sigma(\mathtt{k}){+}1} {\cdot} X(\sigma[\mathtt{b} \mapsto 1])
+ \left [\sigma(\mathtt{u}) \geq \frac{1}{(\sigma(\mathtt{k}){+}1)^2} \right] {\cdot} X(\sigma) }\\
&&\mathtt{if(u < 1/(k+1)^2)}\\
&&\mathtt{\{}\\
{\color{red} //} && {\color{red} \lambda \sigma . \frac{\sigma(\mathtt{k})}{\sigma(\mathtt{k}){+}1} {\cdot} X(\sigma[\mathtt{b} \mapsto 1])}\\
&&\mathtt{\ \ b := 1;}\\
{\color{red} //} && {\color{red} \lambda \sigma . \frac{\sigma(\mathtt{k})}{\sigma(\mathtt{k}){+}1} {\cdot} X(\sigma)}\\
&&\mathtt{\ \ score(k/(k+1));}\\
{\color{red} //} && {\color{red} X}\\
&&\mathtt{\}}\\
{\color{red} //} && {\color{red} X}
\end{eqnarray*}

To simplify the presentation, we assumed in the last step that $X$ does not
depend directly on the variable $\mathtt{u}$---we can show by a simple induction that this
holds for $X = {}^{\mathtt{wp}}_{\langle b = 0 , C' \rangle} \Phi^{n}_{1}(0)$ for any $n$, 
and we only need to  apply $\mathtt{wp}|[C'|](\cdot)$ to functions $X$ of this form.
By plugging 
$$ 
\begin{array}{lcl}
\mathtt{wp}|[C'|](X) & = & \displaystyle \lambda \sigma .
 \frac{\sigma(\mathtt{k}){+}1}{(\sigma(\mathtt{k}){+}2)^3} {\cdot} X(\sigma[k \mapsto \sigma(k){+}1][\mathtt{b} \mapsto 1]) \\
 &  & \displaystyle \ + \ \frac{(\sigma(\mathtt{k}){+}2)^2 - 1}{(\sigma(\mathtt{k}){+}2)^2}  {\cdot} X(\sigma[k \mapsto \sigma(k){+}1])
\end{array}
$$ 
 into the equation for the characteristic function, we get 
 \[
 \begin{split}
 {}^{\mathtt{wp}}_{\langle b = 0, C' \rangle} \Phi_{1}(X) = \lambda \sigma .
[\sigma(b) \neq 0] + [\sigma(b)=0] 
(\frac{\sigma(\mathtt{k}){+}1}{(\sigma(\mathtt{k}){+}2)^3} X(\sigma[k \mapsto \sigma(k){+}1][\mathtt{b} \mapsto 1]) \\
+  \frac{(\sigma(\mathtt{k}){+}2)^2{-}1}{(\sigma(\mathtt{k}){+}2)^2}  X(\sigma[k \mapsto \sigma(k){+}1]))
\end{split}
\] 
We can now calculate subsequent terms of the sequence
${}^{\mathtt{wp}}_{\langle b = 0, C' \rangle} \Phi^{n}_{1}(0)$, whose supremum is the semantics
of the while-loop:
\allowdisplaybreaks
\begin{eqnarray*}
{}^{\mathtt{wp}}_{\langle \mathtt{b} = 0, C' \rangle} \Phi^{0}_{1}(0) &=& 0 \\
{}^{\mathtt{wp}}_{\langle \mathtt{b} = 0 , C' \rangle} \Phi^{1}_{1}(0) &=& [\mathtt{b} \neq 0] \\
{}^{\mathtt{wp}}_{\langle \mathtt{b} = 0 , C' \rangle} \Phi^{2}_{1}(0) &=& [\mathtt{b} \neq 0]
+ [b=0] \frac{k{+}1}{k{+}2} {\cdot} \frac{1}{(k{+}2)^2}\\
{}^{\mathtt{wp}}_{\langle \mathtt{b} = 0 , C' \rangle} \Phi^{3}_{1}(0) &=& [\mathtt{b} \neq 0]
+ [b=0] \frac{k{+}1}{k{+}2} {\cdot} \left ( \frac{1}{(k{+}2)^2} + \frac{1}{(k{+}3)^2} \right )\\
{}^{\mathtt{wp}}_{\langle \mathtt{b} = 0 , C' \rangle} \Phi^{4}_{1}(0) &=& [\mathtt{b} \neq 0]
+ [\mathtt{b}=0] \frac{k{+}1}{k{+}2} {\cdot} \left ( \frac{1}{(k{+}2)^2} + \frac{1}{(k{+}3)^2} + \frac{1}{(k{+}4)^2} \right )\\
&\dots&
\end{eqnarray*}
\noindent
It follows that ${}^{\mathtt{wp}}_{\langle b = 0, C' \rangle} \Phi^{n}_{1}(0)$ can be represented in
a closed form for any $n$:
\[
{}^{\mathtt{wp}}_{\langle b = 0 , C' \rangle} \Phi^{n}_{1}(0) = [b \neq 0]
+ [b=0] {\cdot} \frac{k+1}{k+2} {\cdot} \left ( \sum_{i=2}^{n}  \frac{1}{(k+i)^2} \right )
\]
The correctness of this formula can be proven by a simple induction on $n$ (which we omit here). This means that the semantics of the while-loop has the form: 
\begin{eqnarray*}
\mathtt{wp}|[\mathtt{while}(\mathtt{b}=0)\{C'\}|](1) & = &
\sup_n{}^{\mathtt{wp}}_{\langle b = 0 , C' \rangle} \Phi^{n}_{1}(0) \\
& = & 
[b \neq 0] + [b=0] \frac{k{+}1}{k{+}2} {\cdot} \left ( \sum_{i=2}^{\infty}  \frac{1}{(k{+}i)^2} \right )
\end{eqnarray*}
We can now use this result to compute the postexpectation of $\sigma . 1$ with respect to the full
program (where the while-loop is the program $C'$, whose semantics has already been calculated):
\begin{eqnarray*}
{\color{red} //} && {\color{red} \frac{\pi^2}{12} - \frac{1}{2} }\\
&& {\color{red}  =}\\
{\color{red} //} && {\color{red} 
\frac{1}{2} {\cdot} \sum_{i=2}^{\infty}  \frac{1}{i^2}}\\
&&\mathtt{b :=0;}\\
{\color{red} //} && {\color{red} 
[\mathtt{b} \neq 0]+ [\mathtt{b}=0] \frac{1}{2} {\cdot} \left ( \sum_{i=2}^{\infty}  \frac{1}{i^2} \right )}\\
&&\mathtt{k=0;}\\
{\color{red} //} && {\color{red} 
[\mathtt{b} \neq 0]+ [b=0] \frac{\mathtt{k}{+}1}{\mathtt{k}{+}2} {\cdot} \left ( \sum_{i=2}^{\infty}  \frac{1}{(\mathtt{k}{+}i)^2} \right )}\\
& & \mathtt{C'} \\
%%&&\mathtt{while (b=0)}\\
%%&&\mathtt{\{}\\
%%&&\mathtt{\ \ u := U;}\\
%%&&\mathtt{\ \ k := k+1;}\\
%%&&\mathtt{\ \ if(u > 1/(k+1)^2)}\\
%%&&\mathtt{\ \ \{}\\
%%&&\mathtt{\ \ \ \ b:=1;}\\
%%&&\mathtt{\ \ \ \ score(k/(k+1));}\\
%%&&\mathtt{\ \ \}}\\
%%&&\mathtt{\}}\\
&&{ \color{red} 1}
\end{eqnarray*}

In the last step, we used the well-known fact that the series $\sum_{i=2}^{\infty}  \frac{1}{i^2}$ converges to 
$ \frac{\pi^2}{6} - 1$, to establish that $\mathtt{wp}|[C|](1) = \frac{\pi^2}{12} - \frac{1}{2}$.
%. Hence,  the desired postexpectation is $\mathtt{wp}|[C|](1) = \frac{\pi}{12} - \frac{1}{2}$.
\end{example}

\begin{example} %\marginpar{TBC}
%We need an example showing the applicability of wlp to an AST program with scoring and sampling, preferably (a fragment/simplified version of ) one from the introduction.
%
In order to illustrate the $\mathtt{wp}$ semantics on a more realistic program,
let us recall the tortoise and hare example with soft conditioning from the introduction (with the $\mathtt{time}$ variable removed
for simplicity and the Gaussian density in $\mathtt{score}$ replaced by $\mathtt{softeq}$ to ensure that scores are bounded).
After expanding the syntactic sugar,
this program has the following form:
\begin{verbatim}
u1 := U;
t := Gaussian_inv_cdf(5,2,u1);   
h := 0.0;
while (h < t)
{
  t := t + 1 + e;
  u2 := U;
  if (u2 < 0.25)
  {
    u3 := U;
    h := h + Gaussian_inv_cdf(4,2,u3);
  }
}
score(softeq(t, 60.0));
\end{verbatim}

Let us suppose we want to calculate the expected distance travelled by the tortoise before it gets caught.
To this end, we need to calculate $\mathtt{wp}|[C|](t)$ for the above program $C$ and apply it to the empty
initial state (or in fact any initial state, as the program contains no free variables). 

Like in the previous example, we begin by calculating the characteristic function of the while-loop. 
%By plugging the predicate $\mathtt{h} < \mathtt{t}$
%into the formula for the characteristic function, we get 
%${}^{\mathtt{wp}}_{\langle h < t , C' \rangle} \Phi_{f}(X) = [\mathtt{h} \geq \mathtt{t}]{\cdot}f +[\mathtt{h} < \mathtt{t}]{\cdot}\mathtt{wp}|[C'|](X)$,
%where $C'$ is the body of the while loop. By Lemma~\ref{lemma:while-kleene}, we can then conclude that
%$\mathtt{wp}|[\mathtt{while}(h<t)\{C'\}|](f)
%= \sup_n\, {}^{\mathtt{wp}}_{\langle h < t , C' \rangle} \Phi^{n}_{f}(0)$.
%
We first need to compute $\mathtt{wp}|[C'|](X)$ for the loop body $C'$.
We assume that $X$ does not depend directly on variables $\mathtt{u2}$ and $\mathtt{u3}$---this is safe
for $X$ of the form $X = {}^{\mathtt{wp}}_{\langle h < t , C' \rangle} \Phi^{n}_{f}(0)$, as long as
$f$ does not depend directly on the aforementioned variables.
%
%To simplify presentation, we assume that $X$ does not
%depend directly on variables $\mathtt{u2}$ and $\mathtt{u3}$---we can show by a simple induction that this
%holds for $X = {}^{\mathtt{wp}}_{\langle h < t , C' \rangle} \Phi^{n}_{f}(0)$ for any $n$ (if $f$ is independent 
%of $\mathtt{u2}$ and $\mathtt{u3}$), and we only need to  apply
%$\mathtt{wp}|[C'|](\cdot)$ to functions $X$ of this form.

%\text{\tiny ($X$ independent of $\mathtt{u2}$ and $\mathtt{u3}$)} 
\begin{eqnarray*}
{\color{red} //} && {\color{red} \lambda \sigma .\, 0.25 \cdot
 \int_{(0,1)} X(\sigma[\mathtt{t} \mapsto \sigma(\mathtt{t}) {+} 1 {+} e]
[\mathtt{h} \mapsto \sigma(\mathtt{h})}\\
&& \qquad {\color{red} + \mathtt{G}(4,2, v_3)])\, \mu_L(dv_3) + 0.75 \cdot X(\sigma
[\mathtt{t} \mapsto \sigma(\mathtt{t}) {+} 1 {+} e])}\\
&&\mathtt{t := t + 1 + e;}\\
{\color{red} //} && {\color{red} \lambda \sigma .\, 0.25 \cdot \int_{(0,1)} X(\sigma
[\mathtt{h}  \mapsto \sigma(\mathtt{h} ) + \mathtt{G}(4,2, v_3)])\, \mu_L(dv_3) + 0.75 \cdot
X(\sigma)}\\
 && {\color{red} =}\\
{\color{red} //} && {\color{red} \lambda \sigma . \int_{(0,1)}[v_2 < 0.25]
 \int_{(0,1)} X(\sigma[\mathtt{u2} \mapsto v_2][\mathtt{u3} \mapsto v_3] 
[\mathtt{h}  \mapsto \sigma(\mathtt{h} )}\\
&& \qquad {\color{red} + \mathtt{G}(4,2, v_3)])\, \mu_L(dv_3) + [v_2 \geq 0.25] 
X(\sigma[\mathtt{u2} \mapsto v_2])
\, \mu_L(dv_2)}\\
%&& \qquad  { \color{red} + [\sigma(\mathtt{u2}) \leq 0.25] X(\sigma)}\\
&&\mathtt{u2 := U;}\\
{\color{red} //} && {\color{red} \lambda \sigma . [\sigma(\mathtt{u2}) < 0.25] \int_{(0,1)} X(\sigma[\mathtt{u3} \mapsto v_3] 
[\mathtt{h}  \mapsto \sigma(\mathtt{h} ) {+} \mathtt{G}(4,2, v_3)]\, \mu_L(dv_3)}\\
&& \qquad  { \color{red} + [\sigma(\mathtt{u2}) \geq 0.25] \cdot  X(\sigma)}\\
&&\mathtt{ if (u2 < 0.25)}\\
&&\mathtt{\{}\\
{\color{red} //} && {\color{red} \lambda \sigma . \int_{(0,1)} X(\sigma[\mathtt{u3} \mapsto v_3] 
[\mathtt{h}  \mapsto \sigma(\mathtt{h} ) {+} \mathtt{G}(4,2, v_3)]\, \mu_L(dv_3)}\\
&&\mathtt{\ \ u3 := U;}\\
{\color{red} //} && {\color{red} \lambda \sigma . X(\sigma[\mathtt{h}  \mapsto \sigma(\mathtt{h} ) {+} \mathtt{G}(4,2, \sigma(\mathtt{u3}))]) }\\
&&\mathtt{\ \ h := h + Gaussian\_inv\_cdf(4,2,u3); }\\
{\color{red} //} && {\color{red} X}\\
&&\mathtt{\}}\\
{\color{red} //} && {\color{red} X}
\end{eqnarray*}

%Now, if we write $\sigma'_x$ for 

Thus, we have 
$  {}^{\mathtt{wp}}_{\langle h < t , C' \rangle} \Phi_{f}(X) =
\lambda \sigma .\, 
[\sigma(\mathtt{h}) \geq \sigma(\mathtt{t})]{\cdot}f(\sigma) +[\sigma(\mathtt{h}) < \sigma(\mathtt{t})]{\cdot}
(0.25 \cdot
 \int_{[0,1]} X(\sigma'_{\sigma, v_3})\, \mu_L(dv_3) + 0.75 \cdot X(\sigma''_\sigma) ) )$,
where $\sigma'_{\sigma, v_3} = \sigma[\mathtt{t} \mapsto \sigma(\mathtt{t}) + 1 + e]
[h \mapsto \sigma(\mathtt{h}) + \mathtt{G}(4,2, v_3)]$ is the state $\sigma$
updated after a step where both the tortoise and the hare moved (the latter by $\mathtt{G}(4,2, v_3)$)
and $\sigma''_{\sigma} = \sigma
[\mathtt{t} \mapsto \sigma(\mathtt{t}) + 1 + e]$ is state $\sigma$ updated after a step where
the hare stood still.

By the inductive definition of the $\mathtt{wp}$ operator, we have $\mathtt{wp}|[\mathtt{while}(\mathtt{h} < \mathtt{t}) \{C'\}|] = 
\sup_n\, {}^{\mathtt{wp}}_{\langle h < t , C' \rangle} \Phi^{n}_{f}(0)$.
Then $\sup_n\, {}^{\mathtt{wp}}_{\langle h < t , C' \rangle} \Phi^{n}_{f}(0)$ is guaranteed to exist,
but unlike in the previous example, it does not have a nice closed form. This is indeed the case for most
real-world programs.

%Thus,  $\mathtt{wp}|[C'|](X) = \lambda \sigma .\, 0.25 \cdot
% \int_{[0,1]} X(\sigma[\mathtt{t} \mapsto \sigma(\mathtt{t}) + 1 + e] [\mathtt{u3} \mapsto v_3] 
%[h \mapsto \sigma(\mathtt{h}) + \mathtt{Gaussian\_inv\_cdf}(4,2, v_3)]\, \lambda(dv_3) + 0.75 \cdot X(\sigma
%[\mathtt{t} \mapsto \sigma(\mathtt{t}) + 1 + e][\mathtt{u2} \mapsto v_2])$. 

%and so the
%characteristic function is 
%$\lambda \sigma .  {}^{\mathtt{wp}}_{\langle h < t , C' \rangle} \Phi_{f}(X) = \lambda \sigma .\, 
%[\sigma(\mathtt{h}) \geq \sigma(\mathtt{t})]{\cdot}f +[\sigma(\mathtt{h}) < \sigma(\mathtt{t})]{\cdot}
%(0.25 \cdot
% \int_{[0,1]} X(\sigma[\mathtt{t} \mapsto \sigma(\mathtt{t}) + 1 + e] [\mathtt{u3} \mapsto v_3] 
%[h \mapsto \sigma(\mathtt{h}) + \mathtt{Gaussian\_inv\_cdf}(4,2, v_3)]\, \lambda(dv_3) + 0.75 \cdot X(\sigma
%[\mathtt{t} \mapsto \sigma(\mathtt{t}) + 1 + e][\mathtt{u2} \mapsto v_2]) )$. 
%

%In the step marked with $*$, we made the assumption that $X$ does not depend directly on variables $\mathtt{u2}$
%and $\mathtt{u3}$. 

We can now derive the formula for the expected final value of $\mathtt{t}$:

\begin{eqnarray*}
{\color{red} //} && {\color{red} \lambda \sigma . \int_{(0,1)}  \sup_n\, {}^{\mathtt{wp}}_{\langle h < t , C' \rangle} 
  \Phi^{n}_{e^{-(\mathtt{t}- 60.0)^2} \mathtt{t}}(0)}\\
{\color{red} //} && {\color{red} \qquad (\sigma[t \mapsto \mathtt{G}(5,2,\sigma(v_1))][\mathtt{h} \mapsto 0])
\, \mu_L(dv_1) }\\
&& {\color{red} =}\\
{\color{red} //} && {\color{red} \lambda \sigma . \int_{(0,1)}  \sup_n\, {}^{\mathtt{wp}}_{\langle h < t , C' \rangle} 
  \Phi^{n}_{ e^{-(\mathtt{t}- 60.0)^2} \mathtt{t}}(0)}\\
&&{ \color{red} \qquad (\sigma[\mathtt{u1} \mapsto v_1] [t \mapsto \mathtt{G}(5,2,\sigma(\mathtt{u1}))][\mathtt{h} \mapsto 0])
\, \mu_L(dv_1) }\\
&&\mathtt{u1 := U;}\\
{\color{red} //} && {\color{red} \lambda \sigma . \sup_n\, {}^{\mathtt{wp}}_{\langle h < t , C' \rangle} 
  \Phi^{n}_{ e^{-(\mathtt{t}- 60.0)^2} \mathtt{t}}(0)
(\sigma[t \mapsto \mathtt{G}(5,2,\sigma(\mathtt{u1}))][\mathtt{h} \mapsto 0]) }\\
&&\mathtt{t := Gaussian\_inv\_cdf(5,2,u1);}\\
%{\color{red} //} && {\color{red} TODO}\\
%{\color{red} //} && {\color{red} \lambda \sigma .  \sup_n (\lambda  X . [\mathtt{h} \geq \mathtt{t}]{\cdot}
% \mathtt{softeq}(\mathtt{t}, 60.0){\cdot} \mathtt{t}
% +[\mathtt{h} < \mathtt{t}]{\cdot}\mathtt{wp}|[C'|](X) )^n(0) (\sigma[\mathtt{h} \mapsto 0]) }\\
{\color{red} //} && {\color{red} \lambda \sigma . \sup_n\, {}^{\mathtt{wp}}_{\langle h < t , C' \rangle} 
  \Phi^{n}_{e^{-(\mathtt{t}- 60.0)^2} \mathtt{t}}(0)(\sigma[\mathtt{h} \mapsto 0]) }\\
&&\mathtt{h := 0.0;}\\
{\color{red} //} && {\color{red} \sup_n\, {}^{\mathtt{wp}}_{\langle h < t , C' \rangle} 
  \Phi^{n}_{ e^{-(\mathtt{t}- 60.0)^2} \mathtt{t}}(0) }\\
&&\mathtt{while (h < t) \{...\}}\\
{\color{red} //} && {\color{red} \ e^{-(\mathtt{t}- 60.0)^2} \mathtt{t}}\\
&&\mathtt{score(softeq(t, 60.0));}\\
{\color{red} //} && {\color{red} \mathtt{t}}
\end{eqnarray*}

In the last step, we used the fact that
$ {}^{\mathtt{wp}}_{\langle h < t , C' \rangle} 
  \Phi^{n}_{\mathtt{softeq}(\mathtt{t}, 60.0) \mathtt{t}}(0)$
does not depend directly on $\mathtt{u1}$. We have now derived the expression for the weakest preexpectation semantics
of the program $C$:
\begin{eqnarray*}
\mathtt{wp}|[\mathtt{C}|](t) &= &
 \lambda \sigma . \int_{(0,1)}  \sup_n\, 
  \Phi^{n}(0) (\sigma[t \mapsto \mathtt{Gaussian\_inv\_cdf}(5,2,\sigma(v_1))][\mathtt{h} \mapsto 0])\\
\text{where} &&\\
%\Phi^0 (X) &=&\lambda \sigma . \sigma \\
\Phi(X) &=&
\lambda \sigma .\, 
[\sigma(\mathtt{h}) \geq \sigma(\mathtt{t})]{\cdot}\mathtt{softeq}(\sigma(\mathtt{t}), 60.0) \sigma(\mathtt{t}) \\
&& \qquad + [\sigma(\mathtt{h}) < \sigma(\mathtt{t})]{\cdot}
(0.25 \cdot
 \int_{(0,1)} X(\sigma'_{\sigma, v_3})\, \mu_L(dv_3) \\
&&\qquad + 0.75 \cdot X(\sigma''_\sigma) ) )\\
\sigma'_{\sigma, v_3} &=& \sigma[\mathtt{t} \mapsto \sigma(\mathtt{t}) + 1 + e]
[h \mapsto \sigma(\mathtt{h}) + \mathtt{Gaussian\_inv\_cdf}(4,2, v_3)]\\
\sigma''_{\sigma} &=& \sigma [\mathtt{t} \mapsto \sigma(\mathtt{t}) + 1 + e].
\end{eqnarray*}
\end{example}

\subsection{Weakest liberal preexpectation semantics}

We now define a different variant of the above semantics, called the 
\emph{weakest liberal preexpectation} semantics ($\mathtt{wlp}$).  
In standard, discrete $\mathtt{pGCL}$ without scores \cite{MM05:AbstractionRefinementProofForProbabilisticSystems,Olmedo18},
the weakest liberal preexpectation defines the expected value of a function bounded by $1$ (as per $\mathtt{wp}$) plus the probability of
divergence---in other words, in contrast to $\mathtt{wp}$, the $\mathtt{wlp}$ operator considers the value of the input function to be
$1$, rather than $0$, for diverging program runs. 
If the input function is a binary predicate $\phi$, $\mathtt{wlp}$ defines the probability
of this predicate being satisfied in the final state \emph{or} the program never terminating.

In \ccpgclns, the concept of weakest liberal preexpectation is similar, except that probabilities of all outcomes again have to be multiplied by scores encountered during the program's execution. 
Formally, $\mathtt{wp}|[C |](\cdot)$, takes a measurable function $f$ mapping $\statespace$ to 
$[0,1]$ and returns another measurable function from $\statespace$ to $[0,1]$. 
Note that, in contrast to $\mathtt{wp}$, the domain of the input function is restricted to the unit interval.

The $\mathtt{wlp}$ operator is defined in Fig.~\ref{figure:wlp-ccpgcl}, with changes from $\mathtt{wp}$ marked in blue. 

The semantics of a $\mathtt{while}$ loop is now computed with the greatest fixpoint rather than the least fixpoint---this
has the effect that the ``default'' outcome for diverging loops is $1$ instead of $0$. Similarly, $\mathtt{diverge}$ converts every 
function into a constant $1$ function. 
The remaining changes are just that the recursive invocations to $\mathtt{wp}$ are replaced with calls to $\mathtt{wlp}$.

\begin{figure}[t]
\centering
%\begin{minipage}{2.3cm}
\begin{eqnarray*}
\mathtt{wlp}|[ \mathtt{skip} |](f) & \ = \ & f \\[1ex]
\mathtt{wlp}|[ \mathtt{diverge} |](f) &=& \textcolor{blue}{1} \\[1ex]
\mathtt{wlp}|[ x := E |](f) &=& \lambda \sigma . f(\sigma[x \mapsto \sigma(E)])\\[1ex]
\mathtt{wlp} |[  x :\approx U  |] (f) &=& \lambda \sigma . \int_{[0,1]} 
f(\sigma[x \mapsto v ])  \, \mu_L(dv)\\[1ex]
\mathtt{wlp}|[ \mathtt{observe}(\phi)|](f)  &=&\lambda \sigma . [\sigma(\phi)]{\cdot}f(\sigma)\\[1ex]
\mathtt{wlp} |[ \mathtt{score}(E) |] (f) &=& \lambda \sigma . [\sigma(E) \in (0,1]]{\cdot}\sigma(E) {\cdot} f(\sigma)\\[1ex]
\mathtt{wlp}|[ C_1;C_2 |](f) &=& \textcolor{blue}{\mathtt{wlp}}|[ C_1|](\textcolor{blue}{\mathtt{wlp}}|[C_2 |](f))\\[1ex]
\mathtt{wlp}|[\mathtt{if}(\phi)\{ C\} |](f) &=& [\phi]{\cdot}\textcolor{blue}{\mathtt{wlp}}|[C|](f)+ [\neg \phi]{\cdot}f \\[1ex]
\mathtt{wlp}|[ \mathtt{while}(\phi)\{C\} |](f) &=& \textcolor{blue}{\mathtt{gfp}} X . [\neg \phi]{\cdot}f + [\phi]{\cdot}\textcolor{blue}{\mathtt{wlp}}|[C|](X)
\end{eqnarray*}
%\end{minipage}
\caption{Weakest liberal preexpectation semantics of \ccpgcl}
 \label{figure:wlp-ccpgcl}
\end{figure}

\paragraph{Well-definedness of $\mathtt{wlp}$.}
To show that the liberal semantics is well-defined, we use a similar argument as for $\mathtt{wp}$. First, note that
if we restrict the set of measurable functions $f \colon \statespace -> \extposreals$ to functions
$f \colon \statespace -> [0,1]$ with values in $ [0,1]$, the constant function $\lambda \sigma . 1$ (denoted $1$ in short) 
is its top element. Hence, we can invert the complete partial order to get an $\omega$-cpo with
inverse pointwise ordering and a ``bottom'' element $1$. %$f: \statespace -> [0,1]$
The supremum of functions in this inverted cpo corresponds to the infimum in the original cpo,
so continuity of $\mathtt{wlp}$ can again be proven by induction using the same domain-theoretic results.
In the proof of measurability
of $\mathtt{wlp}|[C|](f)$, we use the fact that the infimum of a sequence of measurable functions is measurable,
just like with supremum. 
By Kleene's Fixpoint Theorem, we again know that
$\mathtt{gfp}\ X . [\neg \phi]{\cdot}f + [\phi]{\cdot}\mathtt{wlp}|[C|](X)$ exists and 
equals $\inf_n {}^{\mathtt{wlp}}_{\langle \phi, C \rangle} \Phi_f^n(1)$,
where ${}^{\mathtt{wlp}}_{\langle \phi, C \rangle} \Phi_f(X) = [\neg \phi]{\cdot}f + [\phi]{\cdot} \mathtt{wlp}|[C|](X)$.

Note that the weakest liberal preexpectation is only defined for bounded postexpectations $f$, as we need some upper bound to set the preexpectation to in case of divergence. 
If we chose this bound to be $\lambda \sigma. \infty$, $\mathtt{wlp}$ would effectively always be set to $\infty$ for all non almost-surely terminating programs, rendering the semantics useless.

\begin{example} %\marginpar{TBC}

To show how $\mathtt{wlp}$ differs from $\mathtt{wp}$, let us consider Example 2 again. This time, we want to compute $\mathtt{wlp}|[C|](1)$,
where $C$ is again the full program. Like before, we begin by computing the semantics of the loop. As the body $C'$ of the loop is
itself loop (and $\mathtt{diverge}$)-free, we have $\mathtt{wlp}|[C'|](1) = \mathtt{wp}|[C'|](1) = 
 \lambda \sigma .
 \frac{\sigma(\mathtt{k})+1}{(\sigma(\mathtt{k})+2)^3} X(\sigma[\mathtt{k} \mapsto \sigma(\mathtt{k})+1][\mathtt{b} \mapsto 1])
+  \frac{(\sigma(\mathtt{k})+2)^2 - 1}{(\sigma(\mathtt{k})+2)^2}  X(\sigma[k \mapsto \sigma(k)+1]))$,
which implies ${}^{\mathtt{wlp}}_{\langle \mathtt{b}=0, C' \rangle} \Phi_1(X) =
{}^{\mathtt{wp}}_{\langle \mathtt{b}=0, C' \rangle} \Phi_1(X) =
\lambda \sigma . 
[\sigma(\mathtt{b}) \neq 0] + [\sigma(\mathtt{b})=0] (
 \frac{\sigma(\mathtt{k})+1}{(\sigma(\mathtt{k})+2)^3} X(\sigma[\mathtt{k} \mapsto \sigma(\mathtt{k})+1][\mathtt{b} \mapsto 1])
+  \frac{(\sigma(\mathtt{k})+2)^2 - 1}{(\sigma(\mathtt{k})+2)^2}  X(\sigma[k \mapsto \sigma(k)+1])) )$.
%Thus, $\mathtt{wlp}|[\mathtt{while}(\mathtt{b}=0)\{C'\}|](1) = \inf_n 
%{}^{\mathtt{wlp}}_{\langle \mathtt{b}=0, C' \rangle} \Phi_1^n(1)$
%%
The first terms of the sequence ${}^{\mathtt{wlp}}_{\langle \mathtt{b}=0, C' \rangle} \Phi_1(1)$ are as follows:
\begin{eqnarray*}
{}^{\mathtt{wlp}}_{\langle \mathtt{b} = 0, C' \rangle} \Phi^{0}_{1}(1) &=& [\mathtt{b} \neq 0] \\
{}^{\mathtt{wlp}}_{\langle \mathtt{b} = 0 , C' \rangle} \Phi^{1}_{1}(1) &=& [\mathtt{b} \neq 0] 
{+} [\mathtt{b} = 0] \left( \frac{\mathtt{k}{+}1}{\mathtt{k}{+}2}  \frac{1}{(\mathtt{k}{+}2)^2} + \frac{(\mathtt{k}{+}2)^2{-}1}{(\mathtt{k}{+}2)^2} \right)\\
{}^{\mathtt{wlp}}_{\langle \mathtt{b} = 0 , C' \rangle} \Phi^{2}_{1}(1) &=& [\mathtt{b} \neq 0] 
+ [\mathtt{b} = 0] \left( \frac{\mathtt{k}{+}1}{\mathtt{k}{+}2} \left( \frac{1}{(\mathtt{k}{+}2)^2} + \frac{1}{(\mathtt{k}{+}3)^2}\right ) \right.\\ 
&& \phantom{xxxxxxxxxxxxx} +  \left. \frac{(\mathtt{k}{+}2)^2{-}1}{(\mathtt{k}{+}2)^2} {\cdot} \frac{(\mathtt{k}{+}3)^2 -1}{(\mathtt{k}{+}3)^2}  \right)\\
{}^{\mathtt{wlp}}_{\langle \mathtt{b} = 0 , C' \rangle} \Phi^{3}_{1}(1) &=& [\mathtt{b} \neq 0] 
+ [\mathtt{b} = 0] \left( \frac{\mathtt{k}{+}1}{\mathtt{k}{+}2} \left( \frac{1}{(\mathtt{k}{+}2)^2} + \frac{1}{(\mathtt{k}{+}3)^2}
+ \frac{1}{(\mathtt{k}{+}4)^2} \right ) \right.\\ 
&& \phantom{xxxxxxxxxxxxx} +  \left. \frac{(\mathtt{k}{+}2)^2{-}1}{(\mathtt{k}{+}2)^2} {\cdot} \frac{(\mathtt{k}{+}3)^2{-}1}{(\mathtt{k}{+}3)^2}  
{\cdot} \frac{(\mathtt{k}{+}4)^2{-}1}{(\mathtt{k}{+}4)^2} \right) \\
&\dots&
\end{eqnarray*}
We can now see what the pattern is:
\[
{}^{\mathtt{wlp}}_{\langle b = 0 , C' \rangle} \Phi^{n}_{1}(0) = [b \neq 0] 
+ [b=0] {\cdot} \left ( \frac{k{+}1}{k{+}2} {\cdot} \sum_{i=2}^{n{+}1}  \frac{1}{(k{+}i)^2} + \prod_{i=2}^{n{+}1} 
\frac{(\mathtt{k} {+} i)^2{-}1}{(\mathtt{k} {+} i)^2} \right )
\]
Moreover, we can quickly check that 
$\prod_{i=2}^{n+1} 
\frac{(\mathtt{k}{+}i)^2-1}{(\mathtt{k}{+} i)^2} = \frac{\mathtt{k}{+}1}{\mathtt{k}{+}2} {\cdot}\frac{\mathtt{k}{+}n{+}2}{\mathtt{k}{+}n{+}1}$.
We obtain $\mathtt{wlp}|[C|]$ by computing
the $\mathtt{wlp}$ of $\inf_n {}^{\mathtt{wlp}}_{\langle b = 0 , C' \rangle} \Phi^{n}_{1}(0)$
with respect to the two initial statements, $\mathtt{k := 0}$ and $\mathtt{b := 0}$.
Thus, 
$$
\begin{array}{lcl}
\mathtt{wlp}|[C|](1) & = & \displaystyle \frac{1}{2} {\cdot} \inf_n   \sum_{i=2}^{n+1}  \frac{1}{i^2}
+ \frac{1}{2} \frac{n{+}2}{n{+}1} \\
& = & \displaystyle \frac{1}{2} {\cdot} \lim_{n ->\infty} \sum_{i=2}^{n+1}  \frac{1}{i^2}
+ \frac{1}{2} \frac{n{+}2}{n{+}1} 
\ = \ \frac{1}{2} {\cdot} \left( \frac{\pi^2}{6} {-} 1 \right) + \frac{1}{2} 
\ = \ \frac{\pi^2}{12}.
\end{array}
$$
\end{example}

\subsection{Redundancy of $\mathtt{score}$}

With respect to the weakest preexpectations semantics, the $\mathtt{score}$ operator admitting only arguments bounded by one is redundant, because scoring by a number in the unit interval can be simulated by rejection sampling without affecting the expected value of the given function. 
We show this result in this section.

To this end, we first need some additional concepts. 
Let $\mathtt{dom}(\sigma)$ be the set of variables which are assigned values in state $\sigma$. 
A function $f$ is said to be \emph{independent} of a variable $x$ if %$f(\sigma[x \mapsto V_1]) = f(\sigma[x \mapsto V_2])$
$f(\sigma) = f(\sigma[x \mapsto V])$
for all $\sigma \in \statespace$ and $V \in \mathbb{R}$.
%\marginpar{why is $\mathbb{Z}$ explicitly mentioned here?}
%$V_1, V_2 \in \mathbb{R} \cup \mathbb{Z}$.
Let $\mathtt{fv}(E)$ be the set of free variables of an expression $E$, and $\mathtt{vars}(E)$ and $\mathtt{vars}(C)$ be the sets of all variables (free or bound) appearing in, respectively, the expression $E$ and the program $C$.

\begin{lemma} \label{lemma:score-redundancy-step}
For every expectation $f$, expression $E$ and variable $u$  %$\sigma$ and $E$ 
such that %$u \notin \mathtt{dom}(\sigma)$ and
$u \notin \mathtt{vars}(E)$ and
$f$ is independent of $u$, it holds:
\begin{eqnarray*}
\mathtt{wp}|[\mathtt{score}(E)|](f) \ &=& \ 
\mathtt{wp}|[u :\approx U; \mathtt{observe}(E \in (0,1] \wedge u \leq E)|](f) \\[1ex] 
\mathtt{wlp}|[\mathtt{score}(E)|](f) \ &=& \
\mathtt{wlp}|[u :\approx U; \mathtt{observe}(E \in (0,1] \wedge u \leq E)|](f).
\end{eqnarray*}
\end{lemma}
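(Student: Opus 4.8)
The plan is to prove each of the two identities by a direct computation, unfolding the definitions of the semantic operators given in Figures~\ref{figure:wp-ccpgcl} and~\ref{figure:wlp-ccpgcl}. Since the left-hand sides are already in closed form, the real work is on the right-hand sides, where I must compose the semantics of a random draw, a sequencing, and an \texttt{observe}. The key observation driving the whole argument is that the \texttt{observe}-predicate ``$u \le E$'' carves out exactly a subinterval of $[0,1]$ whose Lebesgue measure equals the value of $E$ (when $\sigma(E) \in (0,1]$), so that integrating over the uniform draw reproduces the multiplicative factor $\sigma(E)$ that \texttt{score} contributes.

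Concretely, for the \texttt{wp} case I would start from $\mathtt{wp}|[u :\approx U; \mathtt{observe}(E \in (0,1] \wedge u \leq E)|](f)$ and rewrite it as $\mathtt{wp}|[u :\approx U|]\bigl(\mathtt{wp}|[\mathtt{observe}(\psi)|](f)\bigr)$ where $\psi$ abbreviates the predicate $E \in (0,1] \wedge u \leq E$. Using the \texttt{observe} rule this inner preexpectation is $\lambda\sigma.[\sigma(\psi)]\cdot f(\sigma)$, and applying the random-draw rule gives
\[
\lambda \sigma . \int_{[0,1]} [\sigma[u \mapsto v](\psi)] \cdot f(\sigma[u \mapsto v]) \, \mu_L(dv).
\]
Here is where the two side conditions enter. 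Because $f$ is independent of $u$, we have $f(\sigma[u \mapsto v]) = f(\sigma)$, so $f(\sigma)$ pulls out of the integral. Because $u \notin \mathtt{vars}(E)$, the value $\sigma[u\mapsto v](E)$ does not depend on $v$ and equals $\sigma(E)$; thus the Iverson bracket becomes $[\sigma(E) \in (0,1]] \cdot [v \le \sigma(E)]$. The integrand therefore reduces to $[\sigma(E)\in(0,1]]\cdot[v \le \sigma(E)]$, and $\int_{[0,1]} [v \le \sigma(E)]\,\mu_L(dv)$ is just the length of $[0,\sigma(E)]$, namely $\sigma(E)$, whenever $\sigma(E)\in(0,1]$. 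Collecting these gives $\lambda\sigma.[\sigma(E)\in(0,1]]\cdot \sigma(E)\cdot f(\sigma)$, which is exactly $\mathtt{wp}|[\mathtt{score}(E)|](f)$ by definition.

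The \texttt{wlp} identity follows by the same chain of rewrites: the rules for \texttt{score}, \texttt{observe}, random draw, and sequencing are textually identical between Figures~\ref{figure:wp-ccpgcl} and~\ref{figure:wlp-ccpgcl} for this loop- and \texttt{diverge}-free program fragment, so the computation above goes through verbatim with \texttt{wp} replaced by \texttt{wlp}. I would state this explicitly rather than repeat the algebra.

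I expect the main obstacle to be the careful handling of the Iverson bracket and the boundary case, namely arguing cleanly that the two hypotheses ($u \notin \mathtt{vars}(E)$ and $f$ independent of $u$) are exactly what is needed to move $f(\sigma)$ and $\sigma(E)$ outside the integral, and that the measure of the relevant subinterval is $\sigma(E)$. One subtlety worth flagging: when $\sigma(E) \notin (0,1]$ the bracket $[\sigma(E)\in(0,1]]$ vanishes and both sides are $0$, so the identity holds trivially there; when $\sigma(E)\in(0,1]$ the interval $\{v \in [0,1] : v \le \sigma(E)\}$ is well within $[0,1]$ and has measure precisely $\sigma(E)$. Since Lebesgue measure is insensitive to the endpoint $\{\sigma(E)\}$, the distinction between $u \le E$ and $u < E$ is immaterial, which I would note in passing. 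No integrability issues arise because $f$ is bounded (in the \texttt{wlp} case by construction, and in the \texttt{wp} case over the finite-measure domain $[0,1]$ the integrand is a bounded multiple of the measurable $f$), so Tonelli applies and all integrals are well defined.
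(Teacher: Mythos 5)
Your proof is correct and follows essentially the same route as the paper's: unfold the sequencing, \texttt{observe}, and random-draw rules, use $u \notin \mathtt{vars}(E)$ and the independence of $f$ from $u$ to pull $[\sigma(E)\in(0,1]]$ and $f(\sigma)$ out of the integral, evaluate the Lebesgue integral to $\sigma(E)$, and dispatch the \texttt{wlp} case by noting that \texttt{wp} and \texttt{wlp} coincide on this loop- and \texttt{diverge}-free fragment. Your explicit handling of the boundary case $\sigma(E)\notin(0,1]$ is in fact slightly more careful than the paper's penultimate step, which omits the Iverson bracket.
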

\begin{proof}

For $\mathtt{wp}$ we have:
\begin{eqnarray*}
&   & \mathtt{wp}|[u :\approx U|](\mathtt{wp}|[\mathtt{observe}(E \in (0,1] \wedge u \leq E)|](f)) \\
&= & \lambda \sigma . \int_{[0,1]} \mathtt{wp}|[\mathtt{observe}(E \in (0,1] \wedge u \leq E)|](f)(\sigma[u \mapsto v]) \, \mu_L(dv)\\
&=& \lambda \sigma .  \int_{[0,1]} [\sigma[u \mapsto v](E) \in (0,1] ] [v \leq \sigma[u \mapsto v](E)] f(\sigma[u \mapsto v])\, \mu_L(dv) \\
{\tiny\text{($*$)}}&=& \lambda \sigma .  \int_{[0,1]} [\sigma(E) \in (0,1] ] [v \leq \sigma(E)] f(\sigma)\, \mu_L(dv) \\
&=& \lambda \sigma .  f(\sigma) \int_{[0,1]} [\sigma(E) \in (0,1] ] [v \leq \sigma(E)] \, \mu_L(dv) \\
{\tiny\text{(Lebesgue)}}&=& \lambda \sigma . f(\sigma) \cdot \sigma(E) \\
&=& \mathtt{wp}|[\mathtt{score}(E)|](f).
\end{eqnarray*}

Proof step $(*)$ follows from the fact that $u \notin \mathtt{fv}(E)$ and that $f$ is independent of $u$.
%%
%{\tiny\text{($u$ not in $\mathtt{fv}(E)$, $f$ indep. of $u$)}}
%%
The above result also proves the second item of the lemma, as $\mathtt{wp}$ and $\mathtt{wlp}$ coincide for programs without loops and $\mathtt{diverge}$ statements.
\qed \end{proof}

%is a fresh variable.%u can be the same for each draw, as the result of the draw is used immediately and ignored later

%Define $\mathtt{noscore}(C)$ to be a function which transforms the program
%$C$ into a $\mathtt{score}$-free program by replacing each expression of the
%form $\mathtt{score}(E)$ with $u :\approx U; \mathtt{observe}(u \leq E)$,
%where $u \notin \mathtt{fv}(E)$.

Let $\mathtt{noscore}(C)$ denote the program obtained from program $C$ by replacing each expression of the form $\mathtt{score}(E)$ by $u :\approx U; \mathtt{observe}(u \leq E)$
for sufficiently fresh variable $u \notin \mathtt{vars}(E)$. 
By ``sufficiently fresh'' we mean that $u$ does not appear in the program $C$ and that no function $f$ whose expected value we are interested in depends on $u$.
(We do not formalise this notion for the sake of brevity.)

\begin{lemma} \label{lemma:score-redundancy}
For every expectation $f$ we have:
\[ \mathtt{wp}|[ \mathtt{noscore}(C)|](f) = \mathtt{wp}|[C|](f) \quad
\mbox{and} \quad 
\mathtt{wlp}|[ \mathtt{noscore}(C)|](f) = \mathtt{wlp}|[C|](f).
\]
\end{lemma}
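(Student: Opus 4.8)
The plan is to prove both identities in parallel by structural induction on $C$. The only clauses in which $\mathtt{wp}$ and $\mathtt{wlp}$ genuinely differ are $\mathtt{diverge}$ (mapping to $0$ versus $\textcolor{blue}{1}$) and the fixpoint used for $\mathtt{while}$ ($\mathtt{lfp}$ versus $\mathtt{gfp}$); since $\mathtt{noscore}$ leaves $\mathtt{diverge}$ untouched and the loop case will only rely on the respective Kleene characterisations, the two transformers can be treated by identical reasoning. I would carry a \emph{strengthened} induction hypothesis: for every program $C$ and every expectation $g$ that is independent of the auxiliary variables $u$ which $\mathtt{noscore}$ introduces when rewriting $C$, we have $\mathtt{wp}|[\mathtt{noscore}(C)|](g) = \mathtt{wp}|[C|](g)$ and this common value is \emph{again} independent of those $u$ (and likewise for $\mathtt{wlp}$). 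The extra independence clause is exactly what propagates the freshness condition through sequencing and loops.

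For the base cases $\mathtt{skip}$, $\mathtt{diverge}$, $x := E$, $x :\approx U$ and $\mathtt{observe}(\phi)$ the map $\mathtt{noscore}$ is the identity, so equality is immediate and independence of $u$ is trivially preserved (these statements never mention $u$). The only interesting base case is $\mathtt{score}(E)$, which is settled directly by Lemma~\ref{lemma:score-redundancy-step}: the freshness of $u$ gives $u \notin \mathtt{vars}(E)$ and that the postexpectation is independent of $u$, which are precisely the hypotheses of that lemma. After the draw $u :\approx U$ the value of $u$ is integrated out, so the resulting preexpectation is once more independent of $u$, discharging the strengthened clause.

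The compositional cases are routine applications of the hypothesis. For $C_1;C_2$ I would rewrite $\mathtt{noscore}(C_1;C_2) = \mathtt{noscore}(C_1);\mathtt{noscore}(C_2)$, apply the hypothesis to $C_2$ with postexpectation $f$ to obtain $\mathtt{wp}|[\mathtt{noscore}(C_2)|](f) = \mathtt{wp}|[C_2|](f)$, note that this function is independent of the auxiliary variables of $\mathtt{noscore}(C_1)$ (they are fresh both for $C_2$ and for $f$), and then apply the hypothesis to $C_1$ with that postexpectation. The conditional $\mathtt{if}(\phi)\{C\}$ is analogous, using that its defining clause is linear in $\mathtt{wp}|[C|](\cdot)$ and that $\phi$ does not mention the fresh variables.

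The $\mathtt{while}$ loop is the main obstacle, and the crux is keeping the freshness invariant alive across unfoldings. By Lemma~\ref{lemma:while-kleene} (and its $\mathtt{gfp}$/infimum analogue for $\mathtt{wlp}$) it suffices to show that the characteristic functions for $\mathtt{noscore}(C)$ and for $C$ produce the same Kleene iterates on $0$ (respectively the same decreasing iterates from $1$). I would prove by an inner induction on $n$ that each iterate $Y_n$ is independent of the auxiliary variables $u$ of the loop body and that the two iterate sequences coincide: given $Y_n$ independent of $u$, the strengthened outer hypothesis applied to the loop body yields $\mathtt{wp}|[\mathtt{noscore}(C)|](Y_n) = \mathtt{wp}|[C|](Y_n)$ together with independence of $u$, so the two recurrences $Y_{n+1} = [\neg\phi]{\cdot}f + [\phi]{\cdot}\mathtt{wp}|[C|](Y_n)$ are literally identical and the invariant is maintained. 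Taking suprema (respectively infima) then gives equal fixpoints. The delicate point is precisely that a single $\mathtt{score}$ inside the loop body is replaced by one fixed fresh variable $u$ that is overwritten by its own draw on every iteration; the independence clause guarantees that the value left in $u$ by one iteration cannot leak into the continuation of the next, which is exactly what legitimises the repeated invocation of Lemma~\ref{lemma:score-redundancy-step}.
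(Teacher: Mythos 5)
Your proposal is correct and takes essentially the same route as the paper: the paper's entire proof of this lemma reads ``By induction on the structure of $C$, with appeal to Lemma~\ref{lemma:score-redundancy-step}'', and your argument is a detailed elaboration of precisely that induction, with the loop case handled via the Kleene iterates of Lemma~\ref{lemma:while-kleene} as one would expect. Your strengthened induction hypothesis tracking independence of the fresh variables (and its preservation through sequencing and loop unfoldings) is exactly the bookkeeping the paper leaves implicit in its informal notion of ``sufficiently fresh''.
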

\begin{proof}
By induction on the structure of $C$, with appeal to Lemma~\ref{lemma:score-redundancy-step}.
\qed \end{proof}

\section{Operational semantics}

In addition to the denotational semantics, we also present an operational semantics of \ccpgclns. 
Apart from serving as a sanity check for the wp-semantics, this semantics is of interest on its own: an operational semantics is typically closer to a sample-based semantics that provides the basis for simulation-based evaluation of probabilistic programs (such as MCMC and Metropolis Hasting), is closer to models that are amenable to automated verification techniques such as probabilistic model checking~\cite{DBLP:conf/lics/Katoen16}, and sometimes simplifies for the reasoning about probabilistic programs, such as proving some sort of program equivalence~\cite{DBLP:journals/pacmpl/WandCGC18}.

\subsection{Entropy space} \label{subsection:entropy-space}

A \emph{small-step} operational semantics of a deterministic imperative language typically takes a program $C$ and state $\sigma$ and performs a single step of program evaluation, returning a new program $C'$ and an updated state $\sigma'$.
For probabilistic languages, this is not possible, as a probabilistic program has multiple updated  states---in fact, infinitely and uncountably many of them for programs with continuous distributions---depending on the outcomes of random draws. 
A possible way around this is to define the operational semantics of probabilistic languages with respect to a fixed sequence of values sampled from subsequent distributions, called a \emph{trace}. By fixing a trace, a probabilistic program can be evaluated deterministically.
%
%Traces often have the form of finite \cite{DBLP:conf/icfp/BorgstromLGS16} or infinite \citep{park08sampling} lists of values.
%Such linear traces are easy to understand and it might be tempting to use them, but this would make the \emph{compositionality}
%of the semantics harder to achieve. Suppose, for instance, that we have 
%

Traces often have the form of finite \cite{DBLP:conf/icfp/BorgstromLGS16} or infinite \citep{park08sampling} lists of values.
To obtain a compositional semantics, we will instead use an abstract, infinite structure called \emph{entropy}, as defined by~\cite{Culpepper17} and~\cite{DBLP:journals/pacmpl/WandCGC18}.
%%
%% \marginpar{what are $\mathbb{S}, \mathcal{S}$? Formally and intuitively?}

\begin{definition}[\cite{DBLP:journals/pacmpl/WandCGC18}]
An \emph{entropy space} is a measurable space $(\mathbb{S}, \mathcal{S})$ equipped with
a measure $\mu_{\mathbb{S}}$ with $\mu_\mathbb{S}(\mathbb{S}) = 1$, and measurable functions $\pi_{U}\colon \mathbb{S} -> [0,1]$, $(\mathrel{::}) \colon \mathbb{S} \times \mathbb{S} -> \mathbb{S}$,
$\pi_L, \pi_R \colon \mathbb{S} -> \mathbb{S}$ such that:
\begin{itemize}
\item For all measurable functions $f \colon [0,1] -> \extposreals$ and Lebesgue measure $\lambda$,
% \mathbb{R}_{+}$:
\[
\int f(\pi_U(\theta))\, \mu_\mathbb{S}(d\theta) = \int_{[0,1]} f(x)\, \mu_L(dx)
\]
\item $(\mathrel{::})$ is a surjective pairing function defined by:
$\pi_L(\theta_L \mathrel{::} \theta_R) = \theta_L$ and $\pi_R(\theta_L \mathrel{::} \theta_R) = \theta_R$
%$\pi_L(\theta) \mathrel{::} \pi_R(\theta) = \theta$
\item For all measurable functions $g \colon \mathbb{S} \times \mathbb{S} -> \extposreals$:  %\mathbb{R}_{+}$:
\[
\int g(\pi_L(\theta), \pi_R(\theta))\, \mu_{\mathbb{S}}(d\theta) =
\int \int g(\theta_L, \theta_R)\, \mu_{\mathbb{S}}(d\theta_L ) \mu_{\mathbb{S}}(d\theta_R ).
\]
\end{itemize}
An element $\theta \in \mathbb{S}$ of the entropy space is called an \emph{entropy}.
\end{definition}

In the above definition, $\mathbb{S}$ is the set of all possible entropies and $\mathcal{S}$ a $\sigma$-algebra on it.
The entropy space is abstract, so we do  not specify what $\mathbb{S}$ and $\mathcal{S}$ are and what
they look like, we only assume that they satisfy the above properties.

\paragraph{Example of an entropy space} A simple concrete realisation of the entropy space, for which the properties are satisfied, is the following:
%is the space $[0,1]^\omega$ of infinite-dimensional Hilbert cubes. In this entropy space:

\begin{itemize}
\item The set $\mathbb{S}$ is the set $[0,1]^\omega$ of infinite sequences of numbers in $[0,1]$ (the so-called Hilbert cube). Thus,
each entropy $S \in \mathbb{S}$ is an infinite sequence $S = (s_1, s_2, s_3 \dots)$ such that  $s_i \in [0,1]$ for all $i$.
\item The $\sigma$-algebra $\mathcal{S}$ is, intuitively, the product of infinitely many copies of the Borel $\sigma$-algebra on $[0,1$].
More formally,  $\mathcal{S}$ is the $\sigma$-algebra generated by cylinder sets of the form
$A_1 \times A_2 \times \dots \times A_k \times [0,1] \times [0,1] \times [0,1] \dots$, where $A_1$, $A_2$, \dots, $A_k$ are Borel
subsets of $[0,1]$.
\item The measure $\mu_{\mathbb{S}}$ on $(\mathbb{S}, \mathcal{S})$ is the extension of the Lebesgue measure to
the infinite product space $(\mathbb{S}, \mathcal{S})$. Formally, it
is the unique (by Kolmogorov's extension theorem) measure such that for all finite sequences of 
Borel subets $A_1$, $A_2$, \dots, $A_k$ of $[0,1]$, we have
$\mu_{\mathbb{S}}(A_1 \times A_2 \times \dots \times A_k 
\times [0,1]^\omega) = \mu_L(A_1) \times \mu_L(A_2) \times \dots \times \mu_L (A_k) \times \mu_L([0,1]) \times \mu_L([0,1]) \dots
= \mu_L(A_1) \times \mu_L(A_2) \times \dots \times \mu_L (A_k) $
%\item The measure $\mu_{\mathbb{S}}$ on $(\mathbb{S}, \mathcal{S})$ is TODO
\item The function $\pi_U$ returns the first element of the given sequence---that is, $\pi_U((s_1, s_2, s_3, \dots)) = s_3$.
\item The functions $\pi_L$ and $\pi_R$ return the subsequences consisting of odd and even elements of the input sequence, respectively.
Thus, $\pi_L((s_1, s_2, s_3, s_4, \dots)) = (s_1, s_3, \dots)$ and $\pi_R((s_1, s_2, s_3, s_4, \dots)) = (s_2, s_4, \dots)$.
\item The function $\mathrel{::}$ interleaves the two input sequences, so that 
$(s_1, s_2, s_3, \dots) \mathrel{::} (t_1, t_2, t_3, \dots) = (s_1, t_1, s_2, t_2, s_3, t_3, \dots)$.
\end{itemize}

Observe that the functions $\pi_L$ and $\pi_R$ return two \emph{disjoint} infinite subsequences of the input sequence.
This means that if we want to perform two random computations, but only have a single entropy $s = (s_1, s_2, s_3, s_4 \dots)$,
we can perform the first computation with the sequence $\pi_L((s_1, s_2, s_3, s_4 \dots)) = (s_1, s_3, \dots)$ and
the second one with $\pi_R((s_1, s_2, s_3, s_4 \dots)) = (s_2, s_4, \dots)$ and no entropy component from the first 
computation will be reused in the second one. In other words, for each new random sample we will have a ``fresh'' value in the entropy.

%(infinite sequences of numbers in $[0,1]$) with the function $\pi_U$ returning the first element of the sequence
%and $\pi_L$ and $\pi_R$ returning the subsequences consisting of odd and even elements of the original sequence, respectively.

Results presented in this paper will, however, only depend on the abstract definition of entropy space.

\subsection{Extended state space}

In order to define the operational semantics and the distributions induced by it, we need to extend the set of states  $\statespace$ with two exception states: $\failure$, denoting a \emph{failed hard constraint or an evaluation error}, and $\diverge$, denoting \emph{divergence}. 
We denote this extended space by $\fullstatespace$. 
A metric space on $\fullstatespace$ is defined by extending the metric $d_\sigma$ on $\statespace$ to $\fullstatespace$ as follows:
\[
\hat{d}_\sigma(\sigma_1, \sigma_2) =
\begin{cases}
0 & \text{if } \sigma_1 = \sigma_2 \in \{ \failure, \diverge \} \\
%% 0 & \text{if } \sigma_1 = \sigma_2 = \diverge \\
d_\sigma(\sigma_1, \sigma_2) & \text{if } \sigma_1, \sigma_2 \in \statespace \\
\infty & \text{otherwise.}
\end{cases}
\]
It is easy to check that the extended metric space $(\fullstatespace, \hat{d}_\sigma)$ is separable.
The $\sigma$-algebra $\fullstatesa$ on $\fullstatespace$ is then induced by the metric $\hat{d}_\sigma$ and $(\fullstatespace, \fullstatesa)$ is the measurable space of all program states.

In the remainder of this section, we will use two operators to extend real-valued functions on $\statespace$ to the state space $\fullstatespace$: for each function $f \colon \statespace -> \extposreals$, the extended functions $\hat{f}, \check{f} \colon \fullstatespace -> \extposreals$ are defined as follows:
\[
\hat{f}(\tau) =
\begin{cases}
f(\tau) & \text{if}\ \tau \in \statespace
\\
0 & \text{otherwise}
\end{cases}
\quad
\mbox{and} 
\quad
\check{f}(\tau) = \begin{cases} f(\tau)\ & \text{if}\ \tau \in \statespace \\ 
1\  & \text{if}\  \tau = \, \diverge \\
0\  & \text{otherwise}. \end{cases}
\]

%A $\sigma$-algebra
%$\fullstatesa$ on $\fullstatespace$ can be easily defined by taking a disjoint union of $\statesa$ and
%the powerset $\sigma$-algebra on $\{ \failure,  \diverge \}$. Obviously, 

\subsection{Reduction relation}
\newcommand{\config}[7]{\langle #1, \allowbreak #2, \allowbreak #3, \allowbreak #4, \allowbreak #5, \allowbreak #6, 
		 \allowbreak #7 \rangle}
%These commands allow splitting configurations into several lines, if they do not fit on one. 
\newcommand{\configLa}[4]{\langle #1, #2 ,#3, #4,}
\newcommand{\configRa}[3]{#1, #2, #3 \rangle}
\newcommand{\configLb}[3]{\langle #1, #2 ,#3,}
\newcommand{\configRb}[4]{#1, #2, #3, #4 \rangle}

To ensure that the entropy is split correctly between partial computations, we use continuations, similarly to~\cite{DBLP:journals/pacmpl/WandCGC18}. 
A continuation is represented by a list of expressions that are to be evaluated after completing the current evaluation. 
We keep track of two distinct entropies: one for the current computation and one to be used when evaluating the continuation.

The reduction relation is defined as a binary relation on \emph{configurations}, i.e., tuples of the form
$$\config{\theta}{C}{K}{\sigma}{\theta_K}{n}{w}$$ 
where $C$ is the current program statement to be evaluated, $\sigma$ is the current program state, $K$ is the continuation, $\theta$ and $\theta_K$ are, respectively, the entropies to be used when evaluating $C$ and the continuation $K$; finally, $n \in \mathbb{N}$ is the number of %evaluation steps taken so far 
reduction rules applied so far and $w \in \mathbb{R} \cap [0,1]$ is the weight of the current program run so far. 
In order to access elements of a configuration $\kappa$, we use functions, e.g., for $\kappa$ as given above $\mbox{\sf weight}(\kappa) = w$ and $\mbox{\sf state}(\kappa) = \sigma$.
The \emph{reduction relation} $\vdash$ is a binary relation on configurations where  
$$
\underbrace{\config{\theta}{C}{K}{\sigma}{\theta_K}{n}{w}}_{\mbox{\footnotesize configuration } \kappa}
\, \vdash \, 
\underbrace{\config{\theta'}{C'}{K'}{\sigma'}{\theta'_K}{n'}{w'}}_{\mbox{\footnotesize configuration } \kappa'}
$$ 
means that the configuration $\kappa$ reduces to configuration $\kappa'$ in one step. 
Let $\vdash^{*}$ denote the reflexive and transitive closure of the reduction relation $\vdash$, i.e., $\kappa \vdash^{*} \kappa'$ means that $\kappa$ reduces to $\kappa'$ in zero or more reduction steps.
%TODO: Format rule names differently?

We present the reduction rules one-by-one for each syntactic construct of \ccpgclns.
The symbol $\modownarrow$ means successful termination and $\failure$ is a special state reached after a failed observation or an execution error.
\begin{itemize}
\item[Skip.]
As the \texttt{skip} statement cannot do anything, it has no reduction rule.
\item[Divergence.]
The (diverge) rule states that the $\mathtt{diverge}$ statement reduces to itself indefinitely in any non-failure state $\sigma$:
$$
\inference[diverge]
{\sigma \neq \failure}
{\config{\theta}{\mathtt{diverge}}{K}{\sigma}{\theta_K}{n}{w} \ \vdash \ 
\config{\theta}{\mathtt{diverge}}{K}{\sigma}{\theta_K}{n{+}1}{w}
}
$$ 
\item[Assignment.]
The rule (assign) evaluates the expression $E$ in the current state $\sigma$ and sets the value of $x$ in the state to the outcome of this evaluation:
$$
\inference[assign]
{\sigma \neq \failure \qquad \sigma(E) = V}
{\config{\theta}{x := E}{ K}{\sigma}{\theta_K}{n}{w}
\ \vdash \
\config{\theta}{\modownarrow}{ K}{\sigma[x \mapsto V]}{\theta_K}{n{+}1}{w}
}
$$
\item[Random draw.]
The rule (draw) evaluates a random draw from a uniform distribution:
$$
\inference[draw]
%{ v \in [0,1] }
{\sigma \neq \failure}
{\config{\theta}{x :\approx U}{K}{\sigma}{\theta_K}{n}{w} 
\, \vdash \, 
\config{\pi_R(\theta)}{\modownarrow}{K}{\sigma[x \mapsto \pi_U(\pi_L(\theta))]}{\theta_K}{n{+}1}{w}
}
$$
The outcome of this random draw is determined by the entropy $\theta$ and is set to $\pi_U(\pi_L(\theta))$, intuitively the first element of the ``left" part of the entropy~$\theta$~\footnote{If we set this value to just $\pi_U(\theta)$, we would lose the  property that an already used ``element'' of the entropy cannot appear in the entropy in the subsequent configuration, because we do not know what parts of $\theta$ the value of  $\pi_U(\theta)$ depends on.  In the Hilbert cube implementation discussed before, $\pi_U(\theta)$ is equivalent to  $\pi_U(\pi_L(\theta))$ and ``disjoint'' from $\pi_R(\theta)$, but if we defined $\pi_U(\theta)$ to be, for instance, the second element of the sequence encoded by $\theta$, this would not be the case. Obviously, this does not matter in practice, as after the (draw) rule, the expression to be evaluated with entropy $\pi_R(\theta)$ is empty, but it is still elegant to keep this property.}.
The value of the sampled variable is assigned to variable $x$. 
The weight $w$ is unchanged, as the density of the uniform distribution on $[0,1]$ is constant and equal to $1$ for every point in the unit interval---as all outcomes are equally likely, there is no need to weigh the program runs.
\item[Hard.]
The rules (condition-true) and (condition-false) evaluate a hard condition in an $\mathtt{observe}$ statement. If the condition is satisfied, (condition-true) returns the current state and weight unchanged, otherwise the state is set to the error state $\failure$ by (condition-false):
$$
\inference[condition-true]
{\sigma \neq \failure \qquad \sigma(\phi) = \mathtt{true}}
{
\config{\theta}{\mathtt{observe}(\phi)}{K}{\sigma}{\theta_K}{n}{w}
\ \vdash \
\config{\theta}{\modownarrow}{K}{\sigma}{\theta_K}{n{+}1}{w}
}
$$
$$
\inference[condition-false]
{\sigma \neq \failure \qquad \sigma(\phi) = \mathtt{false}}
{\config{\theta}{\mathtt{observe}(\phi)}{K}{\sigma}{\theta_K}{n}{w}
\ \vdash \
\config{\theta}{\modownarrow}{[]}{\failure}{\theta_K}{n{+}1}{w}} 
$$
\item[Soft.]
The rule (score) evaluates its argument, a real number in the unit interval, and multiplies it by the weight of the current run so far:
$$
\inference[score]
{\sigma \neq \failure \qquad v = \sigma(E) \in (0,1] }
{\config{\theta}{\mathtt{score}(E)}{K}{\sigma}{\theta_K}{n}{w} \vdash 
\config{\theta}{\modownarrow}{K}{\sigma}{\theta_K}{n{+}1}{w{\cdot}v} 
}
$$
\item[Sequencing.]
The rule (seq) is used to move statements from the current statement $C$ to the continuation $K$:
$$
\inference[seq]
{\sigma \neq \failure \qquad C_1 \neq C_1'; C_1''}
{\config{\theta}{C_1;C_2}{K}{\sigma}{\theta_K}{n}{w}
\ \vdash \
\config{\pi_L(\theta)}{C_1}{C_2 \mathrel{::} K}{\sigma}{\pi_R(\theta) \mathrel{::} \theta_K}{n{+}1}{w}
}
$$
If the current statement is a sequence of statements, (seq) splits it into the first statement $C_1$ and the sequence of remaining statements $C_2$ in such a way that $C_1$ itself is a single concrete statement and not a sequence of statements---in other words, $C_1$ is as small as possible. 
The expression $C_1$ is then retained as the current expression to be evaluated, while $C_2$ is pushed onto the top of the expression stack in the continuation $K$. 
The expression $C_1$ is evaluated with only the ``left'' part of the entropy $\theta$, and the right part is appended to the entropy of the continuation $K$; it is stored to be used later when $C_2$ is popped from the stack and evaluated.
The reason that $C_1$ is required not to be a sequence is to ensure that there is a unique way to split the sequence of statements into $C_1$ and $C_2$. 
If the entropy could be split in different ways into sub-computations, this would make the 
semantics nondeterministic.
Note that $C_1$ may be, e.g., an $\mathtt{if}$-statement or a $\mathtt{while}$ loop which includes a sequence of statements as its sub-expression; we only require that it is not a sequence at the top level.
The rule
$$
\inference[pop]
{\sigma \neq \failure}
{\config{\theta}{\modownarrow}{C \mathrel{::} K}{\sigma}{\theta_K}{n}{w}
\ \vdash \
\config{\pi_L(\theta_K)}{C}{K}{\sigma}{\pi_R(\theta_K)}{n{+}1}{w}
}
$$
is the dual of (seq). 
After the current statement has been completely evaluated, (pop) fetches the top statement $C$ 
from the continuation $K$ and sets it as the next statement to be evaluated. 
The unused part $\theta$ of the entropy used in evaluating the last expression is discarded and replaced by the left part of the continuation entropy $\theta_K$. 
If the evaluation started with an empty continuation, $\pi_L(\theta_K)$ will be the entropy ``reserved'' for evaluating $C$ as it was pushed on the continuation by (seq). 
Obviously, the entropy reserved for $C$ has to be removed from the entropy saved for evaluating the rest of the continuation, hence the latter is set to $\pi_R(\theta_K)$.
\item[Conditional.]
The rules (if-true) and (if-false) are standard and self-explanatory. 
$$
\inference[if-true]
{\sigma \neq \failure \qquad \sigma(\phi) = \mathtt{true}}
{
\config{\theta}{\mathtt{if}(\phi)\{ C\}}{K}{\sigma}{\theta_K}{n}{w}
\vdash
\config{\theta}{C}{K}{\sigma}{\theta_K}{n{+}1}{w}
}
$$
$$
\inference[if-false]
{\sigma \neq \failure \qquad \sigma(\phi) = \mathtt{false}}
{
\config{\theta}{\mathtt{if}(\phi)\{ C\}}{K}{\sigma}{\theta_K}{n}{w}
\vdash
\config{\theta}{\modownarrow}{K}{\sigma}{\theta_K}{n{+}1}{w}
}
$$
\item[Loops.]
The (while-true) and (while-false) rules are standard too.
If the loop-guard $\phi$ is true, the loop body is to be executed possibly followed by the loop itself.
Otherwise the loop terminates. The (while-true) rule reads
$$
\inference
{\sigma \neq \failure \qquad \sigma(\phi) = \mathtt{true}}
{
\config{\theta}{\mathtt{while}(\phi)\{ C\}}{K}{\sigma}{\theta_K}{n}{w}
\ \vdash \
\config{\theta}{C;\mathtt{while}(\phi)\{ C\}}{K}{\sigma}{\theta_K}{n{+}1}{w}
}
$$
$$
\inference[while-false]
{\sigma \neq \failure \qquad \sigma(\phi) = \mathtt{false}}
{
\config{\theta}{\mathtt{while}(\phi)\{ C\}}{K}{\sigma}{\theta_K}{n}{w}
\ \vdash \
\config{\theta}{\modownarrow}{K}{\sigma}{\theta_K}{n{+}1}{w}
}
$$
\end{itemize}
%%
%% The reduction rules are shown in Figure~\ref{figure:operational}.  
The (final) rule is a dummy rule which applies to fully evaluated programs.
It does nothing, except for increasing the step counter. 
Its purpose is to allow for reasoning about infinite evaluations, as explained later. 
$$
\inference[final]
{\sigma \neq \failure}
{\config{\theta}{\modownarrow}{[]}{\sigma}{\theta_K}{n}{w} \vdash 
\config{\theta}{\modownarrow}{[]}{\sigma}{\theta_K}{n{+}1}{w}
} 
$$

The initial configuration for program $C$ is of the form $\config{\theta}{C}{[]}{\sigma}{\theta_K}{0}{1} $ with the initial statement $C$, the empty continuation,  initial state $\sigma$, a zero step count and initial weight one.
Note that if the initial continuation is $[]$, the initial continuation entropy $\theta_K$ is irrelevant, as it can never be copied to the entropy of the current expression.
A program is considered fully evaluated when the evaluation reaches a configuration where $C = \, \modownarrow$ and $K = []$.
In this case, only the dummy (final) rule can be applied. 
Thus, if
$$
\config{\theta}{C}{[]}{\sigma}{\theta_K}{0}{1} \ \vdash^{*} \ 
\config{\theta'}{\modownarrow}{[]}{\sigma'}{\theta'_K}{n}{w}
$$
we say that the program $C$ with initial state $\sigma$ under entropy $\theta$ terminates in $n$ steps in the state $\sigma'$ with weight $w$. 

%\subsection{Properties of the semantics}
%
%Having fully defined the operational semantics, which reduces a program deterministically
%given the fixed entropy and initial state, we now prove some key properties of the semantics.
%
%TODO

\paragraph{Examples} 
We demonstrate how the semantics works by revisiting two of the examples in Section~\ref{section:denotational-semantics}.
For clarity, we now add line numbers to programs and
write $C_i$ for line $i$ of the given program $C$ and $C_{i,j}$ for the part of the program between lines $i$ and $j$.
We also write $\pi_{d_1, \dots, d_n}(\theta)$ (where $d_1, \dots, d_n \in \{L,R\}$) for 
$\pi_{d_1} ( \pi_{d_2} \dots (\pi_{d_n}(\theta)\dots))$.

\begin{example} \label{example:blr-os}
%An example with a trace of one of the example programs used before in the paper.
%\marginpar{TBD}

We begin by revisiting the Bayesian linear regression example:

\begin{verbatim}
1        u1 := U;
2        a := Gaussian_inv_cdf(0,2,u1);
3        u2 := U;
4        b := Gaussian_inv_cdf(0,2,u2);
5        score(softeq(a*0 + b, 2));
6        score(softeq(a*1 + b, 3));
\end{verbatim}

Let us suppose we want to evaluate this program with an empty initial state and with an entropy $\theta$ such that
the two values sampled in the program (which are $\pi_{U}(\pi_{L,L}(\theta))$ and $\pi_{U}(\pi_{L,L,R,R}(\theta))$), are, 
respectively, 0.5 and $v$, where $v \in (0,1)$ is a value such that
$\mathtt{Gaussian\_inv\_cdf}(0,2,v) = 2$. 
For the particular instantiation of the entropy space shown in Section~\ref{subsection:entropy-space},
we have $\pi_{U}(\pi_{L,L}((s_1, s_2, s_3, \dots ))) = s_1$ and $\pi_{U}(\pi_{L,L,R,R}((s_1, s_2, s_3, \dots ))) = s_{13}$,
so we can assume that $\theta$ is any infinite sequence $ (s_1, s_2, s_3, \dots )$ whose first element is $0.5$ and thirteenth element is $v$.
Note that $\mathtt{Gaussian\_inv\_cdf}(0,2,0.5) = 0$, because the Gaussian
distribution is symmetric, so exactly half of the total probability mass is below the mean.

The evaluation chain is shown below. We use colour to highlight states and scores which have changed from the previous
configuration. Since the evaluation starts with a configuration with empty continuation, the initial continuation entropy
$\theta_{K}$ is a ``dummy'' entropy whose values are irrelevant and do not affect the computation.

%We highlight components of a configuration which have changed from the previous
%configuration (other than terms and continuations, which change in ba)
%colour to highlit components of configurations 

\allowdisplaybreaks
\begin{eqnarray*}
&&\config{\theta}{\mathtt{u1}:\approx U;C_{2,6} }{[]}{[]}{\theta_K}{0}{1} \ \vdash \\
\text{(seq)}&&\config{{ \pi_L(\theta)}}{\mathtt{u1}:\approx \mathtt{U} }{[C_{2,6}]}{[]}{{ \pi_R(\theta) \mathrel{::} \theta_K}}{1}{1} \ \vdash \\
\text{(draw)}&&\config{{ \pi_{R,L}(\theta)}}{\modownarrow }{[C_{2,6}]}{{\color{red} [\mathtt{u1} \mapsto 0.5]}}
    { \pi_R(\theta) \mathrel{::} \theta_K}{2}{1} \ \vdash \\
\text{(pop)}&&\config{{\pi_R(\theta)}}{\mathtt{a := G(0,2,u1) };C_{3,6} }{[]}{[\mathtt{u1} \mapsto 0.5]}{{ \theta_K}}{3}{1} \ \vdash \\
\text{(seq)}&&\config{{ \pi_{L,R}(\theta)}}{\mathtt{a := G(0,2,u1) }}{[C_{3,6}]}{[\mathtt{u1} \mapsto 0.5]}
    {{ \pi_{R,R}(\theta) \mathrel{::} \theta_K}}{4}{1} \ \vdash \\
\text{(assign)}&&\config{{\pi_{L,R}(\theta)}}{\modownarrow}{[C_{3,6}]}{
{\color{red}[\mathtt{u1}, \mathtt{a} \mapsto 0.5, 0]}}
    {{ \pi_{R,R}(\theta) \mathrel{::} \theta_K}}{5}{1} \ \vdash \\
\text{(pop)}&&\config{{ \pi_{R,R}(\theta)}}{\mathtt{u2}:\approx \mathtt{U};C_{4,6}}{[]}
  {[\mathtt{u1}, \mathtt{a} \mapsto 0.5, 0]}
  {{ \theta_K}}{6}{1} \ \vdash \\
\text{(seq)}&&\config{{ \pi_{L,R,R}(\theta)}}{\mathtt{u2}:\approx \mathtt{U}}{[C_{4,6}]}
  {[\mathtt{u1}, \mathtt{a} \mapsto 0.5, 0]}
  {{ \pi_{R,R,R}(\theta) \mathrel{::} \theta_K}}{7}{1} \ \vdash \\
\text{(draw)}&&\config{{ \pi_{R,L,R,R}(\theta)}}{\modownarrow}{[C_{4,6}]}
  {{\color{red} [\mathtt{u1}, \mathtt{a}, \mathtt{u2} \mapsto 0.5, 0, v]}}
  {{ \pi_{R,R,R}(\theta) \mathrel{::} \theta_K}}{8}{1} \ \vdash \\
\text{(pop)}&&\config{{ \pi_{R,R,R}(\theta)}}{\texttt{b := G(0,2,u2)};C_{5,6}}{[]}
  {{ [\mathtt{u1}, \mathtt{a}, \mathtt{u2} \mapsto 0.5, 0, v]}}
  {{ \theta_K}}{9}{1} \ \vdash \\
\text{(seq)}&&\configLa{{ \pi_{L,R,R,R}(\theta)}}{\texttt{b := G(0,2,u2)}}{[C_{5,6}]}
  {{ [\mathtt{u1}, \mathtt{a}, \mathtt{u2} \mapsto 0.5, 0, v]}}\\
&& \qquad \configRa{{ \pi_{R,R,R,R}(\theta) \mathrel{::} \theta_K}}{10}{1} \ \vdash \\
\text{(assign)}&&\configLa{{ \pi_{L,R,R,R}(\theta)}}{\modownarrow}{[C_{5,6}]}
  {{\color{red} [\mathtt{u1}, \mathtt{a}, \mathtt{u2}, \mathtt{b} \mapsto 0.5, 0, v, 2]}}\\
&&\qquad \configRa{{ \pi_{R,R,R,R}(\theta) \mathrel{::} \theta_K}}{11}{1} \ \vdash \\
\text{(pop)}&&\config{{ \pi_{R,R,R,R}(\theta)}}{C_{5,6}}{[]}
  {{ [\mathtt{u1}, \mathtt{a}, \mathtt{u2}, \mathtt{b} \mapsto 0.5, 0, v, 2]}}
  {{ \theta_K}}{12}{1} \ \vdash \\
\text{(seq)}&&\configLb{{ \pi_{L,R,R,R,R}(\theta)}}{\mathtt{ score(softeq(a*0 + b, 2))}}{[C_6]}  \\
&&\qquad \configRb{{ [\mathtt{u1}, \mathtt{a}, \mathtt{u2}, \mathtt{b} \mapsto 0.5, 0, v, 2]}}
  {{ \pi_{R,R,R,R,R}(\theta) \mathrel{::} \theta_K}}{13}{1} \ \vdash \\
\text{(score)}&&\configLa{{ \pi_{L,R,R,R,R}(\theta)}}{\modownarrow}{[C_6]}
  {{ [\mathtt{u1}, \mathtt{a}, \mathtt{u2}, \mathtt{b} \mapsto 0.5, 0, v, 2]}}\\
&& \qquad  \configRa{{ \pi_{R,R,R,R,R}(\theta) \mathrel{::} \theta_K}}{14}{1} \ \vdash \\
\text{(pop)}&&\configLb{{ \pi_{R,R,R,R,R}(\theta)}}{\mathtt{score(softeq(a*1 + b, 3))}}{[]}\\
&& \qquad \configRb{{ [\mathtt{u1}, \mathtt{a}, \mathtt{u2}, \mathtt{b} \mapsto 0.5, 0, v, 2]}}
  {{ \theta_K}}{15}{1} \ \vdash \\
\text{(score)}&&\config{{ \pi_{R,R,R,R,R}(\theta)}}{\modownarrow}{[]}
  {{ [\mathtt{u1}, \mathtt{a}, \mathtt{u2}, \mathtt{b} \mapsto 0.5, 0, v, 2]}}
  {{ \theta_K}}{16}{{ \color{red} e^{-1}}} .
\end{eqnarray*}
Hence, with the given entropy $\theta$, the program evaluates with score $e^{-1}$ to a state
where $\mathtt{a} = 0$ and $\mathtt{b} = 0$.

\end{example}

\begin{example} \label{example:closed-form-os}
Let us now consider the program in Example 2:
\begin{verbatim}
1        b := 0;
2        k := 0;
3        while (b=0)
        {
4          u := U;
5          k := k+1;
6          if(u < 1/(k+1)^2)
          {
7            b := 1;
8            score(k/(k+1))
          }
        }
\end{verbatim}

We want to compute the final state and weight for this program, assuming an empty initial state
and an entropy $\theta$ such that the first value drawn (that is, $\pi_U(\pi_{L,L,R,R}(\theta))$) is $0.1$,
which also means the loop terminates after the first iteration. The evaluation proceeds as follows:
\allowdisplaybreaks
\begin{eqnarray*}
&&\config{\theta}{\mathtt{b}:=0;C_{2,8} }{[]}{[]}{\theta_K}{0}{1} \ \vdash \\
\text{(seq)}&&\config{{ \pi_L(\theta)}}{\mathtt{b}:=0 }{[C_{2,8}]}{[]}{{\color{red} \pi_R(\theta) \mathrel{::} \theta_K}}{1}{1} \ \vdash \\
\text{(assign)}&&\config{{ \pi_L(\theta)}}{\modownarrow }{[C_{2,8}]}{{\color{red} [\mathtt{b} \mapsto 0]}}
    { \pi_R(\theta) \mathrel{::} \theta_K}{2}{1} \ \vdash \\
\text{(pop)}&&\config{{\pi_R(\theta)}}{C_{2,8} }{[]}{[\mathtt{b} \mapsto 0]}{{ \theta_K}}{3}{1} \ \vdash \\
\text{(seq)}&&\config{{ \pi_{L,R}(\theta)}}{\mathtt{k} :=0 }{[C_{3,8}]}{[\mathtt{b} \mapsto 0]}
    {{ \pi_{R,R}(\theta) \mathrel{::} \theta_K}}{4}{1} \ \vdash \\
\text{(assign)}&&\config{{ \pi_{L,R}(\theta)}}{\modownarrow }{[C_{3,8}]}
  {{\color{red} [\mathtt{b}, \mathtt{k} \mapsto 0, 0]}}{ \pi_{R,R}(\theta) \mathrel{::} \theta_K}{5}{1} \ \vdash \\
\text{(pop)}&&\config{{ \pi_{R,R}(\theta)}}{\mathtt{while}(\mathtt{b}=0)\{C_{4,8}\}}{[]}
  { [\mathtt{b}, \mathtt{k} \mapsto 0, 0]}{{ \theta_K}}{6}{1} \ \vdash \\
\text{(while-true)}&&\config{\pi_{R,R}(\theta)}{C_{4};
C_{5,8};\mathtt{while}(\mathtt{b}=0)\{C_{4,8}\}}{[]}
  { [\mathtt{b}, \mathtt{k} \mapsto 0, 0]}{\theta_K}{7}{1} \ \vdash \\
\text{(seq)}&&\configLa{{ \pi_{L,R,R}(\theta)}}{\mathtt{u} :\approx U }{[C_{5,8};\mathtt{while}(\mathtt{b}=0)\{C_{4,8}\}]}
  { [\mathtt{b}, \mathtt{k} \mapsto 0, 0]}\\
  && \qquad \configRa{{\color{red} \pi_{R,R,R}(\theta) \mathrel{::} \theta_K}}{8}{1} \ \vdash \\
\text{(draw)}&&\configLa{{\pi_{R,L,R,R}(\theta)}}{\modownarrow}{[C_{5,8};\mathtt{while}(\mathtt{b}=0)\{C_{4,8}\}]}
  {{\color{red} [\mathtt{b}, \mathtt{k}, \mathtt{u} \mapsto 0, 0, 0.1]}}\\
  && \qquad \configRa{ \pi_{R,R,R}(\theta) \mathrel{::} \theta_K}{9}{1} \ \vdash \\
\text{(pop)}&&\configLb{{\pi_{R,R,R}(\theta)}}{\mathtt{k} := \mathtt{k}+1; C_{6,8};\mathtt{while}(\mathtt{b}=0)\{C_{4,8}\}}{[]}\\
&& \qquad  \configRb{ [\mathtt{b}, \mathtt{k}, \mathtt{u} \mapsto 0, 0, 0.1]}{{ \theta_K}}{10}{1} \ \vdash \\
\text{(seq)}&&\configLb{{\pi_{L,R,R,R}(\theta)}}{\mathtt{k} := \mathtt{k}+1}{[C_{6,8};\mathtt{while}(\mathtt{b}=0)\{C_{4,8}\}]}\\
&& \qquad \configRb
  { [\mathtt{b}, \mathtt{k}, \mathtt{u} \mapsto 0, 0, 0.1]}{{\pi_{R,R,R,R}(\theta) \mathrel{::}  \theta_K}}{11}{1} \ \vdash \\
\text{(assign)}&&\configLa{{\pi_{L,R,R,R}(\theta)}}{\modownarrow} {[C_{6,8};\mathtt{while}(\mathtt{b}=0)\{C_{4,8}\}]}
{ [\mathtt{b}, \mathtt{k}, \mathtt{u} \mapsto 0, 1, 0.1]}\\
  && \qquad \configRa{{\pi_{R,R,R,R}(\theta) \mathrel{::}  \theta_K}}{12}{1} \ \vdash \\
\text{(pop)}&&\configLa{{\pi_{R,R,R,R}(\theta)}}{C_{6,8};\mathtt{while}(\mathtt{b}=0)\{C_{4,8}\}}{[]}
  { [\mathtt{b}, \mathtt{k}, \mathtt{u} \mapsto 0, 1, 0.1]}\\
  && \qquad\configRa{{\theta_K}}{13}{1} \ \vdash \\
\text{(seq)}&&\configLb{{\pi_{L,R,R,R,R}(\theta)}}{\mathtt{if(u < 1/(k+1)^2)}\{C_{7,8}\}}{[\mathtt{while}(\mathtt{b}=0)\{C_{4,8}\}]}\\
&& \qquad \configRb{ [\mathtt{b}, \mathtt{k}, \mathtt{u} \mapsto 0, 1, 0.1]}
     {{\pi_{R,R,R,R,R}(\theta) \mathrel{::}  \theta_K}}{14}{1} \ \vdash \\
\text{(if-true)}&&\configLb{{\pi_{L,R,R,R,R}(\theta)}}{\mathtt{b}:=1;C_{8}}{[\mathtt{while}(\mathtt{b}=0)\{C_{4,8}\}]}\\
&&  \qquad \configRb{ [\mathtt{b}, \mathtt{k}, \mathtt{u} \mapsto 0, 1, 0.1]}{{\pi_{R,R,R,R,R}(\theta) \mathrel{::}  \theta_K}}{15}{1} \ \vdash \\
\text{(seq)}&&\configLa{{\pi_{L,L,R,R,R,R}(\theta)}}{\mathtt{b}:=1}{[C_{8}, \mathtt{while}(\mathtt{b}=0)\{C_{4,8}\}]}
  { [\mathtt{b}, \mathtt{k}, \mathtt{u} \mapsto 0, 1, 0.1]}\\
&&\qquad \configRa{{
   \pi_{R,L,R,R,R,R}(\theta) \mathrel{::}(\pi_{R,R,R,R,R}(\theta) \mathrel{::}  \theta_K)}}{16}{1} \ \vdash \\
\text{(assign)}&&\configLa{{\pi_{L,L,R,R,R,R}(\theta)}}{\modownarrow}{[C_{8}, \mathtt{while}(\mathtt{b}=0)\{C_{4,8}\}]}
  {{\color{red} [\mathtt{b}, \mathtt{k}, \mathtt{u} \mapsto 1,1, 0.1]}}\\
  && \qquad \configRa{{ \pi_{R,L,R,R,R,R}(\theta) \mathrel{::}(\pi_{R,R,R,R,R}(\theta) \mathrel{::}  \theta_K)}}{17}{1} \ \vdash \\
\text{(pop)}&&\configLb{{\pi_{R,L,R,R,R,R}(\theta)}}{\mathtt{score(k/(k+1))}}{[\mathtt{while}(\mathtt{b}=0)\{C_{4,8}\}]} \\
&& \qquad  \configRb{ [\mathtt{b}, \mathtt{k}, \mathtt{u} \mapsto 1, 1, 0.1]}
  {{ \pi_{R,R,R,R,R}(\theta) \mathrel{::}  \theta_K}}{18}{1} \ \vdash \\
\text{(score)}&&\configLa{{\pi_{R,L,R,R,R,R}(\theta)}}{\modownarrow}{[\mathtt{while}(\mathtt{b}=0)\{C_{4,8}\}]}
  { [\mathtt{b}, \mathtt{k}, \mathtt{u} \mapsto 1, 1, 0.1]}\\
&& \qquad  \configRa{{ (\pi_{R,R,R,R,R}(\theta) \mathrel{::}  \theta_K)}}{19}{{\color{red} \nicefrac{1}{2}}} \ \vdash \\
\text{(pop)}&&\config{{\pi_{R,R,R,R,R}(\theta)}}{\mathtt{while}(\mathtt{b}=0)\{C_{4,8}\}}{[]}
  { [\mathtt{b}, \mathtt{k}, \mathtt{u} \mapsto 1, 1, 0.1]}{{
     \theta_K}}{20}{\nicefrac{1}{2}} \ \vdash \\
 \text{(while-false)}&&\config{{\pi_{R,R,R,R,R}(\theta)}}{\modownarrow}{[]}
  { [\mathtt{b}, \mathtt{k}, \mathtt{u} \mapsto 1, 1, 0.1]}{{
     \theta_K}}{21}{\nicefrac{1}{2}}.
\end{eqnarray*}
Hence, the program evaluates to a state where $\mathtt{k} = 1$ with score $\nicefrac{1}{2}$.

\end{example}

\subsection{Measure on final program states}

As mentioned before, the operational semantics so far only defines the final state and weight for a single program execution, for a fixed entropy. 
We now explain how a probability distribution on final states can be obtained by integrating the semantics over the entropy space.

%% \subsubsection{Operational semantics as total functions}

\paragraph{Two auxiliary functions.}
In order to define a probability distribution over the final program states, two auxiliary functions are technically convenient.
The function $\mathbf{O}_C^{\sigma}\colon \mathbb{S} -> \fullstatespace$ determines the \emph{final state} of program $C$ with initial state $\sigma$ for entropy $\theta \in \mathbb{S}$.
It is defined by:
\begin{eqnarray*}
\mathbf{O}_C^{\diverge}(\theta) &\ = \ &\diverge \\
%% \mathbf{O}_C^\failure(\theta) &=& \failure\\
\mathbf{O}_C^{\sigma}(\theta) &=&
\begin{cases}
\tau & \text{if}\  \config{\theta}{C}{[]}{\sigma}{\theta_K}{0}{1} \vdash^{*}
\config{\theta'}{\modownarrow}{[]}{\tau}{\theta_K}{n }{w}\ \text{and}\ \tau \neq \failure \\
\failure & \text{if}\  \config{\theta}{C}{[]}{\sigma}{\theta_K}{0}{1} \vdash^{*}
\config{\theta'}{C'}{K}{\tau}{\theta_K'}{n }{w} \nvdash \\ % \ \text{and}\ (C',K) \neq (\modownarrow,[]) \\
\diverge & \text{otherwise}.
\end{cases} \\
\end{eqnarray*}
The final state of program $C$ and proper initial state $\sigma$ (i.e., $\sigma \neq \, \diverge$ and $\sigma \neq \failure$) with entropy $\theta$ equals state $\tau$ provided its execution ends in configuration $\langle{\cdot}, [], \tau, {\cdots} \rangle$.
For instance, for the program $C$ and entropy $\theta$ from Example \ref{example:closed-form-os}, we have
$\mathbf{O}_C^{[]}(\theta) = [\mathtt{b} \mapsto 1, \mathtt{k} \mapsto 1, \mathtt{u} \mapsto 0.1 ]$.
If the evaluation reaches a configuration which cannot be reduced any further (e.g., due to a failed hard constraint), the final state equals the error state $\failure$.
Note that this is also applicable to the initial state $\sigma = \failure$. 
Finally, if the evaluation can neither be completed nor reach an irreducible configuration, this means that the evaluation of program $C$ with state $\sigma$ diverges with entropy $\theta$.
This results in the extended state $\diverge$.
For an example of a computation leading to such a state, consider Example~\ref{example:closed-form-os} again.
If we take an entropy $\theta'$ such that the value sampled in line 4 in each iteration is $0.5$, the evaluation will
never terminate and so we have $\mathbf{O}_C^{[]}(\theta') = \, \diverge$.
The final state of running $C$ from initial state $\modownarrow$ is the initial state.
This way of handling exceptions ensures compositionality. 
We omit $C$ and $\sigma$ as sub- and superscript if they are clear from the context.

Besides the final state obtained from a program's run, we need also the \emph{run's score}.
The function $\mathbf{SC}_C^{\sigma}\colon \mathbb{S} -> \mathbb{R}_{+}$ yields the score of executing program $C$ from initial state $\sigma$ for a given entropy. 
This definition is a bit more complicated, due to the handling of diverging runs.
A naive solution would be to define $\mathbf{SC}$ similarly to $\mathbf{O}$ and return $0$ for diverging runs.
This would, however, mean that the semantics would quietly ignore diverging runs, while a key motivation for this work is to handle divergence in the presence of soft conditioning in a meaningful way.
Our proposal is to let the score for diverging runs be the \emph{limit} of the weight $w$ as the number of steps $n$ goes to infinity.
Formally, this is done as follows:
Let us define an approximation function $\mathbf{SC}_C^{\sigma}\colon \mathbb{S} \times \mathbb{N} -> \mathbb{R}_{+}$, such that $\mathbf{SC}_C^{\sigma}(\theta, n)$ returns the score 
for program $C$ with entropy $\theta$ and initial state $\sigma$
after $n$ evaluation steps:
\[
\mathbf{SC}_C^{\sigma}(\theta,n) =
\begin{cases}
w & \text{if}\ \config{\theta}{C}{[]}{\sigma}{\theta_K}{0}{1} \vdash^{*}
\config{\theta'}{C'}{K}{\tau}{\theta_K'}{n }{w} \ \text{and}\ \tau \neq \failure \\
0 & \text{otherwise}.
\end{cases} 
\]
The function  $\mathbf{SC}_C^{\sigma}\colon \mathbb{S} -> \mathbb{R}_{+}$ is now defined for proper state $\sigma$ as the limit, or equivalently infimum, of its $n$-the approximation:
\[
\mathbf{SC}_C^{\sigma}(\theta) \ = \ 
\lim_{n \rightarrow \infty} \mathbf{SC}_C^{\sigma}(\theta,n) 
\ = \ \inf_n\ \mathbf{SC}_C^{\sigma}(\theta,n).
\]
For the special cases $\sigma = \failure$ and $\sigma = \, \diverge$ we define:
$$
\mathbf{SC}_C^{\diverge}(\theta) \ = \ 1 \quad \mbox{and} \quad
\mathbf{SC}_C^\failure(\theta) \ =\ 0.
$$

\begin{example}
Let us revisit the program $C$ from Example~\ref{example:closed-form-os} to show how the $\mathbf{SC}$ function works.
The program terminates with score $\frac{1}{2}$ after $16$ steps with the original entropy $\theta$ used in the example.
Thus, $\mathbf{SC}_C^{[]}(\theta, 16) = \frac{1}{2}$. The final configuration reduces to itself by (diverge) infinitely many times,
so we have $\mathbf{SC}_C^{[]}(\theta, n) = \frac{1}{2}$ for  all $n \geq 16$. Thus, 
$\mathbf{SC}_C^{[]}(\theta) = \frac{1}{2} = \lim_{n ->\infty} \mathbf{SC}_C^{[]}(\theta, n) = \frac{1}{2}$.
This is the same result which would be returned by a naive definition of $\mathbf{SC}$, similar to $\mathbf{O}$.
However, if we use the entropy $\theta'$ described above, where all values sampled are $0.5$, this is no longer the case.
In this case, we have $\mathbf{SC}_C^{[]}(\theta, n) = 1$ for all $n$, because the only $\mathtt{score}$ statement
is never reached. Hence,
$\mathbf{SC}_C^{[]}(\theta') = \lim_{n ->\infty} \mathbf{SC}_C^{[]}(\theta', n) = 1$, while the naive definition
would return $0$.
\end{example}
\begin{example}
For a more illustrative example, we consider the trivial program $C'$:
\begin{verbatim}
i := 1;
while(true)
{
  i := i+1;
  score((i^2 - 1) / i^2)
}
\end{verbatim}

Now, for any $\theta$, we have $\mathbf{SC}_C^{[]}(\theta, n) = \frac{1}{2} {\cdot} \frac{i(n)+2}{i(n)+1}$,
where $i(n)$ is the number of loop iterations completed after $n$ steps. As the second factor converges to $1$
as $n$ (and so $i(n)$) goes to infinity, it follows that $\mathbf{SC}_C^{[]}(\theta) = \frac{1}{2}$, so the limit score
is $\frac{1}{2}$ even though the program never terminates.
\end{example}
%where $n(i)$ is the step counter after $i$-th loop iteration. As the second factor converges to $1$ as with It follows immediately 

Note that by the monotone convergence theorem, the limit of approximations always exists.
Thus, $\mathbf{SC}$ is well-defined.
This can be seen as follows.
As scores are bounded by $1$, and no rule other than (score) affects the weight of a program run, a reduction step cannot increase the total score:

\begin{lemma} \label{lemma:w-decreasing}
$\kappa \, \vdash \, \kappa'$ implies $\mbox{\sf weight}(\kappa) \geq \mbox{\sf weight}(\kappa')$.
%%
%%If $\config{\theta}{C}{K}{\sigma}{\theta_K}{m}{w} \vdash^{*} \config{\theta'}{C'}{K'}{\sigma'}{\theta'_K}{m+n'}{w'}$,
%%then $w' \leq w$
\end{lemma}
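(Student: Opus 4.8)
The plan is to prove the statement by a direct case analysis on the reduction rule used to justify the single step $\kappa \vdash \kappa'$. Writing the two configurations as $\config{\theta}{C}{K}{\sigma}{\theta_K}{n}{w}$ and $\config{\theta'}{C'}{K'}{\sigma'}{\theta'_K}{n'}{w'}$, the claim reduces to $w \geq w'$. Inspecting the rules, one observes that every rule \emph{except} (score) reproduces the weight verbatim in its conclusion: the rules (diverge), (assign), (draw), (condition-true), (condition-false), (seq), (pop), (if-true), (if-false), (while-true), (while-false) and (final) all carry the seventh component unchanged from premise to conclusion. For each of these, $w' = w$, so $w \geq w'$ holds trivially.

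The only remaining case is (score). First I would recall its side condition, which forces the scored value $v = \sigma(E)$ to lie in $(0,1]$, and that the rule sets the new weight to $w' = w \cdot v$. Since the weight component is by definition constrained to $\mathbb{R} \cap [0,1]$, we have $w \geq 0$; combining this with $0 < v \leq 1$ yields $w' = w \cdot v \leq w \cdot 1 = w$. This establishes $w \geq w'$ in the last case and completes the case analysis.

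I do not expect any genuine obstacle here: the result is immediate from the syntactic shape of the rules. The only point worth flagging is that the inequality $w \cdot v \leq w$ really does use $w \geq 0$, which is guaranteed because the weight component always lies in $[0,1]$ (the initial weight is $1$ and, as this very lemma confirms, no step can increase it). This single-step monotonicity is exactly the ingredient needed for the monotone-convergence argument asserted just before the statement, namely that the approximants $\mathbf{SC}_C^{\sigma}(\theta, n)$ form a nonincreasing sequence in $n$, so that $\lim_{n \to \infty} \mathbf{SC}_C^{\sigma}(\theta, n)$ exists and coincides with $\inf_n \mathbf{SC}_C^{\sigma}(\theta, n)$.
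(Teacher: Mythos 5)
Your proposal is correct and follows exactly the paper's reasoning: the paper justifies this lemma with the one-line observation that no rule other than (score) affects the weight, and (score) multiplies it by a value in $(0,1]$, which cannot increase a nonnegative weight. Your explicit case analysis, including the remark that $w \geq 0$ is needed for $w \cdot v \leq w$, is just a careful spelling-out of the same argument.
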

Thus, scores are antitone:
%%
%%\begin{corollary}
$k \geq n$ implies $\mathbf{SC}_C^{\sigma}(\theta,k) \, \leq \, \mathbf{SC}_C^{\sigma}(\theta,n)$.
%%\end{corollary}
%%
The monotone convergence theorem now yields:
\begin{lemma} \label{lemma:sc-defined}
For each \ccpgcl program $C$, state $\sigma$ and entropy $\theta$, $\lim_{n \rightarrow \infty} \mathbf{SC}_C^{\sigma}(\theta,n)$ exists and is finite.
\end{lemma}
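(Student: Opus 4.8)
The plan is to reduce the claim to the elementary fact that a non-increasing sequence of reals bounded below converges to its infimum. For fixed $C$, $\sigma$ and $\theta$, I would treat $n \mapsto \mathbf{SC}_C^{\sigma}(\theta,n)$ as a sequence in $\mathbb{R}_+$ and show it is (i) antitone and (ii) bounded in $[0,1]$, so that the limit exists, is finite, and coincides with $\inf_n \mathbf{SC}_C^{\sigma}(\theta,n)$, exactly as the definition of $\mathbf{SC}_C^{\sigma}(\theta)$ demands. The cases $\sigma = \diverge$ and $\sigma = \failure$ are immediate, since there $\mathbf{SC}_C^{\sigma}$ is the constant $1$ resp.\ $0$, so I would concentrate on proper states $\sigma$.

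First I would record that the reduction relation $\vdash$ is deterministic, so that from the initial configuration $\config{\theta}{C}{[]}{\sigma}{\theta_K}{0}{1}$ there is at most one configuration reachable in exactly $n$ steps; this makes every approximant $\mathbf{SC}_C^{\sigma}(\theta,n)$ unambiguous. For boundedness, the initial weight is $1$, and by Lemma~\ref{lemma:w-decreasing} each reduction step can only decrease the weight, so every reachable configuration carries a weight in $[0,1]$; hence $\mathbf{SC}_C^{\sigma}(\theta,n) \in [0,1]$ for every $n$, whether it equals a reachable weight $w$ or the default value $0$.

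The monotonicity is the heart of the argument and is precisely the antitone property stated just before the lemma; I would establish it by a case split on the value at the later index. Fix $k \geq n$. If the configuration reached after $k$ steps still has a proper state $\tau \neq \failure$, then the run is proper (reachable and non-$\failure$) at every intermediate step, in particular at step $n$, and the step-$n$ configuration reduces to the step-$k$ one in $k-n$ steps; iterating Lemma~\ref{lemma:w-decreasing} gives $w_k \leq w_n$, i.e.\ $\mathbf{SC}_C^{\sigma}(\theta,k) \leq \mathbf{SC}_C^{\sigma}(\theta,n)$. Otherwise the run has reached $\failure$ (after which no rule applies, as every rule requires $\sigma \neq \failure$) or has no $k$-step successor, so $\mathbf{SC}_C^{\sigma}(\theta,k) = 0$, which is trivially $\leq \mathbf{SC}_C^{\sigma}(\theta,n)$. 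Thus the sequence is non-increasing.

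With antitonicity and the lower bound $0$ in hand, the monotone convergence theorem for bounded monotone real sequences yields that $\lim_{n\to\infty}\mathbf{SC}_C^{\sigma}(\theta,n)$ exists, equals $\inf_n \mathbf{SC}_C^{\sigma}(\theta,n)$, and lies in $[0,1]$, hence is finite. I do not anticipate any genuine obstacle; the only care required is in the bookkeeping of the degenerate cases---verifying that the drop to the default value $0$ upon reaching $\failure$ or a stuck configuration is consistent with monotonicity, and that determinism of $\vdash$ renders the approximants well-defined---both of which follow directly from the shape of the reduction rules together with Lemma~\ref{lemma:w-decreasing}.
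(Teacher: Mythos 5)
Your proof is correct and follows essentially the same route as the paper: boundedness of weights in $[0,1]$, antitonicity of $n \mapsto \mathbf{SC}_C^{\sigma}(\theta,n)$ via Lemma~\ref{lemma:w-decreasing}, and convergence of a bounded monotone sequence. The case analysis you add (proper state at step $k$ versus $\failure$/stuck, plus determinism of $\vdash$) merely fills in details the paper leaves implicit.
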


To define the probability distribution of states as a Lebesgue integral involving functions $\mathbf{O}$ and $\mathbf{SC}$, these functions need to be shown to be measurable. 
Although this is a property satisfied by almost all functions used in practice and it is known to be hard to construct non-measurable functions---to construct non-Lebesgue-measurable sets of reals, and hence non-measurable functions on reals, requires the Axiom of Choice \cite{Solovay70}---measurability proofs tend to be lengthy and tedious. 
A detailed proof of measurability of functions similar to $\mathbf{O}_C$ and $\mathbf{SC}_C$, can be found in~\cite{DBLP:conf/icfp/BorgstromLGS16} when providing a semantics to a probabilistic functional programming language.
More details are in Appendix~\ref{app:proofs-meas}; we summarise here the main things:
%%\marginpar{do we really need to provide these details?}
%%
\begin{lemma} \label{lemma:o-sc-measurable}
For all $C$ and $\sigma \in \fullstatespace$: 
$$
(1) \ \mathbf{O}_C^{\sigma}(\cdot) \mbox{ is } \mathcal{S} / \fullstatespace \mbox{ measurable}
\quad \mbox{and} \quad (2) \ 
%% \item $\mathbf{SC}_C^{\sigma}(\cdot,n)$ is $\mathcal{S} / \mathcal{R}$ measurable, for all $n$.
\mathbf{SC}_C^{\sigma}(\cdot) \mbox{ is } \mathcal{S} / \mathcal{R} \mbox{ measurable}.
$$
\end{lemma}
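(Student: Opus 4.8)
The plan is to reduce both parts to the measurability of the iterated one-step reduction map and then to obtain $\mathbf{O}$ and $\mathbf{SC}$ as countable combinations (unions and infima) of measurable pieces, following the blueprint carried out for an analogous calculus in \cite{DBLP:conf/icfp/BorgstromLGS16}. Fix $C$ and $\sigma$. First I would equip the set of configurations $\mathrm{Cfg}$ with a product $\sigma$-algebra: the two entropy coordinates carry $\mathcal{S}$, the state coordinate carries $\fullstatesa$, the step counter carries the discrete $\sigma$-algebra on $\mathbb{N}$, the weight carries the Borel algebra $\mathcal{R}$ on $[0,1]$, and---crucially---the statement and continuation coordinates carry the discrete $\sigma$-algebra. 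This last choice is legitimate because, for a \emph{fixed} $C$, only countably many statements (subterms of $C$ and loop unfoldings $C'';\mathtt{while}(\phi)\{C''\}$) and hence only countably many continuations (finite lists thereof) are ever reachable; the real constants occurring in expressions are fixed syntactic data and do not vary with $\theta$. I then adjoin a sink $\bot$ to obtain $\mathrm{Cfg}_\bot$, extending the coordinate projections by $\mathsf{weight}(\bot) = 0$ and $\mathsf{state}(\bot) = \failure$.

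Next I would define a total map $\mathsf{step}\colon \mathrm{Cfg}_\bot -> \mathrm{Cfg}_\bot$ by $\mathsf{step}(\bot) = \bot$, by $\mathsf{step}(\kappa) = \kappa'$ whenever $\kappa \vdash \kappa'$ (well defined, as $\vdash$ is deterministic), and by $\mathsf{step}(\kappa) = \bot$ when $\kappa$ is irreducible. Here a successfully terminated configuration $\langle \cdot, \modownarrow, [], \tau, \cdots\rangle$ with $\tau \neq \failure$ is \emph{not} irreducible: the (final) rule applies and self-loops with an incremented counter; by contrast, a failed configuration and a stuck $\mathtt{score}$ configuration (whose argument lies outside $(0,1]$) are irreducible and hence sent to $\bot$. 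To show $\mathsf{step}$ measurable I would partition $\mathrm{Cfg}$ into the countably many measurable regions fixed by the top-level constructor of the current statement, the shape of the continuation, the test $\sigma = \failure$, and---where relevant---the predicate and range tests $[\sigma(\phi)]$ and $[\sigma(E) \in (0,1]]$, which are measurable since expression and predicate evaluation are assumed measurable in $\sigma$. On each region $\mathsf{step}$ is a composition of measurable operations only: the entropy maps $\pi_U, \pi_L, \pi_R$ and the pairing $\mathrel{::}$ (measurable by the entropy-space axioms), the updates $\sigma \mapsto \sigma[x \mapsto \sigma(E)]$, scalar multiplication of the weight, and the successor on $\mathbb{N}$. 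Hence $\mathsf{step}$ is measurable, as is the initial-configuration map $\kappa_0(\theta) = \langle \theta, C, [], \sigma, \theta_K, 0, 1\rangle$ (with $\theta_K$ fixed arbitrarily, since it is irrelevant for an empty continuation), so that each $\mathsf{reach}_n := \mathsf{step}^n \circ \kappa_0 \colon \mathbb{S} -> \mathrm{Cfg}_\bot$ is measurable.

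For part $(2)$ I would observe that, for proper $\sigma$, the $n$-th approximation equals $\mathbf{SC}_C^{\sigma}(\theta,n) = \mathsf{weight}(\mathsf{reach}_n(\theta)) \cdot [\mathsf{state}(\mathsf{reach}_n(\theta)) \neq \failure]$. This matches the case distinction in the definition, because a configuration with step count $n$ is reached iff the run has not become stuck before step $n$, i.e. iff $\mathsf{reach}_n(\theta) \neq \bot$ (and the $\bot$-conventions make the indicator vanish in exactly the ``otherwise'' case). As a composition of the measurable $\mathsf{reach}_n$ with the projections $\mathsf{weight}$ and $\mathsf{state}$ and the indicator of a measurable set, $\mathbf{SC}_C^{\sigma}(\cdot,n)$ is measurable; by Lemma~\ref{lemma:sc-defined} the limit equals the infimum, so $\mathbf{SC}_C^{\sigma} = \inf_n \mathbf{SC}_C^{\sigma}(\cdot,n)$ is measurable as a countable infimum. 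The degenerate cases $\sigma = \failure$ and $\sigma = \diverge$ give the constants $0$ and $1$.

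For part $(1)$, since $\fullstatesa$ is generated by the Borel subsets of $\statespace$ together with the isolated singletons $\{\failure\}$ and $\{\diverge\}$, it suffices to show that the preimages of these three kinds of sets are measurable. For Borel $A \subseteq \statespace$ I would use that $\mathbf{O}_C^{\sigma}(\theta) \in A$ holds iff some $\mathsf{reach}_n(\theta)$ is a successfully terminated configuration with state in $A$ (once reached, the (final) rule preserves the state), giving the measurable set $\bigcup_n \mathsf{reach}_n^{-1}(\{\langle \cdot, \modownarrow, [], \tau, \cdots\rangle : \tau \in A\})$. For $\{\failure\}$ I would use that $\mathbf{O}_C^{\sigma}(\theta) = \failure$ iff the run eventually gets stuck, i.e. iff $\mathsf{reach}_n(\theta) = \bot$ for some $n$, which yields $\bigcup_n \mathsf{reach}_n^{-1}(\{\bot\})$; terminating and diverging runs never reach $\bot$, so this characterisation is exact. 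Finally $(\mathbf{O}_C^{\sigma})^{-1}(\{\diverge\})$ is the complement of these two measurable sets, and $\sigma \in \{\failure, \diverge\}$ again give constants. I expect the main obstacle to be the rigorous justification underlying the second paragraph---that the rule-applicability regions and the irreducibility predicate are genuinely measurable and that $\mathsf{step}$ is measurable on each region. This is where the careful bookkeeping of the entropy-space operations and of the measurability of expression and predicate evaluation is required, and it is precisely the lengthy technical part treated in detail in \cite{DBLP:conf/icfp/BorgstromLGS16}.
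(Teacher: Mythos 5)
Your proof is correct, and its skeleton is the same as the paper's (Appendix~\ref{app:proofs-meas}): make the deterministic one-step reduction into a total measurable map by sending irreducible configurations to a sink (the paper uses a zero-weight $\failure$-configuration where you use $\bot$, an immaterial difference), prove its measurability by case analysis on the rule-applicability regions, obtain $\mathbf{SC}_C^{\sigma}(\cdot,n)$ as an $n$-fold composition followed by a weight projection, and pass to the limit/infimum for part (2). Where you genuinely depart from the paper is in the $\sigma$-algebra on syntax: the paper equips statements and continuations with Borel $\sigma$-algebras induced by metrics on syntactic terms ($d_C$, $d_K$), imported from the functional-language setting of \cite{DBLP:conf/icfp/BorgstromLGS16}, whereas you use the discrete $\sigma$-algebra, justified by the observation that in this imperative language reduction never substitutes sampled reals into terms, so only countably many statement/continuation pairs are reachable from a fixed $C$. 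This is a real simplification that the paper does not exploit (it is sound here precisely because values live in the state, not in the syntax, unlike in the lambda-calculus setting where term metrics are unavoidable); the paper's metric route is the more robust choice and meshes with its standing assumption of a term $\sigma$-algebra for the measurability of $\lambda(\sigma,E).\sigma(E)$. You are also more explicit than the paper on part (1): the paper defers it entirely to the analogue of Lemma~92 of \cite{DBLP:conf/icfp/BorgstromLGS16}, while you spell out the preimage decomposition $\bigcup_n \mathsf{reach}_n^{-1}(\cdot)$ over the three generator types of $\fullstatesa$, including the correct characterisation of $\failure$ via eventual stuckness and of $\diverge$ by complementation. The one step you assert rather than prove, measurability of the state update $(x,\sigma,v)\mapsto\sigma[x\mapsto v]$, is exactly what the paper's Lemma~\ref{lemma:subst-measurable} establishes (via continuity), so your argument is complete once that lemma is invoked.
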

\begin{proof} $\phantom{xx}$
(1) 
Analogous to the proof of Lemma~92 in \cite{DBLP:conf/icfp/BorgstromLGS16}, more details in Appendix~\ref{app:proofs-meas}.
(2)
Analogous to the proof of Lemma~93 in \cite{DBLP:conf/icfp/BorgstromLGS16}, it follows that $\mathbf{SC}_C^{\sigma}(\cdot,n)$ is $\mathcal{S} / \mathcal{R}$ measurable, for all $n$; more details  in Appendix~\ref{app:proofs-meas}.
The result now follows by the fact that point-wise limits of measurable real-valued functions are measurable.
 \qed \end{proof}

%%\subsubsection{Properties of the semantic functions}

\paragraph{Distribution over final program states.}
We are now in a position to define the distribution on final states in terms of the operational semantics.
We first define the distribution on entropies $\langle C \rangle_{\sigma} \colon  \mathcal{S} \rightarrow \mathbb{R}_{+}$, as an integral of score $\mathbf{SC}_C^{\sigma}$ with respect to the standard measure on entropy space:
\[
\langle C \rangle_{\sigma}(B) \ =  \int_{B}  \mathbf{SC}_C^{\sigma}(\theta) \, \mu_{\mathbb{S}} (d\theta).
\]
For each measurable subset $B \in \mathcal{S}$ of the entropy space, $\langle C \rangle_{\sigma}(B)$ is the probability that if we run program $C$ with initial state $\sigma$, the random values sampled during execution will match some element of the set $B$ of entropies.

The probability distribution $|[C|]_{\sigma}\colon \fullstatesa \rightarrow \mathbb{R}_{+}$ on extended states can now be defined as the push-forward measure of $\langle C \rangle_{\sigma}$ with respect to $\mathbf{O}_C^{\sigma}(\theta)$:
\[
\begin{split}
|[C|]_{\sigma}(A) \ = \ 
\langle C \rangle_{\sigma}( \{ \theta\ |\ \mathbf{O}_C^{\sigma}(\theta) \in A \})
\ = \
\langle C \rangle_{\sigma}({\mathbf{O}_C^{\sigma}}^{-1}(A)) \\
\ =\ \int [\mathbf{O}_C^{\sigma}(\theta)(A)] {\cdot} \mathbf{SC}_C^{\sigma}(\theta) \,\mu_{\mathbb{S}}(d\theta).
\end{split}
\]
For program $C$ with $\langle C \rangle_{\sigma}(\mathbb{S}) > 0 $, this distribution can be normalised as follows:
\[
\hat{|[C|]}_{\sigma}(A) \ = \ \frac{|[C|]_{\sigma}(A)}{\langle C \rangle_{\sigma}(\mathbb{S}) }.
\]
\newcommand{\restr}[2]{{#1}\rvert_{#2}}
The measure $|[C|]_{\sigma}(A)$ is a measure on $(\fullstatespace, \fullstatesa)$. 
Let $\restr{|[C|]_{\sigma}}{\statespace}$ be this measure restricted to  $(\statespace, \statesa)$, i.e., the space of proper states without $\failure$ and $\diverge$, such that 
$\restr{|[C|]_{\sigma}}{\statespace}(A) = |[C|]_{\sigma}(A)$ for $A \subseteq \statesa$.

%$\restr{|[C|]_{\sigma}}{\statespace}$ is a measure on $(\statespace, \statesa)$.

\begin{example} \label{example:blr-measure}
We go back once again to the Bayesian linear regression program from Example~\ref{example:blr-os}. We will
%first compute the measure $\langle C \rangle_{\sigma}$ on traces for this program.
first compute the measure $|[C|]_{\sigma}$ on program outcomes. 

%Suppose we want to evaluate the program with entropy $\theta$.
If the value sampled in line 1 is $v_1$ and the value sampled in line 3 is $v_2$, it follows from
the operational semantics that the final state is 
$[\mathtt{u1} \mapsto v_1, \mathtt{a} \mapsto \mathtt{G}(0,2,v_1), 
\mathtt{u2} \mapsto v_2, \mathtt{b} \mapsto \mathtt{G}(0,2,v_2)]$
and the final score is 
$$e^{-(\mathtt{G}(0,2,v_2){-}2)^2} e^{-(\mathtt{G}(0,2,v_1) +\mathtt{G}(0,2,v_2){-}3)^2}
= e^{-(\mathtt{G}(0,2,v_2){-}2)^2 -(\mathtt{G}(0,2,v_1) +\mathtt{G}(0,2,v_2){-}3)^2}.
$$
%for the values of $\mathtt{a}$ and $\mathtt{b}$ in the final state.
Hence, for entropy $\theta$ with 
$\pi_u(\pi_{L,L}(\theta)) = v_1$ and 
$\pi_u(\pi_{L,L,R,R}(\theta)) = v_2$:
%the first two values sampled are $v_1$ and $v_2$, 
\begin{eqnarray*}
\mathbf{O}_C^{[]}(\theta) & = & [\mathtt{u1} \mapsto v_1, \mathtt{a} \mapsto \mathtt{G}(0,2,v_1), 
\mathtt{u2} \mapsto v_2, \mathtt{b} \mapsto \mathtt{G}(0,2,v_2)] \\
\mathbf{SC}_C^{[]}(\theta) & = & e^{-(\mathtt{G}(0,2,v_2)-2)^2 -(\mathtt{G}(0,2,v_1) +\mathtt{G}(0,2,v_2) - 3)^2}.
\end{eqnarray*}
This means that the integral
$\int [\mathbf{O}_C^{[]}(\theta)(A)] {\cdot} \mathbf{SC}_C^{[]}(\theta) \,\mu_{\mathbb{S}}(d\theta)$
can be written as 
$\int f(\pi_u(\pi_{L,L}(\theta)), \pi_u(\pi_{L,L,R,R}(\theta))) \,\mu_{\mathbb{S}}(d\theta)$,
where 
\begin{eqnarray*}
f(v_1,v_2) & = & [[\mathtt{u1}, \mathtt{a}, \mathtt{u2}, \mathtt{b} \mapsto v_1, \mathtt{G}(0,2,v_1), v_2, \mathtt{G}(0,2,v_2)] \in A] \\
&& \phantom{xxx} {\cdot} e^{-(\mathtt{G}(0,2,v_2){-}2)^2 -(\mathtt{G}(0,2,v_1) +\mathtt{G}(0,2,v_2){-}3)^2}.
\end{eqnarray*}
By the definition of entropy, we have:
\begin{eqnarray*}
&& \int f(\pi_u(\pi_{L,L}(\theta)), \pi_u(\pi_{L,L,R,R}(\theta))) \,\mu_{\mathbb{S}}(d\theta)\\
&=& \int \int f(\pi_u(\pi_{L}(\theta_L)), \pi_u(\pi_{L,L,R}(\theta_R))) \,\mu_{\mathbb{S}}(d\theta_L) \mu_{\mathbb{S}}(d\theta_R)\\
&=& \int \int \int \int f(\pi_u(\theta_{L,L})), \pi_u(\pi_{L,L}(\theta_{R,R}))) 
\,\mu_{\mathbb{S}}(d\theta_{L,L}) \mu_{\mathbb{S}}(d\theta_{R,L})
\,\mu_{\mathbb{S}}(d\theta_{L,R}) \mu_{\mathbb{S}}(d\theta_{R,R})\\
&=& \int \int f(\pi_u(\theta_{L,L})), \pi_u(\pi_{L,L}(\theta_{R,R}))) 
\,\mu_{\mathbb{S}}(d\theta_{L,L}) \mu_{\mathbb{S}}(d\theta_{R,R}).
\end{eqnarray*}
By repeatedly applying the definition of entropy like above, we get:
\begin{eqnarray*}
&&\int \int f(\pi_u(\theta_{L,L})), \pi_u(\pi_{L,L}(\theta_{R,R}))) 
\,\mu_{\mathbb{S}}(d\theta_{L,L}) \mu_{\mathbb{S}}(d\theta_{R,R})\\
&& \vdots\\
&=& \int \int f(\pi_u(\theta_{L,L})), \pi_u(\theta_{L,L,R,R})) 
\,\mu_{\mathbb{S}}(d\theta_{L,L}) \mu_{\mathbb{S}}(d\theta_{L,L,R,R})\\
&=& \int_{[0,1]} \int_{[0,1]} f(v_1,v_2) \, \mu_L(dv_1) \mu_L(dv_2).
\end{eqnarray*}
Thus, 
\[
\begin{split}
|[C|]_{\sigma}(A)  =
\int_{(0,1)} \int_{(0,1)}
[[\mathtt{u1} \mapsto v_1, \mathtt{a} \mapsto \mathtt{G}(0,2,v_1), 
\mathtt{u2} \mapsto v_2, \mathtt{b} \mapsto \mathtt{G}(0,2,v_2)] \in A] \\
{\cdot} e^{-(\mathtt{G}(0,2,v_2)-2)^2 -(\mathtt{G}(0,2,v_1) +\mathtt{G}(0,2,v_2) - 3)^2}
 \, \mu_L(dv_1) \mu_L(dv_2).
\end{split}
\]
Now, suppose that $A$ is a set of states such that $\mathtt{a} < 0$ and $\mathtt{b} < 0$. Then:
\[
\begin{split}
|[C|]_{\sigma}(A)  =
\int_{(0,1)} \int_{(0,1)}
[\mathtt{G}(0,2,v_1) < 0] 
[\mathtt{G}(0,2,v_2)]< 0] \\
{\cdot} e^{-(\mathtt{G}(0,2,v_2)-2)^2 -(\mathtt{G}(0,2,v_1) +\mathtt{G}(0,2,v_2) - 3)^2}
 \, \mu_L(dv_1) \mu_L(dv_2).
\end{split}
\]
Like in Example~\ref{example:blr}, this expression can be rewritten as a double integral of Gaussian densities over the real line:
\begin{eqnarray*}
|[C|]_{\sigma}(A)  &=&
\int_{} \int_{}\ 
[x_1 < 0][x_2 < 0] 
{\cdot}e^{-p(\vec{x})}
{\cdot} \mathtt{G}_{pdf}(0,2,x_1) {\cdot} \mathtt{G}_{pdf}(0,2,x_2) 
 \, \mu_L(dx_1) \mu_L(dx_2)\\
&=& \int_{(- \infty, 0)} \int_{(-\infty,0)}
{\cdot} e^{-p(\vec{x})}
{\cdot} \mathtt{G}_{pdf}(0,2,x_1) {\cdot} \mathtt{G}_{pdf}(0,2,x_2) 
 \, \mu_L(dx_1) \mu_L(dx_2)
\end{eqnarray*}
\noindent where $e^{-p(\vec{x})} =  e^{-(x_2-2)^2 -(x_1 +x_2 - 3)^2}$.
Let us now compute the normalising constant $\langle C \rangle_{\sigma}(\mathbb{S})$.
By a similar reasoning as above, we get:
\begin{eqnarray*}
\langle C \rangle_{\sigma}(\mathbb{S}) &=&  \int \mathbf{SC}_C^{\sigma}(\theta) \, \mu_{\mathbb{S}} (d\theta)\\
&=&
\int_{(0,1)} \int_{(0,1)}
e^{-(\mathtt{G}(0,2,v_2)-2)^2 -(\mathtt{G}(0,2,v_1) +\mathtt{G}(0,2,v_2) - 3)^2}
 \, \mu_L(dv_1) \mu_L(dv_2) \\
&=& \int \int
e^{-p(\vec{x})}
{\cdot} \mathtt{G}_{pdf}(0,2,x_1) {\cdot} \mathtt{G}_{pdf}(0,2,x_2) 
 \, \mu_L(dx_1) \mu_L(dx_2)
\end{eqnarray*}
Hence, the normalised semantics $\hat{|[C|]}_{\sigma}(A)$ applied to the above set $A$ is:
\[
\hat{|[C|]}_{\sigma}(A) =
\frac{ \int_{(- \infty, 0)} \int_{(-\infty,0)}
 e^{-p(\vec{x})}
{\cdot} \mathtt{G}_{pdf}(0,2,x_1) {\cdot} \mathtt{G}_{pdf}(0,2,x_2) 
 \, \mu_L(dx_1) \mu_L(dx_2)}
{\int \int
e^{-p(\vec{x})}
{\cdot} \mathtt{G}_{pdf}(0,2,x_1) {\cdot} \mathtt{G}_{pdf}(0,2,x_2) 
 \, \mu_L(dx_1) \mu_L(dx_2)}.
\]
\end{example}

\subsection{Expectations}

The weakest preexpectation semantics determines the expected value of an arbitrary measurable function $f$ on states with respect to a program. 
We can also obtain such expected value by integrating $f$ with respect to the measure $|[C|]_{\sigma}(A)$ defined just above.
%%---that is, computing $\int f(\tau) \restr{|[C|]_{\sigma}}{\statespace}(d\tau)$. 
By change of variable, this integral can be easily transformed into an integral with respect to the default measure on entropies.
\begin{lemma} \label{lemma:unfold-exp-wrt-sem}
For all measurable $f$, 
$$
\int f(\tau) \restr{|[C|]_{\sigma}}{\statespace}(d\tau) 
\ = \
\int \hat{f}(\mathbf{O}_{C}^\sigma(\theta)) \cdot \mathbf{SC}_C^\sigma(\theta)\, \mu_{\mathbb{S}}(d\theta).
$$
%\end{cases}$
\end{lemma}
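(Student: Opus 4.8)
The plan is to unwind both sides to integrals against the base entropy measure $\mu_{\mathbb{S}}$ using two standard transfer principles: the change-of-variables formula for push-forward measures, and the density representation of a measure defined as an integral. The only genuinely language-specific ingredient is the bookkeeping around the $0$-extension $\hat f$ and the restriction of $|[C|]_{\sigma}$ to proper states, which is exactly what makes the $\failure$- and $\diverge$-mass disappear.

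First I would rewrite the left-hand side as an integral of $\hat f$ against the \emph{unrestricted} measure $|[C|]_{\sigma}$ over the full space $\fullstatespace$. Since $\hat f$ agrees with $f$ on $\statespace$ and is $0$ on the two exceptional states, splitting the domain $\fullstatespace = \statespace \uplus \{\failure, \diverge\}$ gives
$$
\int_{\fullstatespace} \hat f(\tau)\, |[C|]_{\sigma}(d\tau)
= \int_{\statespace} f(\tau)\, |[C|]_{\sigma}(d\tau)
+ \hat f(\failure)\,|[C|]_{\sigma}(\{\failure\})
+ \hat f(\diverge)\,|[C|]_{\sigma}(\{\diverge\}),
$$
and the last two terms vanish because $\hat f(\failure) = \hat f(\diverge) = 0$. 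The surviving term is, by definition of the restricted measure, exactly $\int f(\tau)\, \restr{|[C|]_{\sigma}}{\statespace}(d\tau)$.

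Next I would apply the push-forward change-of-variables formula. Since $|[C|]_{\sigma}$ is defined as the push-forward of $\langle C \rangle_{\sigma}$ along $\mathbf{O}_C^{\sigma}$ (whose measurability is guaranteed by Lemma~\ref{lemma:o-sc-measurable}(1)), for every nonnegative measurable $g$ on $\fullstatespace$ we have $\int g(\tau)\, |[C|]_{\sigma}(d\tau) = \int g(\mathbf{O}_C^{\sigma}(\theta))\, \langle C \rangle_{\sigma}(d\theta)$; taking $g = \hat f$ (measurable because $f$ is and $\{\failure,\diverge\}$ is measurable) turns the integral from the previous step into $\int \hat f(\mathbf{O}_C^{\sigma}(\theta))\, \langle C \rangle_{\sigma}(d\theta)$. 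Finally, because $\langle C \rangle_{\sigma}(B) = \int_B \mathbf{SC}_C^{\sigma}(\theta)\, \mu_{\mathbb{S}}(d\theta)$ exhibits $\mathbf{SC}_C^{\sigma}$ (measurable by Lemma~\ref{lemma:o-sc-measurable}(2), finite by Lemma~\ref{lemma:sc-defined}) as a density of $\langle C \rangle_{\sigma}$ with respect to $\mu_{\mathbb{S}}$, the density-substitution rule $\int h(\theta)\, \langle C\rangle_{\sigma}(d\theta) = \int h(\theta)\cdot \mathbf{SC}_C^{\sigma}(\theta)\, \mu_{\mathbb{S}}(d\theta)$ with $h = \hat f \circ \mathbf{O}_C^{\sigma}$ yields the right-hand side. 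Chaining the three equalities closes the proof.

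Both transfer principles are proved by the standard measure-theoretic machine---verify the identity for indicator functions, extend by linearity to nonnegative simple functions, and pass to the monotone limit via the monotone convergence theorem---so there is no deep obstacle. The one point that deserves care, and which I expect to be the crux of the presentation rather than a real difficulty, is ensuring that the exceptional states are handled consistently: it is precisely the choice of the $0$-extension $\hat f$ (rather than the $1$-extension $\check f$) that matches the restriction of the measure to $\statespace$ and discards the divergent and failed runs, so that the identity reflects the non-liberal $\mathtt{wp}$ reading rather than the liberal $\mathtt{wlp}$ one.
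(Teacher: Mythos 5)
Your proposal is correct and follows essentially the same route as the paper's own proof: rewrite the left-hand side as an integral of $\hat{f}$ against the unrestricted measure $|[C|]_{\sigma}$ (the exceptional states contributing nothing since $\hat{f}$ vanishes there), then apply the push-forward change-of-variables formula to pass to $\langle C \rangle_{\sigma}$, and finally substitute the density $\mathbf{SC}_C^{\sigma}$ to land on $\mu_{\mathbb{S}}$. The paper compresses these into a three-line chain of equalities citing the pushforward property and the Radon--Nikod\'ym theorem, which is exactly the pair of transfer principles you identified.
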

%\hphantom{}
\begin{proof}
\allowdisplaybreaks
\begin{eqnarray*}
\int f(\tau) \restr{|[C|]_{\sigma}}{\statespace}(d\tau) &=& 
\int \hat{f}(\tau) |[C|]_{\sigma}(d\tau)  \\
%Above, we used Lemma 7.9 from the University of Crete notes
\text{(by property of the pushforward)} &=& 
\int  \hat{f}(\mathbf{O}_C^{\sigma}(\theta)) \,  \langle C \rangle_{\sigma}  (d\theta)  \\
\text{(by Radon-Nikod{\'y}m theorem)} &=&  
\int \hat{f}(\mathbf{O}_C^{\sigma}(\theta)) \mathbf{SC}_C^{\sigma}(\theta) \, \mu_{\mathbb{S}}(d\theta). \\
%\intf(\mathbf{O}_{C}^\sigma(\theta)) 
\end{eqnarray*}
%\noindent as required.
\qed  \end{proof}

%As \cite{DBLP:conf/icfp/BorgstromLGS16} have already presented a detailed
%proof of measurability of similar functions

\begin{example}
Let us compute the expected value of the variable $\mathtt{a}$ in the Bayesian linear
regression example. To this end, we take a function $f$ such that
$f(\sigma) = \sigma(\mathtt{a})$ if $\mathtt{a} \in \mathtt{dom}(\sigma)$
and $f(\sigma)  = 0$ otherwise.
By a similar reasoning as in Example~\ref{example:blr-measure},
we get:
\begin{eqnarray*}
&& \int f(\tau) \restr{|[C|]_{[]}}{\statespace}(d\tau)  \\
\ &=& \
\int \hat{f}(\mathbf{O}_{C}^{[]}(\theta)) \cdot \mathbf{SC}_C^{[]}(\theta)\, \mu_{\mathbb{S}}(d\theta)\\
&=&\ 
\int_{(0,1)} \int_{(0,1)}
\hat{f}([\mathtt{u1} \mapsto v_1, \mathtt{a} \mapsto \mathtt{G}(0,2,v_1), 
\mathtt{u2} \mapsto v_2, \mathtt{b} \mapsto \mathtt{G}(0,2,v_2)]) \\
&& \qquad {\cdot} e^{-(\mathtt{G}(0,2,v_2)-2)^2 -(\mathtt{G}(0,2,v_1) +\mathtt{G}(0,2,v_2) - 3)^2}
 \, \mu_L(dv_1) \mu_L(dv_2)\\
&=&\ 
\int_{(0,1)} \int_{(0,1)}  \mathtt{G}(0,2,v_1) 
{\cdot} e^{-(\mathtt{G}(0,2,v_2)-2)^2 -(\mathtt{G}(0,2,v_1) +\mathtt{G}(0,2,v_2) - 3)^2}
 \, \mu_L(dv_1) \mu_L(dv_2)
\end{eqnarray*}
This is the same result as the one we obtain using $\mathtt{wp}$ in Example~\ref{example:blr}.

\begin{example}
Let us now revisit the program from Example~\ref{example:diverge-wp} and calculate the expected value of
the constant function $f(\sigma) = 1$ with respect to the program using the operational semantics. To this end,
we need to calculate 
$\int \hat{f}(\mathbf{O}_{C}^{[]}(\theta)) \cdot \mathbf{SC}_C^{[]}(\theta)\, \mu_{\mathbb{S}}(d\theta)$
for the given program $C$. 
By evaluating the first two statements in the program, like in Example~\ref{example:blr-os}, we can check 
that 
$\mathbf{O}_{C}^{[]}(\theta)
= \mathbf{O}_{C'}^{\sigma}(\pi_{R,R}(\theta))$
and $\mathbf{SC}_C^{[]}(\theta) = \mathbf{SC}_{C'}^{\sigma}( \pi_{R,R}(\theta))$,
where $C' = \mathtt{while}(\mathtt{b}=0)\{C''\}$  ($C''$ being the loop body)
and $\sigma = [\mathtt{b} \mapsto 0 , \mathtt{k} \mapsto 0]$.
It follows from the properties of entropy that 
$\int \hat{f}(\mathbf{O}_{C'}^{\sigma}(\pi_{R,R}(\theta))) \cdot \mathbf{SC}_{C'}^{\sigma}( \pi_{R,R}(\theta))\, \mu_{\mathbb{S}}(d\theta)
= \int \hat{f}(\mathbf{O}_{C'}^{\sigma}(\theta)) \cdot \mathbf{SC}_{C'}^{\sigma}(\theta)\, \mu_{\mathbb{S}}(d\theta)$.

Now, let $C'_n = \mathtt{while}^n(\mathtt{b}=0)\{C''\}$. %By Lemma~\ref{lemma:unfold-exp-wrt-sem},
We can show (using Proposition~\ref{lemma:sup-while-o-sc} from Appendix~\ref{app:proofs-op-sem} and the Beppo Levi's theorem) that 
$$ \int \hat{f}(\mathbf{O}_{C'}^{\sigma}(\theta)) \cdot \mathbf{SC}_{C'}^{\sigma}(\theta)\, \mu_{\mathbb{S}}(d\theta)
=  \sup_n \int \hat{f}(\mathbf{O}_{C'_n}^{\sigma}(\theta)) \cdot \mathbf{SC}_{C'_n}^{\sigma}(\theta)\, \mu_{\mathbb{S}}(d\theta).$$
Since $\hat{f}$ has value $1$ on all proper states and is $0$ on state $\diverge$, $\hat{f}(\mathbf{O}_{C'_n}^{\sigma}(\theta)) = 1$ if 
$\mathbf{O}_{C'}^{\sigma}(\theta)$ is a proper state (that is, if $C'$ terminates with initial state $\sigma$ and entropy $\theta$) and
$\hat{f}(\mathbf{O}_{C'_n}^{\sigma}(\theta)) = 0$ if $C'$ does not terminate with $\theta$.
Thus, $\hat{f}(\mathbf{O}_{C'_n}^{\sigma}(\theta)) = [\theta \in S_1] + [\theta \in S_2] + \dots + [\theta \in S_{n-1}]$, where
$S_i$ is the set of entropies resulting in termination after  exactly $i$ iterations\footnote{
The reason the last set is $S_{n-1}$
and not $S_{n}$ is that $\mathtt{while}^1(\phi)\{C''\} = C'';\mathtt{diverge}$
if $\phi$ is true, so 
$\mathtt{while}^n(\theta)\{C''\}$ only terminates if the loop body is executed at most $n-1$ times.}.

%In every terminating run, 
The score is only multiplied by $\frac{\mathtt{k}}{\mathtt{k}+1}$ in the last iteration, after which the guard is satisfied.
As long as the guard of the while-loop is false, the score stays at $1$. Thus, we have
$\mathbf{SC}_C^{[]}(\theta)  = [\theta \in S_1] {\cdot}\frac{1}{2}+ [\theta \in S_2]  {\cdot}\frac{2}{3}+ 
\dots + [\theta \in S_{n}] {\cdot}\frac{n}{n+1}  + [\theta \notin S_1 \cup \dots \cup S_{n}]$\footnote{
This time, the last set is  $S_{n}$, because the $\mathtt{score}$ statement will be executed
even if the loop body is followed by $\mathtt{diverge}$.}.

Therefore, for each $n$, 
\begin{eqnarray*}
\int \hat{f}(\mathbf{O}_{C'_n}^{\sigma}(\theta)) \cdot \mathbf{SC}_{C'_n}^{\sigma}(\theta)\, \mu_{\mathbb{S}}(d\theta)
&=&\int \sum_{k=1}^{n-1} \frac{k}{k+1} [\theta \in S_k]\,  \mu_{\mathbb{S}}(d\theta) \\
&=& \sum_{k=1}^{n-1} \frac{k}{k+1}  \int [\theta \in S_k]\,  \mu_{\mathbb{S}}(d\theta) \\
\end{eqnarray*}

Now we need to calculate $ \int [\theta \in S_k]\,  \mu_{\mathbb{S}}(d\theta)$ for each $k$.
Observe that whether $\theta \in S_k$, depends only on parts of $\theta$ which are sampled from
(that is, on sub-entropies to which $\pi_U$ is applied). 
The value of $[\theta \in S_k]$ depends
only on the sub-entropies $\pi_{p_1}(\theta)$, \dots, $\pi_{p_k}(\theta)$, where $p_1, \dots, p_k$ are the paths
leading to values sampled in subsequent iterations. 
An entropy $\theta$ leads to termination in the $k$-th step if $\pi_U(\pi_{p_1}(\theta)) \geq \frac{1}{4}$, \dots,
$\pi_U(\pi_{p_{k-1}}(\theta)) \geq \frac{1}{k^2}$ and
$\pi_U(\pi_{p_{k}}(\theta)) \leq \frac{1}{(k+1)^2}$. 
Thus, by the definition of entropy, we have
\allowdisplaybreaks
\begin{eqnarray*}
\int [\theta \in S_k]\,  \mu_{\mathbb{S}}(d\theta) &=&
\int \left[\pi_U(\pi_{p_1}(\theta)) \geq \frac{1}{4}\right] {\cdot} \dots {\cdot}
\left[\pi_U(\pi_{p_{k-1}}(\theta)) \geq \frac{1}{k^2}\right] \\
&& \qquad {\cdot} \left[\pi_U(\pi_{p_{k}}(\theta)) \leq \frac{1}{(k{+}1)^2}\right]
\, \mu_{\mathbb{S}}(\theta)
\\
&=&\int \dots \int \left[\pi_U(\pi_{p_1}(\theta)) \geq \frac{1}{4}\right] {\cdot} \dots {\cdot}
\left[\pi_U(\pi_{p_{k-1}}(\theta)) \geq \frac{1}{k^2}\right] \\
&& \qquad {\cdot} \left[\pi_U(\pi_{p_{k}}(\theta)) \leq \frac{1}{(k{+}1)^2}\right]
\, \mu_{\mathbb{S}}(\theta_{p_1}) \dots \mu_{\mathbb{S}}(\theta_{p_k})\\
&=&\int \dots \int \left[v_1 \geq \frac{1}{4}\right] {\cdot} \dots {\cdot}
\left[v_{k-1} \geq \frac{1}{k^2}\right] \\
&& \qquad {\cdot} \left[v_k \leq \frac{1}{(k{+}1)^2}\right]
\, \mu_L(d v_1) \dots \mu_L(d v_k) \\
&=& \left (\prod_{i=1}^{k-1} \frac{(i{+}1)^2{-}1}{(i{+}1)^2} \right ) {\cdot} \frac{1}{(k{+}1)^2}\\
&=& \frac{1}{2} {\cdot} \frac{1}{k \cdot (k{+}1)}.
\end{eqnarray*}
Hence,
$ \sum_{k=1}^{n-1} \frac{k}{k+1}  \int [\theta \in S_k]\,  \mu_{\mathbb{S}}(d\theta) 
= \frac{1}{2} {\cdot} \sum_{k=1}^{n-1} \frac{1}{(k+1)^2}$, so we have
\begin{eqnarray*}
\int \hat{f}(\mathbf{O}_{C}^{[]}(\theta)) \cdot \mathbf{SC}_C^{[]}(\theta)\, \mu_{\mathbb{S}}(d\theta)
&=&  \int \hat{f}(\mathbf{O}_{C'}^{\sigma}(\theta)) \cdot \mathbf{SC}_{C'}^{\sigma}(\theta)\, \mu_{\mathbb{S}}(d\theta)\\
&=& \sup_n \int \hat{f}(\mathbf{O}_{C'_n}^{\sigma}(\theta)) \cdot \mathbf{SC}_{C'_n}^{\sigma}(\theta)\, \mu_{\mathbb{S}}(d\theta)\\
&=&   \frac{1}{2}   \sum_{k=1}^{\infty } \frac{1}{k \cdot (k+1)}\\
&=& \frac{\pi^2}{12} - \frac{1}{2}
\end{eqnarray*}
\end{example}
This is exactly the result we obtained with the weakest preexpectation semantics in Example~\ref{example:diverge-wp}. The correspondence between the weakest preexpectation and operational semantics is the topic of the next section.

\end{example}

\section{Equivalence of $\mathtt{wp}$ and operational semantics}

The aim of this section is to show that the weakest preexpectation semantics of \ccpgcl is equivalent to its operational semantics. 
This property is formalised by two theorems which relate the wp and wlp semantics to the operational semantics. 
The first result asserts that the expected value of an arbitrary function $f$ defined by the weakest preexpectation operator equals the expected value of $f$ computed as an integral of $f$ with respect to the distribution induced by the operational semantics.
\begin{theorem} \label{thm:wp-op-equiv}
For all measurable functions $f \colon \statespace -> \extposreals$, \ccpgcl programs $C$ and initial states $\sigma \in \statespace$:
$$
\mathtt{wp} |[ C |](f)(\sigma) \ = \ \int f(\tau) |[C|]_{\sigma}(d\tau). 
$$
\end{theorem}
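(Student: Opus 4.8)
The plan is to reduce the statement to a claim about integrals over the entropy space and then proceed by structural induction on $C$. By Lemma~\ref{lemma:unfold-exp-wrt-sem}, the right-hand side equals $\int \hat{f}(\mathbf{O}_C^\sigma(\theta)) \cdot \mathbf{SC}_C^\sigma(\theta)\, \mu_{\mathbb{S}}(d\theta)$, so it suffices to prove
$$\mathtt{wp}|[C|](f)(\sigma) \ = \ \int \hat{f}(\mathbf{O}_C^\sigma(\theta)) \cdot \mathbf{SC}_C^\sigma(\theta)\, \mu_{\mathbb{S}}(d\theta)$$
for all measurable $f$ and all $\sigma$. On the left, $\mathtt{wp}$ assigns value $0$ to both failing and diverging runs, which is consistent with the right, since $\hat{f}$ vanishes on $\failure$ and $\diverge$; this is exactly why $\hat{f}$ (rather than $\check{f}$) is the correct extension for the $\mathtt{wp}$ equivalence.

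For the constructs with an essentially deterministic reduction ($\mathtt{skip}$, assignment, $\mathtt{observe}$, $\mathtt{score}$, and the conditional) I would unfold a single reduction step from the initial configuration and read off $\mathbf{O}$ and $\mathbf{SC}$ directly; the resulting integrand is then a constant-in-$\theta$ multiple of $\hat{f}$ applied to the updated state, matching the corresponding clause of Fig.~\ref{figure:wp-ccpgcl}. The random draw case is the first place where the entropy axioms enter: the defining property $\int g(\pi_U(\theta))\,\mu_{\mathbb{S}}(d\theta) = \int_{[0,1]} g(x)\,\mu_L(dx)$ turns the entropy integral into the Lebesgue integral over $[0,1]$ appearing in $\mathtt{wp}|[x :\approx U|]$. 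The genuinely structural cases are sequencing and the loop. For $C_1;C_2$ I would first establish a compositionality property of the operational semantics—namely that, using the pairing functions $\pi_L, \pi_R$ to split the entropy exactly as the (seq) and (pop) rules do, $\mathbf{O}$ and $\mathbf{SC}$ of $C_1;C_2$ factor through the final state and score of $C_1$ followed by those of $C_2$—and then apply the pairing axiom $\int g(\pi_L\theta, \pi_R\theta)\,\mu_{\mathbb{S}} = \int\!\int g\,\mu_{\mathbb{S}}\mu_{\mathbb{S}}$ to split the integral into two nested integrals, invoking the induction hypothesis on $C_2$ (inner) and $C_1$ (outer). Care is needed here because $\mathbf{SC}$ multiplies across the two phases and because failing or diverging runs of $C_1$ must short-circuit correctly; the extended-state handling of $\mathbf{O}$ together with multiplicativity of weights (Lemma~\ref{lemma:w-decreasing}) are what make this factorisation valid.

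The loop is where I expect the main difficulty. The strategy is to pass through the finite unfoldings $\mathtt{while}^n(\phi)\{C\}$, each of which is loop-free given $C$ and hence handled by the preceding cases together with an inner induction on $n$; one checks $\mathtt{wp}|[\mathtt{while}^n(\phi)\{C\}|](f) = {}^{\mathtt{wp}}_{\langle\phi,C\rangle}\Phi_f^n(0)$, so that by Lemma~\ref{lemma:while-kleene} the $\mathtt{wp}$ of the full loop is $\sup_n \mathtt{wp}|[\mathtt{while}^n(\phi)\{C\}|](f)$. On the operational side, Proposition~\ref{lemma:sup-while-o-sc} supplies the matching identity $\int \hat f(\mathbf{O}_{\mathtt{while}}^\sigma)\,\mathbf{SC}_{\mathtt{while}}^\sigma\,\mu_{\mathbb{S}} = \sup_n \int \hat f(\mathbf{O}_{\mathtt{while}^n}^\sigma)\,\mathbf{SC}_{\mathtt{while}^n}^\sigma\,\mu_{\mathbb{S}}$, justified by monotone convergence once the integrands are shown to form a monotone chain in $n$. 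Combining the two supremum characterisations with the per-$n$ equality from the inner induction closes the case. The delicate points to get right are that diverging runs contribute $0$ on both sides (consistent with the least-fixpoint and $\hat f$ choice), and that the monotonicity licensing the interchange of $\sup$ and $\int$ is precisely the monotonicity already recorded in Lemma~\ref{lemma:w-decreasing} and in the Kleene iteration of Lemma~\ref{lemma:while-kleene}.
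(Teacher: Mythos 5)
Your skeleton matches the paper's proof: reduce the claim via Lemma~\ref{lemma:unfold-exp-wrt-sem} to an identity over the entropy space, induct on the structure of $C$, use the entropy axioms for random draws, a compositionality property for sequencing, and for loops pass to the finite unfoldings, interchange $\sup$ and $\int$ by monotone convergence, and close with Kleene's theorem. However, one step fails as stated. The compositionality property you invoke for $C_1;C_2$ --- splitting the entropy by $\pi_L/\pi_R$ ``exactly as the (seq) and (pop) rules do'' --- holds only when $C_1$ is \emph{not} itself a sequence; this is the side condition $C_1 \neq C_1'; C_1''$ in the paper's Proposition~\ref{lemma:o-sc-seq}, and it is essential, because the (seq) rule peels off only the first \emph{atomic} statement, so for composite $C_1$ the portion of $\theta$ consumed by $C_1$ is not $\pi_L(\theta)$. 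This matters precisely in your loop case: the unfolding $\mathtt{while}^{n+1}(\phi)\{C'\} = \mathtt{if}(\phi)\{C'; \mathtt{while}^n(\phi)\{C'\}\}$ sequences the \emph{whole body} $C'$ --- in general itself a sequence --- with $\mathtt{while}^n(\phi)\{C'\}$. Your claim that this configuration is ``handled by the preceding cases together with an inner induction on $n$'' has a hole: after peeling off the first atomic statement $D_1$ of $C'$, the residual term $D_2;\dots;D_k;\mathtt{while}^n(\phi)\{C'\}$ is neither a subterm of the original program (so the outer structural induction hypothesis does not apply to it) nor an instance of the inner induction hypothesis (which concerns $\mathtt{while}^n(\phi)\{C'\}$ alone). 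The paper closes exactly this gap with the finite-shuffling-function machinery: Proposition~\ref{lemma:rearrange-entropy} factorises $\mathbf{O}_{C_1;C_2}$ and $\mathbf{SC}_{C_1;C_2}$ for \emph{arbitrary} $C_1$ up to a non-duplicating FSF $\psi$, and Lemma~\ref{lemma:fsf-meas-pres} (measure preservation) erases $\psi$ under the integral. Your argument needs either this ingredient or an explicit extra nested induction (on the flattened length of $C'$) inside the induction on $n$; as written it contains neither.

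A secondary but genuine mis-attribution: the monotonicity licensing Beppo Levi is that $\hat f(\mathbf{O}_{C^n}^\sigma(\theta))\cdot\mathbf{SC}_{C^n}^\sigma(\theta)$ is non-decreasing in $n$, which is Proposition~\ref{lemma:o-sc-increasing} --- an operational fact proved in the paper via simulation relations, relying crucially on $\hat f(\failure)=\hat f(\diverge)=0$. It does not follow from Lemma~\ref{lemma:w-decreasing} or from the Kleene iteration: Lemma~\ref{lemma:w-decreasing} says weights only \emph{shrink} along a run, and indeed $\mathbf{SC}_{C^n}^\sigma(\theta)$ on its own is \emph{decreasing} in $n$; it is only the $\hat f$ factor, which vanishes until the approximant terminates, that makes the product increase. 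So the monotone chain you need is true, but its justification is a nontrivial piece of the development, not something ``already recorded'' in the lemmas you cite.
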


\begin{proof}
By Lemma~\ref{lemma:unfold-exp-wrt-sem}, it suffices to prove that for all $f$:
$$
\int \hat{f}(\mathbf{O}_{C}^\sigma(\theta)) \cdot \mathbf{SC}_C^\sigma(\theta)\, \mu_{\mathbb{S}}(d\theta) \ = \mathtt{wp}  |[ C |](f)(\sigma).
$$
This can be proven by induction on the structure of $C$. The detailed proof can be found in Appendix~\ref{section:thms-1-2}.
%What is interesting?
%
The proof makes use of several compositionality properties of the operational semantics and properties of finite
approximations of $\mathtt{while}$-loops, which are also proven in the appendix. A key insight used in the proof
is that Beppo Levi's theorem can be used to express the expectation of $f$ with respect to a $\mathtt{while}$-loop
as the limit of expectations of $f$ with respect to finite approximations of the loop.
 \qed \end{proof}

The second main theorem of this paper states that the weakest liberal preexpectation of a non-negative function $f$ bounded by $1$ is equivalent to the expected value of $f$ with respect to the distribution defined by the operational semantics plus the probability of divergence weighted by scores.

\begin{theorem} \label{thm:wlp-op-equiv}
For every measurable non-negative function $f \colon \statespace -> \extposreals$ with $f(\sigma) \leq 1$ for all states $\sigma$, \ccpgcl program $C$ and initial state $\sigma \in \statespace$:
$$
\mathtt{wlp} |[ C |](f)(\sigma) \ = \ 
\int f(\tau) \cdot \restr{|[C|]_{\sigma}}{\statespace}(d\tau) + 
\underbrace{\int [\mathbf{O}_C^{\sigma}(\theta) =\ \diverge] \cdot \mathbf{SC}_C^{\sigma}(\theta)  \, \mu_{\mathbb{S}} (d\theta)}_{\mbox{
%\parbox{3.5cm}
\begin{tabular}{c}
\footnotesize probability of divergence \\  \footnotesize multiplied by the score
\end{tabular}
}}.
$$
\end{theorem}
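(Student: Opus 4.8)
The plan is to first collapse the two integrals on the right-hand side into a single one, and then run an induction on the structure of $C$ that mirrors the proof of Theorem~\ref{thm:wp-op-equiv}, with the extension $\hat{f}$ systematically replaced by $\check{f}$. For the collapsing step, observe that by the definitions of the two extensions $\check{f}(\tau) = \hat{f}(\tau) + [\tau = \diverge]$ for every $\tau \in \fullstatespace$: the two functions agree on proper states and on $\failure$, and differ by exactly $1$ on $\diverge$. Rewriting the first summand of the right-hand side via Lemma~\ref{lemma:unfold-exp-wrt-sem}, the whole right-hand side becomes
\[
\int \bigl( \hat{f}(\mathbf{O}_C^{\sigma}(\theta)) + [\mathbf{O}_C^{\sigma}(\theta) = \diverge] \bigr) \cdot \mathbf{SC}_C^{\sigma}(\theta) \, \mu_{\mathbb{S}}(d\theta) = \int \check{f}(\mathbf{O}_C^{\sigma}(\theta)) \cdot \mathbf{SC}_C^{\sigma}(\theta) \, \mu_{\mathbb{S}}(d\theta).
\]
Hence it suffices to prove $\mathtt{wlp}|[C|](f)(\sigma) = \int \check{f}(\mathbf{O}_C^{\sigma}(\theta)) \cdot \mathbf{SC}_C^{\sigma}(\theta) \, \mu_{\mathbb{S}}(d\theta)$, which is the exact dual of the reduced goal of Theorem~\ref{thm:wp-op-equiv} with $\hat{f}$ replaced by $\check{f}$.

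I would then proceed by induction on the structure of $C$, reusing the compositionality properties of $\mathbf{O}$ and $\mathbf{SC}$ from the appendix. For the statements $\mathtt{skip}$, $x := E$, $x :\approx U$, $\mathtt{observe}(\phi)$ and $\mathtt{score}(E)$, a run from a proper state ends in a proper state or in $\failure$, but never in $\diverge$; there $\check{f}$ and $\hat{f}$ agree on all reachable final states and the $\mathtt{wlp}$-clause coincides with the $\mathtt{wp}$-clause, so these cases are identical to those of Theorem~\ref{thm:wp-op-equiv}. The cases $\mathtt{if}(\phi)\{C\}$ and $C_1;C_2$ have the same shape as in the $\mathtt{wp}$ proof and reuse the same factorisation of $\mathbf{O}$ and $\mathbf{SC}$ through the entropy split, now appealing to the induction hypotheses, which already account correctly for divergence of the subprograms. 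The first genuine difference is the $\mathtt{diverge}$ base case: operationally $\mathbf{O}_{\mathtt{diverge}}^{\sigma}(\theta) = \diverge$ and $\mathbf{SC}_{\mathtt{diverge}}^{\sigma}(\theta) = 1$ for every $\theta$, so the integral is $\int \check{f}(\diverge) \cdot 1 \, \mu_{\mathbb{S}}(d\theta) = 1 = \mathtt{wlp}|[\mathtt{diverge}|](f)$, matching the new $\mathtt{wlp}$-clause.

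The main obstacle is the $\mathtt{while}$-loop. Here $\mathtt{wlp}|[\mathtt{while}(\phi)\{C\}|](f) = \inf_n {}^{\mathtt{wlp}}_{\langle \phi, C \rangle} \Phi_f^n(1)$ is a greatest fixpoint, hence approximated \emph{from above}, in contrast to the supremum-from-below used for $\mathtt{wp}$. I plan to establish the $\check{f}$-analogue of Proposition~\ref{lemma:sup-while-o-sc}, which for $\check{f}$ produces an infimum: writing $C'_n = \mathtt{while}^n(\phi)\{C\}$ for the bounded unrolling that diverges once more than $n$ iterations are needed,
\[
\int \check{f}(\mathbf{O}_{\mathtt{while}(\phi)\{C\}}^{\sigma}(\theta)) \cdot \mathbf{SC}_{\mathtt{while}(\phi)\{C\}}^{\sigma}(\theta) \, \mu_{\mathbb{S}}(d\theta) = \inf_n \int \check{f}(\mathbf{O}_{C'_n}^{\sigma}(\theta)) \cdot \mathbf{SC}_{C'_n}^{\sigma}(\theta) \, \mu_{\mathbb{S}}(d\theta).
\]
The pointwise justification of the integrand convergence splits on the behaviour of the full loop at $\theta$: if it terminates after $k$ iterations, every unrolling with $n \geq k$ agrees with it, while the tail values $1 \cdot w_n$ of shorter unrollings dominate the terminal value because weights are antitone (Lemma~\ref{lemma:w-decreasing}); if it diverges, then $\check{f}(\diverge) = 1$ and the weights of $C'_n$ decrease exactly to $\mathbf{SC}^{\sigma}(\theta) = \inf_n w_n$; and if it fails, both sides are eventually $0$. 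Since $f \leq 1$ and all scores lie in $[0,1]$, the integrands are dominated by the constant $1$, which is $\mu_{\mathbb{S}}$-integrable because $\mu_{\mathbb{S}}(\mathbb{S}) = 1$, so the dominated convergence theorem lets me interchange $\inf_n$ with the integral. Applying the induction hypothesis to the loop body then identifies $\int \check{f}(\mathbf{O}_{C'_n}^{\sigma}) \cdot \mathbf{SC}_{C'_n}^{\sigma}$ with ${}^{\mathtt{wlp}}_{\langle \phi, C \rangle} \Phi_f^n(1)(\sigma)$, and taking $\inf_n$ reproduces the greatest-fixpoint characterisation of $\mathtt{wlp}|[\mathtt{while}(\phi)\{C\}|](f)$, closing the induction. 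The delicate point throughout is that the hypothesis $f \leq 1$ is indispensable: it is precisely what makes the dominating function integrable and so licenses the interchange of limit and integral on which the entire loop case rests.
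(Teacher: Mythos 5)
Your proposal is correct and follows essentially the same route as the paper's proof: reduce the claim to $\mathtt{wlp}|[ C |](f)(\sigma) = \int \check{f}(\mathbf{O}_{C}^\sigma(\theta)) \cdot \mathbf{SC}_C^\sigma(\theta)\, \mu_{\mathbb{S}}(d\theta)$, induct on the structure of $C$ exactly as for Theorem~\ref{thm:wp-op-equiv}, and in the $\mathtt{while}$ case combine the pointwise infimum characterisation (the paper's Proposition~\ref{lemma:inf-while-o-sc}), an interchange of $\inf_n$ with the integral, and the identification of the bounded unrollings with the iterates ${}^{\mathtt{wlp}}_{\langle \phi, C \rangle} \Phi_f^n(1)$ before invoking the greatest-fixpoint characterisation. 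The only cosmetic difference is that you license the interchange by dominated convergence (dominating by the constant $1$, integrable since $\mu_{\mathbb{S}}$ is a probability measure), whereas the paper uses Beppo Levi's theorem for non-increasing sequences together with Proposition~\ref{lemma:o-sc-decreasing} and finiteness of the initial integral---both justifications rest on the same monotonicity and boundedness facts.
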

\begin{proof}
By induction on the structure of $C$. Details in Appendix~\ref{section:thms-1-2}.
 \qed \end{proof}

\begin{corollary}
%For all $\Gamma$ and $\sigma \models \Gamma$,  
For every \ccpgcl program $C$ and state $\sigma$:
$$
\mathtt{wlp} |[ C |](1)(\sigma) \ = \ 
\int \mathbf{SC}_C^{\sigma}(\theta)  \, \mu_{\mathbb{S}} (d\theta).
$$
\end{corollary}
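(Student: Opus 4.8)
The plan is to obtain the corollary as the special case $f = \lambda\sigma.\,1$ of Theorem~\ref{thm:wlp-op-equiv}. The constant function $1$ is measurable, nonnegative, and bounded by $1$, so the theorem applies and yields
\[
\mathtt{wlp}|[C|](1)(\sigma) \;=\; \int 1 \cdot \restr{|[C|]_{\sigma}}{\statespace}(d\tau) \;+\; \int [\mathbf{O}_C^{\sigma}(\theta) =\ \diverge] \cdot \mathbf{SC}_C^{\sigma}(\theta)\, \mu_{\mathbb{S}}(d\theta).
\]
First I would rewrite the first summand. By definition of the restricted measure, $\int 1\cdot\restr{|[C|]_{\sigma}}{\statespace}(d\tau) = \restr{|[C|]_{\sigma}}{\statespace}(\statespace) = |[C|]_{\sigma}(\statespace)$, and unfolding the definition of $|[C|]_{\sigma}$ as a pushforward gives $|[C|]_{\sigma}(\statespace) = \int [\mathbf{O}_C^{\sigma}(\theta) \in \statespace]\cdot \mathbf{SC}_C^{\sigma}(\theta)\,\mu_{\mathbb{S}}(d\theta)$. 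Collecting the two summands under a single integral, the corollary reduces to the identity
\[
\int \bigl([\mathbf{O}_C^{\sigma}(\theta) \in \statespace] + [\mathbf{O}_C^{\sigma}(\theta) =\ \diverge]\bigr)\cdot \mathbf{SC}_C^{\sigma}(\theta)\,\mu_{\mathbb{S}}(d\theta) \;=\; \int \mathbf{SC}_C^{\sigma}(\theta)\,\mu_{\mathbb{S}}(d\theta).
\]

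Next, I would use the fact that $\mathbf{O}_C^{\sigma}$ takes values in $\fullstatespace = \statespace \uplus \{\failure, \diverge\}$, so for every entropy $\theta$ the bracketed sum equals $1 - [\mathbf{O}_C^{\sigma}(\theta) = \failure]$. Hence the difference between the two integrands is exactly $[\mathbf{O}_C^{\sigma}(\theta) = \failure]\cdot\mathbf{SC}_C^{\sigma}(\theta)$, and it remains to show this term vanishes for every $\theta$. This is the crux of the argument: whenever $\mathbf{O}_C^{\sigma}(\theta) = \failure$, the run reaches a configuration in state $\failure$ after finitely many steps, say at step $m$, and no reduction rule applies to a $\failure$ state (every rule carries the side condition $\sigma\neq\failure$), so the run is stuck there. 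By the definition of the approximation $\mathbf{SC}_C^{\sigma}(\theta,n)$, which returns $0$ as soon as the reached state is $\failure$ (and also whenever no configuration is reachable in exactly $n$ steps), we get $\mathbf{SC}_C^{\sigma}(\theta,n) = 0$ for all $n \geq m$, whence $\mathbf{SC}_C^{\sigma}(\theta) = \inf_n \mathbf{SC}_C^{\sigma}(\theta,n) = 0$. Therefore $[\mathbf{O}_C^{\sigma}(\theta) = \failure]\cdot\mathbf{SC}_C^{\sigma}(\theta) = 0$ pointwise, the two integrals coincide, and the corollary follows.

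The main obstacle is this last step---verifying that failing runs carry zero score---which hinges on reading the definition of $\mathbf{SC}$ carefully: the score of a run ending in $\failure$ is forced to $0$ by the approximations themselves, independently of whatever scoring occurred before the failure. Everything else is a direct substitution into Theorem~\ref{thm:wlp-op-equiv} together with unfolding the definitions of $|[C|]_{\sigma}$ and its restriction $\restr{|[C|]_{\sigma}}{\statespace}$.
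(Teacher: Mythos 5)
Your proof is correct and is essentially the argument the paper intends: the corollary is stated without proof as the instantiation of Theorem~\ref{thm:wlp-op-equiv} at $f \equiv 1$, and your two steps---rewriting $\int 1\,\restr{|[C|]_{\sigma}}{\statespace}(d\tau)$ as $\int [\mathbf{O}_C^{\sigma}(\theta) \in \statespace]\cdot\mathbf{SC}_C^{\sigma}(\theta)\,\mu_{\mathbb{S}}(d\theta)$ via the pushforward definition, and checking that entropies with $\mathbf{O}_C^{\sigma}(\theta) = \failure$ satisfy $\mathbf{SC}_C^{\sigma}(\theta) = 0$---are exactly the details that instantiation leaves implicit. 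One small wording correction: $\mathbf{O}_C^{\sigma}(\theta) = \failure$ means the run gets \emph{stuck}, which can also happen in a proper state (e.g., at $\mathtt{score}(E)$ with $\sigma(E) \notin (0,1]$), not only in the error state $\failure$; your parenthetical case (no configuration reachable in exactly $n$ steps, which kicks in for all $n$ strictly beyond the stuck step) already covers this, so the conclusion $\inf_n \mathbf{SC}_C^{\sigma}(\theta,n) = 0$ stands.
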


\section{Related work}

\paragraph{Semantics of languages for Bayesian inference.}
Research on the semantics of probabilistic programs dates back to the pioneering work by Saheb-Djahromi \cite{DBLP:conf/mfcs/Saheb-Djahromi78} and Kozen \cite{DBLP:journals/jcss/Kozen81}, among others. However, this early work is mostly motivated by applications such as the analysis of randomised algorithms, so the languages involved mostly only supported discrete distributions and did not allow conditioning.

A recent explosion of popularity of machine learning, and the rise of probabilistic programming as a tool for Bayesian inference, have sparked a new line of work on semantics of languages with continuous random draws and conditioning. 
An early example of such work is the paper by Park et al. \cite{park08sampling},
% \cite{DBLP:conf/popl/ParkPT05} (later extended to a journal article \cite{park08sampling}), 
 who present an operational semantics for a higher-order language with conditioning, parametrised by an infinite trace of random values.
Borgstr{\"o}m et al. \cite{DBLP:journals/corr/BorgstromGGMG13} define a denotational semantics of a first-order language with both discrete and continuous distributions, which also supports conditioning, including zero-probability observations. 
Nori et al. \cite{R2} define a denotational semantics of an imperative language with (hard) conditioning, similar to the weakest preexpectation semantics; they however do not consider possible program divergence.
Toronto et al. \cite{DBLP:conf/esop/TorontoMH15} present a denotational semantics for a first-order functional language which interprets programs as deterministic functions on the source of randomness. 
Huang and Morrisett \cite{DBLP:conf/esop/HuangM16} define a semantics for a first-order language, restricted to computable operations. 
Heunen et at. \cite{DBLP:conf/lics/HeunenKSY17} present a denotational semantics of a higher-order functional language with continuous random draws and conditioning. 
They manage to overcome the well-known problem with measurability of higher-order function application \cite{aumann61} by replacing standard Borel spaces with so-called quasi-Borel spaces. 
This idea, simplifying the authors' previous work \cite{Staton16}, has since gained a lot of attraction in the community: 
{\'S}cibior et al. \cite{DBLP:journals/pacmpl/ScibiorKG18} use quasi-Borel spaces to prove correctness of 
sampling-based inference algorithms, while V{\'a}k{\'a}r et al. \cite{DBLP:journals/pacmpl/VakarKS19} define a domain 
theory for higher-order functional probabilistic programs, which extends the quasi-Borel space approach to programs with 
higher-order recursion and recursive types. 
A different approach is followed in a recent paper by Dahlqvist and Kozen \cite{DBLP:journals/pacmpl/DahlqvistK20}, who define a semantics of a probabilistic language with conditioning in terms of Banach spaces.

The operational semantics presented in this chapter is strongly inspired by the semantics of Borgstr{\"o}m et al. \cite{DBLP:conf/icfp/BorgstromLGS16} and Wand et al. \cite{DBLP:journals/pacmpl/WandCGC18}, both defined for functional programs.
The former define a measure on program outcomes by integrating functions similar to our $\mathbf{O}$ and $\mathbf{SC}$, defined in terms of an operational semantics, with respect to a stock measure on traces of random values. 
The latter use a similar approach, but define their operational semantics in terms of infinite entropies instead of finite traces, and use continuations to fix evaluation order and split entropies between continuations consistently.

\paragraph{Program divergence.} 
Another line of research on probabilistic programs, coming mostly from the algorithms and program verification community and inspired by earlier papers by Kozen \cite{DBLP:journals/jcss/Kozen81} and McIver et al. \cite{Morgan96}, has focused on extending Dijkstra's weakest precondition calculus to probabilistic programs.
In this line of work, correct handling of diverging programs has been a key issue from the start. 
Recent developments
\cite{DBLP:journals/pe/GretzKM14, DBLP:journals/jacm/KaminskiKMO18, DBLP:conf/popl/ChatterjeeNZ17} focus on problems such as analysing runtimes, almost-sure termination (and variants thereof) and outcomes of algorithms.
A weakest preexpectation semantics for recursive imperative probabilistic programs is given by Olmedo et al. \cite{DBLP:conf/lics/OlmedoKKM16}.
Olmedo et al. \cite{Olmedo18} also extend the weakest preexpectation calculus to programs with hard conditioning and possible divergence, but their semantics only supports discrete distributions.

\paragraph{Combining continuous distributions, conditioning and divergence.} 
The issue of program divergence has so far mostly been disregarded when defining semantics of Bayesian probabilistic programs, with most authors assuming that their semantics is only applicable to almost-surely terminating programs. 
Conversely, semantics designed to handle diverging programs usually did not support conditioning, and when they did, they were not applicable to programs with continuous distributions.

To our best knowledge, the only existing semantics supporting the combination of divergence, continuous random draws, and conditioning is the recent work by Bichsel et al. \cite{BichselGV18}. 
The authors define a semantics of an imperative probabilistic language with continuous and discrete distributions and hard conditioning, in which the probability of failing a hard constraint, the probability of an execution error and the probability of divergence are defined explicitly. 
The semantics calculates probability measures on final program states and the above exceptions are treated as special states, like $\failure$ and $\diverge$ in our semantics.

Technically, the semantics in \cite{BichselGV18} is a superset of our semantics. 
A normalised expectation of the form
$\frac{\mathtt{wp}|[C|](f)(\sigma_0) }{\mathtt{wlp}|[C|](1)(\sigma_0) }$ can be defined in their semantics as
\[
\frac{\int_{\Omega_{\sigma}} f(\tau) |[C|](\sigma_0)(d\tau)}
{|[C|](\sigma_0)(\Omega_{\sigma}) + |[ C |](\sigma_0)(\diverge)}
\]
\noindent where $|[C|](\sigma_0)$ is the measure on final states of program $C$ with initial state $\sigma_0$, as defined by the semantics, and $\Omega_{\sigma}$ is the set of proper states (excluding errors and divergence).
However, we believe that extending the well-studied framework of weakest preexpectations to the continuous case is still a significant contribution, as it allows using established techniques, not applicable to the semantics in \cite{BichselGV18}, to analyse programs with continuous distributions, conditioning, in the presence of possible program divergence.

\section{Epilogue}

In this paper, we have considered a probabilistic while-language that contains three important ingredients: (a) sampling from continuous probability distributions, (b) soft and hard conditioning, and (c) program divergence. 
We have provided a weakest (liberal) preexpectation semantics for our language and showed that soft conditioning can be encoded by hard conditioning. 
The wp-semantics is complemented by an operational semantics using the concept of entropies.
The main results of this paper are the correspondence theorem between the wp-semantics (and wlp-semantics) and the operational semantics.
The paper has been written in a tutorial-like manner with various illustrative examples.

Let us conclude with a short discussion.
The interplay between divergence and conditioning is intricate.
For the discrete probabilistic setting, this has been extensively treated in~\cite{Olmedo18}.
Intuitively speaking, the problem is how conditioning is taken into account by program runs that diverge and never reach the \texttt{score} statement.
Consider the program:
\begin{verbatim}
t := 1; 
x := U;
if (x > 0.5) {  
   while(true) { t := t+1; } 
}
score(softeq(t, 1));
return t;
\end{verbatim}
This program terminates with probability $\nicefrac{1}{2}$ with $t=1$, and with the same probability diverges increasing $t$ ad infinitum.
One would perhaps expect the expected value of $t$ to be 1, as the possibility of $t$ going to infinity should be discarded by the \texttt{score} statement. 
However, for any function $f$, we have $\mathtt{wp}(\mathtt{while}(\mathtt{true})\{t=t{+}1\})(f) = 0$ and $\mathtt{wlp}(\mathtt{while}(\mathtt{true})\{t=t{+}1\})(f) = 1$.
Hence, the expected value of $t$ (for the empty initial state) will be:
\[
\frac{\mathtt{wp}|[C|](\lambda \sigma\ .t)([])}{\mathtt{wlp}|[C|](\lambda \sigma\ .t)([])}
\ = \ \frac{\nicefrac{1}{2}}{1} \ = \ \frac{1}{2}.
\]

\appendix

\section{Basics of measure theory}
\label{app:basics-measure-theory}

This section presents the basic definitions of measure theory used throughout this
of the paper. 
For a more thorough introduction to measure theory, please consult one of the standard textbooks such as \cite{billingsley95}.

%\subsubsection{Measurable, topological and metric spaces}
\subsubsection{Measurable spaces}

\begin{definition}
A $\sigma$-algebra $\Sigma$ on a set $\Omega$ is a set consisting of subsets of $\Omega$ which satisfies the
following properties:
\begin{itemize}
\item $\emptyset \in \Sigma$
\item If $A \in \Sigma$, then $\Omega \setminus A \in \Sigma$ (closure under complements)
\item If $A_i \in \Sigma$ for all $i \in \mathbb{N}$, then $\bigcup_{i \in \mathbb{N}} A_i \in \Sigma$ (closure
under countable unions)
\end{itemize}
The tuple $(\Omega, \Sigma)$ of a set $\Omega$ and its $\sigma$-algebra $\Sigma$ is called
a \emph{measurable space}. A set $A \in \Sigma$ is called a \emph{measurable set}.
\end{definition}

\begin{definition}
A $\sigma$-algebra on a set $\Omega$ \emph{generated} by a set $S$ of subsets of $\Omega$
is the smallest $\sigma$-algebra containing $S$.
\end{definition}

\begin{definition}
A \emph{countably generated} $\sigma$-algebra on $\Omega$ is a $\sigma$-algebra generated by a
countable set of subsets of $\Omega$
\end{definition}

\begin{definition}
If $(\Omega_1, \Sigma_1)$ and $(\Omega_2, \Sigma_2)$ are measurable spaces, the
$\emph{product}$ of the $\sigma$-algebras $\Sigma_1$ and $\Sigma_2$ is the $\sigma$-algebra $\Sigma_1 \otimes \Sigma_2$ on $\Omega_1 \times \Omega_2$ defined as $\Sigma_1 \otimes \Sigma_2 = \sigma(\{(A_1 \times A_2\ |\ A_1 \in \Sigma_1, A_2 \in \Sigma_2 \})$.
This definition extends naturally to arbitrary finite products of measures.
\end{definition}

\begin{definition}
A \emph{Borel $\sigma$-algebra} $\mathcal{R}$ on $\mathbb{R}$ is the $\sigma$-algebra generated by the
set of open intervals $(a, \infty)$ for $a \in \mathbb{R}$. A Borel $\sigma$-algebra $\mathcal{R}_n$ on  
$\mathbb{R}^n$ is the $n$-fold product of $\mathcal{R}$.
\end{definition}

\subsubsection{Measures}

\begin{definition}
A $\emph{measure}$ on the measurable space $(\Omega, \Sigma)$ is a function
$\mu : \Sigma \mapsto \extposreals$ such that $\mu(\emptyset) = 0$ and 
for any collection of pairwise disjoint sets $A_1, A_2, \dots$,
$\mu(\bigcup_{i \in \mathbb{N}} A_i) 
= \sum_{i \in \mathbb{N}} \mu(A_i)$ (i.e. $\mu$ is countably additive). 
\end{definition}

\begin{definition}
A \emph{product} $\mu_1 \otimes \mu_2$ of measures $\mu_1$ and $\mu_2$ on $(\Omega_1, \Sigma_1)$ and $(\Omega_2, \Sigma_2)$,
respectively, is the unique measure on $(\Omega_1 \times \Omega_2, \Sigma_1 \times \Sigma_2)$ which satisfies
$(\mu_1 \otimes \mu_2)(A_1 \times A_2) = \mu_1(A_1) \mu_2(A_2)$ for all $A_1 \in \Sigma_1$, $A_2 \in \Sigma_2$.
This definition extends naturally to finite products of higher dimensions.
\end{definition}

\begin{definition}
The \emph{Lebesgue measure} on $(\mathbb{R}, \mathcal{R})$ is the unique  measure $\mu_L$ which
satisfies $\mu_L([a,b]) = b - a$ for all $a, b \in \mathbb{R}$ such that $b \geq a$. The Lebesgue measure
on $(\mathbb{R}^n, \mathcal{R}_n)$ is the $n$-fold product of $\mu_L$.
\end{definition}

\begin{definition}
A \emph{probability measure} on  $(\Omega, \Sigma)$ is a measure $\mu$ such that $\mu(\Omega) = 1$.
A \emph{subprobability measure} on  $(\Omega, \Sigma)$ is a measure $\mu$ with $\mu(\Omega) \leq 1$.
\end{definition}

\begin{definition}
A measure $\mu$ on $(\Omega, \Sigma)$ is \emph{$\sigma$-finite} if there exists a sequence of sets
$A_i \in \Sigma$ such that $A_i \subseteq A_{i+1}$ for all $i$ and $\mu(A_i) < \infty$ and
$\Omega = \bigcup_{i \in \mathbb{N}} A_i$.
\end{definition}

%TODO: do we need to define measure restriction?

\subsubsection{Measurable functions and integrals}

\begin{definition}
A function $f$ between measurable spaces $(\Omega_1, \Sigma_1)$ and $(\Omega_2, \Sigma_2)$ is 
\emph{measurable $\Sigma_1 / \Sigma_2$} if for all $B \in \Sigma_2$, $f^{-1}(B) \in \Sigma_1$. 
If the $\sigma$-algebras $\Sigma_1$ and $\Sigma_2$ are clear from the context, we will
simply call $f$ measurable.
\end{definition}

\begin{definition}
For a measurable space $(\Omega, \Sigma)$, a \emph{simple function} $g \colon \Omega -> \mathbb{R}_{+}$ is 
a measurable $\Sigma / \mathcal{R}$ function with a finite image set, which can be expressed as 
$g(x) = \Sigma_{i=1}^{n} \alpha_i [x \in A_i]$, where $A_i = f^{-1}(\alpha_1)$.
The \emph{Lebesgue integral} of a simple function $g(x) =  \Sigma_{i=1}^{n} \alpha_i [x \in A_i]$ with
respect to a measure $\mu$ on $(\Omega, \Sigma)$ is defined as:
\[
\int g(x)\, \mu(dx) = \sum_{i=1}^n \alpha_i \mu(A_i)
\]
The Lebesgue integral of any measurable function $f$ is then defined as the limit of integrals of simple functions
pointwise smaller than $f$:
\[
\int f(x)\, \mu(dx) = \sup \left \{\int g(x)\, \mu(dx) \ |\ g\ \text{simple},  g \leq f \right \}
\]
\end{definition}

\begin{theorem}[Beppo Levi]
Let $f_i \colon X -> \extposreals$ be a (pointwise) non-decreasing sequence of positive measurable functions
and let $f = \lim_{n -> \infty} \int f_i$ be the pointwise limit of the sequence.Then $f$ is measurable
and
\[
\int f\, d\mu = \lim_{n -> \infty} f_n\, d\mu
\]

The same holds for non-increasing sequences, provided that  $\int f_0\, d\mu < \infty$.
\end{theorem}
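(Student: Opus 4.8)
The plan is to prove the two directions of the equality $\int f\, d\mu = \lim_{n\to\infty}\int f_n\, d\mu$ separately, after first establishing that the pointwise limit $f = \lim_n f_n = \sup_n f_n$ is itself measurable (I read the statement with $f$ being the pointwise limit of the functions $f_n$). Measurability is immediate: for every $a \in \mathbb{R}$ we have $f^{-1}((a,\infty]) = \bigcup_n f_n^{-1}((a,\infty])$, a countable union of measurable sets, so $f$ is measurable $\Sigma/\mathcal{R}$ and $\int f\, d\mu$ is well-defined in $\extposreals$ via the supremum-over-simple-functions definition recalled just above.

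First I would dispose of the easy inequality. Since $f_n \le f_{n+1} \le f$ pointwise, monotonicity of the Lebesgue integral makes $\int f_n\, d\mu$ a non-decreasing sequence bounded above by $\int f\, d\mu$; hence $\lim_n \int f_n\, d\mu$ exists in $\extposreals$ and satisfies $\lim_n \int f_n\, d\mu \le \int f\, d\mu$.

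The substantive step, and the main obstacle, is the reverse inequality $\lim_n \int f_n\, d\mu \ge \int f\, d\mu$, which cannot be read off by naively moving the limit inside the integral. Here the plan is to unfold $\int f\, d\mu$ as the supremum over simple functions $g$ with $0 \le g \le f$ and to prove $\lim_n \int f_n\, d\mu \ge \int g\, d\mu$ for each such $g$. Fix a simple $g \le f$ and a constant $c \in (0,1)$, and set $E_n = \{x : f_n(x) \ge c\, g(x)\}$. Monotonicity of $(f_n)$ makes $(E_n)$ an increasing sequence of measurable sets, and because $c < 1$ the convergence $f_n \to f \ge g$ forces $\bigcup_n E_n = X$. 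Restricting to $E_n$ gives $\int f_n\, d\mu \ge c \int_{E_n} g\, d\mu$. The key measure-theoretic input is that $A \mapsto \int_A g\, d\mu$ is itself a measure on $\Sigma$, so continuity from below yields $\lim_n \int_{E_n} g\, d\mu = \int_X g\, d\mu$; letting $n \to \infty$ and then $c \uparrow 1$ gives $\lim_n \int f_n\, d\mu \ge \int g\, d\mu$, and taking the supremum over $g$ closes the inequality. The $c<1$ truncation is exactly what guarantees $\bigcup_n E_n = X$ even at points where $f_n(x)$ approaches $g(x)$ from below, and handling this detail correctly is the crux of the whole argument.

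Finally, for the non-increasing case under the hypothesis $\int f_0\, d\mu < \infty$, I would reduce to the monotone case just proved: the functions $h_n = f_0 - f_n$ are non-negative, measurable, and non-decreasing with pointwise limit $f_0 - f$. Applying the established result to $(h_n)$ and invoking linearity of the integral together with the finiteness of $\int f_0\, d\mu$ (which legitimizes the subtractions) gives $\int (f_0 - f)\, d\mu = \lim_n \int (f_0 - f_n)\, d\mu$, and rearranging yields $\int f\, d\mu = \lim_n \int f_n\, d\mu$, as required.
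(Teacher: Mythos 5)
The paper itself does not prove this statement: Beppo Levi's theorem appears in Appendix~A as standard measure-theoretic background, quoted without proof from the literature (cf.~\cite{billingsley95}), so there is no in-paper argument to compare against. Your proposal is the correct classical proof: measurability of the pointwise limit via $f^{-1}((a,\infty]) = \bigcup_n f_n^{-1}((a,\infty])$, the easy inequality from monotonicity of the integral, and the substantive reverse inequality via the $c<1$ truncation with the increasing sets $E_n = \{x : f_n(x) \ge c\,g(x)\}$, using that $A \mapsto \int_A g\, d\mu$ is a measure (clear for simple $g$, since it is a finite sum of terms $\alpha_i\,\mu(A_i \cap A)$) together with continuity from below. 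You also correctly read through the typos in the statement as printed ($f$ is meant to be the pointwise limit of the $f_n$, and the right-hand side of the display is missing an integral sign). One small point in the non-increasing case: $\int f_0\, d\mu < \infty$ only gives $f_0 < \infty$ almost everywhere, so $h_n = f_0 - f_n$ is well-defined a.e.\ rather than everywhere; this affects no integral, but a fully rigorous write-up should note it or restrict to the set $\{f_0 < \infty\}$ before subtracting. With that caveat, your argument is complete and is precisely the standard textbook proof the paper implicitly invokes.
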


Note that the limit and supremum of a non-decreasing sequence coincide. limit 
and infimum of a non-increasing sequence also coincide.

\subsubsection{Metric and topological spaces}

\begin{definition}
A \emph{metric} on a set $\Omega$ is a function $d \colon \Omega \times \Omega -> \extposreals$ such that
$d(x,x) = 0$  and $d(x,y) + d(y,z) \geq d(x,z)$ for all $x, y, z \in \Omega$. The pair $(\Omega, d)$ is called
a \emph{metric space}.
\end{definition}

\begin{definition}
If $(\Omega, d)$ is a metric space, $A \subseteq \Omega$ is \emph{open} if every element $x \in A$ has
a neighbourhood which is completely enclosed in $A$, i.e. there exists $\epsilon > 0$ such that
$\{y \in \Omega\ |\ d(x,y) < \epsilon \} \subseteq A$.
\end{definition}

\begin{definition}
If $(\Omega_1, d_1)$ and $(\Omega_2, d_2)$ are metric spaces, then a \emph{product} of
$(\Omega_1, d_1)$ and $(\Omega_2, d_2)$ is the metric space
$(\Omega_1 \times \Omega_2, d_{12})$, where $d_{12}$ is the \emph{Manhattan product} of metrics
$d_1$ and $d_2$, defined as 
$$d_{12}((x_1, y_1), (x_2, y_2)) = d_1(x_1, y_1) + d_2(x_2, y_2).$$
This definition naturally extends to finite products of higher dimensions.
\end{definition}

A product of topological spaces can also be defined using the standard Euclidean product metric
 $d_{12}((x_1, y_1), (x_2, y_2)) =\sqrt{ d_1(x_1, y_1)^2 + d_2(x_2, y_2)^2}$, both metrics induce 
the same topologies. We use Manhattan products as they are easier to work with.

\begin{definition}
A \emph{topology} on a set $\Omega$ is a set $\mathcal{O}$ of subsets of $\Omega$ such that
\begin{itemize}
\item $\emptyset \in \mathcal{O}$
\item $\Omega \in \mathcal{O}$
\item For all $O_1, \dots, O_n \in \mathcal{O}$, $O_1 \cap O_2 \cap \dots \cap O_n \in \mathcal{O}$
\item If $O_i \in \mathcal{O}$ for all $i \in \mathbb{N}$, then $\bigcup_{n \in \mathbb{N}} O_i \in \mathcal{O}$.
\end{itemize}
The pair $(\Omega, \mathcal{O})$ is called a \emph{topological space} and
the elements of the topology $\mathcal{O}$ are called \emph{open sets}.
\end{definition}

\begin{definition}
If $(\Omega_1, \mathcal{O}_1)$ and $(\Omega_2, \mathcal{O}_2)$ are topological spaces, then a \emph{product} of
$(\Omega_1, d_1)$ and $(\Omega_2, d_2)$ is the metric space
$(\Omega_1 \times \Omega_2, \mathcal{O}_1 \times \mathcal{O}_2)$, where 
 the \emph{product} of topologies
$\mathcal{O}_1 \times \mathcal{O}_2$ is the smallest topology on $\Omega_1 \times \Omega_2$
which makes both left and right projections continuous. This definition naturally extends to final products of 
higher dimensions.
\end{definition}

\begin{definition}
A function $f$ between metric spaces $(\Omega_1,d_1)$ and $(\Omega_2,d_2)$ is \emph{continuous}
if for every $x \in \Omega_1$ and $\epsilon > 0$, there exists $\delta$ such that for all $y \in \Omega_1$,
if $d_1(x,y) < \epsilon$, then $d_2(f(x), f(y)) < \delta$.
\end{definition}

\begin{definition}
A function $f$ between topological spaces $(\Omega_1,\mathcal{O}_1)$ and $(\Omega_2, \mathcal{O}_2)$ is \emph{continuous}
if for every open set $O \in \mathcal{O}_2$, $f^{-1}(O) \in \mathcal{O}_1$.
\end{definition}

\subsubsection{From metric to measurable spaces}

\begin{definition}
A topology on $\Omega$ \emph{induced} by a metric $d$ is the smallest topology
which contains all open sets of the metric space $(\Omega, d)$.
\end{definition}

\begin{definition}
The \emph{Borel $\sigma$-algebra } $\mathcal{B}(\Omega, \mathcal{O})$ is the $\sigma$-algebra generated
by a topology  $\mathcal{O}$ on $\Omega$.
\end{definition}

\begin{definition}
We call the Borel $\sigma$-algebra on $\Omega$ generated by the topology induced by the metric $d$
the \emph{$\sigma$-algebra  induced by $d$}. We denote such a $\sigma$ algebra by $\mathcal{B}(\Omega, d)$.
\end{definition}

The following lemmas are well-established results:

\begin{lemma}
If $\mathcal{O}_1$ and $\mathcal{O}_2$ are, respectively, topologies on $\Omega_1$ and $\Omega_2$ induced
by metrics $r_1$ and $r_2$, and a function $f$ between the metric spaces $(\Omega_1,d_1)$ and $(\Omega_2,d_2)$
is continuous, then $f$ is also continuous as a function between topological spaces 
$(\Omega_1, \mathcal{O}_1)$ and $(\Omega_2, \mathcal{O}_2)$.
\end{lemma}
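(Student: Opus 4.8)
The plan is to unfold the topological definition of continuity stated above and reduce it, one point at a time, to the metric (epsilon--delta) hypothesis. By that definition I must show that for every open set $O \in \mathcal{O}_2$ the preimage $f^{-1}(O)$ lies in $\mathcal{O}_1$. The first preparatory observation is that, since $\mathcal{O}_1$ is the topology \emph{induced} by the metric $d_1$, it coincides with the collection of all metrically open subsets of $\Omega_1$: the metrically open sets already satisfy the three topology axioms, so the ``smallest topology containing them'' is exactly that collection. Consequently it suffices to prove that $f^{-1}(O)$ is metrically open, i.e. that every point of $f^{-1}(O)$ has some $d_1$-ball contained in $f^{-1}(O)$.

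With that reduction in place, the core argument is a routine pointwise chase. First I would fix an open $O \in \mathcal{O}_2$ and an arbitrary $x \in f^{-1}(O)$, so $f(x) \in O$. Because $O$ is open in the topology induced by $d_2$, there is some $\delta > 0$ with $\{\, z \mid d_2(f(x),z) < \delta \,\} \subseteq O$. I then invoke the metric continuity of $f$ at $x$ to obtain $\epsilon > 0$ such that $d_1(x,y) < \epsilon$ implies $d_2(f(x),f(y)) < \delta$. Hence every $y$ with $d_1(x,y) < \epsilon$ satisfies $f(y) \in O$, i.e. $y \in f^{-1}(O)$, so the $\epsilon$-ball around $x$ is contained in $f^{-1}(O)$. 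As $x \in f^{-1}(O)$ was arbitrary, $f^{-1}(O)$ is metrically open and therefore belongs to $\mathcal{O}_1$; this is precisely topological continuity of $f$.

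The step I expect to be the genuine (though mild) obstacle is not mathematical but definitional bookkeeping. On the one hand, one must justify the identification of the induced topology with the family of metrically open sets, since the lemma's hypothesis is phrased in terms of the abstract ``induced topology'' while the argument needs open balls. On the other hand, the paper's stated definition of metric continuity appears to have $\epsilon$ and $\delta$ transposed relative to the usual convention (it quantifies ``for every $\epsilon$ there is $\delta$ with $d_1(x,y)<\epsilon \Rightarrow d_2(f(x),f(y))<\delta$''); I would read it as the standard characterisation ``for every target tolerance there is a domain radius,'' which is what the above chase actually uses. Modulo clarifying these two points (and treating the metrics $r_i$ named in the statement as the same $d_i$), the proof is the textbook equivalence between $\epsilon$--$\delta$ continuity and preimage-of-open-is-open.
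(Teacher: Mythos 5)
Your proof is correct: the identification of the induced topology with the family of metrically open sets, followed by the pointwise $\epsilon$--$\delta$ chase, is the standard argument, and your decision to read the paper's definition of metric continuity with $\epsilon$ and $\delta$ in their conventional roles (the paper's phrasing transposes them, evidently a typo) is the right call. The paper itself offers no proof to compare against --- it lists this lemma among ``well-established results'' and leaves it uncited --- so your write-up simply supplies the textbook argument the paper implicitly relies on.
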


\begin{lemma}
If $f$ is a continuous function between topological spaces $(\Omega_1, \mathcal{O}_1)$ and $(\Omega_2, \mathcal{O}_2)$
and $\Sigma_1$ and $\Sigma_2$ are the Borel $\sigma$-algebras on, respectively, $\Omega_1$ and $\Omega_2$ generated
by topologies $\mathcal{O}_1$ and $\mathcal{O}_2$, then the function $f$ is measurable.
\end{lemma}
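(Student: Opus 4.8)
The plan is to use the standard ``good sets'' argument: rather than verifying the measurability condition $f^{-1}(B) \in \Sigma_1$ directly for every Borel set $B \in \Sigma_2$, I would verify it only on the open sets that generate $\Sigma_2$ and then propagate this property to the entire $\sigma$-algebra by a closure argument. Concretely, I would introduce the collection $\mathcal{A} = \{ B \subseteq \Omega_2 \mid f^{-1}(B) \in \Sigma_1 \}$ of all subsets of $\Omega_2$ whose $f$-preimage is $\Sigma_1$-measurable. The goal then reduces to showing $\Sigma_2 \subseteq \mathcal{A}$.

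The central step is to check that $\mathcal{A}$ is itself a $\sigma$-algebra on $\Omega_2$. This rests entirely on the elementary fact that taking preimages commutes with all the relevant set operations: $f^{-1}(\emptyset) = \emptyset$, $f^{-1}(\Omega_2 \setminus B) = \Omega_1 \setminus f^{-1}(B)$, and $f^{-1}(\bigcup_i B_i) = \bigcup_i f^{-1}(B_i)$. Pairing each of these identities with the corresponding closure property of $\Sigma_1$ (it contains $\emptyset$ and is closed under complements and countable unions) immediately yields that $\emptyset \in \mathcal{A}$ and that $\mathcal{A}$ is closed under complements and countable unions, hence is a $\sigma$-algebra.

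Next I would invoke continuity of $f$: by the topological definition of continuity supplied in the appendix, $f^{-1}(O) \in \mathcal{O}_1$ for every open set $O \in \mathcal{O}_2$. Since $\mathcal{O}_1 \subseteq \mathcal{B}(\Omega_1, \mathcal{O}_1) = \Sigma_1$ by the definition of the generated Borel $\sigma$-algebra, this gives $\mathcal{O}_2 \subseteq \mathcal{A}$. Because $\Sigma_2 = \mathcal{B}(\Omega_2, \mathcal{O}_2)$ is by definition the \emph{smallest} $\sigma$-algebra containing $\mathcal{O}_2$, and $\mathcal{A}$ is a $\sigma$-algebra containing $\mathcal{O}_2$, minimality forces $\Sigma_2 \subseteq \mathcal{A}$. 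This is precisely the assertion that $f^{-1}(B) \in \Sigma_1$ for all $B \in \Sigma_2$, i.e.\ that $f$ is measurable.

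There is no genuine obstacle here; the result is a standard textbook fact and the argument is purely formal. The only point requiring a little care is ensuring that the closure verification for $\mathcal{A}$ invokes the matching commutation identity for each $\sigma$-algebra axiom, so that the three defining properties of $\Sigma_1$ transfer correctly to $\mathcal{A}$. Everything else follows directly from the definitions of continuity and of the generated Borel $\sigma$-algebra.
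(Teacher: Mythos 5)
Your proof is correct. Note that the paper itself offers no proof of this lemma at all---it appears in Appendix~\ref{app:basics-measure-theory} under the remark that ``the following lemmas are well-established results'' and is simply cited as a known fact. Your good-sets argument (showing that $\mathcal{A} = \{B \subseteq \Omega_2 \mid f^{-1}(B) \in \Sigma_1\}$ is a $\sigma$-algebra via the commutation of preimages with complements and countable unions, observing $\mathcal{O}_2 \subseteq \mathcal{A}$ by continuity, and concluding $\Sigma_2 \subseteq \mathcal{A}$ by minimality of the generated $\sigma$-algebra) is exactly the standard textbook proof that justifies the citation, and it uses only the definitions the paper supplies.
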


\begin{corollary} \label{corr:cont-fun-meas}
If $(\Omega_1, d_1)$ and $(\Omega_2, d_2)$ are metric spaces and $f$ is a continuous function
from $\Omega_1$ to $\Omega_2$, then $f$ is measurable $\mathcal{B}(\Omega_1, d_1) / \mathcal{B}(\Omega_2, d_2)$
\end{corollary}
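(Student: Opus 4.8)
The plan is to obtain this corollary as a direct composition of the two lemmas stated immediately above it, with the definition of the induced $\sigma$-algebra serving as the bridge between them. By the definition of $\mathcal{B}(\Omega, d)$ given just before the statement, $\mathcal{B}(\Omega_1, d_1)$ is precisely the Borel $\sigma$-algebra generated by the topology $\mathcal{O}_1$ induced by the metric $d_1$ on $\Omega_1$, and likewise $\mathcal{B}(\Omega_2, d_2)$ is the one generated by the topology $\mathcal{O}_2$ induced by $d_2$ on $\Omega_2$. Thus the two $\sigma$-algebras appearing in the corollary are exactly the Borel $\sigma$-algebras of the induced topological spaces, which is what lets the two lemmas fit together.

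The argument then proceeds in two steps. First I would invoke the first of the two lemmas: since $f$ is continuous as a map between the metric spaces $(\Omega_1, d_1)$ and $(\Omega_2, d_2)$, that lemma yields that $f$ is also continuous as a map between the topological spaces $(\Omega_1, \mathcal{O}_1)$ and $(\Omega_2, \mathcal{O}_2)$, where $\mathcal{O}_1$ and $\mathcal{O}_2$ are the topologies induced by $d_1$ and $d_2$ respectively. Second, I would apply the second lemma to this topologically continuous $f$: taking $\Sigma_1 = \mathcal{B}(\Omega_1, d_1)$ and $\Sigma_2 = \mathcal{B}(\Omega_2, d_2)$ as the Borel $\sigma$-algebras generated by $\mathcal{O}_1$ and $\mathcal{O}_2$ (the identification recorded in the previous paragraph), the lemma gives that $f$ is measurable $\mathcal{B}(\Omega_1, d_1) / \mathcal{B}(\Omega_2, d_2)$, which is exactly the conclusion sought.

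Since both component lemmas are already stated as established results, there is essentially no substantive obstacle to this proof; it is a routine chaining of known facts. The only point requiring any care is to verify that the hypotheses of the second lemma are met, namely that the $\sigma$-algebras named in the corollary genuinely coincide with the Borel $\sigma$-algebras generated by the topologies induced by $d_1$ and $d_2$. This is immediate from the definition of $\mathcal{B}(\Omega, d)$, so the whole proof reduces to two one-line applications.
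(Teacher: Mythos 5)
Your proposal is correct and matches the paper's intent exactly: the paper states this as a corollary of the two immediately preceding lemmas, and its (implicit) justification is precisely your two-step chaining---metric continuity gives topological continuity by the first lemma, and topological continuity gives Borel measurability by the second, with the identification $\mathcal{B}(\Omega_i, d_i) = \mathcal{B}(\Omega_i, \mathcal{O}_i)$ following from the definition of the $\sigma$-algebra induced by a metric. Nothing is missing.
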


\begin{lemma}
If $(\Omega_1, d_1)$ and $(\Omega_2, d_2)$ are separable metric spaces, then for the Manhattan product $d_{12}$ of
metrics $d_1$ and $d_2$
\[
\mathcal{B}(\Omega_1 \times \Omega_2, d_{12}) = \mathcal{B}(\Omega_1, d_1) \times \mathcal{B}(\Omega_2, d_2)
\]
\end{lemma}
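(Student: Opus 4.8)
The plan is to prove the two inclusions separately, observing that only one direction genuinely requires separability. For the inclusion $\mathcal{B}(\Omega_1, d_1) \times \mathcal{B}(\Omega_2, d_2) \subseteq \mathcal{B}(\Omega_1 \times \Omega_2, d_{12})$, I would first note that the coordinate projections $p_i \colon \Omega_1 \times \Omega_2 \to \Omega_i$ (for $i = 1,2$) are $1$-Lipschitz, since $d_i(p_i(u), p_i(v)) \leq d_{12}(u,v)$ by the definition of the Manhattan product, and hence continuous. By Corollary~\ref{corr:cont-fun-meas} each $p_i$ is therefore measurable $\mathcal{B}(\Omega_1 \times \Omega_2, d_{12}) / \mathcal{B}(\Omega_i, d_i)$. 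Since every measurable rectangle decomposes as $A_1 \times A_2 = p_1^{-1}(A_1) \cap p_2^{-1}(A_2)$ with $A_i \in \mathcal{B}(\Omega_i, d_i)$, each such rectangle lies in $\mathcal{B}(\Omega_1 \times \Omega_2, d_{12})$; as the product $\sigma$-algebra is by definition generated by these rectangles, the inclusion follows. This direction uses neither separability nor the specific shape of $d_{12}$, only continuity of the projections.

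For the reverse inclusion I would first dispatch the routine fact that the topology induced by $d_{12}$ coincides with the product topology: a short $\epsilon$-argument shows that every $d_{12}$-ball contains, around each of its points, a product $B_{d_1}(x, \delta) \times B_{d_2}(y, \delta)$, and conversely that every open rectangle $U \times V$ is $d_{12}$-open. Separability then enters as follows. Choosing countable dense subsets $D_i \subseteq \Omega_i$, the balls $\{B_{d_i}(q, r) : q \in D_i,\ r \in \mathbb{Q}_{>0}\}$ form a countable base for the topology of $\Omega_i$, and the countably many products of these basic sets form a countable base for the product topology. Consequently every open set $O \subseteq \Omega_1 \times \Omega_2$ is a \emph{countable} union of basic rectangles $U \times V$, each of which is a measurable rectangle and hence an element of $\mathcal{B}(\Omega_1, d_1) \times \mathcal{B}(\Omega_2, d_2)$. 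Thus $O$ itself lies in the product $\sigma$-algebra, and since $\mathcal{B}(\Omega_1 \times \Omega_2, d_{12})$ is generated by its open sets, the inclusion $\mathcal{B}(\Omega_1 \times \Omega_2, d_{12}) \subseteq \mathcal{B}(\Omega_1, d_1) \times \mathcal{B}(\Omega_2, d_2)$ follows.

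I expect the main obstacle---and the sole place where separability is indispensable---to be the reduction of an arbitrary open set in the product to a \emph{countable} union of measurable rectangles. Without a countable base one can in general only express an open set as an uncountable union of rectangles, which need not belong to the product $\sigma$-algebra, since closure is guaranteed only under countable unions; in fact the asserted equality fails for suitable non-separable factors. Everything else amounts to routine bookkeeping with $\epsilon$-balls and with the generators of the two $\sigma$-algebras.
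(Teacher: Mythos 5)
Your proof is correct. Note that the paper itself states this lemma without proof, listing it among the ``well-established results'' of Appendix~\ref{app:basics-measure-theory}, so there is no in-paper argument to compare against; your two-inclusion argument---$1$-Lipschitz continuity of the projections (via Corollary~\ref{corr:cont-fun-meas}) for the inclusion of the product $\sigma$-algebra in the Borel $\sigma$-algebra, and the countable base arising from separability to write every $d_{12}$-open set as a countable union of measurable rectangles for the converse---is exactly the standard textbook proof, and your observation that separability is needed only for the second inclusion (and that the equality can genuinely fail without it) is accurate.
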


\begin{corollary} \label{corr:cont-fun-prod-meas}
\sloppy If $(\Omega_1, d_1)$,  $(\Omega_2, d_2)$, $(\Omega_3, d_3)$ and $(\Omega_4, d_4)$ are separable 
metric spaces and $f$ is a continuous function from $\Omega_1 \times \Omega_2$ to 
$\Omega_3 \times \Omega_4$ (with respect to corresponding product metrics) then $f$
is measurable $\mathcal{B}(\Omega_1, d_1) \times \mathcal{B}(\Omega_2, d_2) / 
\mathcal{B}(\Omega_3, d_3) \times \mathcal{B}(\Omega_4, d_4)$.
\end{corollary}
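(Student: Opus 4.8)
The plan is to reduce the statement to the two single-factor results already established in this appendix: Corollary~\ref{corr:cont-fun-meas}, which says continuous maps between metric spaces are Borel measurable, and the lemma identifying the Borel $\sigma$-algebra of a Manhattan product with the product of the factors' Borel $\sigma$-algebras. The strategy is to treat the product spaces $(\Omega_1 \times \Omega_2, d_{12})$ and $(\Omega_3 \times \Omega_4, d_{34})$ first as plain metric spaces in their own right, obtain measurability with respect to their \emph{induced} Borel $\sigma$-algebras, and only afterwards rewrite those Borel $\sigma$-algebras as product $\sigma$-algebras.

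First I would observe that ``continuous with respect to the corresponding product metrics'' means exactly that $f$ is a continuous map from the metric space $(\Omega_1 \times \Omega_2, d_{12})$ to the metric space $(\Omega_3 \times \Omega_4, d_{34})$, where $d_{12}$ and $d_{34}$ denote the Manhattan products of the respective factor metrics. Applying Corollary~\ref{corr:cont-fun-meas} to these two metric spaces directly yields that $f$ is measurable $\mathcal{B}(\Omega_1 \times \Omega_2, d_{12}) / \mathcal{B}(\Omega_3 \times \Omega_4, d_{34})$. This step uses only that the two product spaces are metric spaces, so no separability hypothesis is invoked here.

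Next I would apply the lemma equating the induced Borel $\sigma$-algebra of a Manhattan product with the product $\sigma$-algebra. Since $(\Omega_1, d_1)$ and $(\Omega_2, d_2)$ are separable by hypothesis, the lemma gives $\mathcal{B}(\Omega_1 \times \Omega_2, d_{12}) = \mathcal{B}(\Omega_1, d_1) \times \mathcal{B}(\Omega_2, d_2)$; applying it again to the separable spaces $(\Omega_3, d_3)$ and $(\Omega_4, d_4)$ gives $\mathcal{B}(\Omega_3 \times \Omega_4, d_{34}) = \mathcal{B}(\Omega_3, d_3) \times \mathcal{B}(\Omega_4, d_4)$. Substituting these two identities into the measurability statement from the previous step yields precisely that $f$ is measurable $\mathcal{B}(\Omega_1, d_1) \times \mathcal{B}(\Omega_2, d_2) / \mathcal{B}(\Omega_3, d_3) \times \mathcal{B}(\Omega_4, d_4)$, which is the claim.

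The only genuine subtlety — and the place where the separability hypotheses are indispensable — is this identification of the Borel $\sigma$-algebra of the product metric with the product of the factors' Borel $\sigma$-algebras. In general one only has the inclusion of the product $\sigma$-algebra inside the Borel $\sigma$-algebra of the product topology, and equality can fail for non-separable factors; separability guarantees a countable base for each factor's topology, so the product topology is generated by countably many basic open rectangles, which forces the two $\sigma$-algebras to coincide. This is exactly the content of the cited lemma, so once its hypotheses are checked the remainder is a routine composition of the quoted results and I expect no further obstacle.
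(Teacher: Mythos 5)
Your proof is correct and is exactly the derivation the paper intends: the corollary is stated without proof precisely because it follows by applying Corollary~\ref{corr:cont-fun-meas} to the product metric spaces and then rewriting the induced Borel $\sigma$-algebras via the preceding lemma on Manhattan products of separable metric spaces, which is your argument verbatim. Your closing remark correctly identifies where separability is needed (the product-$\sigma$-algebra identification, not the continuity-implies-measurability step).
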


All the above results extend naturally to arbitrary finite products.

\section{Basics of domain theory}
\label{appendix:domain-theory}

This section includes some basic definitions from domain theory which are required to understand the paper.
For readers wanting a more complete, tutorial-style introduction, there are many resources
available, including \citep{HuttonDomThy} and \citep{AbramskyJungDT}.
%A more complete introduction to domain theory can be found in many standard 

Please note that we use the notions of $\omega$-complete partial order and $\omega$-continuity, 
defined in terms of countable sequences of increasing values ($\omega$-chains), rather than the more general
notions of complete partial order (requiring existence of suprema of directed sets) and continuity 
(requiring the given function to preserve suprema of all subsets of the domain). While $\omega$-completeness
and $\omega$-continuity are technically weaker than completeness and continuity, respectively,
they are sufficient for our purposes, as they allow applying the Kleene Fixpoint Theorem.

\begin{definition} [Partially-ordered set]
A \emph{partially-ordered set} is a pair $(D, \sqsubseteq)$ of set $D$ and relation $\sqsubseteq$ such that:
\begin{itemize}
\item For each $a \in D$, $a \sqsubseteq a$ (reflexiveness)
\item For each $a, b, c \in D$, if $a \sqsubseteq b$ and $b \sqsubseteq c$, then $a \sqsubseteq c$ (transitivity)
\item For each $a, b \in D$, if $a \sqsubseteq b$ and $b \sqsubseteq a$, then $a = b$ (antisymmetry)
\end{itemize}
\end{definition}

\begin{definition}[$\omega$-chain and its supremum]
A $\omega$-chain in a partially-ordered set $(D, \sqsubseteq)$ is an infinite
sequence $d_0, d_1, d_2, \dots$ such that for all $i$, $d_i \in D$ and $d_i \sqsubseteq d_{i+1}$.
The \emph{supremum} $\sup_i d_i$ of a chain $d_0, d_1, d_2, \dots$ is the supremum
of the set $\{ d_0, d_1, d_2, \dots \}$ of elements of the chain.
%is the set of all elements
%of an infinite non-decreasing sequence of elements of $D$.
%infinite sequence $d_0$, $d_1$, $d_2$, \dots, such that for each $i$, $d_i \in D$ and
%$d_i \sqsubseteq d_{i+1}$.
\end{definition}

\begin{definition}[$\omega$-complete partial order]
A \emph{$\omega$-complete partial order} ($\omega$-cpo) is a partial order $(D, \sqsubseteq)$
such that for each $\omega$-chain $d_0, d_1, d_2, \dots$ in $(D, \sqsubseteq)$,
the supremum $\sup_i d_i$ exists in $D$.
%the supremum of $C$ exists in $D$.
\end{definition}

\begin{definition}[Monotone function]
A function $f \colon D -> D'$ between $\omega$-cpos $(D, \sqsubseteq)$ and $(D', \sqsubseteq')$
is \emph{monotone} if $f(d) \sqsubseteq' f(d')$ for each $d, d' \in D$ such that $d \sqsubseteq d'$.
\end{definition}

\begin{definition}[$\omega$-continuous function]
A function $f \colon D -> D'$ between $\omega$-cpos $(D, \sqsubseteq)$ and $(D', \sqsubseteq')$
is \emph{$\omega$-continuous} if it is monotone and for each  $\omega$-chain $d_0, d_1, d_2, \dots$ in $(D, \sqsubseteq)$,
$f(\sup_i d_i) = \sup_i f(d_i)$.
\end{definition}

\noindent Note that in the definition above, the requirement that $f$ is monotone ensures 
that $f(d_0)$, $f(d_1)$, $f(d_2)$, \dots is a $\omega$-chain.

\begin{definition}[Least fixpoint]
Let $(D, \sqsubseteq)$ be a $\omega$-cpo and $f \colon D-> D$ a function on $(D, \sqsubseteq)$. 
A \emph{fixpoint} of $f$ is an element $d \in D$ such that $f(d) = d$. A \emph{least fixpoint}
of $f$ is a fixpoint $d_0$ of $f$ such that for all other fixpoints $d$ of $f$, $d_0 \sqsubseteq d$.
\end{definition}

\begin{theorem}[Kleene Fixpoint Theorem]
Let $(D, \sqsubseteq)$ be a $\omega$-cpo and $f \colon D -> D$ a $\omega$-continuous function.
Then $f$ has a least fixpoint, which is the supremum of the
chain $\bot$, $f(\bot)$, $f(f(\bot))$, \dots, that is, 
$\sup_i f^i(\bot)$.
\end{theorem}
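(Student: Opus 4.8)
The plan is to follow the classical two-part argument for Kleene's theorem: first construct a candidate fixpoint as the supremum of the ascending \emph{Kleene chain} $\bot, f(\bot), f^2(\bot), \dots$, and then verify separately that this supremum is a fixpoint and that it is the least one. Throughout I rely on the fact that $\omega$-continuity entails monotonicity (built into the definition in Appendix~\ref{appendix:domain-theory}) and on the presence of a bottom element $\bot$, which the statement presupposes by speaking of the chain starting at $\bot$.

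First I would check that $\bot, f(\bot), f^2(\bot), \dots$ really is an $\omega$-chain, so that its supremum is guaranteed to exist in $D$. This is a routine induction on $i$: the base case $\bot \sqsubseteq f(\bot)$ holds because $\bot$ is the least element, and the inductive step deduces $f^{i+1}(\bot) \sqsubseteq f^{i+2}(\bot)$ from $f^i(\bot) \sqsubseteq f^{i+1}(\bot)$ by applying the monotone map $f$ to both sides. Since $(D,\sqsubseteq)$ is an $\omega$-cpo, the supremum $d = \sup_i f^i(\bot)$ exists, so $d$ is well-defined.

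Next I would show that $d$ is a fixpoint. Using $\omega$-continuity, $f(d) = f(\sup_i f^i(\bot)) = \sup_i f(f^i(\bot)) = \sup_i f^{i+1}(\bot)$. The remaining step is to observe that shifting the index leaves the supremum unchanged: $\sup_{i \geq 0} f^{i+1}(\bot) = \sup_{i \geq 1} f^i(\bot) = \sup_{i \geq 0} f^i(\bot) = d$, where the middle equality holds because prepending the least element $\bot = f^0(\bot)$ to the set of chain elements does not change its least upper bound. Hence $f(d) = d$.

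Finally I would establish minimality. Let $e$ be any fixpoint, i.e.\ $f(e)=e$. A short induction shows $f^i(\bot) \sqsubseteq e$ for every $i$: the base case is $\bot \sqsubseteq e$, and if $f^i(\bot) \sqsubseteq e$ then monotonicity gives $f^{i+1}(\bot) = f(f^i(\bot)) \sqsubseteq f(e) = e$. Thus $e$ is an upper bound of the Kleene chain, and since $d$ is its \emph{least} upper bound we conclude $d \sqsubseteq e$. Hence $d$ is the least fixpoint, as claimed. The proof is entirely standard, so there is no deep obstacle; the one point demanding care is the index-shift in the fixpoint verification, where one must remember that the bottom element can be dropped from the chain without affecting the supremum. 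This is precisely where the existence of $\bot$ is used, and no machinery beyond monotonicity and $\omega$-continuity is needed.
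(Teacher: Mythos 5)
Your proof is correct: the paper states the Kleene Fixpoint Theorem in Appendix B as standard background material without giving any proof, and your argument is the canonical one (Kleene chain exists by induction and $\omega$-completeness, fixpoint property via $\omega$-continuity plus the index shift, minimality via induction against an arbitrary fixpoint). One small observation: you rightly flag that the paper's definition of $\omega$-cpo does not itself demand a bottom element, so the theorem implicitly presupposes one --- and note that $\bot$'s leastness is in fact used twice (in the base cases of both inductions), not only in the index-shift step as your closing remark suggests; the index shift would go through for any chain by cofinality of the tail.
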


\section{Proofs for the $\mathtt{wp}$ and $\mathtt{wlp}$ semantics}
\label{app:proofs-wp-wlp}

In order to prove that $\mathtt{wp}|[C|](f)$ is measurable for all $f$, we first need to prove
that the state update $\lambda (x, \sigma, E) . \sigma[x \mapsto \sigma(E)]$ is measurable.
%We prove that state update is measurable---this key property is needed in the measurability proofs.
Since states are a new structure, not discussed in the proofs of measurability in \citep{SzymczakPhD},
we present the proof in more detail than other measurability proofs in this paper.

We define a metric $d_{\mathcal{N}}$ on variables as $d_{\mathcal{N}}(x,x)=0$ and $d_{\mathcal{N}}(x,y) = \infty$
for $x \neq y$. The metric space $(\mathcal{N}, d_{\mathcal{N}})$ induces the usual discrete $\sigma$-algebra on $\mathcal{N}$.

%\begin{lemma} \label{lemma:subst-measurable}
%For all $x$, the update function $h_x : \statespace \times (\mathbb{R} \uplus \mathbb{Z}) -> \statespace$ defined
%by $h_x(\sigma, v) = \sigma[x \mapsto v]$, is measurable.
%\end{lemma}
%\begin{proof}
%We prove that this function is continuous, which implies measurability. Take $\sigma_1$, $\sigma_2$ and
%$V_1, V_2 \in \mathbb{R} \uplus \mathbb{Z}$. If $\mathtt{dom}(\sigma_1) \neq \mathtt{dom}(\sigma_2)$
%then $d_\sigma(\sigma_1, \sigma_2) = \infty$, so trivially $d_\sigma(h_x(\sigma_1,V_1), h_x(\sigma_2,V_2))
%\leq d_\sigma(\sigma_1, \sigma_2) + d_T(V_1, V_2)$. The inequality also immediately holds if
%$V_1 \in \mathbb{R}$ and $V_2 \in \mathbb{Z}$ (or vice versa), because then $d_T(V_1, V_2) = \infty$.
%
%Now, suppose that  $\mathtt{dom}(\sigma_1) = \mathtt{dom}(\sigma_2)
%= \{x_1, \dots, x_n \}$ and either
%$V_1, V_2 \in \mathbb{R}$ or $V_1, V_2 \in \mathbb{Z}$.
%Now, if $x = x_k$  for some $k$, then 
%\begin{eqnarray*}
%d_\sigma(h_x(\sigma_1,V_1), h_x(\sigma_2,V_2))
%&=& \Sigma_{i \in 1..n, i \neq k } d_T(\sigma_1(x_i), \sigma_2(x_i)) + d_T(V_1, V_2) \\
%&\leq& \Sigma_{i \in 1..n} d_T(\sigma_1(x_i), \sigma_2(x_i)) + d_T(V_1, V_2) \\
%&=& d_\sigma(\sigma_1, \sigma_2) + d_T(V_1, V_2)
%\end{eqnarray*}
%If $x \neq x_k$  for any $k$, we simply have:
%\begin{eqnarray*}
%d_\sigma(h_x(\sigma_1,V_1), h_x(\sigma_2,V_2))
%&=& \Sigma_{i \in 1..n} d_T(\sigma_1(x_i), \sigma_2(x_i)) + d_T(V_1, V_2) \\
%&=& d_\sigma(\sigma_1, \sigma_2) + d_T(V_1, V_2)
%\end{eqnarray*}
%Thus, $h_x$ is continuous, and so measurable.
%\qed \end{proof}
%
%UPDATED VERSION:

\begin{lemma} \label{lemma:subst-measurable}
The update function $h \colon \mathcal{N} \times \statespace \times (\mathbb{R} \uplus \mathbb{Z}) -> \statespace$ defined
by $h(x, \sigma, v) = \sigma[x \mapsto v]$, is measurable.
\end{lemma}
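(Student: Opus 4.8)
The plan is to exploit the fact that both the domain and the codomain of $h$ break up into countably many clopen pieces on which $h$ is continuous, and then to glue these pieces using the observation that measurability is a local property over a countable measurable partition. The point is that $h$ is \emph{not} globally continuous, so Corollary~\ref{corr:cont-fun-meas} cannot be applied directly; the clopen decomposition is exactly what localises the bad behaviour.

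First I would record the structure of $(\statespace, d_\sigma)$. For each finite $D \subseteq \mathcal{N}$ set $\statespace_D = \{\sigma \in \statespace : \mathtt{dom}(\sigma) = D\}$. Since $d_\sigma(\sigma_1,\sigma_2)=\infty$ whenever $\mathtt{dom}(\sigma_1)\neq\mathtt{dom}(\sigma_2)$, each $\statespace_D$ is simultaneously open and closed, the family $\{\statespace_D\}_D$ is a countable partition of $\statespace$ (there are only countably many finite subsets of the countable set $\mathcal{N}$), and on $\statespace_D$ the metric $d_\sigma$ is precisely the Manhattan metric on $\mathbb{R}^{D}\cong\mathbb{R}^{|D|}$. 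Consequently $A\subseteq\statespace$ lies in $\statesa$ iff $A\cap\statespace_D$ is Borel in $\statespace_D$ for every $D$; equivalently, $\statesa$ is generated by $\bigcup_D \mathcal{B}(\statespace_D)$, and each $\statespace_{D}$ is itself a measurable subset of $\statespace$.

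Next I would decompose the source in the same way: $\mathcal{N}\times\statespace\times(\mathbb{R}\uplus\mathbb{Z}) = \biguplus_{x\in\mathcal{N}}\biguplus_{D}\bigl(\{x\}\times\statespace_D\times(\mathbb{R}\uplus\mathbb{Z})\bigr)$, a countable disjoint union of clopen (hence measurable) pieces, using that $\mathcal{N}$ carries the discrete $\sigma$-algebra. On the piece indexed by $(x,D)$ the map $h$ lands entirely inside the single component $\statespace_{D'}$ with $D'=D\cup\{x\}$, where it acts coordinate-wise: the $x$-coordinate of the output equals $v$, while every coordinate $y\in D'\setminus\{x\}$ equals the $y$-coordinate of $\sigma$. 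Each output coordinate is thus a projection or the value $v$ itself, so the restricted map $\statespace_D\times(\mathbb{R}\uplus\mathbb{Z})\to\statespace_{D'}\cong\mathbb{R}^{D'}$ is continuous; by Corollary~\ref{corr:cont-fun-meas}, together with Corollary~\ref{corr:cont-fun-prod-meas} and separability to accommodate the product source, it is measurable.

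Finally I would glue the pieces: for any $B\in\statesa$ one has $h^{-1}(B)=\biguplus_{x,D}\bigl(h|_{\{x\}\times\statespace_D\times(\mathbb{R}\uplus\mathbb{Z})}\bigr)^{-1}\!\bigl(B\cap\statespace_{D'}\bigr)$, a countable union of measurable sets, whence $h$ is measurable. The main obstacle is conceptual rather than computational: as the argument crosses between states of differing domain, or as $x$ moves into or out of $\mathtt{dom}(\sigma)$, the output jumps to a different clopen component, so $h$ fails to be continuous and the naive appeal to Corollary~\ref{corr:cont-fun-meas} is invalid. The clopen decomposition above is what reduces the claim to piecewise continuity, and verifying that the coproduct value space $\mathbb{R}\uplus\mathbb{Z}$ (two standard Borel spaces) is handled correctly by the very same splitting is a routine addendum.
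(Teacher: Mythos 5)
Your piecewise argument is sound and does establish measurability: the decomposition of $\mathcal{N}\times\statespace\times(\mathbb{R}\uplus\mathbb{Z})$ into countably many clopen pieces indexed by $(x,D)$, the continuity (hence measurability) of $h$ restricted to each piece, and the gluing of preimages over a countable measurable partition are all valid steps. It is, however, a more roundabout route than the paper's, and the stated reason for taking it is incorrect.

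The paper proves that $h$ is \emph{globally} continuous---in fact $1$-Lipschitz---for the Manhattan product metric, and then invokes Corollary~\ref{corr:cont-fun-meas} (together with Corollary~\ref{corr:cont-fun-prod-meas} for the product of separable spaces) directly. Your claim that ``$h$ fails to be continuous'' because the output jumps between clopen components is a misconception: continuity is a local property, and any two inputs lying in different pieces of your partition (different variable $x$, different $\mathtt{dom}(\sigma)$, or real versus integer value) are at distance $\infty$ from each other, so they impose no continuity constraint at all. Concretely, the Lipschitz inequality
\[
d_\sigma\bigl(h(x_1,\sigma_1,V_1),\,h(x_2,\sigma_2,V_2)\bigr)
\ \leq\ d_{\mathcal{N}}(x_1,x_2) + d_\sigma(\sigma_1,\sigma_2) + d_T(V_1,V_2)
\]
holds trivially whenever the right-hand side is infinite---which is exactly the cross-component case you worry about---and within a single component it follows from the coordinate-wise computation you yourself carry out; this is precisely the case split in the paper's proof. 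Since every piece of your partition is \emph{open}, continuity on each piece already implies global continuity, so your extra gluing step buys nothing here (it would be needed only if the pieces were merely measurable rather than open). In short: your proof is correct, but the paper's direct argument is simpler and is exactly the ``naive appeal'' you declared invalid.
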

\begin{proof}
%TODO: metric on variables?
We prove that this function is continuous, which implies measurability. Take $x_1, x_2 \in \mathcal{N}$, 
$\sigma_1, \sigma_2 \in \statespace$ and
$V_1, V_2 \in \mathbb{R} \uplus \mathbb{Z}$. If $\mathtt{dom}(\sigma_1) \neq \mathtt{dom}(\sigma_2)$
then $d_\sigma(\sigma_1, \sigma_2) = \infty$, so trivially $d_\sigma(h(x_1, \sigma_1,V_1), h(x_2, \sigma_2,V_2))
\leq d_{\mathcal{N}}(x_1, x_2) + d_\sigma(\sigma_1, \sigma_2) + d_T(V_1, V_2) = \infty$. 
The same holds when $x_1 \neq x_2$ (which implies $d_{\mathcal{N}}(x_1, x_2) = \infty$).
The inequality also immediately holds if
$V_1 \in \mathbb{R}$ and $V_2 \in \mathbb{Z}$ (or vice versa), because then $d_T(V_1, V_2) = \infty$.

Now, suppose that  $x_1 = x_2 = x$, $\mathtt{dom}(\sigma_1) = \mathtt{dom}(\sigma_2)
= \{y_1, \dots, y_n \}$ and either
$V_1, V_2 \in \mathbb{R}$ or $V_1, V_2 \in \mathbb{Z}$.
Now, if $x = y_k$  for some $k$, then 
\begin{eqnarray*}
d_\sigma(h(x, \sigma_1,V_1), h(x, \sigma_2,V_2))
&=& \sum_{i \in 1..n, i \neq k } d_T(\sigma_1(y_i), \sigma_2(y_i)) + d_T(V_1, V_2) \\
&\leq& \sum_{i \in 1..n} d_T(\sigma_1(y_i), \sigma_2(y_i)) + d_T(V_1, V_2) \\
&=& d_\sigma(\sigma_1, \sigma_2) + d_T(V_1, V_2) + d_{\mathcal{N}}(x, x)
\end{eqnarray*}
If $x \neq x_k$  for any $k$, we simply have:
\begin{eqnarray*}
d_\sigma(h(x, \sigma_1,V_1), h(x, \sigma_2,V_2))
&=& \sum_{i \in 1..n} d_T(\sigma_1(y_i), \sigma_2(y_i)) + d_T(V_1, V_2) \\
&=& d_\sigma(\sigma_1, \sigma_2) + d_T(V_1, V_2) + d_{\mathcal{N}}(x, x)
\end{eqnarray*}
Thus, $h_x$ is continuous, and so measurable.
\qed \end{proof}
%\begin{corollary}
%For all $x$ and $V \in \mathbb{R}$, the update function $h_{x,V} : \statespace -> \statespace$ defined
%by $h_{x,V}(\sigma) = \sigma[x \mapsto V]$, is measurable.
%\end{corollary}

\begin{restate}{Lemma~\ref{lemma:wp-continuous-measurable}}
For every program $C$, the function $\mathtt{wp}|[C|](\cdot)$ is $\omega$-continuous. Moreover,
for every measurable $f \colon \statespace -> \extposreals$, $\mathtt{wp}|[C|](f)(\cdot)$ is measurable.
\end{restate}
\begin{proof}[of Lemma~\ref{lemma:wp-continuous-measurable}]
By induction on the structure of $C$. The continuity part of the proof is largely similar to the proof of 
the analogous property in \cite{DBLP:journals/pe/GretzKM14},
with additional care needed because of the use of Lebesgue integration.
We need to show that for any $C$ and any $\omega$-chain $f_1 \leq f_2 \leq f_3 \dots$,
$\mathtt{wp}|[C|](\sup_i f_i) = \sup_i\ \mathtt{wp}|[C|](f_i)$ and that
$\mathtt{wp}|[C|](f)$ is measurable for any measurable $f$.

\begin{itemize}
\item Case $C =  x :\approx U $:

\begin{itemize}
\item \textbf{Continuity:}
\begin{eqnarray*}
\mathtt{wp}|[C|](\sup_i f_i) &=& \lambda \sigma . \int_{[0,1]} (\sup_i f_i)(\sigma[x \mapsto v]) \,\mu_L(dv)\\
{\tiny \text{(by Beppo Levi's theorem)}}&=& \lambda \sigma .\ \sup_i \int_{[0,1]} f_i(\sigma[x \mapsto v]) \,\mu_L(dv)\\
{\tiny \text{(sup taken wrt pointwise ordering)}}&=& \sup_i\ \lambda \sigma . \int_{[0,1]} f_i(\sigma[x \mapsto v]) \,\mu_L(dv)\\
&=& \sup_i\ \mathtt{wp}|[C|](f_i)
\end{eqnarray*}

\item \textbf{Measurability:}

We have
\[
\mathtt{wp}|[ C |](f) = \lambda \sigma . \int_{[0,1]}  g(x, \sigma, v)  \, \mu_L(dv)
\]
\noindent where $g(x, \sigma, v) = f(\sigma[x \mapsto v ])$. Now, take 
$h(x, \sigma,v) = \sigma[x \mapsto v]$. Then $g = f \circ h$. We know that substitutions are measurable
(Lemnma~\ref{lemma:subst-measurable}), so $h$ is measurable. This means that $g$ is measurable, as it is a composition of measurable
functions. Thus, by the Fubini-Tonelli theorem, $\lambda \sigma . \int_{[0,1]}  g(x, \sigma, v)  \, \mu_L(dv)$
is measurable, so $\mathtt{wp}|[ C |](f)$ is measurable.

\end{itemize}

\item Case $C = \mathtt{score}(E)$: 
\begin{itemize}
\item \textbf{Continuity:}
\begin{eqnarray*}
\mathtt{wp}|[C|](\sup_i f_i) &=& \lambda \sigma .\ [\sigma(E) \in (0,1]] \sigma(E) \cdot (\sup_i f_i)(\sigma)\\
{\tiny \text{(multiplying by a constant preserves sup)}}&=& \lambda \sigma .  \sup_i  (  [\sigma(E) \in (0,1]] \sigma(E) \cdot f_i(\sigma))\\
{\tiny \text{(sup taken wrt pointwise ordering)}}&=&  \sup_i \ \lambda \sigma .\  [\sigma(E) \in (0,1]]\sigma(E) \cdot f_i(\sigma)\\
&=& \sup_i\ \mathtt{wp}|[C|](f_i)
\end{eqnarray*}

\item \textbf{Measurability:}

We have $\mathtt{wp}|[ C |](f) = \lambda \sigma .\ [\sigma(E) \in (0,1]]\sigma(E) \cdot f(\sigma)$.
The substitution $\sigma(E)$ is measurable by assumption (as a function of $\sigma$).
Meanwhile, $[\sigma(E) \in (0,1]]$ is a composition of the measurable function $\sigma(E)$
and the indicator function of the measurable set $(0,1]$, which is obviously measurable.
% and
Finally, $f$ is measurable by assumption, so the pointwise product of these three functions is measurable.
\end{itemize}

\item Case $C = \mathtt{observe}(\phi)$:
\begin{itemize}
\item \textbf{Continuity:}
\begin{eqnarray*}
\mathtt{wp}|[C|](\sup_i f_i) &=& \lambda \sigma . [\sigma(\phi)] ( \sup_i\ f_i)(\sigma) \\
{\tiny \text{(multiplying by a constant preserves sup)}}&=& \lambda \sigma .  \sup_i  ([\sigma(\phi)]  f_i(\sigma))\\
{\tiny \text{(sup taken wrt pointwise ordering)}}&=&  \sup_i \ \lambda \sigma .\ [\sigma(\phi)] f_i(\sigma)\\
&=& \sup_i\ \mathtt{wp}|[C|](f_i)
\end{eqnarray*}

\item \textbf{Measurability:}

We have $\mathtt{wp}|[ C |](f) = \lambda \sigma . [\sigma(\phi)] f(\sigma) $.
The function $\sigma . [\sigma(\phi)]$ is measurable by assumption (we only allow measurable
predicates in the language), and $f$ is measurable by assumption of the lemma, hence their pointwise
product is measurable.
\end{itemize}
%
%\end{itemize}
%
% The remaining cases are the same as in the proof of Lemma 34 in \cite{DBLP:journals/pe/GretzKM14}
%(modulo the fact that we define continuity in terms of chains rather than directed sets). However, they are shown below
%(in more detail) for completeness.
%
%\begin{itemize}
\item Case $C = (x := E )$:
\begin{itemize}
\item \textbf{Continuity:}
\begin{eqnarray*}
\mathtt{wp}|[C|](\sup_i f_i) &=& \lambda \sigma . (\sup_i f_i)(\sigma[x \mapsto \sigma(E)]) \\
{\tiny \text{(sup taken wrt pointwise ordering)}}&=& \lambda \sigma . \sup_i f_i (\sigma[x \mapsto \sigma(E)]) \\
{\tiny \text{(sup taken wrt pointwise ordering)}}&=&  \sup_i \ \lambda \sigma .f_i (\sigma[x \mapsto \sigma(E)])\\
&=& \sup_i\ \mathtt{wp}|[C|](f_i)
\end{eqnarray*}

\item \textbf{Measurability:}

We have $\mathtt{wp}|[ C |](f) =\lambda \sigma . f(\sigma[x \mapsto \sigma(E)])$.
This can be represented as a composition of functions $\lambda \sigma . f \circ  F_2 \circ F_1 (\sigma)$,
where %$F_1(\sigma) = (\sigma, \sigma)$, 
$F_1(\sigma) = (\sigma, \sigma(E))$ and $F_2(\sigma, V) = \sigma[x \mapsto V]$.
The function $F_1$ is measurable, because the identity function $\lambda \sigma . \sigma$ is trivially measurable,
and $\lambda \sigma . \sigma(E)$ is measurable by assumption, so both components of $F_1$ are measurable.
The function $F_2$ is measurable by Lemma~\ref{lemma:subst-measurable}. Hence, $\mathtt{wp}|[ C |](f)$ is measurable as a composition
of measurable functions.

%The resulting function is a composition of measurable functions, which can be shown in the usual way.
%The only nontrivial part  is the measurability of $\lambda \sigma . \sigma(E)$ for any $E$. TODO: prove it, or at least sketch a proof.

\end{itemize}

\item Case $C = \mathtt{while}(\phi)\{C'\}$:
\begin{itemize}
\item \textbf{Continuity:}
We have: 
\begin{eqnarray*}
\mathtt{wp}|[C|](\sup_i f_i) &=& \mathtt{wp}|[\mathtt{while}(\phi)\{C'\}|](\sup_i f_i)\\
&=& \mathtt{lfp}\ X . [\neg \phi](\sup_i f_i) + [\phi] \mathtt{wp}|[C'|](X) \\
\end{eqnarray*}

Take $\Phi_{f}(X) =  [\neg \phi]f + [\phi] \mathtt{wp}|[C'|](X)$. By induction hypothesis,
$\mathtt{wp}|[C'|](\cdot)$ is continuous, so $\Phi_{f}(\cdot)$ is continuous for all $f \colon \statespace -> \extposreals$.
Moreover, it can be easily checked that for any $X$, $f \mapsto \Phi_{f}(X)$ is continuous as a function of $f$
(which means that $f \mapsto \Phi_{f}$ is continuous).
Thus,
\[
\mathtt{wp}|[C|](\sup_i f_i) = \sup_n \Phi_{\sup_i f_i}^n (0)
= \sup_n (\sup_i\ \Phi_{f_i})^n (0)
\]

By Theorem 2.1.19.2 from \cite{AbramskyJungDT}, the function $\Phi \mapsto \sup_n\ \Phi^n(0)$ is continuous.
If $f_1, f_2, \dots$ is an increasing chain, then $\Phi_{f_1}, \Phi_{f_2}, \dots$ is also an increasing chain
(because $\Phi_f$ is monotone in $f$). Thus,
 $\sup_n (\sup_i\ \Phi_{f_i})^n(0) = \sup_i (\sup_n\ \Phi_{f_i}^n(0))
=\sup_i\ \mathtt{wp}|[C|](f_i)$, as required.

\item \textbf{Measurability:}

%$\mathtt{wp}|[ C |](f) =\mathtt{lfp}\ X . [\neg \phi](f) + [\phi] \mathtt{wp}|[C'|](X)$
The function $\Phi_f(X) =  [\neg \phi](f) + [\phi] \mathtt{wp}|[C'|](X)$ is continuous for all
measurable $f$ by the induction hypothesis, so by the fixpoint theorem $\mathtt{lfp}\ X .\Phi_f(X)$
exists in the domain of measurable functions.
\end{itemize}

\item Case $C = C_1;C_2$:
\begin{itemize}
\item \textbf{Continuity:}

We have :
\[
\mathtt{wp}|[C|](\sup_i f_i) = \mathtt{wp}|[ C_1|](\mathtt{wp}|[C_2 |](\sup_i f_i))
\]
%$\mathtt{wp}|[ C |](f) = \mathtt{wp}|[ C_1|](\mathtt{wp}|[C_2 |](f))$:

By induction hypothesis, $\mathtt{wp}|[C_2 |](\sup_i f_i) = \sup_i \mathtt{wp}|[C_2 |](f_i) $.
The induction hypothesis also states that $\mathtt{wp}|[C_2 |](f_i) $ is measurable for all measurable $f_i$, 
which also means that $\sup_i \mathtt{wp}|[C_2 |](f_i) $ is measurable. Hence,
$\mathtt{wp}|[ C_1|](\sup_i \mathtt{wp}|[C_2 |](f_i))$ is well-defined. By applying the induction 
hypothesis again, we get $\mathtt{wp}|[ C_1|](\sup_i \mathtt{wp}|[C_2 |](f_i))
= \sup_i  \mathtt{wp}|[ C_1|](\mathtt{wp}|[C_2 |](f_i))$, as required.

\item \textbf{Measurability:}

By induction hypothesis, $\mathtt{wp}|[C_2 |](f)$ is measurable,
and so $\mathtt{wp}|[ C_1|](\mathtt{wp}|[C_2 |](f))$ is also measurable
by induction hypothesis.

\end{itemize}

\item The other cases are straightforward.

\end{itemize}
 \qed \end{proof}

\section{Proofs for the operational semantics}
\label{app:proofs-op-sem}

%TODO: restructure

\subsection{Properties of the operational semantics}

This section consists of proofs of properties of the operational semantics which are needed 
to prove Proposition~\ref{lemma:o-sc-seq}. 

\subsubsection{Basic properites} We begin by stating two basic properties: that reduction 
is deterministic and that the weight always stays positive.
%and  that the reduction relation (and its closure) is preserved by  changing the initial step 
%count and multiplying  the weights by a positive number.

\begin{lemma}[Evaluation is deterministic] \label{lemma:eval-det}
For any configuration $\kappa$, if $\kappa \vdash \kappa'$ and $\kappa \vdash \kappa''$, then $\kappa' = \kappa''$.
%%
%%
%%$\config{\theta}{C}{K}{\sigma}{\theta_K}{n}{w}$,
%%if 
%%$\config{\theta}{C}{K}{\sigma}{\theta_K}{n}{w}
%%\vdash^{*}
%%\config{\theta'}{\modownarrow}{K'}{\sigma'}{\theta'_K}{n'}{w'}$
%%and
%%$\config{\theta}{C}{K}{\sigma}{\theta_K}{n}{w}
%%\vdash^{*}
%%\config{\theta''}{\modownarrow}{K''}{\sigma''}{\theta''_K}{n''}{w''}$,
%%then $\config{\theta'}{\modownarrow}{K'}{\sigma'}{\theta'_K}{n'}{w'} = 
%%\config{\theta''}{\modownarrow}{K''}{\sigma''}{\theta''_K}{n''}{w''}$.
\end{lemma}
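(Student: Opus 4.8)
The plan is to prove determinism by a straightforward case analysis on the configuration $\kappa = \config{\theta}{C}{K}{\sigma}{\theta_K}{n}{w}$, exploiting the fact that each reduction rule is triggered by a unique syntactic shape of the current statement $C$, together in some cases with an auxiliary condition that is itself decided deterministically. First I would dispose of the case $\sigma = \failure$: every rule carries the premise $\sigma \neq \failure$, so no rule fires and the claim holds vacuously. For $\sigma \neq \failure$ I would then inspect $C$ and argue, for each possible shape, that at most one rule applies and that the fired rule determines $\kappa'$ uniquely.

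The shapes $\mathtt{diverge}$, $x := E$ and $x :\approx U$ are each matched by exactly one rule, and the resulting configuration is a deterministic function of the components of $\kappa$: expression evaluation $\sigma(E)$ is a total function, the entropy operations $\pi_L, \pi_R, \pi_U, \mathrel{::}$ are functions, and the updates to the state, the weight and the step counter are functional. For $C = \mathtt{observe}(\phi)$, $C = \mathtt{if}(\phi)\{C'\}$ and $C = \mathtt{while}(\phi)\{C'\}$ there are two candidate rules each, guarded by $\sigma(\phi) = \mathtt{true}$ and $\sigma(\phi) = \mathtt{false}$; since predicate evaluation is total and returns a single truth value, exactly one of the two guards holds, so precisely one rule fires. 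The case $C = \mathtt{score}(E)$ is analogous, with the value $v = \sigma(E)$ uniquely determined and the rule simply not applying when $\sigma(E) \notin (0,1]$. When $C = \modownarrow$, the choice between (pop) and (final) is governed by whether the continuation is of the form $C' \mathrel{::} K'$ or equal to $[]$; these alternatives are mutually exclusive and exhaustive, each yielding a unique successor. Finally, $C = \mathtt{skip}$ admits no rule.

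The one case deserving genuine care, and the main obstacle, is $C = C_1; C_2$, handled by (seq). Here I would argue that the side condition $C_1 \neq C_1'; C_1''$ forces a \emph{unique} decomposition of the sequential composition into a non-sequence head statement $C_1$ and a remaining tail $C_2$. Treating sequencing as right-associative (equivalently, working modulo associativity with a canonical leftmost-atomic representative), every sequence can be written uniquely as $C_1; C_2$ with $C_1$ atomic, and this is exactly the decomposition selected by the rule; consequently the entropy split $\theta \mapsto (\pi_L(\theta), \pi_R(\theta))$ and the continuation update $K \mapsto C_2 \mathrel{::} K$ are determined, so $\kappa'$ is unique. This uniqueness of splitting is precisely the reason the (seq) rule is stated with the restriction that $C_1$ not itself be a sequence. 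Once it is established, the lemma follows: in every case at most one rule applies and the fired rule computes $\kappa'$ as a deterministic function of $\kappa$, whence $\kappa' = \kappa''$.
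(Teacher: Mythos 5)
Your proof is correct, and it is essentially the argument the paper leaves implicit: the paper states this lemma without proof, as a basic property established by inspection of the reduction rules, which is precisely your case analysis (vacuous for $\sigma = \failure$, a unique applicable rule per syntactic shape of $C$, with guard conditions like $\sigma(\phi)$ being total and single-valued). In particular, your treatment of the one delicate case, (seq), where the side condition $C_1 \neq C_1'; C_1''$ forces a unique split into an atomic head and a remaining tail, matches the paper's own prose justification for that side condition (``to ensure that there is a unique way to split the sequence of statements into $C_1$ and $C_2$'').
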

%%\begin{proof}
%%%%By induction on the derivation of $\config{\theta}{C}{K}{\sigma}{\theta_K}{n}{w}
%%%%\vdash^{*}
%%%%\config{\theta'}{\modownarrow}{K'}{\sigma'}{\theta'_K}{n'}{w'}$.
%%By inspection of the derivation rules.
%%\qed \end{proof}

\begin{lemma} \label{lemma:w-greater-0}
If $\kappa \vdash \kappa'$ and $\mbox{\sf weight}(\kappa) > 0$, then $\mbox{\sf weight}(\kappa') > 0$.
\end{lemma}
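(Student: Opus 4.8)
Lemma~\ref{lemma:w-greater-0}: if $\kappa \vdash \kappa'$ and $\mathsf{weight}(\kappa) > 0$, then $\mathsf{weight}(\kappa') > 0$.

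The plan is to proceed by a case analysis on the single reduction step $\kappa \vdash \kappa'$, examining which of the reduction rules is applied. The observation that drives the whole argument is that the weight component $w$ of a configuration is left untouched by every reduction rule except (score). Indeed, inspecting the rules one by one---(diverge), (assign), (draw), (condition-true), (condition-false), (seq), (pop), (if-true), (if-false), (while-true), (while-false), and (final)---one sees that each of them carries the weight $w$ over unchanged from $\kappa$ to $\kappa'$. Hence in all of these cases $\mathsf{weight}(\kappa') = \mathsf{weight}(\kappa) > 0$ by hypothesis, and the claim is immediate.

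The only remaining case is the (score) rule, which I would treat separately. By its side condition, this rule applies only when $v = \sigma(E) \in (0,1]$, and it sets $\mathsf{weight}(\kappa') = \mathsf{weight}(\kappa) \cdot v = w \cdot v$. The crucial point is that the interval $(0,1]$ is \emph{half-open and excludes $0$}, so $v > 0$. Since $\mathsf{weight}(\kappa) > 0$ by assumption and $v > 0$, their product $w \cdot v$ is a product of two strictly positive reals and is therefore strictly positive. This closes the (score) case and completes the case analysis.

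There is no genuine obstacle here; the lemma is a routine structural fact about the reduction relation. The only subtlety worth flagging explicitly is the role of the domain restriction $v \in (0,1]$ in the (score) rule: had scores been permitted to take the value $0$, the weight could collapse to $0$ in a single step and the lemma would fail. The proof thus really amounts to recording that the weight is monotone along reductions (a companion to Lemma~\ref{lemma:w-decreasing}, which bounds it above) and that the sole multiplicative update never introduces the factor $0$. This positivity-preservation fact is exactly what is needed downstream to guarantee that the weight stays in $(0,1]$ rather than merely in $[0,1]$ along any finite prefix of a run.
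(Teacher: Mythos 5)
Your proof is correct and matches what the paper intends: the paper states this lemma without an explicit proof, treating it as a basic property established by inspection of the reduction rules, which is exactly your case analysis (all rules preserve $w$ verbatim except (score), whose side condition $\sigma(E) \in (0,1]$ guarantees the multiplicative factor is strictly positive). Nothing is missing; your concluding remark about why the exclusion of $0$ from the score's range is essential is a correct and worthwhile observation.
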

%%\begin{proof}
%%Straightforward induction on $n'$.
%%\qed \end{proof}

\subsubsection{Invariance of reduction relation} 
%When reasoning about reduction, we often
%need to construct a single reduction chain from two separate multi-step reductions. This requires
%that 
%
The functions $\mathbf{O}_C^{\sigma}$ and $\mathbf{SC}_C^{\sigma}$ are defined in terms of
reduction chains which start at configurations with $K=[]$, $n=0$ and $w=1$. However, in order to
reason about evaluation of compositions of terms, we need to deal with reduction sequences starting at 
intermediate configurations, where this property does not hold. The following lemmas show that the 
reduction relation is preserved by modifying the initial and final step count, weight and continuation. 
%We can easily observe that this holds for single-step reduction.

Proving invariance of the semantics under step count and weight change is straightforward:
\begin{lemma} \label{lemma:change-n-w}
If $\config{\theta}{C}{K}{\sigma}{\theta_K}{n}{w} \vdash^{*}
\config{\theta'}{C'}{K'}{\sigma'}{\theta'_K}{n+n'}{w'}$,
then for all $w'' > 0$ and integer $n'' \geq -n$,
$\config{\theta}{C}{K}{\sigma}{\theta_K}{n + n''}{w'' w} \vdash^{*}
\config{\theta'}{C'}{K'}{\sigma'}{\theta'_K}{n + n'' +n'}{w'' w'}$.
\end{lemma}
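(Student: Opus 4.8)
The plan is to exploit the fact that the step counter $n$ and the run weight $w$ are ``passive'' components of a configuration: no reduction rule ever inspects them. Concretely, I would first record two invariants that can be read off directly from the definition of the reduction relation $\vdash$. First, the premise (side condition) of every rule depends only on the state $\sigma$, the entropy $\theta$, the current statement $C$ and the continuation $K$, and never on $n$ or $w$; hence a rule applicable to a configuration remains applicable after $n$ and $w$ are altered. Second, in every rule the step counter is incremented by exactly one, and the weight is updated by $w \mapsto c \cdot w$, where the multiplier $c$ equals $\sigma(E)$ for the (score) rule and $1$ for every other rule---in particular $c$ depends only on $\sigma$ (and the statement), not on $w$ itself.

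\textbf{Single-step version.} From these invariants I would first establish the one-step case. Suppose $\config{\theta}{C}{K}{\sigma}{\theta_K}{n}{w} \vdash \config{\theta'}{C'}{K'}{\sigma'}{\theta'_K}{n+1}{w'}$, so that $w' = c\,w$ for the appropriate multiplier $c$. Since the premise of the applied rule does not mention $n$ or $w$, the same rule applies to $\config{\theta}{C}{K}{\sigma}{\theta_K}{n+n''}{w''w}$ for any $n''$ with $n+n'' \geq 0$ and any $w'' > 0$, producing $\config{\theta'}{C'}{K'}{\sigma'}{\theta'_K}{n+n''+1}{c\,(w''w)}$; and $c\,(w''w) = w''\,(c\,w) = w''\,w'$. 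This is precisely the claim for a single step.

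\textbf{Lifting to $\vdash^{*}$.} I would then lift to the reflexive--transitive closure by induction on the length of the derivation. Because every rule increments $n$ by one, a chain whose step counter runs from $n$ to $n+n'$ has exactly $n'$ steps, so this is equivalently an induction on $n'$. The base case $n'=0$ is the reflexive step and is immediate. For the inductive step I would split a chain of length $k+1$ into its first step followed by a chain of length $k$, apply the single-step version to the first step (which also shifts the intermediate step counter by $n''$ and scales its weight by $w''$), and then apply the induction hypothesis to the remaining chain. The constraint $n'' \geq -n$ is used only to guarantee $n+n'' \geq 0$, i.e. that the shifted step counter is a legitimate natural number; after the first step the intermediate counter is $n+1+n''\ge 1$, and the tail requires $n'' \geq -(n+1)$, which holds since $n'' \geq -n \geq -(n+1)$.

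\textbf{Main obstacle.} There is no genuine mathematical difficulty here; the work is entirely bookkeeping. The only points requiring a little care are (i) verifying case-by-case that no rule's premise refers to $n$ or $w$ and that the weight update is multiplicative with a weight-independent multiplier, so that scaling commutes with reduction, and (ii) tracking the nonnegativity side condition $n+n'' \geq 0$ through the induction. Both are routine once the ``spectator'' nature of $n$ and $w$ is made explicit.
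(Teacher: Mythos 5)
Your proof is correct and takes essentially the same route as the paper, whose entire proof is ``simple induction on $n'$'': your single-step observation (no rule's side condition inspects $n$ or $w$, and the weight update is multiplicative with a weight-independent factor) followed by induction on the chain length is precisely the bookkeeping that proof leaves implicit. The handling of the side condition $n'' \geq -n$ through the induction is also right.
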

\begin{proof}
Simple induction on $n'$.
\qed \end{proof}

The rest of this section shows that the semantics is also preserved by extending the initial continuation.
In the following lemmas, we write $K @ K'$ for the concatenation of two continuations
$K$ and $K'$ (recall that a continuation is a list of expressions).

\begin{lemma} \label{lemma:add-k-step}
\begin{itemize}
\item If $\config{\theta}{C}{K'}{\sigma}{\theta_K}{n}{w}
\vdash
\config{\theta'}{C'}{K''}{\sigma'}{\theta_K'}{n + 1}{w'}$
and $\sigma' \neq \failure$
and $(C,K') \neq (\modownarrow, [])$,
then
$\config{\theta}{C}{K'@K}{\sigma}{\theta_K}{n}{w}
\vdash
\config{\theta'}{C'}{K''@K}{\sigma'}{\theta_K'}{n + 1}{w'}$.
\item If $\config{\theta}{C}{K'}{\sigma}{\theta_K}{n}{w}
\vdash
\config{\theta'}{C'}{K''}{\failure}{\theta_K'}{n + 1}{w'}$
then
$\config{\theta}{C}{K'@K}{\sigma}{\theta_K}{n}{w}
\vdash
\config{\theta'}{C'}{[]}{\failure}{\theta_K'}{n + 1}{w'}$.
\end{itemize}
\end{lemma}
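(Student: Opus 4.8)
The statement (Lemma~\ref{lemma:add-k-step}) says that a single reduction step is invariant under appending a fixed continuation $K$ to the current continuation, with the caveat that if the step reaches the failure state $\failure$, the trailing continuation is discarded (set to $[]$). My plan is to prove this by a straightforward case analysis over which reduction rule is applied in the hypothesised step $\kappa \vdash \kappa'$. Since reduction is deterministic (Lemma~\ref{lemma:eval-det}) and there are only finitely many rule schemata, I only need to check that each rule's behaviour is compatible with appending $K$.

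\textbf{First item (non-failure case).} I would split on the rule applied. The key observation is that every rule other than (seq), (pop), and (condition-false) either leaves the continuation $K'$ completely untouched (e.g.\ (assign), (draw), (score), (if-true), (if-false), (while-true), (while-false), (diverge), (final)) or modifies only the \emph{head} of the continuation structure while leaving a suffix available. For the rules that do not touch $K'$, appending $K$ on both sides is immediate: if $\config{\theta}{C}{K'}{\sigma}{\theta_K}{n}{w} \vdash \config{\theta'}{C'}{K'}{\sigma'}{\theta_K'}{n{+}1}{w'}$ with $K'' = K'$, then the identical rule fires on $\config{\theta}{C}{K'@K}{\sigma}{\theta_K}{n}{w}$ producing $K''@K = K'@K$, because the premises of these rules mention only $\sigma$, $\phi$, or $E$ and never inspect $K'$. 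The two genuinely continuation-manipulating rules require a bit more care: (seq) transforms $K'$ into $C_2 \mathrel{::} K'$, so appending $K$ gives $C_2 \mathrel{::} (K'@K) = (C_2 \mathrel{::} K')@K$, exactly matching $K''@K$; (pop) fires when $C = \modownarrow$ and the continuation has the form $C \mathrel{::} K'$, and the hypothesis $(C,K') \neq (\modownarrow,[])$ together with $\sigma' \neq \failure$ guarantees we are not in the terminal (final) situation, so $(C \mathrel{::} K')@K = C \mathrel{::} (K'@K)$ is still nonempty and (pop) fires identically. In each case the entropy components $\theta,\theta',\theta_K,\theta_K'$, the weight, and the step count are manipulated by the rule independently of the trailing $K$, so they are unchanged by the append.

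\textbf{Second item (failure case).} The only rule that produces state $\failure$ is (condition-false), which sets the continuation to $[]$ regardless of the incoming continuation. Thus if $\config{\theta}{C}{K'}{\sigma}{\theta_K}{n}{w} \vdash \config{\theta'}{C'}{K''}{\failure}{\theta_K'}{n{+}1}{w'}$, the step must be (condition-false), $C = \mathtt{observe}(\phi)$ with $\sigma(\phi) = \mathtt{false}$, $C' = \modownarrow$, and $K'' = []$. Applying (condition-false) to $\config{\theta}{C}{K'@K}{\sigma}{\theta_K}{n}{w}$ fires under the same premise (which depends only on $\sigma$ and $\phi$) and again resets the continuation to $[]$, yielding $\config{\theta'}{\modownarrow}{[]}{\failure}{\theta_K'}{n{+}1}{w'}$, which is exactly the claimed right-hand side. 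This is where the asymmetry with the first item comes from and why the appended $K$ simply vanishes.

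\textbf{Main obstacle.} I do not expect a deep obstacle; the proof is a mechanical rule-by-rule check. The one point demanding attention is bookkeeping around the side conditions $(C,K') \neq (\modownarrow,[])$ and $\sigma' \neq \failure$ in the first item: these exclude precisely the (final) rule (where $C=\modownarrow$, $K'=[]$) and the (condition-false) rule (handled separately in the second item), which are the only cases where appending $K$ could change behaviour — (final) would spuriously keep looping on an empty continuation rather than continuing into $K$, and (condition-false) discards the continuation. I would make sure to state explicitly that these exclusions carve out exactly the problematic rules so that the remaining cases go through uniformly.
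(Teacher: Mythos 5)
Your proof is correct and is essentially the paper's own argument: the paper proves this lemma simply ``by inspection of the reduction rules,'' and your rule-by-rule case analysis --- continuation-untouching rules go through unchanged, (seq) and (pop) commute with appending by associativity of $\mathrel{::}$ and $@$, (final) is excluded by $(C,K') \neq (\modownarrow, [])$, and (condition-false) is isolated as the unique failure-producing rule for the second item --- is precisely that inspection carried out in detail.
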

\begin{proof}
By inspection of the reduction rules.
\qed \end{proof}

\begin{lemma} \label{lemma:min-full-red}
If $\config{\theta}{C}{K}{\sigma}{\theta_K}{n}{w}
\vdash^{*}
\config{\theta'}{\modownarrow}{[]}{\sigma'}{\theta'_K}{n + n'}{w'}$, then
there exists a unique $\hat{n} \leq n'$ such that
$\config{\theta}{C}{K}{\sigma}{\theta_K}{n}{w}
\vdash^{*}_{\mathtt{min}}
\config{\theta'}{\modownarrow}{[]}{\sigma'}{\theta'_K}{n + \hat{n}}{w'}$
\end{lemma}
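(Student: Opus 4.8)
The plan is to exploit determinism of the reduction relation (Lemma~\ref{lemma:eval-det}) together with the fact that a fully-evaluated configuration is \emph{absorbing}: the only rule applicable to a configuration whose current statement is $\modownarrow$ and whose continuation is $[]$ is (final), and (final) leaves every component fixed except for incrementing the step counter $n$. I read $\vdash^{*}_{\mathtt{min}}$ as the shortest reduction reaching a terminal configuration, i.e. the one that stops at the \emph{first} configuration of the form $\langle\,\cdots,\modownarrow,[],\cdots\,\rangle$ and uses no (final) padding step.

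First I would invoke Lemma~\ref{lemma:eval-det} to fix the unique reduction sequence $\kappa_0 \vdash \kappa_1 \vdash \cdots \vdash \kappa_{n'}$ emanating from $\kappa_0 = \config{\theta}{C}{K}{\sigma}{\theta_K}{n}{w}$, whose final configuration is the given $\kappa_{n'} = \config{\theta'}{\modownarrow}{[]}{\sigma'}{\theta'_K}{n+n'}{w'}$. I would then set $\hat{n}$ to be the least index $i \in \{0,\dots,n'\}$ for which $\kappa_i$ has current statement $\modownarrow$ and empty continuation $[]$. The set of such indices is a nonempty subset of $\mathbb{N}$ (it contains $n'$), so its minimum exists; this yields both the existence and the uniqueness of $\hat{n}$, and $\hat{n} \leq n'$ is immediate.

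The remaining work is to verify that $\kappa_{\hat{n}}$ carries exactly the components $\config{\theta'}{\modownarrow}{[]}{\sigma'}{\theta'_K}{n+\hat{n}}{w'}$, i.e. that the first terminal configuration agrees with $\kappa_{n'}$ on entropy, state, continuation-entropy and weight. I would prove this by induction on $i$ running upward from $\hat{n}$ to $n'$: at $i=\hat{n}$ the configuration is terminal, and inspecting the rules shows (final) is the only rule whose left-hand side matches a configuration with statement $\modownarrow$ and continuation $[]$, so each $\kappa_{i+1}$ is again terminal with the same $\theta,\sigma,\theta_K,w$ as $\kappa_i$. Hence all of $\kappa_{\hat{n}},\dots,\kappa_{n'}$ share these components, and in particular $\kappa_{\hat{n}}$ coincides with $\kappa_{n'}$ on them, giving $\kappa_0 \vdash^{*}_{\mathtt{min}} \config{\theta'}{\modownarrow}{[]}{\sigma'}{\theta'_K}{n+\hat{n}}{w'}$ as required.

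The one point requiring care is the failure state. If $\sigma' = \failure$, then (final) does \emph{not} apply, since its premise demands $\sigma \neq \failure$, and the configuration is stuck; the induction step above would then break. I would handle this by observing that a success terminal configuration (state $\neq \failure$) only ever reduces, via (final), to further success terminal configurations, so it can never evolve into a $\failure$-configuration. Consequently a $\failure$ terminal configuration can only arise as the \emph{very first} terminal configuration of the sequence, forcing $\hat{n}=n'$ and making the identification of $\kappa_{\hat{n}}$ with $\kappa_{n'}$ trivially hold. The main obstacle is thus not any computation but rather pinning down the intended meaning of $\vdash^{*}_{\mathtt{min}}$ and dispatching the $\failure$ case cleanly; once these are settled, the result is a short determinism-plus-absorption argument.
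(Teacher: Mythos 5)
Your proof is correct: the determinism-plus-absorption argument, including the observation that a $\failure$-state terminal configuration is stuck and hence can only be the first (and last) terminal configuration encountered, is exactly the reasoning that justifies this lemma. The paper itself dismisses it with the one-word proof ``Obvious,'' so your write-up is simply the careful formalization of what the authors left implicit.
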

\begin{proof}
Obvious.
\qed \end{proof}

\begin{lemma} \label{lemma:add-k}
If $\config{\theta}{C}{K}{\sigma}{\theta_K}{n}{w} \vdash^{*}
\config{\theta'}{C'}{K'}{\sigma'}{\theta'_K}{n+n'}{w'}$
and $(C', K') \neq (\modownarrow, [])$
and $\sigma' \neq \failure$,
then for all $K''$,
$\config{\theta}{C}{K @ K''}{\sigma}{\theta_K}{n}{w} \vdash^{*}
\config{\theta'}{C'}{K' @ K''}{\sigma'}{\theta'_K}{n+n'}{w'}$.
\end{lemma}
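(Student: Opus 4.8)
The plan is to prove the statement by induction on the number $n'$ of reduction steps, lifting the single-step result of Lemma~\ref{lemma:add-k-step} along the whole chain. Write the given reduction as $\kappa = \kappa_0 \vdash \kappa_1 \vdash \dots \vdash \kappa_{n'} = \kappa'$, and let $C_j$, $K_j$, $\sigma_j$ denote the program, continuation and state components of $\kappa_j$. Before running the induction I would establish two structural facts about the $\kappa_j$, which together guarantee that the hypotheses of Lemma~\ref{lemma:add-k-step} are met at every step.

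First, a configuration whose state is $\failure$ is irreducible, since every reduction rule carries the premise $\sigma \neq \failure$. Hence if some $\kappa_j$ had $\sigma_j = \failure$ the chain could not proceed past it; as $\sigma_{n'} = \sigma' \neq \failure$ by assumption, every $\sigma_j$ is a proper state. Second, by inspection of the rules the only rule applicable to a configuration with program component $\modownarrow$ and empty continuation is (final), which again yields such a configuration (only the step counter changes); determinism (Lemma~\ref{lemma:eval-det}) guarantees this is the sole possibility. Thus once the chain reaches a configuration with $(C_j, K_j) = (\modownarrow, [])$ it never leaves this form, so $(C_{n'}, K_{n'}) = (\modownarrow, [])$ would follow. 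Since $(C', K') = (C_{n'}, K_{n'}) \neq (\modownarrow, [])$, no $\kappa_j$ has $(C_j, K_j) = (\modownarrow, [])$.

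With these facts in hand I would argue by induction on $n'$. For $n' = 0$ we have $\kappa = \kappa'$, so appending $K''$ to both continuations gives identical configurations and the claim holds by reflexivity of $\vdash^{*}$. For the step, decompose the chain as $\kappa \vdash \kappa_1 \vdash^{*} \kappa'$, the tail having length $n'-1$. The input configuration $\kappa$ satisfies $(C_0, K_0) \neq (\modownarrow, [])$ and the output state $\sigma_1$ is proper, so the first (non-failure) case of Lemma~\ref{lemma:add-k-step}, with its appended continuation instantiated to $K''$, applies to $\kappa \vdash \kappa_1$ and yields the corresponding one-step reduction with $K''$ appended to both continuations. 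The tail $\kappa_1 \vdash^{*} \kappa'$ still meets the hypotheses $(C', K') \neq (\modownarrow, [])$ and $\sigma' \neq \failure$, so the induction hypothesis supplies the appended tail reduction. Composing the lifted first step with the lifted tail gives $\config{\theta}{C}{K @ K''}{\sigma}{\theta_K}{n}{w} \vdash^{*} \config{\theta'}{C'}{K' @ K''}{\sigma'}{\theta'_K}{n+n'}{w'}$, as required.

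The only delicate point --- and the reason for proving the two structural facts first --- is ensuring that Lemma~\ref{lemma:add-k-step} is applicable at each step. Its failure case erases the continuation, so it is incompatible with appending $K''$; and its non-failure case requires the current configuration not to have program $\modownarrow$ with empty continuation, since appending a nonempty $K''$ there would switch the applicable rule from (final) to (pop) and destroy the correspondence. Both hazards are excluded exactly by the observations that $\failure$ is absorbing and that a $(\modownarrow, [])$-configuration reduces only to itself. The remaining components --- the entropies, the step count and the weight --- are untouched by the continuation manipulation, so their agreement is automatic.
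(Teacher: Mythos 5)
Your proof is correct and takes essentially the same route as the paper's: induction on the chain length $n'$, lifting the first step and applying the induction hypothesis to the tail, with the same two key observations (a $\failure$ state is stuck, and a $(\modownarrow,[])$-configuration only reduces to itself via (final), so neither can occur strictly inside the chain). The only difference is organizational: the paper inlines the case analysis on the first rule ((seq), (pop), otherwise, excluding (final)), whereas you delegate that case analysis to Lemma~\ref{lemma:add-k-step}, which the paper proved but did not itself reuse here --- a slightly more modular, and if anything more careful, presentation of the same argument.
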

\begin{proof}
By induction on $n'$:
\begin{itemize}
\item Base case: $n' = 0$: trivial

\item Induction step: Let $n' > 0$. Then we have
$\config{\theta}{C}{K}{\sigma}{\theta_K}{n}{w} \vdash
\config{\hat{\theta}}{\hat{C}}{\hat{K}}{\hat{\sigma}}{\hat{\theta_K}}{n+1}{\hat{w}} \vdash^{*}
\config{\theta'}{C'}{K'}{\sigma'}{\theta'_K}{n+n'}{w'}$. We now need to split on the derivation of 
$\config{\theta}{C}{K}{\sigma}{\theta_K}{n}{w} \vdash
\config{\hat{\theta}}{\hat{C}}{\hat{K}}{\hat{\sigma}}{\hat{\theta_K}}{n+1}{w}$.

\begin{itemize}
\item
If $\config{\theta}{C}{K}{\sigma}{\theta_K}{n}{w} \vdash
\config{\hat{\theta}}{\hat{C}}{\hat{K}}{\hat{\sigma}}{\hat{\theta_K}}{n+1}{\hat{w}}$
was derived with (seq), then  $C = C_1; C_2$, $\hat{K}  = C_2 \mathrel{::} K$ and we have
$\config{\theta}{C_1; C_2}{K}{\sigma}{\theta_K}{n}{w} \vdash
\config{\pi_L(\theta)}{C_1}{C_2 \mathrel{::} K}{\sigma}{\pi_L(\theta) \mathrel{::} \theta_K}{n+1}{w} \vdash^{*}
\config{\theta'}{C'}{K'}{\sigma'}{\theta'_K}{n+n'}{w'}$.

By (seq), 
$\config{\theta}{C_1; C_2}{K@K''}{\sigma}{\theta_K}{n}{w} \vdash
\config{\pi_L(\theta)}{C_1}{C_2 \mathrel{::} K@K''}{\sigma}{\pi_L(\theta) \mathrel{::} \theta_K}{n+1}{w}$,
and by the induction hypothesis, 
$\config{\pi_L(\theta)}{C_1}{C_2 \mathrel{::} K @ K''}{\sigma}{\pi_L(\theta) \mathrel{::} \theta_K}{n+1}{\hat{w}} \vdash^{*}
\config{\theta'}{C'}{K' @ K''}{\sigma'}{\theta'_K}{n+n'}{w'}$.

\item
If $\config{\theta}{C}{K}{\sigma}{\theta_K}{n}{w} \vdash
\config{\hat{\theta}}{\hat{C}}{\hat{K}}{\hat{\sigma}}{\hat{\theta_K}}{n+1}{\hat{w}}$
was derived with (pop), then $C = \modownarrow$ and $K = C' \mathrel{::} K'''$ and we have
$\config{\theta}{\modownarrow}{C' \mathrel{::} K'''}{\sigma}{\theta_K}{n}{w} \vdash
\config{\pi_L(\theta_K)}{C'}{K'''}{\sigma}{\pi_R(\theta_K) }{n+1}{w} \vdash^{*}
\config{\theta'}{C'}{K'}{\sigma'}{\theta'_K}{n+n'}{w'}$.

By (pop),
$\config{\theta}{\modownarrow}{C' \mathrel{::} K''' @ K''}{\sigma}{\theta_K}{n}{w} \vdash
\config{\pi_L(\theta_K)}{C'}{K''' @ K''}{\sigma}{\pi_R(\theta_K) }{n+1}{w}$,
and by induction hypothesis, 
$\config{\pi_L(\theta_K)}{C'}{K''' @ K''}{\sigma}{\pi_R(\theta_K) }{n+1}{w} \vdash^{*}
\config{\theta'}{C'}{K' @ K''}{\sigma'}{\theta'_K}{n+n'}{w'}$.

\item
Otherwise, we have $\hat{K} = K$ and by inspection of the reduction rules,
$\config{\theta}{C}{K@K''}{\sigma}{\theta_K}{n}{w} \vdash
\config{\hat{\theta}}{\hat{C}}{K@K''}{\hat{\sigma}}{\hat{\theta_K}}{n+1}{\hat{w}}$, so the result
follows immediately by applying the induction hypothesis 
(note that $(C', K') \neq (\modownarrow, [])$ implies that 
$\config{\theta}{C}{K}{\sigma}{\theta_K}{n}{w} \vdash
\config{\hat{\theta}}{\hat{C}}{\hat{K}}{\hat{\sigma}}{\hat{\theta_K}}{n+1}{\hat{w}}$
is not derived with (final)).
\end{itemize}
\end{itemize}
\qed \end{proof}

\begin{corollary} \label{corr:change-n-w-k}
If $\config{\theta}{C}{K}{\sigma}{\theta_K}{n}{w} \vdash^{*}
\config{\theta'}{C'}{K'}{\sigma'}{\theta'_K}{n+n'}{w'}$
and $\sigma' \neq \failure$
and $(C', K') \neq (\modownarrow, [])$,
then for all $w'' > 0$, integer $n'' \geq -n$ and $K''$,
$\config{\theta}{C}{K @ K''}{\sigma}{\theta_K}{n + n''}{w'' w} \vdash^{*}
\config{\theta'}{C'}{K' @ K''}{\sigma'}{\theta'_K}{n + n'' +n'}{w'' w'}$.
\end{corollary}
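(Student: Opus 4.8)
The plan is to derive the statement by chaining the two preceding invariance lemmas, each of which handles a complementary part of the three modifications---continuation extension, step-count shift, and weight scaling---that appear in the conclusion. Since Lemma~\ref{lemma:change-n-w} already covers the step-count and weight adjustment and Lemma~\ref{lemma:add-k} covers the continuation extension, the corollary should follow by simply applying them in succession, with a little care to check that the side conditions survive the composition.

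First I would extend the continuation. The hypotheses of the corollary, namely $\sigma' \neq \failure$ and $(C', K') \neq (\modownarrow, [])$, are precisely the side conditions required by Lemma~\ref{lemma:add-k}. Applying that lemma (which leaves the step counts $n$, $n+n'$ and the weights $w$, $w'$ untouched) to the given reduction chain would yield
\[
\config{\theta}{C}{K @ K''}{\sigma}{\theta_K}{n}{w} \vdash^{*}
\config{\theta'}{C'}{K' @ K''}{\sigma'}{\theta'_K}{n+n'}{w'}.
\]
Next I would adjust the step count and weight. Lemma~\ref{lemma:change-n-w} imposes no constraint on the final configuration and is insensitive to the particular continuations carried along the chain, so I can apply it directly to the chain just obtained, treating $K @ K''$ and $K' @ K''$ as its ``$K$'' and ``$K'$'' parameters. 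For $w'' > 0$ and integer $n'' \geq -n$ this would produce
\[
\config{\theta}{C}{K @ K''}{\sigma}{\theta_K}{n + n''}{w'' w} \vdash^{*}
\config{\theta'}{C'}{K' @ K''}{\sigma'}{\theta'_K}{n + n'' + n'}{w'' w'},
\]
which is exactly the desired conclusion.

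There is essentially no real obstacle beyond this bookkeeping: the only thing to verify is that the two lemmas genuinely compose. Performing the continuation extension first is the cleanest choice, because its side conditions match the corollary's hypotheses verbatim, and because Lemma~\ref{lemma:add-k} preserves the step counts and weights exactly, so the subsequent application of Lemma~\ref{lemma:change-n-w} is unobstructed (that lemma carries no side conditions on the end configuration at all). One could equally apply the lemmas in the opposite order, relabelling the starting step count as $n+n''$, but that ordering gains nothing and obscures the direct match of hypotheses.
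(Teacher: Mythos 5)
Your proof is correct and matches the paper's intent exactly: the statement is labelled a corollary precisely because it follows by composing Lemma~\ref{lemma:add-k} (whose side conditions are the corollary's hypotheses verbatim) with Lemma~\ref{lemma:change-n-w} (which has no side conditions on the continuations and so applies to the extended chain). Your bookkeeping—applying the continuation extension first, then the step-count/weight shift—is the natural order and the composition goes through without any gap.
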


The reason we added the condition $(C', K') \neq (\modownarrow, [])$ to the premise of
Lemma~\ref{lemma:add-k} is that in our semantics, a ``final'' configuration with 
statement $\modownarrow$ and empty continuation reduces to itself (by the (final) rule) infinitely. If we 
replaced $[]$ with some non-empty continuation $K$, the rule (pop) would be applied instead of (final) and
the reduction would be completely different. The statement 
$\config{\theta}{C}{K}{\sigma}{\theta_K}{n}{w} \vdash^{*}
\config{\theta'}{\modownarrow}{[]}{\sigma'}{\theta'_K}{n+n'}{w'}$ says nothing about how
many times the rule (final) was applied at the end, so we do not know what the final configuration
after $n'$ steps would be if we appended some continuation $K'$ to $K$.

Because of that, we need to treat the case $(C', K') = (\modownarrow, [])$ separately. We first introduce
some new notation: we write
$\config{\theta}{C}{K}{\sigma}{\theta_K}{n}{w}
\vdash^{*}_{\mathtt{min}}
\config{\theta'}{\modownarrow}{[]}{\sigma'}{\theta'_K}{n + n'}{w'}$
if $\config{\theta}{C}{K}{\sigma}{\theta_K}{n}{w}
\vdash^{*}
\config{\theta'}{\modownarrow}{[]}{\sigma'}{\theta'_K}{n + n'}{w'}$
and there is no $n'' < n'$ such that
$\config{\theta}{C}{K}{\sigma}{\theta_K}{n}{w}
\vdash^{*}
\config{\theta''}{\modownarrow}{[]}{\sigma''}{\theta''_K}{n + n''}{w''}$
(or, equivalently,
$\config{\theta}{C}{K}{\sigma}{\theta_K}{n}{w}
\vdash^{*}
\config{\theta'}{\modownarrow}{[]}{\sigma'}{\theta'_K}{n + n'}{w'}$
was derived without (final)).

%technical
\begin{lemma}[Evaluation with continuation] \label{lemma:add-k-final}
If $\config{\theta}{C}{[]}{\sigma}{\theta_K}{n}{w}
\vdash^{*}_{\mathtt{min}}
\config{\theta'}{\modownarrow}{[]}{\sigma'}{\theta'_K}{n + n'}{w'}$
and $\sigma' \neq \failure$, 
then
$\config{\theta}{C}{K}{\sigma}{\theta_K}{n}{w}
\vdash^{*}
\config{\theta'}{\modownarrow}{K}{\sigma'}{\theta'_K}{n + n'}{w'}$.
\end{lemma}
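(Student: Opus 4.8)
The plan is to peel the minimal reduction apart into its prefix of length $n'-1$ and its single final step, push the continuation $K$ through each piece separately using the two preceding lemmas, and then recombine. First I would dispose of the base case $n' = 0$: here the premise forces $C = \modownarrow$ and makes the initial and final configurations identical, so the conclusion $\config{\theta}{\modownarrow}{K}{\sigma}{\theta_K}{n}{w} \vdash^{*} \config{\theta}{\modownarrow}{K}{\sigma}{\theta_K}{n}{w}$ holds by reflexivity (zero steps).

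For $n' \geq 1$ I would write the minimal reduction as
\[
\config{\theta}{C}{[]}{\sigma}{\theta_K}{n}{w} \vdash^{*} \config{\hat\theta}{\hat C}{\hat K}{\hat\sigma}{\hat\theta_K}{n+n'-1}{\hat w} \vdash \config{\theta'}{\modownarrow}{[]}{\sigma'}{\theta'_K}{n+n'}{w'},
\]
singling out the penultimate configuration. The crux is to establish two facts about it: that $\hat\sigma \neq \failure$ and that $(\hat C, \hat K) \neq (\modownarrow, [])$. The first holds because every reduction rule carries the premise $\sigma \neq \failure$, so a $\failure$-state is irreducible; since the penultimate configuration does reduce, its state is not $\failure$. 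The second follows from minimality: the unique rule applicable to a configuration of the form $\config{\cdot}{\modownarrow}{[]}{\cdot}{\cdot}{\cdot}{\cdot}$ with non-failure state is (final), whereas $\vdash^{*}_{\mathtt{min}}$ is by definition a reduction that never uses (final); hence the last step is not (final), which forces $(\hat C, \hat K) \neq (\modownarrow, [])$.

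Armed with these, I would apply Lemma~\ref{lemma:add-k} to the prefix with $K'' := K$, whose endpoint conditions are exactly the two facts just checked, obtaining (using $[] @ K = K$)
\[
\config{\theta}{C}{K}{\sigma}{\theta_K}{n}{w} \vdash^{*} \config{\hat\theta}{\hat C}{\hat K @ K}{\hat\sigma}{\hat\theta_K}{n+n'-1}{\hat w}.
\]
Then I would apply the first clause of Lemma~\ref{lemma:add-k-step} to the single final step—its premises $\sigma' \neq \failure$ (given) and $(\hat C, \hat K) \neq (\modownarrow, [])$ (established) both hold—to get, again using $[] @ K = K$,
\[
\config{\hat\theta}{\hat C}{\hat K @ K}{\hat\sigma}{\hat\theta_K}{n+n'-1}{\hat w} \vdash \config{\theta'}{\modownarrow}{K}{\sigma'}{\theta'_K}{n+n'}{w'},
\]
and concatenating the two reductions yields the claim. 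A pleasant point is that the argument never needs to compute $\hat K$ explicitly: the prefix lemma threads the continuation $\hat K @ K$ through, which is precisely the input continuation demanded by the last-step lemma, so the two pieces compose automatically.

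The main obstacle is the bookkeeping around the final step rather than any deep calculation: one must argue that the penultimate configuration is neither stuck at $\failure$ nor equal to the terminal pair $(\modownarrow, [])$, since these are exactly the premises the two helper lemmas require. It is here that the $\vdash^{*}_{\mathtt{min}}$ hypothesis—rather than plain $\vdash^{*}$—is indispensable, exactly as the discussion preceding the lemma anticipates, because replacing $[]$ by a nonempty $K$ would otherwise trigger (pop) where the original run used (final).
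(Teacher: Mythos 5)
Your proof is correct, but it takes a genuinely different route from the paper's. The paper proves a strengthened statement---allowing an arbitrary initial continuation $K'$ with empty final continuation---and inducts on $n'$, peeling off the \emph{first} step of the reduction, pushing $K$ through it with Lemma~\ref{lemma:add-k-step}, and invoking the induction hypothesis on the (still minimal) tail. You instead peel off the \emph{last} step: you handle the prefix of length $n'-1$ by invoking the already-available multi-step Lemma~\ref{lemma:add-k}, and the final step by Lemma~\ref{lemma:add-k-step}, so no fresh induction is needed (it is effectively outsourced to Lemma~\ref{lemma:add-k}). The price of your route is the bookkeeping you correctly identify: you must verify the endpoint side conditions at the penultimate configuration, namely $\hat\sigma \neq \failure$ (it reduces, and no rule fires from $\failure$) and $(\hat C, \hat K) \neq (\modownarrow, [])$ (otherwise the last step would be (final), contradicting $\vdash^{*}_{\mathtt{min}}$); both arguments are sound, and the second is exactly where minimality enters your proof, just as it enters the paper's proof when it argues the initial pair of each induction step cannot be $(\modownarrow,[])$. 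What the paper's version buys is a slightly more general intermediate statement (arbitrary $K'$ in the source configuration), which your argument does not reproduce but which the lemma as stated does not require.
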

\begin{proof}
We will prove a more general statement:

If $\config{\theta}{C}{K'}{\sigma}{\theta_K}{n}{w}
\vdash^{*}_{\mathtt{min}}
\config{\theta'}{\modownarrow}{[]}{\sigma'}{\theta_K'}{n + n'}{w'}$, 
then
$\config{\theta}{C}{K'@K}{\sigma}{\theta_K}{n}{w}
\vdash^{*}
\config{\theta'}{\modownarrow}{K}{\sigma'}{\theta_K'}{n + n'}{w'}$, 

by induction on $n'$:
\begin{itemize}
\item Base case: $n' = 0$: This implies that $C = \modownarrow$ and $w'  = w$ and $K' = []$ and $\theta_K' = \theta_K$, so the result follows trivially.
\item Induction step: for $n' > 0$, we have
$\config{\theta}{C}{K'}{\sigma}{\theta_K}{n}{w} \vdash
\config{\hat{\theta}}{\hat{C}}{\hat{K'}}{\hat{\sigma}}{\hat{\theta_K}}{n+1}{\hat{w}} \vdash^{*}_{\mathtt{min}}
\config{\theta'}{\modownarrow}{[]}{\sigma'}{\theta_K'}{n + n'}{w'}$,
where $(C, K') \neq (\modownarrow, [])$, as otherwise the configuration would reduce in $0$ steps.

By Lemma~\ref{lemma:add-k-step},
$\config{\theta}{C}{K'@K}{\sigma}{\theta_K}{n}{w} \vdash
\config{\hat{\theta}}{\hat{C}}{\hat{K'}@K}{\hat{\sigma}}{\hat{\theta_K}}{n+1}{\hat{w}}$ 
and by induction hypothesis,
$\config{\hat{\theta}}{\hat{C}}{\hat{K'} @ K}{\hat{\sigma}}{\hat{\theta_K}}{n+1}{\hat{w}} \vdash^{*}
\config{\theta'}{\modownarrow}{K}{\sigma'}{\theta_K'}{n + 1 + (n'-1)}{w'}$,
which ends the proof.
\end{itemize}
\qed \end{proof}
\begin{corollary}
If $\config{\theta}{C}{[]}{\sigma}{\theta_K}{n}{w}
\vdash^{*}_{\mathtt{min}}
\config{\theta'}{\modownarrow}{[]}{\sigma'}{\theta_K}{n + n'}{w'}$
and $\sigma' \neq \failure$,
then
$\config{\theta}{C}{K}{\sigma}{\theta_K}{n}{w}
\vdash^{*}
\config{\theta'}{\modownarrow}{K}{\sigma'}{\theta_K}{n + n'}{w'}$.
\end{corollary}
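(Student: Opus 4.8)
The plan is to derive this corollary as an immediate specialisation of the preceding Lemma~\ref{lemma:add-k-final}. That lemma already establishes, for a $\vdash^{*}_{\mathtt{min}}$-reduction issuing from an empty continuation, that appending an arbitrary continuation $K$ produces a matching reduction ending in $\config{\theta'}{\modownarrow}{K}{\sigma'}{\theta'_K}{n+n'}{w'}$, where $\theta'_K$ is whatever final continuation entropy the original reduction yields. The corollary is simply the instance of this statement in which $\theta'_K$ is the original continuation entropy $\theta_K$: its hypothesis posits a $\vdash^{*}_{\mathtt{min}}$-reduction ending in $\config{\theta'}{\modownarrow}{[]}{\sigma'}{\theta_K}{n+n'}{w'}$, which is precisely the hypothesis of Lemma~\ref{lemma:add-k-final} under the substitution $\theta'_K := \theta_K$, so the conclusion follows verbatim. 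Thus the first (and essentially only) step I would take is to invoke Lemma~\ref{lemma:add-k-final} with $\theta'_K$ instantiated to $\theta_K$.

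To make clear that this is the natural and indeed forced case, I would also record the invariant that, along any reduction starting from $\config{\theta}{C}{[]}{\sigma}{\theta_K}{n}{w}$, the continuation entropy always has the form $\eta_1 \mathrel{::} \cdots \mathrel{::} \eta_m \mathrel{::} \theta_K$, where $m$ is the length of the current continuation list. This is proved by a short induction on the number of reduction steps: the rule (seq) pushes one fresh layer onto both the continuation list and the continuation entropy, the rule (pop) strips one layer from each using the entropy axiom $\pi_R(\eta \mathrel{::} \zeta) = \zeta$, and every other rule leaves both unchanged. Consequently, whenever the continuation list returns to $[]$ the continuation entropy is exactly $\theta_K$, so the final continuation entropy appearing in the hypothesis is forced to equal $\theta_K$ and the corollary's premise is no stronger than that of the lemma.

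There is essentially no obstacle here: the single mildly technical ingredient is this stack-depth invariant linking the continuation list to the structure of the continuation entropy, and even that is routine given the projection axioms of the entropy space and the syntactic shape of the rules. In fact the invariant can be dispensed with altogether, since the corollary's hypothesis already supplies a reduction whose final continuation entropy is $\theta_K$; applying Lemma~\ref{lemma:add-k-final} with $\theta'_K := \theta_K$ then closes the argument directly.
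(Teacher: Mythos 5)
Your proposal is correct and matches the paper's intent exactly: the corollary is the instance of Lemma~\ref{lemma:add-k-final} obtained by instantiating the final continuation entropy $\theta'_K$ to $\theta_K$, which is all the paper (implicitly, with no separate proof) relies on. Your additional stack-depth invariant is true — it is essentially Lemma~\ref{lemma:recover-theta-k} and Corollary~\ref{corr:theta-k-unchanged}, proven later in the paper — but, as you note yourself, it is not needed here since the hypothesis already fixes the final continuation entropy to be $\theta_K$.
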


We also need to show that reductions leading to a failed observation are also preserved
when appending a continuation.

\begin{lemma} \label{lemma:add-k-fail}
If $\config{\theta}{C}{K}{\sigma}{\theta_K}{n}{w} \vdash^{*}
\config{\theta'}{\modownarrow}{[]}{\failure}{\theta'_K}{n+n'}{w'}$
then for all $K''$,
$\config{\theta}{C}{K @ K''}{\sigma}{\theta_K}{n}{w} \vdash^{*}
\config{\theta'}{C'}{[]}{\failure}{\theta'_K}{n+n'}{w'}$.
\end{lemma}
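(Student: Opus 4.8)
The plan is to exploit the fact that the error state $\failure$ is produced by exactly one reduction rule, namely (condition-false), and that no rule at all applies once the state equals $\failure$, since every rule carries the side condition $\sigma \neq \failure$. Consequently $\failure$ is a sink, and in the given reduction sequence it can be reached only at the very last step, which must therefore be an instance of (condition-false). So the first thing I would establish is that the reduction factors through a configuration whose current statement is $\mathtt{observe}(\phi)$: there is some continuation $\hat{K}$ and some state $\hat{\sigma} \neq \failure$ with $\hat{\sigma}(\phi) = \mathtt{false}$ such that $\config{\theta}{C}{K}{\sigma}{\theta_K}{n}{w}$ reduces in $n'-1$ steps to $\config{\theta'}{\mathtt{observe}(\phi)}{\hat{K}}{\hat{\sigma}}{\theta'_K}{n+n'-1}{w'}$ and then in one (condition-false) step to $\config{\theta'}{\modownarrow}{[]}{\failure}{\theta'_K}{n+n'}{w'}$. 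Note that the entropies $\theta'$, $\theta'_K$ and the weight $w'$ are already those of the final configuration one step early, because (condition-false) leaves $\theta$, $\theta_K$ and $w$ untouched.

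Next I would lift the prefix by appending $K''$. The configuration reached at the end of the prefix has current statement $\mathtt{observe}(\phi) \neq \modownarrow$, so $(\mathtt{observe}(\phi), \hat{K}) \neq (\modownarrow, [])$, and its state $\hat{\sigma} \neq \failure$; these are precisely the hypotheses of Lemma~\ref{lemma:add-k}. Applying that lemma with the extension $K''$ immediately yields
\[
\config{\theta}{C}{K @ K''}{\sigma}{\theta_K}{n}{w} \ \vdash^{*}\ \config{\theta'}{\mathtt{observe}(\phi)}{\hat{K} @ K''}{\hat{\sigma}}{\theta'_K}{n+n'-1}{w'},
\]
so that all the work of propagating the appended continuation through the prefix is already packaged in Lemma~\ref{lemma:add-k}.

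Finally I would apply (condition-false) once more to this lifted configuration. The crucial observation is that (condition-false) discards the entire continuation, resetting it to $[]$ regardless of its prior contents; hence whether the continuation is $\hat{K}$ or $\hat{K} @ K''$, the rule produces the identical successor $\config{\theta'}{\modownarrow}{[]}{\failure}{\theta'_K}{n+n'}{w'}$. Concatenating the lifted prefix with this last step gives exactly the claimed reduction. The only delicate point---the main obstacle, such as it is---is the degenerate situation $\sigma = \failure$, where $n'=0$ forces $C = \modownarrow$ and $K = []$ and the starting configuration is already stuck; this lies outside the intended scope of the lemma (it is used only with proper initial states) and is dismissed by the very observation above, that $\failure$ arises only through a (condition-false) step, so in fact $n' \geq 1$. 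Everything else is routine bookkeeping already absorbed into Lemma~\ref{lemma:add-k}.
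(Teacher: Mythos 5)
Your proposal is correct and takes essentially the same route as the paper's own proof: identify the final step as the unique (condition-false) step that produces $\failure$, lift the prefix (which ends in a configuration with statement $\mathtt{observe}(\phi) \neq \modownarrow$ and a non-$\failure$ state) using Lemma~\ref{lemma:add-k}, and then reapply (condition-false), whose resetting of the continuation to $[]$ makes the appended $K''$ irrelevant. The only cosmetic difference is the degenerate case: the paper disposes of $n'=0$ as ``trivial,'' whereas you argue it lies outside the lemma's intended use; neither affects the substance of the argument.
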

\begin{proof}
If $n' = 0$, the result follows trivially.

If $n' > 0$, then we have $\sigma \neq \failure$ (otherwise the initial configuration would not reduce),
and so the last rule in the derivation of 
$\config{\theta}{C}{K}{\sigma}{\theta_K}{n}{w} \vdash^{*}
\config{\theta'}{C'}{K'}{\failure}{\theta'_K}{n+n'}{w'}$
must have been (condition-false). 

Hence,
$\config{\theta}{C}{K}{\sigma}{\theta_K}{n}{w} \vdash^{*}
\config{\theta'}{\mathtt{observe}(\phi)}{\hat{K}}{\sigma'}{\theta'_K}{n+n'-1}{w'} \vdash
\config{\theta'}{\modownarrow}{[]}{\failure}{\theta'_K}{n+n'}{w'}$,
where $\sigma' \neq \failure$
and $\sigma'(\phi) = \mathtt{false}$.
By Lemma~\ref{lemma:add-k},
$\config{\theta}{C}{K@ K''}{\sigma}{\theta_K}{n}{w} \vdash^{*}
\config{\theta'}{\mathtt{observe}(\phi)}{\hat{K} @ K''}{\sigma'}{\theta'_K}{n+n'-1}{w'}$.
By applying  (condition-false) again, we get 
$\config{\theta'}{\mathtt{observe}(\phi)}{\hat{K} @ K''}{\sigma'}{\theta'_K}{n+n'-1}{w'} \vdash
\config{\theta'}{\modownarrow}{[]}{\failure}{\theta'_K}{n+n'}{w'}$,
as required.
\qed \end{proof}

\begin{lemma}  \label{lemma:seq-add-k-fail}
If $C_1 \neq C_1'; C_1''$ and 
$\config{\pi_L(\theta)}{C_1}{[]}{\sigma}{\pi_R(\theta) \mathrel{::} \theta_K}{0}{1} \vdash^{*}
\config{\theta'}{\modownarrow}{[]}{\failure}{\theta_K'}{n }{w}$,
then
$\config{\theta}{C_1;C_2}{[]}{\sigma}{\theta_K}{0}{1} \vdash^{*}
\config{\theta'}{\modownarrow}{[]}{\failure}{\theta_K'}{n+1}{w}$.
\end{lemma}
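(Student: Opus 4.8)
The plan is to reduce the claim to the hypothesis by peeling off a single application of the (seq) rule, which moves $C_2$ into the continuation, and then transporting the hypothesis along the two bookkeeping lemmas that extend the continuation and shift the step counter. The crucial observation is that the entropies appearing in the hypothesis are not arbitrary: the starting entropy $\pi_L(\theta)$ and continuation entropy $\pi_R(\theta) \mathrel{::} \theta_K$ are precisely the ones produced by applying (seq) to $\config{\theta}{C_1;C_2}{[]}{\sigma}{\theta_K}{0}{1}$. This is by design, so the entropy components line up automatically and no entropy manipulation is needed.

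Concretely, I would first note that we may assume $\sigma \neq \failure$: the final state of the given reduction is $\failure$, and the only rule producing $\failure$ is (condition-false), which requires a proper predecessor state, so the reduction takes at least one step from a non-failure state (the degenerate alternative $C_1 = \modownarrow$ with $n=0$ does not arise for a genuine statement). Since $\sigma \neq \failure$ and $C_1 \neq C_1'; C_1''$, the rule (seq) fires and yields the single step
\[
\config{\theta}{C_1;C_2}{[]}{\sigma}{\theta_K}{0}{1} \ \vdash \ \config{\pi_L(\theta)}{C_1}{[C_2]}{\sigma}{\pi_R(\theta) \mathrel{::} \theta_K}{1}{1},
\]
using $C_2 \mathrel{::} [] = [C_2]$.

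Next, I would transport the hypothesis along two orthogonal lemmas. Applying Lemma~\ref{lemma:add-k-fail} with $K = []$ and $K'' = [C_2]$ (so that $[] @ [C_2] = [C_2]$) extends the continuation of the hypothesis, giving $\config{\pi_L(\theta)}{C_1}{[C_2]}{\sigma}{\pi_R(\theta) \mathrel{::} \theta_K}{0}{1} \vdash^{*} \config{\theta'}{\modownarrow}{[]}{\failure}{\theta_K'}{n}{w}$; here I would use that reaching $\failure$ via (condition-false) forces the statement to $\modownarrow$ and the continuation to $[]$, so the final configuration is exactly the required one. Then applying Lemma~\ref{lemma:change-n-w} with $n'' = 1$ and $w'' = 1$ bumps the initial step counter from $0$ to $1$, producing the same reduction starting at step $1$ and ending at step $n+1$. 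Composing the single (seq) step with this reduction establishes the goal.

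The routine part is checking the side conditions of the rules and lemmas and the list identity $C_2 \mathrel{::} [] = [] @ [C_2] = [C_2]$. The one point deserving care, which I expect to be the main (minor) obstacle, is the $\sigma = \failure$ corner case: one must argue either that it is excluded because $\failure$ can only arise through (condition-false) from a proper state, or that the only way the hypothesis could hold with $\sigma = \failure$ is the degenerate $n=0$, $C_1 = \modownarrow$ situation, which does not occur for a genuine sequenced statement. Everything else follows mechanically, precisely because the entropies in the statement were chosen to match the output of (seq).
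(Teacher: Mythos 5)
Your proof is correct and follows essentially the same route as the paper's: extend the continuation of the hypothesis reduction via Lemma~\ref{lemma:add-k-fail}, prepend the single (seq) step, and shift the step counter with Lemma~\ref{lemma:change-n-w}. Your explicit justification that $\sigma \neq \failure$ (so that (seq) may fire) is a minor refinement that the paper's proof leaves implicit, but it does not change the argument.
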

\begin{proof}
By Lemma~\ref{lemma:add-k-fail} 
$\config{\pi_L(\theta)}{C_1}{[C_2]}{\sigma}{\pi_R(\theta) \mathrel{::} \theta_K}{0}{1} \vdash^{*}
\config{\theta'}{\modownarrow}{[]}{\failure}{\theta_K'}{n }{w}$. 
As $\config{\theta}{C_1;C_2}{[]}{\sigma}{\theta_K}{0}{1}  \vdash
\config{\pi_L(\theta)}{C_1}{[C_2]}{\sigma}{\pi_R(\theta) \mathrel{::}\theta_K}{1}{1}$ by (seq),
Lemma~\ref{lemma:change-n-w} yields
$\config{\theta}{C_1;C_2}{[]}{\sigma}{\theta_K}{0}{1} \vdash^{*}
\config{\theta'}{\modownarrow}{[]}{\failure}{\theta_K'}{n+1}{w}$.
\qed \end{proof}

\subsubsection{Sequencing}
We now use the above results to relate the final and intermediate configurations in the reduction
of a statement $C_1$ to the intermediate configurations reached when reducing $C_1;C_2$.

\begin{lemma}[Context evaluation for simple sequencing] \label{lemma:context-simple}
If $C_1 \neq C_1'; C_1''$ and 
$\config{\theta}{C_1}{[]}{\sigma}{\theta_K}{n}{w}
\vdash^{*}_{\mathtt{min}}
\config{\theta'}{\modownarrow}{[]}{\sigma'}{\theta_K}{n + n'}{w'}$
and $\sigma' \neq \failure$,
then \newline
$\config{\theta \mathcal{::} \pi_L(\theta_K)}{C_1;C_2}{[]}{\sigma}{\pi_R(\theta_K)}{n}{w}
\vdash^{*}
\config{\pi_L(\theta_K)}{C_2}{[]}{\sigma'}{\pi_R(\theta_K)}{n + n' + 2}{w'}$.

\end{lemma}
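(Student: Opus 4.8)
The plan is to run $C_1;C_2$ in three stages: a single (seq) step to descend into $C_1$, the transplanted reduction of $C_1$ supplied by the hypothesis, and a single (pop) step that exposes $C_2$. First I would fire (seq) on the initial configuration. Since $C_1 \neq C_1';C_1''$ is not a top-level sequence, (seq) applies, and the pairing laws $\pi_L(\theta \mathrel{::} \pi_L(\theta_K)) = \theta$ and $\pi_R(\theta \mathrel{::} \pi_L(\theta_K)) = \pi_L(\theta_K)$ give
\[
\config{\theta \mathrel{::} \pi_L(\theta_K)}{C_1;C_2}{[]}{\sigma}{\pi_R(\theta_K)}{n}{w} \ \vdash\ \config{\theta}{C_1}{[C_2]}{\sigma}{\hat{\theta}_K}{n{+}1}{w},
\]
where $\hat{\theta}_K = \pi_L(\theta_K) \mathrel{::} \pi_R(\theta_K)$ is the continuation entropy manufactured by (seq). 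Note that $\hat{\theta}_K$ need \emph{not} equal $\theta_K$, since the entropy space only guarantees surjective pairing and not the $\eta$-law $\pi_L(\vartheta)\mathrel{::}\pi_R(\vartheta)=\vartheta$; what matters is only that $\pi_L(\hat{\theta}_K) = \pi_L(\theta_K)$ and $\pi_R(\hat{\theta}_K) = \pi_R(\theta_K)$, which will make the closing (pop) land cleanly.

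Second, I would transplant the hypothesised minimal reduction of $C_1$ into this context, where the continuation is now $[C_2]$ and the continuation entropy is $\hat{\theta}_K$ rather than $\theta_K$. I would first argue that the base continuation entropy may be replaced by $\hat{\theta}_K$ without affecting the reduction: with an empty outer continuation the base entropy is never consumed but merely threaded as a passive suffix — each internal (seq) pushes some $\pi_R(\cdot)\mathrel{::}(\cdot)$ and the matching (pop) strips it again — and it is read only through the projections applied by (pop), which coincide on $\theta_K$ and $\hat{\theta}_K$. I would then append $[C_2]$ to the empty continuation via Lemma~\ref{lemma:add-k-final} (general form, with $K'=[]$, $K=[C_2]$), and finally shift the step counter by one using Lemma~\ref{lemma:change-n-w} to match the count $n{+}1$ reached after (seq). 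The net result is
\[
\config{\theta}{C_1}{[C_2]}{\sigma}{\hat{\theta}_K}{n{+}1}{w} \ \vdash^{*}\ \config{\theta'}{\modownarrow}{[C_2]}{\sigma'}{\hat{\theta}_K}{n{+}n'{+}1}{w'}.
\]
The invariance of the reduction under the choice of base continuation entropy is the main obstacle: it is the only ingredient that genuinely depends on the entropy-space axioms (the behaviour of $\pi_L,\pi_R,\mathrel{::}$) rather than on routine rule inspection, and I expect to establish it by an induction on $n'$ running in parallel with the argument already used for Lemma~\ref{lemma:add-k-final}.

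Finally, I would apply (pop) to $\config{\theta'}{\modownarrow}{[C_2]}{\sigma'}{\hat{\theta}_K}{n{+}n'{+}1}{w'}$. This fetches $C_2$, sets the current entropy to $\pi_L(\hat{\theta}_K) = \pi_L(\theta_K)$ and the continuation entropy to $\pi_R(\hat{\theta}_K) = \pi_R(\theta_K)$, and advances the counter to $n{+}n'{+}2$, yielding exactly $\config{\pi_L(\theta_K)}{C_2}{[]}{\sigma'}{\pi_R(\theta_K)}{n{+}n'{+}2}{w'}$. Composing the three stages through transitivity of $\vdash^{*}$ gives the claim, and determinism (Lemma~\ref{lemma:eval-det}) ensures there is no alternative reduction path to consider.
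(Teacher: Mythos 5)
Your three-stage plan---fire (seq), transplant the $\vdash^{*}_{\mathtt{min}}$ reduction of $C_1$ into the context with continuation $[C_2]$ via Lemma~\ref{lemma:add-k-final} and the step-count shift of Lemma~\ref{lemma:change-n-w}, then close with (pop)---is exactly the paper's proof, and the composition is correct. The one place you deviate is the worry that $\hat{\theta}_K = \pi_L(\theta_K) \mathrel{::} \pi_R(\theta_K)$ might differ from $\theta_K$, which leads you to interpose an invariance-of-base-continuation-entropy claim that you only sketch (``induction on $n'$'') and even flag as the main obstacle. This detour is unnecessary, because the $\eta$-law you say is missing does follow from the paper's axioms: surjectivity of $(\mathrel{::})$ means every $\vartheta \in \mathbb{S}$ can be written as $\vartheta = a \mathrel{::} b$ for some $a,b$, and the projection equations then give $\pi_L(\vartheta) = a$ and $\pi_R(\vartheta) = b$, hence $\pi_L(\vartheta) \mathrel{::} \pi_R(\vartheta) = a \mathrel{::} b = \vartheta$. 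So after (seq) the continuation entropy is literally $\theta_K$, Lemma~\ref{lemma:add-k-final} applies verbatim to the hypothesised reduction (with $K = [C_2]$), and your proof collapses to the paper's three steps. Had the $\eta$-law genuinely failed, your sketched invariance claim would have been the crux of the whole lemma and would have required a full proof---it is in fact provable by the passive-threading argument you outline, in the style of Lemma~\ref{lemma:recover-theta-k}---but as things stand it is dead weight, and the closing appeal to determinism (Lemma~\ref{lemma:eval-det}) is likewise harmless but not needed.
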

\begin{proof}

By (seq): 
$\config{\theta \mathcal{::} \pi_L(\theta_K)}{C_1;C_2}{[]}{\sigma}{\pi_R(\theta_K)}{n}{w}
\vdash
\config{\theta}{C_1}{[C_2]}{\sigma}{\theta_K}{n+1}{w}
$.

By Lemma \ref{lemma:add-k-final} (and the fact that we can change $n$):
$\config{\theta}{C_1}{[C_2]}{\sigma}{\theta_K}{n+1}{w}
\vdash^{*}
\config{\theta'}{\modownarrow}{[C_2]}{\sigma'}{\theta_K}{(n + 1) + n'}{w'}$.

By (pop),
$\config{\theta'}{\modownarrow}{[C_2]}{\sigma'}{\theta_K}{(n + 1) + n'}{w'}
\vdash \config{\pi_L(\theta_K)}{C_2}{[]}{\sigma'}{\pi_R(\theta_K)}{(n + 1) + n' + 1}{w'}
$, as required.
\qed \end{proof}

\begin{lemma} \label{lemma:add-seq}
If $C_1 \neq C_1'; C_1''$ and 
$\config{\theta}{C_1}{[]}{\sigma}{\theta_K}{n}{w}
\vdash^{*}
\config{\theta'}{C'}{K}{\sigma'}{\theta'_K}{n + n'}{w'}$
and $\sigma' \neq \failure$
and $(C', K') \neq (\modownarrow, [])$,
then
$\config{\theta \mathcal{::} \pi_L(\theta_K)}{C_1;C_2}{[]}{\sigma}{\pi_R(\theta_K)}{n}{w}
\vdash^{*}
\config{\theta' }{C'}{K @ [C_2]}{\sigma'}{\theta'_K}{n + n' + 1}{w'}$.
\end{lemma}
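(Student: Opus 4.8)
The plan is to derive the conclusion directly, by prepending a single \emph{(seq)} step to a relocated copy of the reduction supplied by the hypothesis, rather than running a fresh induction. The one genuinely new ingredient is an algebraic identity on the entropy space: since $(\mathrel{::})$ is a \emph{surjective} pairing function, every $\vartheta \in \mathbb{S}$ can be written as $\vartheta_L \mathrel{::} \vartheta_R$, whence $\pi_L(\vartheta)\mathrel{::}\pi_R(\vartheta) = \pi_L(\vartheta_L \mathrel{::} \vartheta_R)\mathrel{::}\pi_R(\vartheta_L \mathrel{::} \vartheta_R) = \vartheta_L \mathrel{::} \vartheta_R = \vartheta$. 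In particular $\pi_L(\theta_K)\mathrel{::}\pi_R(\theta_K) = \theta_K$; this is exactly what makes the continuation entropy produced by (seq) agree with the one appearing in the hypothesis.

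First I would fire the (seq) rule on the left-hand configuration of the goal. Applied to $\config{\theta \mathrel{::} \pi_L(\theta_K)}{C_1;C_2}{[]}{\sigma}{\pi_R(\theta_K)}{n}{w}$ --- which is licensed because $C_1 \neq C_1';C_1''$ by assumption and because $\sigma \neq \failure$ (this is automatic: if $n'\geq 1$ the first step of the hypothesised reduction already requires it, and if $n'=0$ then $\sigma = \sigma' \neq \failure$) --- it yields $\config{\theta}{C_1}{[C_2]}{\sigma}{\pi_L(\theta_K)\mathrel{::}\pi_R(\theta_K)}{n{+}1}{w}$, using $\pi_L(\theta \mathrel{::} \pi_L(\theta_K)) = \theta$ and $\pi_R(\theta \mathrel{::} \pi_L(\theta_K)) = \pi_L(\theta_K)$. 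By the identity above, the continuation entropy of this configuration is precisely $\theta_K$.

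Next I would extend the hypothesised reduction by the single statement $C_2$. Applying Lemma~\ref{lemma:add-k} with $K'' = [C_2]$ to the hypothesis (its side conditions $\sigma' \neq \failure$ and $(C',K)\neq(\modownarrow,[])$ are exactly our assumptions), and using $[] @ [C_2] = [C_2]$, gives $\config{\theta}{C_1}{[C_2]}{\sigma}{\theta_K}{n}{w} \vdash^{*} \config{\theta'}{C'}{K @ [C_2]}{\sigma'}{\theta'_K}{n{+}n'}{w'}$. To splice this after the (seq) step, whose target sits at step count $n{+}1$, I would realign the step counter via Lemma~\ref{lemma:change-n-w} (taking $w''=1$ and $n''=1$), obtaining the same reduction shifted to run from $n{+}1$ to $n{+}n'{+}1$. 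Concatenating the (seq) step with this shifted reduction produces exactly the right-hand side of the goal.

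The argument is essentially bookkeeping once the entropy identity is in hand, so I do not anticipate a serious obstacle. The only point requiring genuine care is the continuation-entropy matching: recognising that $\pi_L(\theta_K)\mathrel{::}\pi_R(\theta_K)=\theta_K$ is needed, and that it follows from surjectivity of $(\mathrel{::})$. Without this observation the (seq) step and Lemma~\ref{lemma:add-k} would appear to disagree on the continuation entropy, and the two reductions could not be concatenated. A minor secondary subtlety is the degenerate case $n'=0$, where the hypothesised reduction is empty and the goal collapses to the bare (seq) step; this is already covered by the $\sigma\neq\failure$ discussion above.
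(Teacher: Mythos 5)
Your proof is correct and follows essentially the same route as the paper's: fire (seq) on the goal configuration, then relocate the hypothesised reduction by extending its continuation with $[C_2]$ and shifting its step count by one (the paper invokes Corollary~\ref{corr:change-n-w-k}, which is exactly your Lemma~\ref{lemma:add-k} combined with Lemma~\ref{lemma:change-n-w}). Your explicit treatment of the pairing identity $\pi_L(\theta_K) \mathrel{::} \pi_R(\theta_K) = \theta_K$ and of the $\sigma \neq \failure$ premise of (seq) merely spells out details the paper leaves implicit.
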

\begin{proof}
By (seq), we have 
$\config{\theta \mathcal{::} \pi_L(\theta_K)}{C_1;C_2}{[]}{\sigma}{\pi_R(\theta_K)}{n}{w} \vdash
\config{\theta}{C_1}{[C_2]}{\sigma}{\theta_K}{n+1}{w}$. Then, by
Corollary~\ref{corr:change-n-w-k}, 
$\config{\theta}{C_1}{[C_2]}{\sigma}{\theta_K}{n+1}{w}
\vdash^{*}
\config{\theta'}{C'}{K@[C_2]}{\sigma'}{\theta'_K}{n + n'+1}{w'}$,
as required.
\qed \end{proof}

\subsubsection{Splitting a sequence evaluation}
We now show that if a sequence $C_1;C_2$ of statements evaluates under entropy $\theta$ to a proper
state, then $C_1$ in itself must evaluate under $\pi_L(\theta)$, and that if the evaluation of
$C_1;C_2$ results in an error, then $C_1$ cannot diverge. These properties will be needed to show
compositionality of the semantics.

To prove the first of the above properties, we first prove that if a configuration with an empty continuation reduces completely, then the 
continuation entropy $\theta_K$ in the final configuration will be identical to the original one (intermediate
steps may extend $\theta_{K}$, but all sub-entropies added to $\theta_K$ will subsequently be removed).
In the following lemma, we write $|K|$ for the length of list $K$.

\begin{lemma} \label{lemma:recover-theta-k}
If $\config{\theta}{C}{K}{\sigma}{\hat{\theta_K}}{n}{w} \vdash^{*}
\config{\theta'}{\modownarrow}{[]}{\sigma'}{\theta_K'}{n + n'}{w'}$
and $\sigma' \neq \failure$
and $\pi_R^{|K|}(\hat{\theta_K}) = \theta_K$,
then $\theta_K' = \theta_K$.
\end{lemma}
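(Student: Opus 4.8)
The plan is to prove a stronger invariant by induction on the length $n'$ of the reduction chain. For a configuration $\kappa$ with continuation component $K_\kappa$ and continuation-entropy component $\theta_{K,\kappa}$, consider the predicate
$$
I(\kappa) \ :\Longleftrightarrow\ \pi_R^{|K_\kappa|}\bigl(\theta_{K,\kappa}\bigr) = \theta_K ,
$$
where $\theta_K$ denotes the fixed target entropy of the statement. First I would note that the initial configuration satisfies $I$: its continuation is $K$ and its continuation-entropy is $\hat{\theta_K}$, so $I$ reduces to exactly the hypothesis $\pi_R^{|K|}(\hat{\theta_K}) = \theta_K$. The conclusion then follows once $I$ is shown to be preserved along the chain, because the final configuration has empty continuation $[]$, and hence $I$ reads $\pi_R^{0}(\theta_K') = \theta_K' = \theta_K$.

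The heart of the proof is checking that a single step $\kappa \vdash \kappa'$ preserves $I$. All rules except (seq), (pop) and (condition-false) leave both the continuation and the continuation-entropy untouched, so $I$ transfers verbatim. For (seq) the continuation grows to $C_2 \mathrel{::} K_\kappa$ and the continuation-entropy to $\pi_R(\theta) \mathrel{::} \theta_{K,\kappa}$; using the pairing law $\pi_R(\theta_L \mathrel{::} \theta_R) = \theta_R$ we obtain $\pi_R^{|K_\kappa|+1}(\pi_R(\theta) \mathrel{::} \theta_{K,\kappa}) = \pi_R^{|K_\kappa|}(\theta_{K,\kappa})$, which equals $\theta_K$ by the induction hypothesis. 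Dually, (pop) shrinks the continuation $C \mathrel{::} K'$ to $K'$ and replaces $\theta_{K,\kappa}$ by $\pi_R(\theta_{K,\kappa})$, so $\pi_R^{|K'|}(\pi_R(\theta_{K,\kappa})) = \pi_R^{|K'|+1}(\theta_{K,\kappa}) = \pi_R^{|C \mathrel{::} K'|}(\theta_{K,\kappa}) = \theta_K$. In effect, $I$ records that the (seq)/(pop) discipline behaves like a stack: every sub-entropy reserved by a (seq) step is recovered, in the correct order, by the matching (pop).

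The remaining rule (condition-false) is the one that would genuinely break $I$: it discards the continuation, setting $K = []$ while leaving $\theta_{K,\kappa}$ unchanged, so afterwards $I$ would demand $\theta_{K,\kappa} = \theta_K$, which need not hold. Handling this case correctly is the main obstacle, and it is exactly where the hypothesis $\sigma' \neq \failure$ is used. Since every reduction rule carries the side condition $\sigma \neq \failure$, and (condition-false) is the only rule that can produce a $\failure$ state, any configuration in state $\failure$ is irreducible; therefore a chain that ends in a proper state $\sigma' \neq \failure$ never visits $\failure$ and, in particular, never applies (condition-false). With that rule excluded, the case analysis above accounts for every step actually taken, so $I$ holds throughout the chain. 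Instantiating $I$ at the final configuration then gives $\theta_K' = \theta_K$, as required.
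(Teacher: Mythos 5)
Your proof is correct and takes essentially the same route as the paper's: an induction along the reduction chain maintaining the invariant $\pi_R^{|K|}(\hat{\theta_K}) = \theta_K$, with the pairing law handling the (seq) and (pop) cases and the hypothesis $\sigma' \neq \failure$ ruling out (condition-false), since a configuration in state $\failure$ is irreducible and so could never continue to the given proper final state. The only difference is presentational: the paper runs the induction on $n'$ with a case split on the first rule of the chain, whereas you phrase it as step-wise preservation of an explicit invariant, which amounts to the same argument.
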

\begin{proof}
By induction on $n'$:
\begin{itemize}
\item Base case: $n'=0$: then obviously $|K| = 0$ and $\hat{\theta_K} = \theta_K$,
so the result follows trivially.
\item Induction step: if $n' > 0$, then
$\config{\theta}{C}{K}{\sigma}{\hat{\theta_K}}{n}{w} \vdash
\config{\theta''}{C''}{K'}{\sigma''}{\theta_K''}{n+1}{w''} \vdash^{*}
\config{\theta'}{\modownarrow}{[]}{\sigma'}{\theta_K'}{n + n'}{w'}$.

Now we need to split on the first rule in this derivation chain.

If the first transition was derived with (seq), then $|K'| = |K| + 1$
and $\theta_K'' = \pi_R(\theta) \mathrel{::} \hat{\theta_K}$.
We have $\pi_R^{|K'|}(\theta_K'') = \pi_R^{|K| + 1}(\pi_R(\theta) \mathrel{::} \hat{\theta_K})
= \pi_R^{|K|}(\pi_R(\pi_R(\theta) \mathrel{::} \hat{\theta_K})) = \pi_R^{|K|}(\hat{\theta_K}) = \theta_K$,
so by induction hypothesis, $\theta_K' = \theta_K$.

If the first transition was derived with (pop), then  $|K'| = |K| - 1$
and $\theta_K'' = \pi_R(\hat{\theta_K}) $.
Thus, $\pi_R^{|K'|}(\theta_K'') = \pi_R^{|K|-1}(\pi_R(\hat{\theta_K})) =  \pi_R^{|K|}(\hat{\theta_K}) = \hat{\theta_K}$,
so by induction hypothesis, $\theta_K' = \theta_K$.

Otherwise, we have $K' = K$ (note that $\sigma' \neq \failure$ implies $\sigma'' \neq \failure$) and $\theta_K'' = \hat{\theta_K}$, so $\pi_R^{|K'|}(\theta_K'') = \theta_K$.
By induction hypothesis, $\theta_K' = \theta_K$.%, and so $\theta_K' = \theta_K$.
\end{itemize}
\qed \end{proof}

\begin{corollary} \label{corr:theta-k-unchanged}
If $\config{\theta}{C}{[]}{\sigma}{\theta_K}{n}{w} \vdash^{*}
\config{\theta'}{\modownarrow}{[]}{\sigma'}{\theta_K'}{n + n'}{w'}$
and $\sigma' \neq \failure$,
then $\theta_K' = \theta_K$.
\end{corollary}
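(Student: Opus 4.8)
The plan is to obtain Corollary~\ref{corr:theta-k-unchanged} as an immediate special case of Lemma~\ref{lemma:recover-theta-k}. The lemma is stated for an arbitrary initial continuation $K$ with a continuation entropy $\hat{\theta_K}$ satisfying the side condition $\pi_R^{|K|}(\hat{\theta_K}) = \theta_K$, and it concludes that the final continuation entropy equals $\theta_K$. The corollary is precisely the instance in which the reduction begins with the empty continuation.

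Concretely, I would instantiate Lemma~\ref{lemma:recover-theta-k} by taking $K = []$ and $\hat{\theta_K} = \theta_K$. With this choice the reduction hypothesis of the lemma,
\[
\config{\theta}{C}{[]}{\sigma}{\theta_K}{n}{w} \vdash^{*}
\config{\theta'}{\modownarrow}{[]}{\sigma'}{\theta_K'}{n + n'}{w'},
\]
together with $\sigma' \neq \failure$, coincides exactly with the hypotheses of the corollary. It then remains only to check the lemma's side condition $\pi_R^{|K|}(\hat{\theta_K}) = \theta_K$. Since $K = []$ has length $|K| = 0$, the map $\pi_R^{|K|} = \pi_R^{0}$ is the identity, so $\pi_R^{0}(\theta_K) = \theta_K$ holds trivially. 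The conclusion $\theta_K' = \theta_K$ is then delivered directly by the lemma.

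There is essentially no obstacle here, since all the real work—the induction on the number of reduction steps tracking how (seq) extends and (pop) contracts the saved continuation entropy—has already been carried out in the proof of Lemma~\ref{lemma:recover-theta-k}. The only point worth making explicit is that the side condition degenerates to a tautology when $K$ is empty, which is why the corollary needs no separate argument.

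\begin{proof}[of Corollary~\ref{corr:theta-k-unchanged}]
Apply Lemma~\ref{lemma:recover-theta-k} with $K = []$ and $\hat{\theta_K} = \theta_K$. Since $|K| = 0$, the map $\pi_R^{|K|}$ is the identity, so the side condition $\pi_R^{|K|}(\hat{\theta_K}) = \theta_K$ holds. The remaining hypotheses coincide with those of the corollary, and the lemma yields $\theta_K' = \theta_K$.
\qed
\end{proof}
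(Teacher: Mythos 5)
Your proof is correct and is exactly the paper's intended argument: the corollary is stated as an immediate instance of Lemma~\ref{lemma:recover-theta-k} with $K = []$, where the side condition $\pi_R^{|K|}(\hat{\theta_K}) = \theta_K$ degenerates to the identity. No further work is needed.
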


We now prove that if $C_1;C_2$ successfully evaluates with entropy $\theta$, then
$C_1$ also successfully evaluates with entropy $\pi_L(\theta)$.

\begin{lemma}[Interpolation for Continuations] \label{lemma:interpolation-cont}
If  $\config{\theta}{C}{K_1 @ K_2}{\sigma}{\theta_K}{n}{w}
\vdash^{*}
\config{\theta'}{\modownarrow}{[]}{\sigma'}{\theta_K'}{n + n'}{w'}$
and $\sigma' \neq \failure$,
then
$\config{\theta}{C}{K_1}{\sigma}{\theta_K}{n}{w}
\vdash^{*}
\config{\theta''}{\modownarrow}{[]}{\sigma''}{\theta_K''}{n + n''}{w''}$,
where $\sigma'' \neq \failure$.
\end{lemma}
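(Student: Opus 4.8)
The plan is to induct on the number $n'$ of reduction steps in the hypothesised terminating run $\config{\theta}{C}{K_1 @ K_2}{\sigma}{\theta_K}{n}{w} \vdash^{*} \config{\theta'}{\modownarrow}{[]}{\sigma'}{\theta_K'}{n + n'}{w'}$, treating both $K_1$ and $K_2$ as universally quantified so that the induction hypothesis can be reapplied with a re-split continuation. The guiding intuition is that a successful run with continuation $K_1 @ K_2$ must fully consume the active statement $C$ together with $K_1$ before the suffix $K_2$ is ever inspected; the run started from $K_1$ alone is then the same computation, truncated at the moment $K_2$ would first have been touched. Note that every suffix of the given chain ends in the same final configuration, so the endpoint condition $\sigma' \neq \failure$ is inherited by each sub-run to which the hypothesis is applied, and no strengthening of the statement is required.

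For the base case $n' = 0$ the initial and final configurations coincide, forcing $C = \modownarrow$ and $K_1 @ K_2 = []$, hence $K_1 = []$; then $\config{\theta}{C}{K_1}{\sigma}{\theta_K}{n}{w}$ is already of the required terminal form with $\sigma = \sigma' \neq \failure$, and we take zero steps. For the inductive step I would case on the rule applied in the first reduction, which is unique by determinism (Lemma~\ref{lemma:eval-det}). The two essential cases are (seq) and (pop). If the first step is (seq), then $C = C_1;C_2$ and the continuation becomes $(C_2 \mathrel{::} K_1) @ K_2$; applying the induction hypothesis with $C_2 \mathrel{::} K_1$ playing the role of $K_1$ yields a terminating run of $\config{\pi_L(\theta)}{C_1}{C_2 \mathrel{::} K_1}{\sigma}{\pi_R(\theta) \mathrel{::} \theta_K}{n+1}{w}$, and prepending the single (seq) step taken with continuation $K_1$ produces the desired run. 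If the first step is (pop), then $C = \modownarrow$ and either $K_1 = C' \mathrel{::} K_1'$ is non-empty---in which case the induction hypothesis applied to the tail continuation $K_1'$ together with one (pop) step from $K_1$ finishes the argument---or $K_1 = []$, in which case $\config{\theta}{\modownarrow}{[]}{\sigma}{\theta_K}{n}{w}$ is already terminal and $\sigma = \sigma' \neq \failure$, so zero steps suffice. The (final) case reduces to this same subcase, as it too forces $K_1 = []$.

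All remaining rules---(assign), (draw), (condition-true), (score), (if-true), (if-false), (while-true), (while-false) and (diverge)---leave the continuation untouched and have premises that do not refer to it, so the identical step fires with continuation $K_1$, after which the induction hypothesis applies to the one-step-shorter run. The (condition-false) case is vacuous: it drives the state to $\failure$, from which no rule reduces, contradicting the assumption $\sigma' \neq \failure$ (again using determinism to rule out any alternative first step).

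I expect the main obstacle to be purely bookkeeping: correctly re-splitting the continuation as $K_1$ shrinks under (pop) and grows under (seq) while keeping the fixed suffix $K_2$ outside the induction, and confirming in each case that the very rule used for $K_1 @ K_2$ is available for $K_1$. This last point holds precisely because every reduction rule's applicability and effect, apart from how it pushes onto or pops from the top of the continuation stack, is insensitive to what lies below the active portion of the stack, so the untouched suffix $K_2$ can be discarded without disturbing the steps that operate on $C$ and $K_1$.
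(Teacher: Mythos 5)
Your proof is correct and follows essentially the same route as the paper's: induction on the step count $n'$ with the continuation split universally quantified, case analysis on the first reduction rule (unique by determinism, Lemma~\ref{lemma:eval-det}), re-applying the induction hypothesis with $C_2 \mathrel{::} K_1$ after (seq) and with the tail of $K_1$ after (pop), and noting that all remaining rules fire identically once the suffix $K_2$ is dropped (your explicit dismissal of (condition-false) is actually slightly more careful than the paper's catch-all ``otherwise'' case). One small slip: in the (pop) subcase with $K_1 = []$, your justification ``$\sigma = \sigma'$'' is wrong in general, since the longer run goes on to execute the statements in $K_2$ and may well change the state; what you actually need is only that $\sigma \neq \failure$, which follows directly from the premise of the (pop) rule that fired, so the zero-step conclusion stands.
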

\begin{proof}
By induction on $n'$.
\begin{itemize}
\item Base case: $n' = 0$: in this case, $C = \modownarrow$ and $K_1 = K_2 = []$, so the result follows trivially.
\item Induction step: suppose
$\config{\theta}{C}{K_1 @ K_2}{\sigma}{\theta_K}{n}{w}
\vdash
\config{\hat{\theta}}{\hat{C}}{\hat{K}}{\hat{\sigma}}{\hat{\theta_K}}{n+1}{\hat{w}}
\vdash^{*}
\config{\theta'}{\modownarrow}{[]}{\sigma'}{\theta_K'}{n + n'}{w'}$.

If $\config{\theta}{C}{K_1 @ K_2}{\sigma}{\theta_K}{n}{w}
\vdash
\config{\hat{\theta}}{\hat{C}}{\hat{K}}{\hat{\sigma}}{\hat{\theta_K}}{n+1}{\hat{w}}$
was derived with (seq), then $C = C_1; C_2$, $C_1 \neq C_1';C_1''$, 
$\hat{K} = C_2 \mathrel{::} K_1 @ K_2$, $\hat{\theta} = \pi_L(\theta)$, $\hat{w} = w$
and $\hat{\theta_K} = \pi_R(\theta) \mathrel{::} \theta_K$. By (seq), we have
$\config{\theta}{C_1;C_2}{K_1}{\sigma}{\theta_K}{n}{w}
\vdash
\config{ \pi_L(\theta)}{C_1}{C_2 \mathrel{::} K_1}{\sigma}{\pi_R(\theta) \mathrel{::} \theta_K}{n+1}{w}$.
By induction hypothesis, 
$\config{ \pi_L(\theta)}{C_1}{C_2 \mathrel{::} K_1}{\sigma}{\pi_R(\theta) \mathrel{::} \theta_K}{n+1}{w}
\vdash^{*}
\config{\theta''}{\modownarrow}{[]}{\sigma''}{\theta_K''}{n + n''}{w''}$ and $\sigma'' \neq \failure$.
Hence, 
$\config{\theta}{C_1;C_2}{K_1}{\sigma}{\theta_K}{n}{w}
\vdash^{*}
\config{\theta''}{\modownarrow}{[]}{\sigma''}{\theta_K''}{n + n''}{w''}$,
as required.

If $\config{\theta}{C}{K_1 @ K_2}{\sigma}{\theta_K}{n}{w}
\vdash
\config{\hat{\theta}}{\hat{C}}{\hat{K}}{\hat{\sigma}}{\hat{\theta_K}}{n+1}{\hat{w}}$
was derived with (pop), then $C = \modownarrow$, $K_1 @ K_2 = \hat{C}  \mathrel{::}\hat{K}$, 
$\hat{w} = w$, $\hat{\theta} = \pi_L(\theta_K)$
and $\hat{\theta_K} = \pi_R(\theta_K) $.
\begin{itemize}
\item If $K_1 \neq []$, then $K_1  = \hat{C}  \mathrel{::}\hat{K_1}$ and $\hat{K} = \hat{K_1} @ K_2$
and we have $\config{\theta}{\modownarrow}{\hat{C}  \mathrel{::}\hat{K_1}}{\sigma}{\theta_K}{n}{w}
\vdash \config{ \pi_L(\theta_K)}{\hat{C} }{\hat{K_1}}{\sigma}{\pi_R(\theta_K)}{n+1}{w}$.
By induction hypothesis,
$\config{ \pi_L(\theta_K)}{\hat{C} }{\hat{K_1}}{\sigma}{\pi_R(\theta_K)}{n+1}{w}
\vdash^{*}
\config{\theta''}{\modownarrow}{[]}{\sigma''}{\theta_K''}{n + n''}{w''}$ and $\sigma'' \neq \failure$.
Hence, we have
$\config{\theta}{\modownarrow}{\hat{C}  \mathrel{::}\hat{K_1}}{\sigma}{\theta_K}{n}{w} \vdash^{*}
\config{\theta''}{\modownarrow}{[]}{\sigma''}{\theta_K''}{n + n''}{w''}$.

\item If $K_1 = []$, then trivially
$\config{\theta}{\modownarrow}{[]}{\sigma}{\theta_K}{n}{w} \vdash^{*}
\config{\theta}{\modownarrow}{[]}{\sigma}{\theta_K}{n}{w}$ in zero steps.

\end{itemize}

Otherwise, $\hat{K} = K_1 @ K_2$ and $\hat{\theta_K} = \theta_K$ and by inspection of the reduction rules, 
$\config{\theta}{C}{K_1}{\sigma}{\theta_K}{n}{w}
\vdash
\config{\hat{\theta}}{\hat{C}}{K_1}{\hat{\sigma}}{\theta_K}{n+1}{\hat{w}}$. Hence,
by induction hypothesis,
$\config{\theta}{C}{K_1}{\sigma}{\theta_K}{n}{w}
\vdash
\config{\hat{\theta}}{\hat{C}}{K_1}{\hat{\sigma}}{\theta_K}{n+1}{\hat{w}}
\vdash^{*}
\config{\theta''}{\modownarrow}{[]}{\sigma''}{\theta_K''}{n + n''}{w''}$ and $\sigma'' \neq \failure$, as required.
\end{itemize}
\qed \end{proof}

\begin{lemma}[Interpolation] \label{lemma:interpolation}
If  $C_1 \neq C_1'; C_1''$ and $\config{\theta}{C_1;C_2}{[]}{\sigma}{\theta_K}{n}{w}
\vdash^{*}
\config{\theta'}{\modownarrow}{[]}{\sigma'}{\theta_K}{n + n'}{w'}$
and $\sigma' \neq \failure$,
then
$\config{\pi_L(\theta)}{C_1}{[]}{\sigma}{\pi_R(\theta) \mathrel{::} \theta_K}{n}{w}
\vdash^{*}
\config{\theta''}{\modownarrow}{[]}{\sigma''}{\theta_K}{n + n''}{w''}$,
where $\sigma'' \neq \failure$.
\end{lemma}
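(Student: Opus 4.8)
The plan is to reduce this lemma directly to Lemma~\ref{lemma:interpolation-cont} after peeling off the single leading \textbf{(seq)} step that starts any reduction of $C_1;C_2$. First I would note that $\sigma \neq \failure$: if $\sigma = \failure$ then no rule applies to the initial configuration, so it can only ``reach'' $\modownarrow$ in zero steps, which is impossible because the first statement is $C_1;C_2 \neq \modownarrow$ and, moreover, the target state $\sigma'$ is required to be $\neq \failure$. Since $\sigma \neq \failure$ and $C_1 \neq C_1';C_1''$, the only rule applicable to $\config{\theta}{C_1;C_2}{[]}{\sigma}{\theta_K}{n}{w}$ is \textbf{(seq)}, which forces the unique first step
\[
\config{\theta}{C_1;C_2}{[]}{\sigma}{\theta_K}{n}{w}
\ \vdash \
\config{\pi_L(\theta)}{C_1}{[C_2]}{\sigma}{\pi_R(\theta) \mathrel{::} \theta_K}{n{+}1}{w}.
\]
By determinism (Lemma~\ref{lemma:eval-det}) the assumed reduction to $\config{\theta'}{\modownarrow}{[]}{\sigma'}{\theta_K}{n+n'}{w'}$ must factor through this configuration, so
\[
\config{\pi_L(\theta)}{C_1}{[C_2]}{\sigma}{\pi_R(\theta) \mathrel{::} \theta_K}{n{+}1}{w}
\ \vdash^{*} \
\config{\theta'}{\modownarrow}{[]}{\sigma'}{\theta_K}{n+n'}{w'}.
\]

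Next I would write $[C_2] = [] \mathbin{@} [C_2]$ and apply Lemma~\ref{lemma:interpolation-cont} with $C := C_1$, $K_1 := []$, $K_2 := [C_2]$, continuation entropy $\pi_R(\theta) \mathrel{::} \theta_K$, and initial step $n{+}1$. This yields a reduction
\[
\config{\pi_L(\theta)}{C_1}{[]}{\sigma}{\pi_R(\theta) \mathrel{::} \theta_K}{n{+}1}{w}
\ \vdash^{*} \
\config{\theta''}{\modownarrow}{[]}{\sigma''}{\theta_K''}{(n{+}1)+\tilde{n}}{w''}
\]
with $\sigma'' \neq \failure$. I would then normalise the step counter by invoking Lemma~\ref{lemma:change-n-w} with weight factor $1$ and offset $-1$ (admissible since $n{+}1 \geq 1$), shifting the whole reduction down by one step and producing exactly the claimed conclusion with $n'' := \tilde{n}$. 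The continuation entropy of the final configuration is pinned down by Corollary~\ref{corr:theta-k-unchanged}: because this reduction starts and ends with the empty continuation and terminates in a proper state, the continuation entropy is preserved, i.e.\ $\theta_K'' = \pi_R(\theta) \mathrel{::} \theta_K$.

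Since every ingredient is an already-established lemma, there is no substantive obstacle beyond careful bookkeeping of step counts and continuation entropies. The one place demanding attention is justifying that the leading step is forced to be \textbf{(seq)}; this is precisely where the hypotheses $C_1 \neq C_1';C_1''$ and $\sigma \neq \failure$ are indispensable, since the former guarantees a unique split of the sequence (hence a deterministic decomposition) and the latter guarantees that some rule fires at all. Everything after that step is pure transport of the reduction through Lemmas~\ref{lemma:interpolation-cont} and~\ref{lemma:change-n-w}, with Corollary~\ref{corr:theta-k-unchanged} accounting for the continuation entropy.
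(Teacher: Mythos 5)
Your proof is correct and follows essentially the same route as the paper's: peel off the forced initial (seq) step, apply Lemma~\ref{lemma:interpolation-cont} with $K_1 = []$ and $K_2 = [C_2]$, and use Corollary~\ref{corr:theta-k-unchanged} to recover the continuation entropy. Your additional bookkeeping (the explicit appeal to determinism and the step-count shift via Lemma~\ref{lemma:change-n-w} with offset $-1$) only makes explicit what the paper leaves implicit.
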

\begin{proof}
The first rule applied in the derivation of $\config{\theta}{C_1;C_2}{[]}{\sigma}{\theta_K}{n}{w}
\vdash^{*}
\config{\theta'}{\modownarrow}{[]}{\sigma'}{\theta_K}{n + n'}{w'}$ is (seq),
which gives 
 $\config{\theta}{C_1;C_2}{[]}{\sigma}{\theta_K}{n}{w}
\vdash
\config{\pi_L(\theta)}{C_1}{[C_2]}{\sigma}{\pi_R(\theta) \mathrel{::} \theta_K}{n+1}{w}$.
Hence, $\config{\pi_L(\theta)}{C_1}{[C_2]}{\sigma}{\pi_R(\theta) \mathrel{::} \theta_K}{n+1}{w}
\vdash^{*}
\config{\theta'}{\modownarrow}{[]}{\sigma'}{\theta_K}{n + n'}{w'}$. By applying Lemma~\ref{lemma:interpolation-cont}
with $K_1= []$ and Corollary~\ref{corr:theta-k-unchanged}, we get
$\config{\pi_L(\theta)}{C_1}{[]}{\sigma}{\pi_R(\theta) \mathrel{::} \theta_K}{n+1}{w}
\vdash^{*}
\config{\theta''}{\modownarrow}{[]}{\sigma''}{\pi_R(\theta) \mathrel{::} \theta_K}{n + n''}{w''}$, where $\sigma'' \neq \failure$,
as required.
\qed \end{proof}

Finally, we show that if the evaluation of $C_1;C_2$ with entropy $\theta$ yields an error, then
the evaluation of $C_1$ under $\pi_L(\theta)$ either terminates successfully or also results in an error
(depending on where the error in the evaluation of $C_1;C_2$ occurred)---at any rate, $C_1$ does not diverge.

\begin{lemma} \label{lemma:c1-c2-stuck}
If $C_1 \neq  C_1'; C_2'$ and $\config{\theta}{C_1;C_2}{[]}{\sigma}{\theta_K}{0}{1} \vdash^{*}
\config{\theta'}{C'}{K}{\sigma'}{\theta_K'}{n}{w} \nvdash$, then 
either $\config{\pi_L(\theta)}{C_1}{[]}{\sigma}{\pi_R(\theta) \mathrel{::}\theta_K}{0}{1}
\vdash^{*} \config{\theta''}{\modownarrow}{[]}{\sigma''}{\theta_K}{n'}{w'}$
or  $\config{\pi_L(\theta)}{C_1}{[]}{\sigma}{\pi_R(\theta) \mathrel{::}\theta_K}{0}{1}
\vdash^{*} \config{\theta''}{C_1''}{K''}{\sigma''}{\theta_K}{n'}{w'} \nvdash$.
\end{lemma}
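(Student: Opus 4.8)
The plan is to reduce the statement to a single impossibility: that $C_1$, run on its own, could diverge. Everything else is then a matter of transporting reductions between the lone run of $C_1$ and the run of $C_1;C_2$, using the invariance lemmas already established.

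First I would dispose of the degenerate case $\sigma = \failure$: then the initial configuration of $C_1;C_2$ is already irreducible, and the run of $C_1$ from $\config{\pi_L(\theta)}{C_1}{[]}{\failure}{\pi_R(\theta) \mathrel{::} \theta_K}{0}{1}$ is likewise immediately stuck, which is the second disjunct. Assuming $\sigma \neq \failure$, the only rule applicable to the initial sequence is (seq) (here is where the hypothesis $C_1 \neq C_1';C_1''$ is used), giving
\[
\config{\theta}{C_1;C_2}{[]}{\sigma}{\theta_K}{0}{1} \ \vdash\ \config{\pi_L(\theta)}{C_1}{[C_2]}{\sigma}{\pi_R(\theta) \mathrel{::} \theta_K}{1}{1}.
\]
Since reduction is deterministic (Lemma~\ref{lemma:eval-det}), there is a unique maximal chain out of the right-hand configuration, and by hypothesis it is \emph{finite}, ending in the irreducible configuration reached after $n$ steps.

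Next I would invoke the trichotomy built into the definition of $\mathbf{O}$: the maximal chain out of $\config{\pi_L(\theta)}{C_1}{[]}{\sigma}{\pi_R(\theta) \mathrel{::} \theta_K}{0}{1}$ either (i) reaches a configuration with statement $\modownarrow$, empty continuation and a proper state, or (ii) reaches a stuck configuration, or (iii) is infinite and meets neither. These are mutually exclusive precisely because a configuration with statement $\modownarrow$ and empty continuation loops on itself forever through (final) and hence can never become stuck. Case (i) is verbatim the first disjunct and case (ii) the second, so it suffices to refute case (iii). In case (iii) there is an infinite chain $\kappa_0 \vdash \kappa_1 \vdash \kappa_2 \vdash \cdots$ each of whose members is non-stuck (hence has state component $\neq \failure$, since $\failure$-configurations are irreducible) and is not of the shape $(\modownarrow,[])$. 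These are exactly the side conditions of Lemma~\ref{lemma:add-k}, so applying it to each finite prefix $\kappa_0 \vdash^{*} \kappa_j$ with $K'' = [C_2]$ appends $[C_2]$ at the bottom of every continuation and yields, for every $j$, a reduction of length $j$ out of $\config{\pi_L(\theta)}{C_1}{[C_2]}{\sigma}{\pi_R(\theta) \mathrel{::} \theta_K}{0}{1}$ (using $[] @ [C_2] = [C_2]$). As $j$ ranges over $\mathbb{N}$ the maximal chain out of this configuration is infinite; by Lemma~\ref{lemma:change-n-w} the step counter may be started at $1$ without changing which rules fire, which is exactly the configuration produced by (seq). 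Thus $C_1;C_2$ would admit an infinite reduction, contradicting the finiteness noted above. Hence (iii) is impossible, one of the disjuncts holds, and the precise entropy, step-count and weight entries they record are read off from Lemma~\ref{lemma:change-n-w} and Corollary~\ref{corr:theta-k-unchanged}.

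The hard part is not conceptual but the careful bookkeeping underpinning the refutation of (iii). One must argue that ``getting stuck'' is genuinely incompatible with ``reaching $(\modownarrow,[])$'' — this is where the otherwise-useless (final) rule earns its keep — and, crucially, that the premises $\emph{state} \neq \failure$ and $(C,K)\neq(\modownarrow,[])$ required by Lemma~\ref{lemma:add-k} hold at \emph{every} configuration of the diverging run, so that the \emph{entire} infinite chain, and not just one step, transports to $C_1;C_2$. I would therefore spell out that any reduction reaching $\failure$ or $(\modownarrow,[])$ immediately falls into case (i) or (ii), which is what licenses the blanket side conditions on the tail used in case (iii).
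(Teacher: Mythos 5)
Your proposal is correct and takes essentially the same route as the paper's own proof: both reduce the claim to refuting divergence of $C_1$ run alone, then transport that (hypothetical) diverging run into the run of $C_1;C_2$ via the continuation-extension machinery (Lemma~\ref{lemma:add-k}, Lemma~\ref{lemma:change-n-w}, i.e.\ Corollary~\ref{corr:change-n-w-k}) and determinism of reduction, contradicting the hypothesis that $C_1;C_2$ is stuck after $n$ steps. The only cosmetic difference is how the contradiction is extracted --- the paper transports exactly the $(n{-}1)$-step prefix and uses Lemma~\ref{lemma:add-k-step} to show the supposedly stuck configuration still reduces, whereas you transport every finite prefix to obtain an infinite chain contradicting maximality of the length-$n$ run.
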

\begin{proof}
The statement in the lemma is equivalent to saying that it is \emph{not} the case that
for all $k$,
$\config{\pi_L(\theta)}{C_1}{[]}{\sigma}{\pi_R(\theta) \mathrel{::} \theta_K}{0}{1}
\vdash^{*} \config{\theta''}{C_1''}{K''}{\sigma''}{\pi_R(\theta) \mathrel{::} \theta_K}{k}{w'}$ 
%TODO: Make it explicit that the components all depend on k?
with
$(C_1'', K'') \neq (\modownarrow, [])$. Suppose for contradiction that the negation of this statement holds.
By (seq), we have 
$\config{\theta}{C_1;C_2}{[]}{\sigma}{\theta_K}{0}{1} \vdash
\config{\pi_L(\theta)}{C_1}{[C_2]}{\sigma}{\pi_R(\theta) \mathrel{::} \theta_K}{1}{1}$,
so $\config{\pi_L(\theta)}{C_1}{[C_2]}{\sigma}{\pi_R(\theta) \mathrel{::} \theta_K}{1}{1}
\vdash^{*}  \config{\theta'}{C'}{K}{\sigma'}{\theta_K'}{n}{w}$. 

Take $k = n - 1$. Then we have 
$\config{\pi_L(\theta)}{C_1}{[]}{\sigma}{\pi_R(\theta) \mathrel{::} \theta_K}{0}{1}
\vdash^{*} \config{\theta''}{C_1''}{K''}{\sigma''}{\pi_R(\theta) \mathrel{::} \theta_K}{n-1}{w'}
\vdash  \config{\hat{\theta}}{\hat{C_1}}{\hat{K}}{\hat{\sigma}}{\hat{\theta_K}}{n}{\hat{w}}$,
where $\sigma'' \neq \failure$ (otherwise the middle configuration would not reduce)
and $(C_1'', K'') \neq (\modownarrow, [])$.
By Corollary~\ref{corr:change-n-w-k}, we have
$\config{\pi_L(\theta)}{C_1}{[C_2]}{\sigma}{\pi_R(\theta) \mathrel{::} \theta_K}{1}{1}
\vdash^{*} \config{\theta''}{C_1''}{K''@[C_2]}{\sigma''}{\pi_R(\theta) \mathrel{::} \theta_K}{n}{w'}$.
Hence, 
$\config{\theta}{C_1;C_2}{[]}{\sigma}{\theta_K}{0}{1} \vdash^{*}
\config{\theta''}{C_1''}{K''@[C_2]}{\sigma''}{\pi_R(\theta) \mathrel{::} \theta_K}{n}{w'}$
and $\config{\theta''}{C_1''}{K''@[C_2]}{\sigma''}{\pi_R(\theta) \mathrel{::} \theta_K}{n}{w'}
= \config{\theta'}{C'}{K}{\sigma'}{\theta_K'}{n}{w}$, since reduction is deterministic. By Lemma~\ref{lemma:add-k-step},
this implies that $\config{\theta'}{C'}{K}{\sigma'}{\theta_K'}{n}{w}$ reduces, contradicting the assumption.
\qed \end{proof}
\begin{corollary} \label{corr:c1-c2-stuck}
If $C_1 \neq  C_1'; C_2'$ and $\config{\theta}{C_1;C_2}{[]}{\sigma}{\theta_K}{0}{1} \vdash^{*}
\config{\theta'}{C'}{K}{\sigma'}{\theta_K'}{n}{w} \nvdash$, then 
$\mathbf{O}_{C_1}^\sigma(\pi_L(\theta)) \neq \diverge$.
\end{corollary}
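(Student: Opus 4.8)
The plan is to read the corollary off Lemma~\ref{lemma:c1-c2-stuck} almost immediately, using only the definition of $\mathbf{O}$. Recall that $\mathbf{O}_{C_1}^\sigma(\pi_L(\theta))$ takes the value $\diverge$ only in the third (``otherwise'') clause of its defining case distinction, i.e.\ precisely when the reduction sequence starting from $\config{\pi_L(\theta)}{C_1}{[]}{\sigma}{\theta_K}{0}{1}$ neither reaches a fully evaluated configuration of the form $\config{\cdot}{\modownarrow}{[]}{\tau}{\cdot}{\cdot}{\cdot}$ with $\tau \neq \failure$, nor reaches an irreducible configuration. Hence it suffices to exhibit one of these two terminating situations for the run of $C_1$ under $\pi_L(\theta)$.

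First I would instantiate Lemma~\ref{lemma:c1-c2-stuck} with the hypothesis of the corollary. This yields exactly two cases. In the first case $\config{\pi_L(\theta)}{C_1}{[]}{\sigma}{\pi_R(\theta) \mathrel{::}\theta_K}{0}{1} \vdash^{*} \config{\theta''}{\modownarrow}{[]}{\sigma''}{\theta_K}{n'}{w'}$. If $\sigma'' \neq \failure$ this matches the first (``terminates'') clause of $\mathbf{O}$, so $\mathbf{O}_{C_1}^\sigma(\pi_L(\theta)) = \sigma'' \neq \diverge$. If instead $\sigma'' = \failure$, the reached configuration is irreducible, since the (final) rule---the only rule applicable to a $(\modownarrow, [])$ configuration---requires a non-failure state; thus it matches the second (``stuck'') clause and $\mathbf{O}_{C_1}^\sigma(\pi_L(\theta)) = \failure \neq \diverge$. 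In the second case the run of $C_1$ reaches an irreducible configuration directly, again giving $\mathbf{O}_{C_1}^\sigma(\pi_L(\theta)) = \failure \neq \diverge$. Either way the value is not $\diverge$, which is the claim.

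The one point requiring care---and the only thing resembling an obstacle---is the mismatch between the continuation entropy $\pi_R(\theta) \mathrel{::} \theta_K$ supplied by Lemma~\ref{lemma:c1-c2-stuck} and the arbitrary continuation entropy appearing in the definition of $\mathbf{O}$. This is harmless: as noted just after the reduction rules, when the initial continuation is empty the continuation entropy can never be copied into the entropy of the current statement, so whether the run of $C_1$ terminates, gets stuck, or diverges is independent of the choice of $\theta_K$. Consequently the specific continuation entropy provided by the lemma may be used without loss of generality, and the case analysis above settles the corollary.
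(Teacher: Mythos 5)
Your proof is correct and matches the paper's intended argument: the corollary is stated there without a separate proof precisely because it follows by unfolding the definition of $\mathbf{O}$ against the two disjuncts of Lemma~\ref{lemma:c1-c2-stuck}, exactly as you do (including the observation that a $\failure$ state makes a $(\modownarrow, [])$ configuration irreducible, so that sub-case lands in the $\failure$ clause rather than the $\diverge$ clause). Your closing remark that the continuation entropy supplied by the lemma is irrelevant when the initial continuation is empty is also the right justification for the apparent mismatch with the entropy $\theta_K$ appearing in the definition of $\mathbf{O}$.
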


\subsection{Properties of the semantic functions}
\label{app:proofs-sem-fun}
 
\subsubsection{Compositionality of sequencing.} 
A desirable and useful property of the semantic functions is compositionality with respect to sequencing, i.e., the ability to define $\mathbf{O}_{C_1;C_2}^{\sigma}$ in terms of $\mathbf{O}_{C_1}^{\sigma_1}$ and $\mathbf{O}_{C_2}^{\sigma_2}$ for some states $\sigma_1$ and $\sigma_2$.
Similarly for $\mathbf{SC}_{C_1;C_2}^{\sigma}$.
We can easily express the semantics of $C_1; C_2$ in terms of the semantics of $C_1$ and $C_2$ if $C_1$ is not a sequence of statements.
(Recall the explanation of the rule (seq).)
\begin{proposition}[Simple sequencing for final states] \label{lemma:o-sc-seq}
If $C_1 \neq  C_1'; C_2'$, then:
$$
\mathbf{O}_{C_1;C_2}^{\sigma}(\theta) =  \mathbf{O}_{C_2}^{\tau}(\pi_R(\theta))
\quad \mbox{and} \quad
\mathbf{SC}_{C_1;C_2}^{\sigma}(\theta) = \mathbf{SC}_{C_1}^\sigma (\pi_L(\theta)) \cdot \mathbf{SC}_{C_2}^{\tau}(\pi_R(\theta))
$$
where $\tau$ stands for the state $\mathbf{O}_{C_1}^\sigma(\pi_L(\theta))$.
\end{proposition}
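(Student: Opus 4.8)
The plan is to perform a case analysis on the behaviour of $C_1$ started in $\sigma$ under the left half of the entropy, i.e., on the value $\tau = \mathbf{O}_{C_1}^\sigma(\pi_L(\theta))$, which is either a proper state, the failure state $\failure$, or the divergence marker $\diverge$. In each case I determine the behaviour of the whole composition $C_1;C_2$ under $\theta$ and read off both identities. Throughout I exploit that $C_1 \neq C_1';C_1''$, so that the very first reduction step of $\config{\theta}{C_1;C_2}{[]}{\sigma}{\theta_K}{0}{1}$ is forced to be (seq), sending $C_1$ to current entropy $\pi_L(\theta)$ and reserving $\pi_R(\theta)$ (as the left half of the new continuation entropy) for $C_2$. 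Since the initial continuation is empty, the continuation entropy is a dummy, so I may freely instantiate it as $\pi_R(\theta) \mathrel{::} \theta_K$ when invoking the context lemmas.

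First, the proper case, $\tau \in \statespace$. Here $C_1$ terminates, so there is a minimal reduction of $C_1$ under $\pi_L(\theta)$ ending in $\modownarrow$ with empty continuation, final state $\tau$ and some weight $w_1$. Applying Lemma~\ref{lemma:context-simple} (with the entropy bookkeeping above) yields a reduction $\config{\theta}{C_1;C_2}{[]}{\sigma}{\theta_K}{0}{1} \vdash^{*} \config{\pi_R(\theta)}{C_2}{[]}{\tau}{\theta_K}{n_1{+}2}{w_1}$. From this configuration onward the evaluation of $C_1;C_2$ coincides, by determinism (Lemma~\ref{lemma:eval-det}) and invariance under step-count and weight rescaling (Lemma~\ref{lemma:change-n-w}), with the evaluation of $C_2$ from $\tau$ under $\pi_R(\theta)$, with every weight multiplied by $w_1$ and every step counter shifted by $n_1{+}2$. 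Consequently $\mathbf{O}_{C_1;C_2}^\sigma(\theta) = \mathbf{O}_{C_2}^\tau(\pi_R(\theta))$ directly, and $\mathbf{SC}_{C_1;C_2}^\sigma(\theta,\, n_1{+}2{+}m) = w_1 \cdot \mathbf{SC}_{C_2}^\tau(\pi_R(\theta), m)$ for every $m$. Since $w_1$ is exactly $\mathbf{SC}_{C_1}^\sigma(\pi_L(\theta))$ (the weight is constant at $w_1$ after step $n_1$, as only (score) changes weights), and since by antitonicity (Lemma~\ref{lemma:w-decreasing}) the finitely many earlier approximants are no smaller, taking the infimum over all step counts factors the constant $w_1$ through and gives $\mathbf{SC}_{C_1;C_2}^\sigma(\theta) = \mathbf{SC}_{C_1}^\sigma(\pi_L(\theta)) \cdot \mathbf{SC}_{C_2}^\tau(\pi_R(\theta))$.

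For the failure case, $\tau = \failure$, the score of $C_1$ is $0$ and $\mathbf{SC}_{C_2}^{\failure}(\pi_R(\theta)) = 0$ by definition, so the right-hand sides reduce to $\failure$ and $0$. Lemma~\ref{lemma:seq-add-k-fail} shows that $C_1;C_2$ under $\theta$ also reduces to an irreducible $\failure$-configuration, whence $\mathbf{O}_{C_1;C_2}^\sigma(\theta) = \failure$ and $\mathbf{SC}_{C_1;C_2}^\sigma(\theta) = 0$, matching both identities. For the divergence case, $\tau = \diverge$, the definitions give $\mathbf{O}_{C_2}^{\diverge}(\pi_R(\theta)) = \diverge$ and $\mathbf{SC}_{C_2}^{\diverge}(\pi_R(\theta)) = 1$, so I must show $\mathbf{O}_{C_1;C_2}^\sigma(\theta) = \diverge$ and $\mathbf{SC}_{C_1;C_2}^\sigma(\theta) = \mathbf{SC}_{C_1}^\sigma(\pi_L(\theta))$. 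Since $C_1$ diverges, its reduction is an infinite chain of non-failing, non-final configurations; Lemma~\ref{lemma:add-seq} maps each of these to a configuration of $C_1;C_2$ (with $C_2$ appended to the continuation, step shifted by one, and identical weight), and determinism forces this mirrored chain to be the unique reduction of $C_1;C_2$. Corollary~\ref{corr:c1-c2-stuck} rules out getting stuck, so $C_1;C_2$ diverges and $\mathbf{O}_{C_1;C_2}^\sigma(\theta) = \diverge$. Moreover the weight correspondence gives $\mathbf{SC}_{C_1;C_2}^\sigma(\theta, n{+}1) = \mathbf{SC}_{C_1}^\sigma(\pi_L(\theta), n)$ for all $n$, and passing to the limit yields the score identity.

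The main obstacle I anticipate is the score bookkeeping in the proper case: commuting the constant factor $w_1 = \mathbf{SC}_{C_1}^\sigma(\pi_L(\theta))$ out of the infimum defining $\mathbf{SC}_{C_1;C_2}^\sigma(\theta)$, which requires simultaneously handling the step-count shift introduced by the two extra administrative reductions, the antitonicity of the weight approximants, and the fact that the dummy continuation entropy may be chosen freely without altering $\mathbf{O}$ and $\mathbf{SC}$. Getting the entropy instantiation in Lemma~\ref{lemma:context-simple} and Lemma~\ref{lemma:add-seq} to line up with the decomposition $\theta = \pi_L(\theta) \mathrel{::} \pi_R(\theta)$ is the most delicate piece of routine-but-error-prone manipulation.
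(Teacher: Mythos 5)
Your overall strategy coincides with the paper's: a case analysis on $\tau = \mathbf{O}_{C_1}^\sigma(\pi_L(\theta))$, with Lemma~\ref{lemma:context-simple} driving the terminating case, Lemma~\ref{lemma:add-seq} (plus determinism) driving the diverging case, and tail-of-the-limit manipulations for the scores. Your terminating and diverging branches are sound and track the paper's Lemmas~\ref{lemma:o-seq} and~\ref{lemma:sc-seq} essentially step for step, including the entropy bookkeeping $\theta = \pi_L(\theta) \mathrel{::} \pi_R(\theta)$ and the factoring of the constant weight $w_1$ out of the limit.

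The gap is in your failure branch. You read $\tau = \failure$ as ``the run of $C_1$ reaches a configuration whose \emph{state} is $\failure$'' and invoke Lemma~\ref{lemma:seq-add-k-fail}, whose premise requires exactly such a reduction, ending in $\config{\theta'}{\modownarrow}{[]}{\failure}{\theta_K'}{n}{w}$. But by the definition of $\mathbf{O}$, the value $\failure$ is returned whenever the evaluation gets \emph{stuck}, and a configuration can be irreducible while its state is still proper: $\mathtt{skip}$ has no reduction rule at all, and $\mathtt{score}(E)$ has no applicable rule when $\sigma(E) \notin (0,1]$. In that sub-case Lemma~\ref{lemma:seq-add-k-fail} does not apply, and your argument gives no reason why $C_1;C_2$ should also get stuck rather than, say, continue with $C_2$. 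The repair is the one the paper uses in the first case of Lemma~\ref{lemma:o-seq}: transport the stuck reduction into the context $[C_2]$ via Lemma~\ref{lemma:add-k} together with Lemma~\ref{lemma:change-n-w}, and note that the stuck statement cannot be $\modownarrow$ (otherwise (pop) or (final) would fire), so by inspection of the rules appending $[C_2]$ to the continuation cannot unstick the configuration. This yields $\mathbf{O}_{C_1;C_2}^\sigma(\theta) = \failure$, and since no configuration is reachable at step counts beyond the stuck one, also $\mathbf{SC}_{C_1;C_2}^\sigma(\theta) = 0$, which matches the right-hand sides because $\mathbf{SC}_{C_2}^{\failure}(\pi_R(\theta)) = 0$ by definition. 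With this sub-case added, your proof is complete and equivalent to the paper's.
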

%\begin{proof}
%The proof is based on straightforward reasoning with reduction rules. 
%The key observation
%is that when $C_1; C_2$ (where $C_1 \neq  C_1'; C_2'$) is evaluated with entropy $\theta$, the (seq) rule assigns the sub-entropy $\pi_L(\theta)$ to $C_1$ and $\pi_R(\theta)$ to $C_2$.
%A full proof can be found in Appendix~\ref{app:proofs-sem-fun}.
% \qed \end{proof}

Below, we prove Proposition~\ref{lemma:o-sc-seq}. To simplify presentation, we split it into two separate lemmas,
one concerning final states and one concerning scores.

\begin{lemma}[Simple sequencing for final states] \label{lemma:o-seq}
If $C_1 \neq  C_1'; C_2'$, then $\mathbf{O}_{C_1;C_2}^{\sigma}(\theta) =  \mathbf{O}_{C_2}^{\mathbf{O}_{C_1}^\sigma(\pi_L(\theta)) }(\pi_R(\theta))$
\end{lemma}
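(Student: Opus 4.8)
The plan is to unfold the single mandatory (seq)-step at the head of the run and then perform a three-way case analysis on the fate of $C_1$ executed under the left entropy $\pi_L(\theta)$, mirroring the three clauses of the definition of $\mathbf{O}$. First I would note that, because the initial continuation of $\mathbf{O}_{C_1;C_2}^{\sigma}$ is empty, the initial continuation entropy $\theta_K$ is irrelevant and may be fixed arbitrarily. Applying (seq) to $\config{\theta}{C_1;C_2}{[]}{\sigma}{\theta_K}{0}{1}$ (legal since $C_1 \neq C_1';C_1''$) produces $\config{\pi_L(\theta)}{C_1}{[C_2]}{\sigma}{\pi_R(\theta) \mathrel{::} \theta_K}{1}{1}$, so that $C_1$ runs with $\pi_L(\theta)$ while $\pi_R(\theta)$ is reserved on top of the continuation entropy for $C_2$. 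By determinism (Lemma~\ref{lemma:eval-det}) the behaviour of the whole run is then completely fixed by what $C_1$ does under $\pi_L(\theta)$.

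\emph{Termination.} Suppose $\mathbf{O}_{C_1}^{\sigma}(\pi_L(\theta)) = \tau$ with $\tau \neq \failure,\diverge$. Then the minimal run of $C_1$ from $\sigma$ under $\pi_L(\theta)$ reaches $\config{\theta'}{\modownarrow}{[]}{\tau}{\cdot}{n'}{w'}$ (Lemma~\ref{lemma:min-full-red}). By Lemma~\ref{lemma:add-k-final} this lifts to the continuation $[C_2]$, giving $\config{\pi_L(\theta)}{C_1}{[C_2]}{\sigma}{\pi_R(\theta)\mathrel{::}\theta_K}{1}{1} \vdash^{*} \config{\theta'}{\modownarrow}{[C_2]}{\tau}{\pi_R(\theta)\mathrel{::}\theta_K}{1+n'}{w'}$. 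Applying (pop) and the projection laws $\pi_L(\pi_R(\theta)\mathrel{::}\theta_K) = \pi_R(\theta)$, $\pi_R(\pi_R(\theta)\mathrel{::}\theta_K) = \theta_K$ lands in $\config{\pi_R(\theta)}{C_2}{[]}{\tau}{\theta_K}{2+n'}{w'}$, which is exactly the start configuration of $\mathbf{O}_{C_2}^{\tau}(\pi_R(\theta))$ up to step count and weight. Since final states are insensitive to the step counter and the weight (Lemma~\ref{lemma:change-n-w}) and reduction is deterministic, the final state of the combined run equals $\mathbf{O}_{C_2}^{\tau}(\pi_R(\theta))$, as required.

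\emph{Failure and divergence.} If $\mathbf{O}_{C_1}^{\sigma}(\pi_L(\theta)) = \failure$, then since $\mathbf{O}_{C_2}^{\failure}(\pi_R(\theta)) = \failure$ (the error state $\failure$ blocks every rule), I must show $\mathbf{O}_{C_1;C_2}^{\sigma}(\theta) = \failure$. Here I split on how $C_1$ gets stuck: if it reaches $\config{\cdot}{\modownarrow}{[]}{\failure}{\cdot}{\cdot}{\cdot}$ through a failed \texttt{observe}, I use Lemma~\ref{lemma:seq-add-k-fail} to propagate $\failure$ through the sequence; otherwise $C_1$ reaches a stuck configuration with a proper state and $(C',K') \neq (\modownarrow,[])$ (for instance a \texttt{score} whose argument leaves $(0,1]$), and I use Lemma~\ref{lemma:add-k} with $K'' = [C_2]$ to reach $\config{\cdot}{C'}{K' @ [C_2]}{\cdot}{\cdot}{\cdot}{\cdot}$, observing that appending to the continuation cannot re-enable a rule for a head statement $C' \neq \modownarrow$, so the configuration is still stuck. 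Either way $\mathbf{O}_{C_1;C_2}^{\sigma}(\theta) = \failure$. If instead $\mathbf{O}_{C_1}^{\sigma}(\pi_L(\theta)) = \diverge$, then for every $n$ the run of $C_1$ reaches a reducible configuration with proper state and $(C',K') \neq (\modownarrow,[])$; Lemma~\ref{lemma:add-k} lifts each such finite prefix to $C_1;C_2$, whose continuation always retains $[C_2]$ as a suffix and is therefore never empty. Hence the combined run reduces forever without ever reaching $(\modownarrow,[])$ or a stuck configuration, so $\mathbf{O}_{C_1;C_2}^{\sigma}(\theta) = \diverge = \mathbf{O}_{C_2}^{\diverge}(\pi_R(\theta))$.

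The main obstacle will be the failure case, and specifically the bookkeeping that shows \emph{every} way $C_1$ can get stuck---a failed hard constraint versus an out-of-range \texttt{score}, which leaves a proper but irreducible state---propagates correctly once the suffix $[C_2]$ is glued onto the continuation; this is where the continuation-invariance lemmas (Lemmas~\ref{lemma:add-k} and~\ref{lemma:seq-add-k-fail}) together with the observation that reducibility of a non-$\modownarrow$ head statement is independent of the continuation do the real work. The remaining care is purely entropy bookkeeping---threading $\pi_L(\theta)$ and $\pi_R(\theta)$ through (seq) and (pop) via the projection laws---and invoking Lemma~\ref{lemma:change-n-w} so that the mismatch in step counts and weights between the combined run and the isolated run of $C_2$ does not disturb the final state.
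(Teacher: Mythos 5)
Your proof is correct, and its overall skeleton is the same as the paper's: unfold the initial (seq) step and do a case analysis on the outcome of $C_1$ under $\pi_L(\theta)$, gluing the run of $C_1$ into the combined run via the continuation-invariance lemmas (Lemmas~\ref{lemma:add-k}, \ref{lemma:add-k-final}, \ref{lemma:seq-add-k-fail}) together with Lemma~\ref{lemma:change-n-w} and determinism. There are, however, two places where your route genuinely differs, both to your advantage in terms of economy. First, for the case $\mathbf{O}_{C_1}^\sigma(\pi_L(\theta)) = \,\diverge$ the paper argues by contradiction: if the combined run terminated, the Interpolation Lemma (Lemma~\ref{lemma:interpolation}, itself resting on the nontrivial induction of Lemma~\ref{lemma:interpolation-cont}) would force $C_1$ to terminate; if the combined run got stuck, Corollary~\ref{corr:c1-c2-stuck} would force $\mathbf{O}_{C_1}^\sigma(\pi_L(\theta)) \neq \,\diverge$. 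You instead argue directly: since the run of $C_1$ diverges, each of its finite prefixes ends in a reducible, non-final configuration with a proper state, so Lemma~\ref{lemma:add-k} lifts every prefix into the combined run; by determinism (Lemma~\ref{lemma:eval-det}) these lifted configurations \emph{are} the combined run, and since they always carry $[C_2]$ as a suffix of the continuation and remain reducible, the combined run never reaches $(\modownarrow,[])$ and never gets stuck, hence diverges. This bypasses the Interpolation machinery entirely; the price is a small step-indexed induction (to see that the lifted prefixes exhaust the combined run) which you gesture at via determinism but do not spell out — that is bookkeeping, not a gap, since you do verify the side conditions of Lemma~\ref{lemma:add-k} (proper state, configuration not of the form $(\modownarrow,[])$) from divergence of $C_1$. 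Second, where the paper splits the case ``$C_1$ terminates properly'' into three further subcases according to whether $C_2$ fails, diverges, or terminates, you compress all three into a single appeal to determinism plus the step-count/weight shift of Lemma~\ref{lemma:change-n-w}; this is sound because that lemma gives the correspondence between shifted and unshifted runs in both directions and because reducibility of a configuration is unaffected by its step counter and weight (the paper makes the same observation inside its own case 4). In short: same decomposition and same lemma toolkit for the terminating and failing cases (your inline use of Lemma~\ref{lemma:min-full-red}, Lemma~\ref{lemma:add-k-final} and (pop) essentially re-derives the paper's Lemma~\ref{lemma:context-simple}), but a more elementary, direct treatment of divergence where the paper reaches for heavier auxiliary results.
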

\begin{proof}
If $\sigma = \diverge$, then $LHS = RHS = \diverge$ directly by definition.

If $\sigma = \failure$, the result also follows trivially, so let us suppose $\sigma \neq \failure$ and $\sigma \neq \diverge$. We need to consider several cases:

\begin{itemize}
\item If $\mathbf{O}_{C_1}^\sigma(\pi_L(\theta)) = \failure$, then
$ \config{\pi_L(\theta)}{C_1}{[]}{\sigma}{\pi_R(\theta) \mathrel{::}\theta_K}{0}{1} \vdash^{*}
\config{\theta'}{C_1'}{K}{\tau}{\theta_K'}{n }{w} \nvdash $.
By (seq), we have 
$\config{\theta}{C_1;C_2}{[]}{\sigma}{\theta_K}{0}{1}  \vdash
\config{\pi_L(\theta)}{C_1}{[C_2]}{\sigma}{\pi_R(\theta) \mathrel{::}\theta_K}{1}{1}$.

If $\tau \neq \failure$, then by Lemmas~\ref{lemma:add-k} and \ref{lemma:change-n-w},
$\config{\pi_L(\theta)}{C_1}{[C_2]}{\sigma}{\pi_R(\theta) \mathrel{::}\theta_K}{1}{1} \vdash^{*}
\config{\theta'}{C_1'}{K@[C_2]}{\tau}{\theta_K'}{n+1}{w} \nvdash $.
Moreover, $\config{\theta'}{C_1'}{K}{\tau}{\theta_K'}{n }{w} \nvdash $ implies
$C_1' \neq \modownarrow$ (because otherwise the configuration would reduce by (final) or (pop)), so by inspection,
$\config{\theta'}{C_1'}{K@[C_2]}{\tau}{\theta_K'}{n +1}{w} \nvdash $.
Thus, $\mathbf{O}_{C_1; C_2}^\sigma(\theta) = \failure$.

If $\tau = \failure$, then $C_1'  = \modownarrow$, $K = []$ and by Lemmas~\ref{lemma:add-k-fail} and \ref{lemma:change-n-w} we have
$\config{\pi_L(\theta)}{C_1}{[C_2]}{\sigma}{\pi_R(\theta) \mathrel{::}\theta_K}{1}{1} \vdash^{*}
\config{\theta'}{\modownarrow}{[]}{\failure}{\theta_K'}{n+1 }{w} \nvdash $. Hence, 
$\mathbf{O}_{C_1; C_2}^\sigma(\theta) = \failure$.

\item If $\mathbf{O}_{C_1}^\sigma(\pi_L(\theta)) = \diverge$, then $RHS = \diverge$.
Moreover, we have neither $\config{\pi_L(\theta)}{C_1}{[]}{\sigma}{\pi_R(\theta) \mathrel{::} \theta_K}{0}{1} \vdash^{*}
\config{\theta'}{\modownarrow}{[]}{\tau}{\theta_K}{n }{w}$ nor
$\config{\pi_L(\theta)}{C_1}{[]}{\sigma}{\pi_R(\theta) \mathrel{::}\theta_K}{0}{1} \vdash^{*}
\config{\theta'}{C'}{K}{\tau}{\theta_K'}{n }{w} \nvdash$.

Now, suppose for contradiction that $LHS \neq \diverge $.
Then we have either $\config{\theta}{C_1;C_2}{[]}{\sigma}{\theta_K}{0}{1} \vdash^{*} \config{\theta'}{\modownarrow}{[]}{\tau}{\theta_K}{n }{w}$ (with
$\tau \neq \failure$) or
$\config{\theta}{C_1;C_2}{[]}{\sigma}{\theta_K}{0}{1} \vdash^{*} \config{\theta'}{C'}{K}{\tau}{\theta_K'}{n }{w} \nvdash$.

First, suppose that $\config{\theta}{C_1;C_2}{[]}{\sigma}{\theta_K}{0}{1} \vdash^{*} \config{\theta'}{\modownarrow}{[]}{\tau}{\theta_K}{n }{w}$, 
where $\tau \neq \failure$. 
By Lemma~\ref{lemma:interpolation}, this implies that 
$\config{\pi_L(\theta)}{C_1}{[]}{\sigma}{\pi_R(\theta) \mathrel{::} \theta_K}{0}{1} \vdash^{*} 
\config{\theta''}{\modownarrow}{[]}{\tau'}{ \pi_R(\theta) \mathrel{::}\theta_K}{n' }{w'}$ 
%for some $\tau' \neq \failure$,
and so $\mathbf{O}_{C_1}^\sigma(\pi_L(\theta)) = \tau' \neq \diverge$, contradicting the assumption.

If $\config{\theta}{C_1;C_2}{[]}{\sigma}{\theta_K}{0}{1} \vdash^{*} \config{\theta'}{C'}{K}{\tau}{\theta_K'}{n }{w} \nvdash$,
then by Corollary~\ref{corr:c1-c2-stuck}, we get a contradiction.

\item If $\mathbf{O}_{C_1}^\sigma(\pi_L(\theta)) \notin \{\failure, \diverge \}$, but 
$ \mathbf{O}_{C_2}^{\mathbf{O}_{C_1}^\sigma(\pi_L(\theta)) }(\pi_R(\theta)) = \failure$,
we have 
$\config{\pi_L(\theta)}{C_1}{[]}{\sigma}{\pi_R(\theta) \mathrel{::} \theta_K}{0}{1} \vdash^{*}_{\mathtt{min}}
\config{\theta'}{\modownarrow}{[]}{\tau'}{ \pi_R(\theta) \mathrel{::}\theta_K}{n}{w}$ for
some $\tau' \neq \failure$, where $\mathbf{O}_{C_1}^\sigma(\pi_L(\theta)) = \tau'$,
and 
$\config{\pi_R(\theta)}{C_2}{[]}{\tau'}{\theta_K}{0}{1} \vdash^{*}
\config{\theta''}{C''}{K'}{\tau}{\theta_K'}{n' }{w'} \nvdash$.
By Lemma~\ref{lemma:context-simple}, 
$\config{\theta}{C_1;C_2}{[]}{\sigma}{\theta_K}{0}{1} \vdash^{*} 
\config{\pi_R(\theta)}{C_2}{[]}{\tau'}{\theta_K}{n+2}{w}$.
%By Corollary~\ref{corr:change-n-w-k} WRONG , 
By Lemma~\ref{lemma:change-n-w}, 
$\config{\pi_R(\theta)}{C_2}{[]}{\tau'}{\theta_K}{n+2}{w} \vdash^{*}
\config{\theta''}{C''}{K'}{\tau}{\theta_K'}{n + 2 + n' }{ww'}$, where
the last configuration clearly does not reduce, as changing the last two
components cannot make any rule apply. Hence,
$\mathbf{O}_{C_1;C_2}^{\sigma}(\theta) = \failure$, as required.

\item If $\mathbf{O}_{C_1}^\sigma(\pi_L(\theta)) \notin \{\failure, \diverge \}$, but 
$ \mathbf{O}_{C_2}^{\mathbf{O}_{C_1}^\sigma(\pi_L(\theta)) }(\pi_R(\theta)) = \diverge$,
we have again
$\config{\pi_L(\theta)}{C_1}{[]}{\sigma}{\pi_R(\theta) \mathrel{::} \theta_K}{0}{1} \vdash^{*}_{\mathtt{min}}
\config{\theta'}{\modownarrow}{[]}{\tau'}{ \pi_R(\theta) \mathrel{::}\theta_K}{n}{w}$ for
some $\tau' \neq \failure$.
Again, by Lemma~\ref{lemma:context-simple}, we have
$\config{\theta}{C_1;C_2}{[]}{\sigma}{\theta_K}{0}{1} \vdash^{*} 
\config{\pi_R(\theta)}{C_2}{[]}{\tau'}{\theta_K}{n+2}{w}$, but we have
neither $\config{\pi_R(\theta)}{C_2}{[]}{\tau'}{\theta_K}{0 }{1} \vdash^{*}
\config{\theta''}{\modownarrow}{[]}{\tau''}{\theta_K}{n'}{w'}$ nor
$\config{\pi_R(\theta)}{C_2}{[]}{\tau'}{\theta_K}{0}{1} \vdash^{*}
\config{\theta''}{C''}{K'}{\tau}{\theta_K'}{n' }{w'} \nvdash$.

Suppose for contradiction that $LHS \neq \diverge $.
Then we have either $\config{\theta}{C_1;C_2}{[]}{\sigma}{\theta_K}{0}{1} \vdash^{*} 
\config{\theta'}{\modownarrow}{[]}{\tau}{\theta_K}{\hat{n}}{\hat{w}}$ (with
$\tau \neq \failure$) or
$\config{\theta}{C_1;C_2}{[]}{\sigma}{\theta_K}{0}{1} \vdash^{*} \config{\theta'}{C'}{K}{\tau}{\theta_K'}{\hat{n}}{\hat{w}} \nvdash$.

In the former case, the determinicity of reduction implies
$\config{\pi_R(\theta)}{C_2}{[]}{\tau'}{\theta_K}{n+2}{w}
\vdash^{*} \config{\theta'}{\modownarrow}{[]}{\tau}{\theta_K}{\hat{n}}{\hat{w}}$, 
so by %Corollary~\ref{corr:change-n-w-k}, 
Lemma~\ref{lemma:change-n-w}, 
$\config{\pi_R(\theta)}{C_2}{[]}{\tau'}{\theta_K}{0}{1}
\vdash^{*} \config{\theta'}{\modownarrow}{[]}{\tau}{\theta_K}{\hat{n}-n-2}{\hat{w} / w}$, 
which contradicts the assumption.
%TODO (LP) state this as a formal lemma?
%OK, we don't need minimal reduction here

Similarly, in the latter case, $\config{\pi_R(\theta)}{C_2}{[]}{\tau'}{\theta_K}{n+2}{w}\vdash^{*} 
\config{\theta'}{C'}{K}{\tau}{\theta_K'}{\hat{n}}{\hat{w}} \nvdash$, which violates the assumption.

Hence, $\mathbf{O}_{C_1;C_2}^{\sigma}(\theta) = \diverge $.

\item \sloppy Finally, suppose that $\mathbf{O}_{C_1}^\sigma(\pi_L(\theta)) \notin \{\failure, \diverge \}$ and 
$\mathbf{O}_{C_2}^{\mathbf{O}_{C_1}^\sigma(\pi_L(\theta)) }(\pi_R(\theta))\notin \{\failure, \diverge \}$.
Then we have again $\config{\pi_L(\theta)}{C_1}{[]}{\sigma}{\pi_R(\theta) \mathrel{::} \theta_K}{0}{1} \vdash^{*}_{\mathtt{min}} 
\config{\theta'}{\modownarrow}{[]}{\tau'}{ \pi_R(\theta) \mathrel{::}\theta_K}{n' }{w'}$ for
some $\tau' \neq \failure$
and $\config{\theta}{C_1;C_2}{[]}{\sigma}{\theta_K}{0}{1} \vdash^{*} 
\config{\pi_R(\theta)}{C_2}{[]}{\tau'}{\theta_K}{n' }{w'}$ by Lemma~\ref{lemma:context-simple}.
Since $\mathbf{O}_{C_1}^\sigma(\pi_L(\theta)) = \tau'$ and 
$\mathbf{O}_{C_2}^{\mathbf{O}_{C_1}^\sigma(\pi_L(\theta)) }(\pi_R(\theta)) = \tau'' \neq \failure$,
 we have $\config{\pi_R(\theta)}{C_2}{[]}{\tau'}{\theta_K}{n' }{w'} \vdash^{*}
\config{\theta''}{\modownarrow}{[]}{\tau''}{\theta_K}{n''}{w''}$.
This also implies that

$\config{\theta}{C_1;C_2}{[]}{\sigma}{\theta_K}{0}{1} \vdash^{*} 
\config{\theta''}{\modownarrow}{[]}{\tau''}{\theta_K}{n''}{w''}$,
and so 
$\mathbf{O}_{C_1;C_2}^{\sigma}(\theta) = \tau'' = \mathbf{O}_{C_2}^{\mathbf{O}_{C_1}^\sigma(\pi_L(\theta)) }(\pi_R(\theta))$.

\end{itemize}
\qed \end{proof}

\begin{lemma}[Simple sequencing for scores] \label{lemma:sc-seq}
If $C_1 \neq  C_1'; C_2'$ 
then 
$\mathbf{SC}_{C_1;C_2}^{\sigma}(\theta) = \mathbf{SC}_{C_1}^\sigma (\pi_L(\theta)) \cdot \mathbf{SC}_{C_2}^{\mathbf{O}_{C_1}^\sigma(\pi_L(\theta)) }(\pi_R(\theta))$
\end{lemma}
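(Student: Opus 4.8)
The plan is to follow the same five-way case analysis used in the proof of Lemma~\ref{lemma:o-seq}, but to track the accumulated weight of a run rather than its final state, exploiting the fact that (by inspection of the rules) the only rule that changes the weight is (score), which \emph{multiplies} it. This multiplicativity is exactly what Lemma~\ref{lemma:change-n-w} packages: rescaling the initial weight of a reduction by a factor $w' > 0$ rescales the weight in every later configuration by the same $w'$. First I would dispose of the two improper initial states. For $\sigma = \failure$ we have $\mathbf{SC}_{C_1}^{\failure}(\pi_L(\theta)) = 0$, so the right-hand side is $0$, while $\mathbf{SC}_{C_1;C_2}^{\failure}(\theta) = 0$ by definition. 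For $\sigma = \diverge$, both $\mathbf{SC}_{C_1}^{\diverge}(\pi_L(\theta)) = 1$ and $\mathbf{O}_{C_1}^{\diverge}(\pi_L(\theta)) = \diverge$, whence $\mathbf{SC}_{C_2}^{\diverge}(\pi_R(\theta)) = 1$ and both sides equal $1$. Thus I may assume $\sigma$ is proper and split on the value of $\tau := \mathbf{O}_{C_1}^{\sigma}(\pi_L(\theta))$.

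The terminating cases are the cleanest. Suppose first $\tau \notin \{\failure, \diverge\}$, so that $C_1$ reduces under $\pi_L(\theta)$ to $\modownarrow$ with empty continuation in some minimal number of steps, ending at weight $w' = \mathbf{SC}_{C_1}^{\sigma}(\pi_L(\theta))$ and state $\tau$. Lemma~\ref{lemma:context-simple} lifts this to a reduction of $C_1;C_2$ reaching a configuration whose current statement is $C_2$, whose state is $\tau$, whose entropy is $\pi_R(\theta)$, and whose weight is $w'$ rather than $1$. Applying Lemma~\ref{lemma:change-n-w} to this $C_2$-subcomputation shows that whenever $C_2$ under $\pi_R(\theta)$ (started at weight $1$) reaches weight $u$ after $m$ steps, the corresponding $C_1;C_2$ configuration carries weight $w' \cdot u$. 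If $C_2$ terminates this yields $\mathbf{SC}_{C_1;C_2}^{\sigma}(\theta) = w' \cdot \mathbf{SC}_{C_2}^{\tau}(\pi_R(\theta))$ directly; if $C_2$ reaches $\failure$, then $\mathbf{SC}_{C_2}^{\tau}(\pi_R(\theta)) = 0$ and $\mathbf{O}_{C_1;C_2}^{\sigma}(\theta) = \failure$ by Lemma~\ref{lemma:o-seq}, so both sides are $0$. The subcase $\tau = \failure$ is similar: $\mathbf{SC}_{C_2}^{\failure}(\pi_R(\theta)) = 0$ makes the right-hand side $0$, and Lemma~\ref{lemma:o-seq} gives $\mathbf{O}_{C_1;C_2}^{\sigma}(\theta) = \failure$, hence the left-hand side is $0$.

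The two diverging cases are where the work lies, since there both $\mathbf{SC}$'s are genuine limits (infima) of the step-indexed approximations $\mathbf{SC}(\cdot,n)$ rather than attained values, so I would argue at the level of these approximations. If $\tau = \diverge$, then $C_1$ under $\pi_L(\theta)$ never reaches $(\modownarrow,[])$ and never gets stuck; using the initial (seq) step together with Corollary~\ref{corr:change-n-w-k} (the continuation-appending variant of Lemma~\ref{lemma:add-seq}), every non-terminal configuration of the $C_1$-run at step $n$ with weight $w$ corresponds to a configuration of the $C_1;C_2$-run at step $n{+}1$ with the \emph{same} weight $w$ and continuation extended by $[C_2]$. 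Since $C_2$ is never entered, this gives $\mathbf{SC}_{C_1;C_2}^{\sigma}(\theta,n{+}1) = \mathbf{SC}_{C_1}^{\sigma}(\pi_L(\theta),n)$ for all $n$, and passing to the limit, using $\mathbf{SC}_{C_2}^{\diverge}(\pi_R(\theta)) = 1$, closes the case. If instead $\tau$ is proper but $C_2$ diverges under $\pi_R(\theta)$, I combine Lemma~\ref{lemma:context-simple} with Lemma~\ref{lemma:change-n-w} as above to obtain $\mathbf{SC}_{C_1;C_2}^{\sigma}(\theta,(n{+}n'{+}2){+}m) = w' \cdot \mathbf{SC}_{C_2}^{\tau}(\pi_R(\theta),m)$ for every $m$, and take $m \to \infty$. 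The main obstacle I anticipate is precisely this bookkeeping in the diverging subcases: pinning down the exact step-count offset between the $C_1;C_2$-run and the relevant subrun so that the approximations line up term-by-term, and checking that the constant weight factor $w'$ may be pulled through the infimum. The latter is immediate once the per-step identity is in hand (as $w' \geq 0$ is independent of $m$), but it genuinely relies on having that finite-step identity rather than merely the limiting values.
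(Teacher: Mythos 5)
Your proposal is correct and follows essentially the same route as the paper's proof: the same case split on $\mathbf{O}_{C_1}^{\sigma}(\pi_L(\theta))$, the same key lemmas (Lemma~\ref{lemma:context-simple}, Lemma~\ref{lemma:change-n-w}, and the (seq)-plus-continuation-appending argument of Lemma~\ref{lemma:add-seq}), the same step-indexed identities, and the same passage to the limit. The only cosmetic difference is that you dispatch the failure subcases by citing Lemma~\ref{lemma:o-seq} together with the observation that $\mathbf{O}=\failure$ forces all sufficiently late approximations $\mathbf{SC}(\cdot,n)$ to vanish, where the paper instead re-derives the stuck composite run directly (via Lemma~\ref{lemma:seq-add-k-fail} and the determinism/positive-weight argument); both are sound.
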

\begin{proof}
If $\sigma= \failure$ or $\sigma = \diverge$, the property holds trivially, so let us assume $\sigma \notin \{\failure, \diverge\}$.
We need to consider three cases:
\begin{itemize}
\item If $\mathbf{O}_{C_1}^\sigma(\pi_L(\theta)) = \sigma' \notin \{ \failure, \diverge \}$, then
$\config{\pi_L(\theta)}{C_1}{[]}{\sigma}{\pi_R(\theta) \mathrel{::} \theta_K}{0}{1} \vdash^{*}_{\mathtt{min}} 
\config{\theta'}{\modownarrow}{[]}{\sigma'}{\pi_R(\theta)\mathrel{::} \theta_K}{n}{w}$
and $\mathbf{SC}_{C_1}^\sigma (\pi_L(\theta)) = w$.

By Lemma~\ref{lemma:context-simple},
$\config{\theta}{C_1;C_2}{[]}{\sigma}{\theta_K}{0}{1} \vdash^{*}
\config{\pi_R(\theta)}{C_2}{[]}{\sigma'}{ \theta_K}{n+2}{w}$.

Now, fix a $k \geq 0$.
\begin{itemize}
\item \sloppy If $\config{\pi_R(\theta)}{C_2}{[]}{\sigma'}{ \theta_K}{0}{1} \vdash^{*}
\config{\theta''}{C_2'}{K}{\sigma''}{\theta'_K}{k}{w'}$,
then
$\mathbf{SC}_{C_2}^{\mathbf{O}_{C_1}^\sigma(\pi_L(\theta)) }(\pi_R(\theta), k) = w'$.
By Lemma~\ref{lemma:change-n-w}, 
$\config{\pi_R(\theta)}{C_2}{[]}{\sigma'}{ \theta_K}{n+2}{w} \vdash^{*}
\config{\theta''}{C_2'}{K}{\sigma''}{\theta'_K}{n+2 +k}{w w'}$,
which implies $\config{\theta}{C_1;C_2}{[]}{\sigma}{\theta_K}{0}{1} \vdash^{*}
\config{\theta''}{C_2'}{K}{\sigma''}{\theta'_K}{n+2 +k}{w w'}$,
and so $\mathbf{SC}_{C_1;C_2}^{\sigma}(\theta, n+2 +k) = ww' = 
\mathbf{SC}_{C_1}^\sigma (\pi_L(\theta)) \mathbf{SC}_{C_2}^{\mathbf{O}_{C_1}^\sigma(\pi_L(\theta)) }(\pi_R(\theta), k)$.

\item If there is no configuration $\config{\theta''}{C_2'}{K}{\sigma''}{\theta'_K}{k}{w'}$ such that
$\config{\pi_R(\theta)}{C_2}{[]}{\sigma'}{ \theta_K}{0}{1} \vdash^{*}
\config{\theta''}{C_2'}{K}{\sigma''}{\theta'_K}{k}{w'}$,
then 
$\mathbf{SC}_{C_2}^{\mathbf{O}_{C_1}^\sigma(\pi_L(\theta)) }(\pi_R(\theta), k) = 0$.
If we had 
$\config{\theta}{C_1;C_2}{[]}{\sigma}{\theta_K}{0}{1} \vdash^{*}
\config{\theta''}{C_2'}{K}{\sigma''}{\theta'_K}{n+2 +k}{w w'}$, then,
by determinacy of reduction, 
$\config{\pi_R(\theta)}{C_2}{[]}{\sigma'}{ \theta_K}{n+2}{w} \vdash^{*}
\config{\theta''}{C_2'}{K}{\sigma''}{\theta'_K}{n+2 +k}{w w'}$.
By Lemma~\ref{lemma:change-n-w} and Lemma~\ref{lemma:w-greater-0} (which ensures $w>0$), 
$\config{\pi_R(\theta)}{C_2}{[]}{\sigma'}{ \theta_K}{0}{1} \vdash^{*}
\config{\theta''}{C_2'}{K}{\sigma''}{\theta'_K}{k}{w'}$,
which contradicts the assumption.
Hence, there is no configuration $\config{\theta''}{C_2'}{K}{\sigma''}{\theta'_K}{n+2 +k}{w w'}$
such that $\config{\theta}{C_1;C_2}{[]}{\sigma}{\theta_K}{0}{1} \vdash^{*}
\config{\theta''}{C_2'}{K}{\sigma''}{\theta'_K}{n+2 +k}{w w'}$,
and so $\mathbf{SC}_{C_1;C_2}^{\sigma}(\theta, n+2 +k) =0$.
\end{itemize}

In either case,
$\mathbf{SC}_{C_1;C_2}^{\sigma}(\theta, n+2+k) = 
\mathbf{SC}_{C_1}^\sigma (\pi_L(\theta)) \cdot \mathbf{SC}_{C_2}^{\mathbf{O}_{C_1}^\sigma(\pi_L(\theta)) }(\pi_R(\theta), k)$ 
for all $k \geq 0$. Thus, we have
\begin{eqnarray*}
\mathbf{SC}_{C_1;C_2}^{\sigma}(\theta) &=& \lim_{n \rightarrow \infty} \mathbf{SC}_{C_1;C_2}^{\sigma}(\theta, n)\\
&=& \lim_{k \rightarrow \infty} \mathbf{SC}_{C_1;C_2}^{\sigma}(\theta, n+2+k)\\
&=&  \lim_{k \rightarrow \infty}
\mathbf{SC}_{C_1}^\sigma (\pi_L(\theta)) \cdot \mathbf{SC}_{C_2}^{\mathbf{O}_{C_1}^\sigma(\pi_L(\theta)) }(\pi_R(\theta), k)\\
&=&\mathbf{SC}_{C_1}^\sigma (\pi_L(\theta)) \lim_{k \rightarrow \infty}
\mathbf{SC}_{C_2}^{\mathbf{O}_{C_1}^\sigma(\pi_L(\theta)) }(\pi_R(\theta), k)\\
&=&\mathbf{SC}_{C_1}^\sigma (\pi_L(\theta))  \mathbf{SC}_{C_2}^{\mathbf{O}_{C_1}^\sigma(\pi_L(\theta)) }(\pi_R(\theta))
\end{eqnarray*}

\item If $\mathbf{O}_{C_1}^\sigma(\pi_L(\theta)) = \failure$, then 
$\mathbf{SC}_{C_2}^{\mathbf{O}_{C_1}^\sigma(\pi_L(\theta)) }(\pi_R(\theta)) = 0$, 
so $RHS = 0$.
Moreover, we have
$\config{\pi_L(\theta)}{C_1}{[]}{\sigma}{\pi_R(\theta) \mathrel{::} \theta_K}{0}{1} \vdash^{*}
\config{\theta'}{C'}{K}{\tau}{\theta_K'}{n }{w}\nvdash$. 
If $\tau = \failure$, then $C' = \failure$ and $K = []$ (as the last rule applied must have been (condition-false)),
so by Lemma~\ref{lemma:seq-add-k-fail}, $\config{\theta}{C_1;C_2}{[]}{\sigma}{\theta_K}{0}{1} \vdash^{*}
\config{\theta'}{\modownarrow}{[]}{\failure}{\theta_K'}{n+1}{w}$.
Hence, $\mathbf{SC}_{C_1;C_2}^{\sigma}(\theta, n') = 0$ for all $n' > n+1$, and so
$\mathbf{SC}_{C_1;C_2}^{\sigma}(\theta) = 0$.

\item If $\mathbf{O}_{C_1}^\sigma(\pi_L(\theta)) = \diverge$, then 
$RHS =  \mathbf{SC}_{C_1}^\sigma (\pi_L(\theta)) $ and for all $k$, we
have $\config{\pi_L(\theta)}{C_1}{[]}{\sigma}{\pi_R(\theta) \mathrel{::} \theta_K}{0}{1} \vdash^{*}
\config{\theta'}{C''_1}{K}{\sigma'}{\pi_R(\theta)\mathrel{::} \theta'_K}{k}{w}$,
where $(C''_1, K) \neq (\modownarrow, K)$ and $\sigma' \neq \failure$.
Fix $k \geq 0$. We have  $\mathbf{SC}_{C_1}^\sigma (\pi_L(\theta), k) = w$
and by Lemma~\ref{lemma:add-seq},
$\config{\theta}{C_1;C_2}{[]}{\sigma}{\theta_K}{0}{1} \vdash^{*}
\config{\theta'}{C''_1}{K @ [C_2]}{\sigma'}{\theta'_K}{k+1}{w}$, which implies
$\mathbf{SC}_{C_1;C_2}^\sigma (\theta, k+1) = w$. Hence,
$\mathbf{SC}_{C_1;C_2}^\sigma (\theta, k+1) =
\mathbf{SC}_{C_1}^\sigma (\pi_L(\theta), k) $.

Thus,
\begin{eqnarray*}
\mathbf{SC}_{C_1;C_2}^{\sigma}(\theta) &=& \lim_{n \rightarrow \infty} \mathbf{SC}_{C_1;C_2}^{\sigma}(\theta, n)\\
&=& \lim_{k \rightarrow \infty} \mathbf{SC}_{C_1;C_2}^{\sigma}(\theta, k+1)\\
&=&  \lim_{k \rightarrow \infty}
\mathbf{SC}_{C_1}^\sigma (\pi_L(\theta), k) \\
&=&\mathbf{SC}_{C_1}^\sigma (\pi_L(\theta)) 
\end{eqnarray*}
\noindent as required.
\end{itemize}
\qed \end{proof}

\begin{restate}{Proposition~\ref{lemma:o-sc-seq}}
\sloppy If $C_1 \neq  C_1'; C_2'$, then $\mathbf{O}_{C_1;C_2}^{\sigma}(\theta) =  \mathbf{O}_{C_2}^{\mathbf{O}_{C_1}^\sigma(\pi_L(\theta)) }(\pi_R(\theta))$
and $\mathbf{SC}_{C_1;C_2}^{\sigma}(\theta) = \mathbf{SC}_{C_1}^\sigma (\pi_L(\theta)) \cdot \mathbf{SC}_{C_2}^{\mathbf{O}_{C_1}^\sigma(\pi_L(\theta)) }(\pi_R(\theta))$
\end{restate}
\begin{proof}
This is a combination of Lemma~\ref{lemma:o-seq} and Lemma~\ref{lemma:sc-seq}.
\qed \end{proof}

Proposition~\ref{lemma:o-sc-seq} is not applicable when $C_1$ is not a sequence of statements,
as we cannot know what part of the entropy $\theta$ will be used in the evaluation of which expression without knowing the length of the statement list in $C_1$. 
However, the above result can be generalised using \emph{finite shuffling functions}, as defined by~\cite{DBLP:journals/pacmpl/WandCGC18}.

\begin{definition}[\cite{DBLP:journals/pacmpl/WandCGC18}]
\begin{itemize}
\item A \emph{path} is a function $[d_1, \dots, d_n] \colon \mathbb{S} -> \mathbb{S}$ parametrised by a list of directions
$d_1, \dots, d_n \in \{L, R\}$, such that  $[d_1, \dots, d_n](\theta) = (\pi_{d_1} \circ \dots \circ \pi_{d_n})(\theta)$.
\item A \emph{finite shuffling function} (FSF) is a function $\phi \colon \mathbb{S} -> \mathbb{S}$ such that either $\phi$ is a path
or $\phi(\theta) = \phi_1(\theta) \mathrel{::} \phi_2(\theta)$, where $\phi_1$ and $\phi_2$ are FSFs.
\item A sequence of paths is \emph{non-duplicating} if no path in the sequence is a suffix of another path.
\item A FSF $\phi$ is non-duplicating if the sequence of all paths appearing in its definition is non-duplicating.
%A FSF $\phi$ is non-duplicating if $\phi(\theta)$ is non-duplicating for each $\theta$.
%%\marginpar{Can you make this more precise?}
\end{itemize}
\end{definition}
The following key result shows that entropy rearrangements via FSFs have no effect under integration:
\begin{lemma}[\cite{DBLP:journals/pacmpl/WandCGC18}, Th.~7.6] \label{lemma:fsf-meas-pres}
Any non-duplicating FSF $\phi$ is \emph{measure-preserving}, i.e., for any measurable\footnote{The result in \cite{DBLP:journals/pacmpl/WandCGC18} considers $g$ with co-domain $[0, \infty)$ rather than $\extposreals$. It is however, not difficult to check that their result extends to the latter case.} $g \colon \mathbb{S} ->  \extposreals$: %\mathbb{R}_{+}$, we have:
$$
\int g(\phi(\theta)) \, \mu(d\theta) \ = \ \int g(\theta) \, \mu(d\theta).
$$
\end{lemma}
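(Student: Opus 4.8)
The plan is to decompose any FSF into its leaf paths together with a fixed reassembly built only from $\mathrel{::}$, and then to prove two independent facts: that reading off several pairwise non-suffix paths from a single entropy behaves, under integration, exactly like reading off independent entropies, and that reassembling independent entropies with $\mathrel{::}$ recovers a single entropy under integration. Concretely, from the recursive definition of FSFs I would first observe that every FSF $\phi$ can be written as $\phi(\theta) = R(p_1(\theta), \dots, p_k(\theta))$, where $p_1, \dots, p_k$ are the leaf paths of $\phi$ listed left to right and $R \colon \mathbb{S}^k -> \mathbb{S}$ is obtained from the tree of $\phi$ by replacing each leaf $p_i$ by its $i$-th argument slot. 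When $\phi$ is non-duplicating the $p_i$ are pairwise non-suffix, and in particular (since the empty path is a suffix of every path) for $k \geq 2$ none of them is empty.

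Before the main step I would record two preliminaries. First, $\pi_L$ and $\pi_R$ are measure-preserving: applying the product axiom to $g(\theta_L,\theta_R) = h(\theta_L)$ and using $\mu_{\mathbb{S}}(\mathbb{S}) = 1$ gives $\int h(\pi_L(\theta))\,\mu_{\mathbb{S}}(d\theta) = \int h(\theta_L)\,\mu_{\mathbb{S}}(d\theta_L)$, and symmetrically for $\pi_R$; hence every path, being a composition of $\pi_L$ and $\pi_R$, is measure-preserving. Second, since $\mathrel{::}$ is a surjective pairing we have $\pi_L(\theta) \mathrel{::} \pi_R(\theta) = \theta$, so instantiating the product axiom at $g(\theta_L,\theta_R) = \tilde g(\theta_L \mathrel{::} \theta_R)$ yields the reverse identity $\int\!\int \tilde g(\theta_L \mathrel{::} \theta_R)\,\mu_{\mathbb{S}}(d\theta_L)\mu_{\mathbb{S}}(d\theta_R) = \int \tilde g(\theta)\,\mu_{\mathbb{S}}(d\theta)$. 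From this I would prove by structural induction on the reassembly tree $R$ the \emph{reassembly identity} $\int \cdots \int \tilde g(R(\theta_1, \dots, \theta_k))\,\mu_{\mathbb{S}}(d\theta_1)\cdots\mu_{\mathbb{S}}(d\theta_k) = \int \tilde g(\theta)\,\mu_{\mathbb{S}}(d\theta)$, the step $R = R_1 \mathrel{::} R_2$ being handled by applying the induction hypotheses to $R_1$ and $R_2$ in turn, swapping the order of the two blocks of integrals by Tonelli, and finishing with the reverse identity.

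The heart of the argument is the \emph{multi-way splitting lemma}: for pairwise non-suffix paths $p_1, \dots, p_k$ and measurable $g \colon \mathbb{S}^k -> \extposreals$,
\[ \int g(p_1(\theta), \dots, p_k(\theta))\,\mu_{\mathbb{S}}(d\theta) = \int \cdots \int g(\theta_1, \dots, \theta_k)\,\mu_{\mathbb{S}}(d\theta_1)\cdots\mu_{\mathbb{S}}(d\theta_k). \]
I would prove this by induction on the total length $\sum_i |p_i|$. The base case ($k \leq 1$) is the single-path measure-preservation above. For the step all paths are non-empty, so each factors through its innermost direction as $p_i = p_i' \circ \pi_{L}$ or $p_i = p_i' \circ \pi_R$; grouping the paths by that innermost direction and rewriting the integrand as $\tilde G(\pi_L(\theta), \pi_R(\theta))$, the product axiom turns the single integral into a double integral over $(\theta_L,\theta_R)$, in which the $L$-group slots depend only on $\theta_L$ and the $R$-group slots only on $\theta_R$. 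Using Tonelli to integrate the two variables separately and applying the induction hypothesis inside each group (whose total length has strictly dropped) gives the full product integral. The one place the non-suffix hypothesis is used is the verification that peeling the innermost direction preserves pairwise non-suffixness within each group — if $p_a'$ were a suffix of $p_b'$ then $p_a$ would be a suffix of $p_b$ — which also forbids two empty paths coexisting and keeps the base cases consistent.

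Finally I would combine the pieces: for a non-duplicating $\phi$, applying the multi-way splitting lemma with $g$ replaced by $g \circ R$ and then the reassembly identity gives
\[ \int g(\phi(\theta))\,\mu_{\mathbb{S}}(d\theta) = \int \cdots \int g(R(\theta_1, \dots, \theta_k))\,\textstyle\prod_i \mu_{\mathbb{S}}(d\theta_i) = \int g(\theta)\,\mu_{\mathbb{S}}(d\theta), \]
as required. I expect the main obstacle to be the multi-way splitting lemma, specifically the bookkeeping that maintains the non-suffix invariant under peeling together with the repeated Tonelli swaps needed to apply the induction hypothesis separately to the $L$- and $R$-groups while $g$ still couples all slots; measurability of the partially integrated functions is routine, since paths, $\mathrel{::}$, and hence $R$ are measurable and all integrands are nonnegative.
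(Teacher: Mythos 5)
Your proof is correct, but it differs from the paper in a basic way: the paper does not prove this lemma at all --- it imports it by citation from Wand et al.\ (Theorem 7.6), adding only a footnote that extending the co-domain from $[0,\infty)$ to $\extposreals$ is routine. Your argument is thus a self-contained reconstruction, and it holds up. The decomposition $\phi(\theta) = R(p_1(\theta),\dots,p_k(\theta))$ of an FSF into its leaf paths plus a $\mathrel{::}$-reassembly, the measure-preservation of single paths, the reverse identity $\int\!\int \tilde g(\theta_L \mathrel{::} \theta_R)\,\mu_{\mathbb{S}}(d\theta_L)\,\mu_{\mathbb{S}}(d\theta_R) = \int \tilde g(\theta)\,\mu_{\mathbb{S}}(d\theta)$ (which indeed follows from surjectivity of the pairing, since $\pi_L(\theta)\mathrel{::}\pi_R(\theta)=\theta$), and the multi-way splitting lemma proved by peeling innermost directions, together give the claim using nothing beyond the three entropy-space axioms and Tonelli. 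Two details you handled that are easy to get wrong: first, the paper's convention $[d_1,\dots,d_n](\theta) = \pi_{d_1}(\cdots\pi_{d_n}(\theta))$ means the \emph{last} list element is applied first, so peeling must happen at the suffix end --- which is exactly why the non-duplicating condition is phrased in terms of suffixes, and why your check that non-suffixness survives peeling (if $p_a'$ is a suffix of $p_b'$ then $p_a$ is a suffix of $p_b$) is the right invariant; second, ruling out empty paths when $k \geq 2$ via the observation that the empty path is a suffix of everything keeps the induction well-founded. A bonus of your route: since this paper states the entropy axioms for $\extposreals$-valued integrands and Tonelli holds there, your proof covers the extended co-domain directly, discharging the paper's footnote rather than deferring it.
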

%TODO: check if this applies to functions admitting infinite values?
%Answer: yes, everything's OK, the original definitions in Culpepper17 also work for functions admitting infinite values
%%
We now have everything in place to define a version of Proposition~\ref{lemma:o-sc-seq} for an arbitrary split of a sequencing statement:
\begin{proposition}[Sequencing for final states]\label{lemma:rearrange-entropy}
If $C = C_1; C_2$, there exists
%a measure-preserving function $\phi$
a non-duplicating FSF $\psi$ such that:
$$\mathbf{O}_C^\sigma(\theta) = \mathbf{O}_{C_2}^{\tau}(\pi_R(\psi(\theta))) 
\quad \mbox{and} \quad
\mathbf{SC}_C^\sigma(\theta) = 
  \mathbf{SC}_{C_1}^\sigma (\pi_L(\psi(\theta))) \cdot 
  \mathbf{SC}_{C_2}^{\tau}(\pi_R(\psi(\theta)))
  $$
with $\tau$ denoting $\mathbf{O}_{C_1}^\sigma(\pi_L(\psi(\theta)))$.
\end{proposition}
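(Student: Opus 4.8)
The plan is to prove the statement by induction on the structure of the left component $C_1$, peeling off its leftmost atomic statement, since this is exactly how the reduction rule (seq) decomposes a sequence. \textbf{Base case.} If $C_1$ is already atomic, i.e. $C_1 \neq C_1';C_1''$, then Proposition~\ref{lemma:o-sc-seq} applies directly and I take $\psi$ to be the identity, which is trivially a non-duplicating FSF (realised, say, as the empty path). Since $\pi_L(\psi(\theta)) = \pi_L(\theta)$ and $\pi_R(\psi(\theta)) = \pi_R(\theta)$, the two required identities are literally those of Proposition~\ref{lemma:o-sc-seq}.

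\textbf{Inductive step.} Suppose $C_1 = D; C_1'$ with $D$ atomic. Because (seq) always extracts the leftmost atomic statement, the configuration for $C = (D;C_1');C_2$ reduces exactly as that for $D;(C_1';C_2)$, so $\mathbf{O}$ and $\mathbf{SC}$ are unchanged and I may regard $C$ as $D; R$ with $R = C_1';C_2$ and $D$ atomic. Applying Proposition~\ref{lemma:o-sc-seq} to this split gives $\mathbf{O}_C^\sigma(\theta) = \mathbf{O}_R^{\tau_0}(\pi_R(\theta))$ and $\mathbf{SC}_C^\sigma(\theta) = \mathbf{SC}_D^\sigma(\pi_L(\theta)) \cdot \mathbf{SC}_R^{\tau_0}(\pi_R(\theta))$, where $\tau_0 = \mathbf{O}_D^\sigma(\pi_L(\theta))$. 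Since $C_1'$ is structurally smaller than $C_1$, the induction hypothesis applied to $R = C_1';C_2$ yields a non-duplicating FSF $\psi'$ (depending only on the program, not on the state or entropy) satisfying the analogous identities; instantiating them at entropy $\pi_R(\theta)$ and initial state $\tau_0$ rewrites $\mathbf{O}_R^{\tau_0}(\pi_R(\theta))$ and $\mathbf{SC}_R^{\tau_0}(\pi_R(\theta))$ in terms of $\pi_L(\psi'(\pi_R(\theta)))$ (the entropy for $C_1'$) and $\pi_R(\psi'(\pi_R(\theta)))$ (the entropy for $C_2$). I then set
\[
\psi(\theta) \;=\; \bigl(\pi_L(\theta) \mathrel{::} \pi_L(\psi'(\pi_R(\theta)))\bigr) \mathrel{::} \pi_R(\psi'(\pi_R(\theta))),
\]
so that $\pi_R(\psi(\theta)) = \pi_R(\psi'(\pi_R(\theta)))$ while $\pi_L(\psi(\theta)) = \pi_L(\theta) \mathrel{::} \pi_L(\psi'(\pi_R(\theta)))$. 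Applying Proposition~\ref{lemma:o-sc-seq} once more to $C_1 = D;C_1'$ with the bundled entropy $\pi_L(\psi(\theta))$ --- whose $\pi_L$ is $\pi_L(\theta)$ and whose $\pi_R$ is $\pi_L(\psi'(\pi_R(\theta)))$ --- identifies $\mathbf{O}_{C_1}^\sigma(\pi_L(\psi(\theta)))$ with the intermediate state produced by the induction hypothesis and factors $\mathbf{SC}_{C_1}^\sigma(\pi_L(\psi(\theta)))$ as $\mathbf{SC}_D^\sigma(\pi_L(\theta)) \cdot \mathbf{SC}_{C_1'}^{\tau_0}(\pi_L(\psi'(\pi_R(\theta))))$. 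Chaining these equalities gives exactly the two identities of the proposition with $\tau = \mathbf{O}_{C_1}^\sigma(\pi_L(\psi(\theta)))$.

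The remaining, and most delicate, task is to check that $\psi$ is a genuine \emph{non-duplicating} FSF. That $\psi$ is an FSF follows because FSFs are closed under precomposition with $\pi_R$ (which appends $R$ to every path) and under the operations $\phi \mapsto \pi_L\circ\phi$ and $\phi\mapsto\pi_R\circ\phi$ (which navigate into the tree when $\phi$ is an internal $\mathrel{::}$-node, or prepend a direction when $\phi$ is a leaf path). For non-duplication I will track path sets: every path of $\psi$ inherited from $\psi'$ carries a trailing $R$ coming from the $\pi_R(\theta)$ argument, whereas the single new path $[L]$ (used for $D$) ends in $L$; hence $[L]$ is suffix-incomparable with all $\psi'$-derived paths. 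Among the $\psi'$-derived paths non-duplication is inherited from $\psi'$, since appending a common trailing direction preserves the suffix-freeness guaranteed by the induction hypothesis; in the degenerate case where the relevant subtree of $\psi'(\pi_R(\theta))$ is a single leaf, the two resulting paths differ in their leading direction and have equal length, so neither is a suffix of the other.

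I expect this bookkeeping --- distinguishing the internal-node and leaf cases of $\psi'$ and verifying suffix-freeness --- to be the main obstacle, the rest being direct substitution using Proposition~\ref{lemma:o-sc-seq} and the induction hypothesis, together with the remark that $\psi$ depends only on the program structure and is uniform in $\sigma$ and $\theta$. Note that Lemma~\ref{lemma:fsf-meas-pres} is \emph{not} needed here; it is only invoked later, when these pointwise rearrangements are integrated over the entropy space.
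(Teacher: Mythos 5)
Your proof is correct and follows essentially the same route as the paper's: induction peeling the leftmost atomic statement off $C_1$, two applications of Proposition~\ref{lemma:o-sc-seq} together with the induction hypothesis, the identical shuffler $\psi(\theta) = (\pi_L(\theta) \mathrel{::} \pi_L(\psi'(\pi_R(\theta)))) \mathrel{::} \pi_R(\psi'(\pi_R(\theta)))$, and the same path-set argument (the new path $[L]$ versus the trailing-$R$ paths inherited from $\psi'$) for non-duplication. Your explicit observation that the (seq) rule makes the association $(D;C_1');C_2$ versus $D;(C_1';C_2)$ irrelevant, and your closing remark that Lemma~\ref{lemma:fsf-meas-pres} is only needed later under the integral, are points the paper leaves implicit, but they do not change the argument.
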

\begin{proof}%[of Proposition~\ref{lemma:rearrange-entropy}]
By induction on the structure of $C$. 

\begin{itemize}
\item Base case: $C_1 \neq C_1'; C_1''$: the equality holds trivially for $\psi = \mathit{Id}$ by Lemma~\ref{lemma:o-seq}.
\item Induction step: If $C_1$ is a sequence of statements, then $C_1 = C_1'; C_1''$ for some $C_1'$ such that
$C_1' \neq \hat{C}_1' \hat{C}_1''$.

We have:

\begin{eqnarray*}
\mathbf{O}_{C_1'; C_1'';C_2}^\sigma(\theta) \text{\tiny   (by Lemma~\ref{lemma:o-seq})}&=& 
\mathbf{O}_{C_1'';C_2}^{ \mathbf{O}_{C_1'}^\sigma(\pi_L(\theta))}(\pi_R(\theta))\\
\text{\tiny (by induction hypothesis)} &=&
\mathbf{O}_{C_2}^{ \mathbf{O}_{C_1''}^ {\mathbf{O}_{C_1'}^\sigma(\pi_L(\theta))}(\pi_L(\psi(\pi_R(\theta))))}(\pi_R(\psi(\pi_R(\theta))))\\
\end{eqnarray*}
\noindent for some non-duplicating FSF $\psi$.

Thus, if $\theta = \theta_1 \mathrel{::} \theta_2$, then
$$\mathbf{O}_{C_1'; C_1'';C_2}^\sigma( \theta_1 \mathrel{::} \theta_2)=
\mathbf{O}_{C_2}^{ \mathbf{O}_{C_1''}^ {\mathbf{O}_{C_1'}^\sigma(\theta_1)}(\pi_L(\psi(\theta_2)))}(\pi_R(\psi(\theta_2)))$$

Now, take $\hat\psi$ such that $\hat\psi(\theta_1 \mathrel{::} \theta_2) = 
(\theta_1 \mathrel{::} \pi_L(\psi(\theta_2))) \mathrel{::} \pi_R(\psi(\theta_2))$.

Then 
\begin{eqnarray*}
\mathbf{O}_{C_2}^{ \mathbf{O}_{C_1'; C_1''}^\sigma(\pi_L(\hat{\psi}(\theta_1 \mathrel{::} \theta_2)))}(\pi_R(\hat{\psi}(\theta_1 \mathrel{::} \theta_2)))&=&
\mathbf{O}_{C_2}^{ \mathbf{O}_{C_1'; C_1''}^\sigma(\theta_1 \mathrel{::} \pi_L(\psi(\theta_2)))}(\pi_R(\psi(\theta_2)))\\
\text{\tiny (by Lemma~\ref{lemma:o-seq})}&=&
\mathbf{O}_{C_2}^{ \mathbf{O}_{C_1''}^ {\mathbf{O}_{C_1'}^\sigma(\theta_1)} \pi_L(\psi(\theta_2))}(\pi_R(\psi(\theta_2)))\\
&=&\mathbf{O}_{C_1'; C_1'';C_2}^\sigma(\theta_1 \mathrel{::} \theta_2)
\end{eqnarray*}
\noindent as required. 

For $\mathbf{SC}$, we have:
\begin{eqnarray*}
  \mathbf{SC}_{C_1'; C_1''; C_2}^\sigma(\theta) \text{\tiny (by Lemma \ref{lemma:sc-seq})}&=&
  \mathbf{SC}_{C_1'}(\pi_L(\theta)) %\mathbf{SC}_{}^\sigma(\theta) 
    \mathbf{SC}_{C_1''; C_2}^{ \mathbf{O}_{C'_1}^\sigma(\pi_L(\theta))}(\pi_R(\theta)) \\
    \text{\tiny (by induction hypothesis)}&=&
  \mathbf{SC}_{C_1'}(\pi_L(\theta)) \mathbf{SC}_{C_1''}^{ \mathbf{O}_{C'_1}^\sigma(\pi_L(\theta))}(\pi_L(\psi(\pi_R(\theta))))
  \\ && \qquad
    \mathbf{SC}_{C_2}^{ \mathbf{O}_{C''_1}^{\mathbf{O}_{C'_1}^\sigma(\pi_L(\theta))}(\pi_L(\psi(\pi_R(\theta))))}(\pi_R(\psi(\pi_R(\theta)))) \\
\end{eqnarray*}

\noindent for the same $\psi$. Thus, for $\hat{\psi}$ defined above, we have:

\begin{eqnarray*}
&& \mathbf{SC}_{C_1'; C_1''}(\pi_L(\hat{\psi}(\theta_1 \mathrel{::} \theta_2)))
    \mathbf{SC}_{C_2}^{ \mathbf{O}_{C_1'; C_1''}^\sigma(\pi_L(\hat{\psi}(\theta_1 \mathrel{::} \theta_2)))}
    (\pi_R(\hat{\psi}(\theta_1 \mathrel{::} \theta_2))) \\
    &=&  \mathbf{SC}_{C_1'; C_1''}(\theta_1 \mathrel{::} \pi_L(\psi(\theta_2)))
    \mathbf{SC}_{C_2}^{ \mathbf{O}_{C_1'; C_1''}^\sigma(\theta_1 \mathrel{::} \pi_L(\psi(\theta_2)))}
    \pi_R(\psi(\theta_2)) \\
    %\text{ \tiny (By Lemmas \ref{lemma:o-seq} and \ref{lemma:sc-seq})} &=& 
    \text{\tiny (*)} &=& 
 \mathbf{SC}_{C_1'}(\theta_1) \mathbf{SC}_{C_1''}^{\mathbf{O}_{C_1'}^\sigma(\theta_1) }( \pi_L(\psi(\theta_2)))
%\mathbf{SC}_{C_2}^{ \mathbf{O}_{C_1'; C_1''}^\sigma(\theta_1 \mathrel{::} \pi_L(\psi(\theta_2)))}
   % \pi_R(\psi(\theta_2))
\mathbf{SC}_{C_2}^{ \mathbf{O}_{C''_1}^{\mathbf{O}_{C'_1}^\sigma(\theta_1)}(\pi_L(\psi(\theta_2)))}(\pi_R(\psi(\theta_2))) \\
&=& \mathbf{SC}_{C_1'; C_1''; C_2}^\sigma(\theta_1 \mathrel{::} \theta_2)
\end{eqnarray*}

\noindent as required, where the equality (*) follows from Lemmas \ref{lemma:o-seq} and \ref{lemma:sc-seq}.

Now we only need to show that $\hat\psi$ is a non-duplicating FSF. 

First, let us show that $\hat\psi$ is indeed a FSF. To this end, we need to show that if $\psi$ is a FSF, then
$\psi'(\theta) =  \psi(\pi_R(\theta))$ is also a FSF. We prove this by induction on the structure of $\psi$:

\begin{itemize}
\item Base case: if $\psi$ is a path $[d_1, \dots, d_n]$, then $\psi \circ \pi_R$ is 
the path $[d_1, \dots, d_n, R]$, so it is a FSF.
\item Induction step: Suppose that $\psi(\theta) = \psi_1(\theta) \mathrel{::} \psi_2(\theta)$
and that $\psi_1 \circ \pi_R$ and  $\psi_2 \circ \pi_R$ are FSFs.  Then we
have $\psi(\pi_R(\theta)) = \psi_1(\pi_R(\theta)) \mathrel{::} \psi_2(\pi_R(\theta)) = 
 (\psi_1 \circ \pi_R)(\theta) \mathrel{::} (\psi_2 \circ \pi_R)(\theta)$,
so $\psi \circ \pi_R$ is a FSF by definition.
\end{itemize}

Now, we show that $\psi''(\theta) =  \pi_L(\psi(\pi_R(\theta))) = \pi_L(\psi'(\theta))$ is a FSF:
if $\psi'$ is a path $[d_1, \dots, d_n]$, then $\psi''$ is a path $[L, d_1, \dots, d_n]$, 
and if $\psi' = \psi'_1 \mathrel{::} \psi'_2$, then $\pi_L(\psi'(\theta))
= \pi_L(\psi'_1(\theta) \mathrel{::} \psi'_2(\theta)) = \psi'_1(\theta)$. Similarly, we can show that $ \pi_R(\psi(\pi_R(\theta)))$ is a FSF.
Hence, $\hat\psi$ is a FSF by definition.

Finally, we need to show that $\hat\psi$ is non-duplicating.

We can show by a simple induction that for any $\psi$, the set of paths
$\mathcal{P}_{\psi \circ \pi_R}$
in $\psi \circ \pi_R$ is $\{ pR\ |\ p \in \mathcal{P}_{\psi} \}$,
where $\mathcal{P}_{\psi}$ is the set of paths in $\psi$
and juxtaposition denotes concatenation.

If $\psi$ is a path $p$, then $\pi_L \circ \psi \circ \pi_R$ and $\pi_R \circ \psi \circ \pi_R$
are paths $LpR$ and $RpR$. Hence, the set of paths
in $\hat{\psi}$ is $\{[L], LpR, RpR \}$. It is instantly clear that no path is a suffix of another, so $\hat{\psi}$ is non-duplicating.

If $\psi(\theta)  = \psi_1(\theta) \mathrel{::} \psi_2(\theta)$, then 
$(\pi_L \circ \psi \circ \pi_R)(\theta) = \pi_L(\psi_1(\pi_R(\theta)) \mathrel{::} \psi_2(\pi_R(\theta)))
= \psi_1(\pi_R(\theta))$,
so the set of paths in $\pi_L \circ \psi \circ \pi_R$ is $\{pR\ |\ p \in \mathcal{P}_{\psi_1} \}$, where
$\mathcal{P}_{\psi_1} $ is the set of paths in $\psi_1$.
Similarly, the set of paths in $\pi_R \circ \psi \circ \pi_R$ is $\{pR\ |\ p \in \mathcal{P}_{\psi_2} \}$, where
$\mathcal{P}_{\psi_2} $ is the set of paths in $\psi_2$.
Since $\mathcal{P}_{\psi} = \mathcal{P}_{\psi_1}  \cup \mathcal{P}_{\psi_2} $,
the set of paths in the entire definition of $\hat{\psi}$ is 
$\{ [L] \} \cup \{pR\ |\ p \in \mathcal{P}_{\psi} \}$. It is clear that $[L]$ is not a suffix of any
path of the form $pR$ (as all such paths end with $R$). Moreover, if there were 
paths $p_1, p_2 \in \mathcal{P}_{\psi}$ such that $p_1R$ was a suffix of $p_2R$, then
$p_1$ would be a suffix of $p_2$, which would contradict the assumption.

Hence, $\hat{\psi}$ is non-duplicating, which ends the proof.
\end{itemize}
\qed \end{proof}

\subsection{Approximating while-loops}
To simplify reasoning about $\mathtt{while}$-loops, it is useful---and common in program semantics---to consider finite approximations of loops in which the maximal number of iterations is bounded. 
To that end, we define the $n$-th unfolding of a guarded loop inductively as follows:
\begin{eqnarray*}
\mathtt{while}^0(\phi)\{C\} &\ = \ & \mathtt{diverge} \\[1ex]
\mathtt{while}^{n+1}(\phi)\{C\} &=&\mathtt{if}(\phi)\{C; \mathtt{while}^n(\phi)\{C\} \}.
\end{eqnarray*}
In the limit, bounded $\mathtt{while}$-loops behave as standard $\mathtt{while}$-loops. 
We use this result to define the evaluation of measurable function $f$ on successful termination states of a $\mathtt{while}$-loop, scaled by its score as a limit of approximations.
As we are interested in $f$ on proper states, we use $\hat{f}$ rather than $f$.
\begin{proposition} \label{lemma:sup-while-o-sc}
Let loop $C = \mathtt{while}(\phi)\{C'\}$ and $C^n = \mathtt{while}^n(\phi)\{C'\}$ its $n$-th approximation. Then:
$$
\hat{f}(\mathbf{O}_{C}^\sigma(\theta)) \cdot \mathbf{SC}_{C}^\sigma(\theta)
\ = \ 
\sup_n \, \hat{f}( \mathbf{O}_{C^n}^\sigma(\theta)) \cdot \mathbf{SC}_{C^n}^\sigma(\theta).
$$
\end{proposition}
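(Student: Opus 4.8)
The plan is to reduce the statement to a comparison of two fixpoint iterations, one attached to the true loop $C = \mathtt{while}(\phi)\{C'\}$ and one built from its finite unfoldings $C^n = \mathtt{while}^n(\phi)\{C'\}$. Writing $G(\sigma,\theta) = \hat f(\mathbf{O}_C^\sigma(\theta))\cdot \mathbf{SC}_C^\sigma(\theta)$ and $G_n(\sigma,\theta) = \hat f(\mathbf{O}_{C^n}^\sigma(\theta))\cdot \mathbf{SC}_{C^n}^\sigma(\theta)$, the goal is exactly $G = \sup_n G_n$ pointwise. Since $C^0 = \mathtt{diverge}$ sends every proper state to $\diverge$ while leaving the weight at $1$, and $\hat f(\diverge)=0$, we have $G_0 \equiv 0$, which will play the role of the bottom element of the iteration.

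First I would establish matching one-step unfolding identities for $G$ and $G_{n+1}$. For a proper state $\sigma$ with $\sigma(\phi)=\mathtt{false}$, the rules (while-false) and (if-false) both reduce $C$ and $C^{n+1}$ to $\modownarrow$ in state $\sigma$ with weight $1$, so $G(\sigma,\theta)=G_{n+1}(\sigma,\theta)=\hat f(\sigma)$. For $\sigma(\phi)=\mathtt{true}$, (while-true) reduces $C$ to $C';C$ and (if-true) reduces $C^{n+1}$ to $C';C^n$, in both cases keeping the entropy and state fixed and only advancing the step counter; by Lemma~\ref{lemma:change-n-w} this single extra step affects neither $\mathbf{O}$ nor the limiting score, so $\mathbf{O}_C^\sigma(\theta)=\mathbf{O}_{C';C}^\sigma(\theta)$, $\mathbf{SC}_C^\sigma(\theta)=\mathbf{SC}_{C';C}^\sigma(\theta)$, and likewise with $C^n$ in place of $C$. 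Applying the sequencing compositionality of Proposition~\ref{lemma:rearrange-entropy} to $C';C$ and to $C';C^n$ — and crucially noting that the shuffling function $\psi$ produced there depends only on the body $C'$, hence is the \emph{same} in both cases — yields, with $\tau := \mathbf{O}_{C'}^\sigma(\pi_L(\psi(\theta)))$,
\[
G(\sigma,\theta) = [\neg\sigma(\phi)]\,\hat f(\sigma) + [\sigma(\phi)]\,\mathbf{SC}_{C'}^\sigma(\pi_L(\psi(\theta)))\cdot G(\tau,\pi_R(\psi(\theta))),
\]
together with the identical recurrence for $G_{n+1}$ carrying $G_n$ on the right. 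Thus $G$ is a fixpoint of the $\omega$-continuous functional $\Psi$ given by the right-hand side, and $G_n = \Psi^n(0)$.

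Second I would prove $G = \sup_n G_n$ by a case analysis on the terminal behaviour of the full loop at a fixed $(\sigma,\theta)$. If $C$ terminates properly after exactly $k$ guard-true iterations, then for every $n\ge k+1$ the reduction of $C^n$ copies that of $C$ step for step (with (if-*) in place of (while-*)), so $G_n=G$; for $n\le k$ the unfoldings run out and $C^n$ reaches the capping $\mathtt{diverge}$, whence $\mathbf{O}_{C^n}^\sigma(\theta)=\diverge$ and $G_n=0$. Hence the sequence is $0,\dots,0$ followed by the constant $G$, giving $\sup_n G_n = G$. If instead $C$ reaches $\failure$ (necessarily after finitely many iterations) or diverges, then $\hat f$ evaluates to $0$ on the outcome, so $G=0$; in the failure case each $C^n$ either copies the failure or caps out to $\diverge$, and in the divergence case each $C^n$ either caps out or inherits a divergence of the body, so $\mathbf{O}_{C^n}^\sigma(\theta)\in\{\failure,\diverge\}$ and $G_n=0$ throughout, again giving $\sup_n G_n=0=G$. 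Monotonicity of $(G_n)_n$ is immediate from this analysis, so the supremum is a genuine limit; note that the whole argument hinges on $\hat f$ vanishing on $\diverge$, which is exactly what makes $G$ the \emph{least} fixpoint of $\Psi$ rather than a larger one.

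The main obstacle I expect is the entropy bookkeeping underlying the matching recurrences: one must be certain that the body $C'$ consumes exactly the same sub-entropy in the reductions of $C$ and of $C^n$, and that the identical residual entropy is handed to the recursive continuation, so that the factor $\mathbf{SC}_{C'}^\sigma(\pi_L(\psi(\theta)))$ and the state $\tau$ are literally shared between the two recurrences. This is precisely what Propositions~\ref{lemma:o-sc-seq} and~\ref{lemma:rearrange-entropy} deliver, but invoking them requires checking that the FSF $\psi$ is determined by $C'$ alone and that the case where $C'$ is itself a sequence is covered by the general (FSF) version rather than the simple one. A secondary subtlety is the divergence case: since $\diverge$ is defined negatively (neither successful termination nor a stuck configuration), transferring ``the computation does not yet terminate or fail within the first $n{-}1$ iterations'' between $C$ and $C^n$ must be argued from determinism of reduction (Lemma~\ref{lemma:eval-det}) and the agreement of their finite reduction prefixes, rather than by a direct computation of the limit.
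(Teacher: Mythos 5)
Your proposal is correct, and its operative content coincides with the paper's own proof: both arguments reduce to a pointwise case analysis on the terminal behaviour of the full loop (proper termination after finitely many body iterations, failure, or divergence), combined with the operational fact that $\mathtt{while}^n(\phi)\{C'\}$ reduces in lockstep with $\mathtt{while}(\phi)\{C'\}$ until either the run completes or the approximation caps out at $\mathtt{diverge}$, where $\hat{f}$ vanishes. The differences lie in the scaffolding. First, your fixpoint framing (the functional $\Psi$ built from Propositions~\ref{lemma:o-sc-seq} and~\ref{lemma:rearrange-entropy}, with $G_n = \Psi^n(0)$) does no logical work for this particular proposition: knowing that $G$ is \emph{some} fixpoint of $\Psi$ does not identify it with $\sup_n \Psi^n(0)$, as you yourself observe, so your second part carries the entire proof; consistently with this, the paper does not invoke sequencing compositionality here at all and reserves that machinery (essentially the recurrence you derive) for the proof of Theorem~\ref{thm:wp-op-equiv}. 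Second, the lockstep claim that you justify by appeal to determinism and ``agreement of finite reduction prefixes'' is exactly where the paper spends most of its effort: it is made precise by a stratified bisimulation $\sim^{n}$ on configurations (yielding Lemmas~\ref{lemma:while-approx-os-right}, \ref{lemma:while-o-sc-sup} and~\ref{lemma:while-error-some-k}) and a simulation $\unlhd$ (yielding Lemmas~\ref{lemma:while-approx-os-left} and~\ref{lemma:o-while-increasing}, Corollary~\ref{corr:o-while-increasing-chain}, and the score agreement of Lemma~\ref{lemma:sc-term-increasing-chain}), from which the case analysis follows. Note that this step cannot be discharged by comparing configurations up to equality, since the reduction sequences of $C$ and $C^n$ never pass through literally identical configurations (every occurrence of the loop in one corresponds to an unfolding in the other); one needs relations that are closed under reduction while tolerating that syntactic discrepancy, which is precisely what $\sim^{n}$ and $\unlhd$ provide. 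So, in summary: your proof is the paper's proof with its hardest ingredient asserted rather than established, and completing it would amount to reconstructing those configuration-level (bi)simulation lemmas.
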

%\begin{proof}
%The proof is based on standard reasoning about operational semantics. 
%Using a simulation relation, we show that $\mathtt{while}(\phi)\{C\}$ with initial state $\sigma$ evaluates to some final configuration iff $\mathtt{while}^{n}(\phi)\{C\}$ evaluates
%to the same configuration for some $n$; details in Appendix \ref{app:proofs-sem-fun}.
% \qed \end{proof}
%%
The following monotonicity property is relevant later when proving the relationship between the operational semantics of \ccpgcl and its denotational semantics.
As before let $C^n = \mathtt{while}^n(\phi)\{C'\}$.
\begin{proposition} \label{lemma:o-sc-increasing}
If $n \geq k$ and $\hat{f}(\failure) = \hat{f}(\diverge) = 0$, then  
$\hat{f}(\mathbf{O}_{C^n}^\sigma(\theta)) \cdot \mathbf{SC}_{C^n}^\sigma(\theta)
 \geq \hat{f}(\mathbf{O}_{C^k}^\sigma(\theta)) \cdot \mathbf{SC}_{C^k}^\sigma(\theta)$.
\end{proposition}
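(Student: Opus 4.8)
The plan is to reduce the inequality to a single invariance statement: increasing the unfolding depth of the loop never disturbs a run that has already terminated in a proper state. First I would dispose of the non-terminating case using the hypothesis $\hat{f}(\failure) = \hat{f}(\diverge) = 0$. If $\mathbf{O}_{C^k}^\sigma(\theta) \in \{\failure, \diverge\}$, then $\hat{f}(\mathbf{O}_{C^k}^\sigma(\theta)) = 0$, so the right-hand side is $0$, while the left-hand side is a product of the nonnegative quantities $\hat{f}(\cdot) \geq 0$ and $\mathbf{SC}_{C^n}^\sigma(\theta) \geq 0$; hence $\mathrm{LHS} \geq 0 = \mathrm{RHS}$. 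The entire content of the proposition therefore lies in the case where $\mathbf{O}_{C^k}^\sigma(\theta)$ is a proper state in $\statespace$, where I would show that both factors are literally unchanged when $k$ is replaced by $n \geq k$, so that $\mathrm{LHS} = \mathrm{RHS}$.

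The key lemma I would establish is: for every $k$ and every $m \geq k$, if $\mathbf{O}_{C^k}^\sigma(\theta) \in \statespace$ then $\mathbf{O}_{C^m}^\sigma(\theta) = \mathbf{O}_{C^k}^\sigma(\theta)$ and $\mathbf{SC}_{C^m}^\sigma(\theta) = \mathbf{SC}_{C^k}^\sigma(\theta)$. I would prove this by induction on $k$. The base case $k = 0$ is vacuous, since $C^0 = \mathtt{diverge}$ never produces a proper final state (a proper initial state loops forever, $\failure$ is immediately stuck and yields $\failure$, and $\diverge$ yields $\diverge$). For the step, note $C^{k+1} = \mathtt{if}(\phi)\{C'; C^k\}$ and that a proper output forces a proper $\sigma$. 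If $\sigma(\phi) = \mathtt{false}$, then (if-false) makes both $C^{k+1}$ and $C^m$ (with $m \geq k+1 \geq 1$) halt immediately in $\sigma$ with unchanged weight, so both sides agree. If $\sigma(\phi) = \mathtt{true}$, then (if-true) reduces $C^{k+1}$ to $C'; C^k$ with the same entropy, state and weight, and by Lemma~\ref{lemma:change-n-w} this gives $\mathbf{O}_{C^{k+1}}^\sigma(\theta) = \mathbf{O}_{C';C^k}^\sigma(\theta)$ and likewise for $\mathbf{SC}$. I would then apply the sequencing decomposition of Proposition~\ref{lemma:rearrange-entropy} to write $\mathbf{O}_{C';C^k}^\sigma(\theta) = \mathbf{O}_{C^k}^{\tau}(\pi_R(\psi(\theta)))$ with $\tau = \mathbf{O}_{C'}^\sigma(\pi_L(\psi(\theta)))$, and analogously for the score. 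Properness of the left side forces $\tau \in \statespace$ and $\mathbf{O}_{C^k}^{\tau}(\pi_R(\psi(\theta))) \in \statespace$, so the induction hypothesis (for $m - 1 \geq k$) yields $\mathbf{O}_{C^{m-1}}^{\tau}(\pi_R(\psi(\theta))) = \mathbf{O}_{C^k}^{\tau}(\pi_R(\psi(\theta)))$ and equal scores. Reassembling via Proposition~\ref{lemma:rearrange-entropy} for $C'; C^{m-1}$ and undoing the (if-true) step for $C^m$ then gives $\mathbf{O}_{C^m}^\sigma(\theta) = \mathbf{O}_{C^{k+1}}^\sigma(\theta)$ and equal scores.

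The main obstacle is making the reassembly step legitimate: Proposition~\ref{lemma:rearrange-entropy} only asserts the existence of some finite shuffling function $\psi$ for each sequencing term, and I must use the \emph{same} $\psi$ for $C'; C^k$ and for $C'; C^{m-1}$ so that the entropy split $\theta \mapsto (\pi_L(\psi(\theta)), \pi_R(\psi(\theta)))$ agrees on the $C'$-part. This holds because the $\psi$ built in the proof of Proposition~\ref{lemma:rearrange-entropy} is determined entirely by the left-nesting structure of the first component: its construction recurses on $C'$ alone and bottoms out at $\psi = \mathit{Id}$ independently of the second component, so the same $\psi$ serves both terms. Once this is noted, the key lemma goes through, and the two-case split from the first paragraph finishes the proof with $n = m$. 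A minor point to record is that each appeal to Proposition~\ref{lemma:rearrange-entropy} tacitly specialises to simple sequencing (Lemmas~\ref{lemma:o-seq} and~\ref{lemma:sc-seq}) when $C'$ is not itself a sequence.
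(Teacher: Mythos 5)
Your proposal is correct, and its outermost structure --- split on whether $\mathbf{O}_{C^k}^\sigma(\theta)$ is a proper state, dispose of the exceptional case using $\hat{f}(\failure)=\hat{f}(\diverge)=0$, and in the proper case show both factors are literally unchanged when $k$ grows --- is exactly the paper's. The genuine difference is how you establish the key invariance (a properly terminating run of $C^k$ is reproduced, with identical final state and score, by $C^n$ for $n\geq k$). The paper gets this operationally: it defines a simulation relation $\relu$ on configurations (with $\mathtt{while}^k(\phi)\{C'\} \relu \mathtt{while}^l(\phi)\{C'\}$ for $k\leq l$, $\mathtt{diverge}\relu C$, and closure under sequencing and continuations), shows reduction preserves it (Lemma~\ref{lemma:relu-sim}, Corollary~\ref{corr:sim-red}), and derives Lemma~\ref{lemma:o-while-increasing}, Corollary~\ref{corr:o-while-increasing-chain} and Lemma~\ref{lemma:sc-term-increasing-chain}, which together are precisely your key lemma. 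You instead prove that lemma by induction on the unfolding depth, using $C^{k+1}=\mathtt{if}(\phi)\{C';C^k\}$, the rules (if-true)/(if-false) together with Lemma~\ref{lemma:change-n-w}, and the sequencing decomposition of Proposition~\ref{lemma:rearrange-entropy} (specialising to Lemmas~\ref{lemma:o-seq} and~\ref{lemma:sc-seq} when the loop body is not a sequence). Both routes work. Yours buys modularity: it reuses compositionality results the paper needs anyway and introduces no new relation on configurations. Its one delicate point is the one you flag yourself: Proposition~\ref{lemma:rearrange-entropy} asserts only that \emph{some} FSF $\psi$ exists, and you must use the \emph{same} $\psi$ for $C';C^k$ and $C';C^{m-1}$; that this is possible is a property of the construction inside that proposition's proof (the FSF is built by recursion on the first component alone and bottoms out at $\mathit{Id}$ regardless of the second component), not of its statement, so a fully formal write-up should either restate the proposition with this sharpening or inline the construction. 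What the paper's simulation relation buys in exchange is uniformity: the same $\relu$ machinery also covers stuck (error) configurations and score monotonicity of diverging runs (Lemmas~\ref{lemma:while-k-error} and~\ref{lemma:sc-decreasing}), which the companion Propositions~\ref{lemma:inf-while-o-sc} and~\ref{lemma:o-sc-decreasing} require; your induction, as stated, covers only the properly terminating case --- which is all this proposition needs.
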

%\begin{proof}
%The proof uses similar principles to the proof of Proposition~\ref{lemma:sup-while-o-sc}, details in Appendix \ref{app:proofs-sem-fun}.
% \qed \end{proof}
%%
Similarly, we want to show that the sequence
$\hat{f}(\mathbf{O}_{C^n}^\sigma(\theta)) \cdot \mathbf{SC}_{C^n}^\sigma(\theta)$
approximates
$\check{f}(\mathbf{O}_{C}^\sigma(\theta)) \cdot \mathbf{SC}_{C}^\sigma(\theta)$.
%% \marginpar{Why is this relevant? Can we add a sentence on that.}
This result allows us to express the anticipated value of the function $\hat{f}$ for a given fixed entropy
as a limit of approximations,
and by integrating both sides with respect to the measure on entropies we get that the expected value
of $\hat{f}$ can also be expressed as a limit of approximations. We will use this result in the proof of
Theorem~\ref{thm:wlp-op-equiv}.
Recall that $\check{f}(\tau) = 1$ for $\tau = \ \diverge$.
\begin{proposition} \label{lemma:inf-while-o-sc}
%If $\hat{f}(\failure) = \hat{f}(\diverge) = 0$, then
Let loop $C = \mathtt{while}(\phi)\{C'\}$ and $C^n = \mathtt{while}(\phi)\{C'\}$ its $n$-th approximation.
Take a function $f \leq 1$. Then
$$
\check{f}(\mathbf{O}_{C}^\sigma(\theta)) \cdot \mathbf{SC}_{C}^\sigma(\theta)
\ = \ 
\inf_n \, \check{f}( \mathbf{O}_{C^n}^\sigma(\theta)) \cdot \mathbf{SC}_{C^n}^\sigma(\theta).
$$
\end{proposition}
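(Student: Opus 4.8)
The plan is to fix $\sigma$ and $\theta$ and carry out a case analysis on the value of $\mathbf{O}_C^\sigma(\theta)$, exactly as in the proof of Proposition~\ref{lemma:sup-while-o-sc}, but now tracking the extra mass that $\check{f}$ (unlike $\hat{f}$) assigns to $\diverge$. The guiding observation is that $\check{f} = \hat{f} + [\,\cdot = \diverge\,]$, so the two statements can only differ on runs that hit an inner $\mathtt{diverge}$, and the whole difficulty is bookkeeping the partial weights accumulated along such runs. Writing $C^n$ for $\mathtt{while}^n(\phi)\{C'\}$, I would use the unfolding identity $\mathtt{while}^{n+1}(\phi)\{C'\} = \mathtt{if}(\phi)\{C';\mathtt{while}^n(\phi)\{C'\}\}$ together with the sequencing compositionality of Propositions~\ref{lemma:o-sc-seq} and \ref{lemma:rearrange-entropy} to relate an approximation that performs $n{-}1$ iterations of the body to the corresponding prefix of the true loop. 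The structural facts I would extract from this machinery are: (i) if $C$ terminates (in a proper state or in $\failure$) after executing the body $j$ times, then $\mathbf{O}_{C^n}^\sigma(\theta)=\mathbf{O}_C^\sigma(\theta)$ and $\mathbf{SC}_{C^n}^\sigma(\theta)=\mathbf{SC}_C^\sigma(\theta)$ for all sufficiently large $n$, whereas for small $n$ the approximation $C^n$ hits the inner $\mathtt{diverge}$, so $\mathbf{O}_{C^n}^\sigma(\theta)=\diverge$ and $\mathbf{SC}_{C^n}^\sigma(\theta)=w_n$, the partial weight after $n{-}1$ iterations; and (ii) if $C$ diverges, then $\mathbf{O}_{C^n}^\sigma(\theta)=\diverge$ and $\mathbf{SC}_{C^n}^\sigma(\theta)=w_n$ for every $n$, where $w_n = \mathbf{SC}_C^\sigma(\theta,m_n)$ is the step-indexed score of the true loop at the step $m_n$ completing its $(n{-}1)$-th iteration.

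With these facts the three cases are short. If $\mathbf{O}_C^\sigma(\theta)=\diverge$, then $\check{f}(\mathbf{O}_{C^n}^\sigma(\theta)) = \check{f}(\diverge) = 1$ for all $n$, and by Lemma~\ref{lemma:w-decreasing} the sequence $w_n$ is non-increasing with $\inf_n w_n = \inf_m \mathbf{SC}_C^\sigma(\theta,m) = \mathbf{SC}_C^\sigma(\theta)$ (a cofinal subsequence of an antitone sequence has the same infimum, using $m_n \to \infty$); hence the right-hand side equals $\mathbf{SC}_C^\sigma(\theta) = \check{f}(\diverge)\cdot\mathbf{SC}_C^\sigma(\theta)$, the left-hand side. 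If $\mathbf{O}_C^\sigma(\theta)=\tau\in\statespace$ with final weight $w=\mathbf{SC}_C^\sigma(\theta)$, the tail terms are all equal to $f(\tau)\,w$, while each head term equals $1\cdot w_n$ with $w_n \geq w \geq f(\tau)\,w$ (using antitonicity of weights and $f(\tau)\leq 1$); therefore the infimum is the constant tail value $f(\tau)\,w = \check{f}(\tau)\cdot\mathbf{SC}_C^\sigma(\theta)$. If $\mathbf{O}_C^\sigma(\theta)=\failure$, the tail terms are $\check{f}(\failure)\cdot(\cdots)=0$ while all head terms are $\geq 0$, so the infimum is $0 = \check{f}(\failure)\cdot\mathbf{SC}_C^\sigma(\theta)$. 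In every case the right-hand infimum equals the left-hand side.

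I expect the main obstacle to be the rigorous justification of the structural facts (i)--(ii): pushing the syntactic unfolding identity through the reduction relation so as to identify, step for step, the weight $\mathbf{SC}_{C^n}^\sigma(\theta)$ of a truncated loop with a genuine step-indexed partial weight $\mathbf{SC}_C^\sigma(\theta,m_n)$ of the real loop, verifying that the off-by-one bookkeeping (``body executed at most $n{-}1$ times'') is exactly right and that $m_n\to\infty$. This is the same bookkeeping that underlies Propositions~\ref{lemma:sup-while-o-sc} and \ref{lemma:o-sc-increasing}, so I would reuse their auxiliary sequencing and compositionality lemmas from Appendix~\ref{app:proofs-sem-fun} together with Lemma~\ref{lemma:w-decreasing} rather than redo the reduction-level analysis from scratch; the only genuinely new ingredient relative to the $\hat{f}$ case is the treatment of the $\diverge$ mass, which is handled uniformly by $\check{f}(\diverge)=1$ and the antitonicity of weights.
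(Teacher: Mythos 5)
Your overall strategy coincides with the paper's: a pointwise case analysis on $\mathbf{O}_C^\sigma(\theta)$, reuse of the while-approximation machinery, and the observation that the only new ingredient relative to Proposition~\ref{lemma:sup-while-o-sc} is the mass $\check{f}$ assigns to $\diverge$, controlled by $\check{f}(\diverge)=1$ together with antitonicity of weights (Lemma~\ref{lemma:w-decreasing}). Your terminating and failing cases are exactly the paper's (via Lemma~\ref{lemma:while-o-sc-sup}, Lemma~\ref{lemma:while-error-some-k} and Lemma~\ref{lemma:sc-decreasing}), and the conclusions you draw there are correct.

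There is, however, one concrete flaw in the divergence case. Your structural fact (ii) asserts that whenever $\mathbf{O}_C^\sigma(\theta)=\diverge$, each truncated score $\mathbf{SC}_{C^n}^\sigma(\theta)$ is a step-indexed score $\mathbf{SC}_C^\sigma(\theta,m_n)$ taken at a step completing an iteration, with $m_n\to\infty$. This presumes that divergence means infinitely many \emph{completed} iterations. It need not: the loop also diverges when a single execution of the body $C'$ diverges internally (a nested loop or \texttt{diverge} inside $C'$). In that sub-case there is no step completing the later iterations; instead, for all sufficiently large $n$ the runs of $C^n$ and $C$ coincide forever (the truncation point is never reached), so $\mathbf{SC}_{C^n}^\sigma(\theta)$ equals $\mathbf{SC}_C^\sigma(\theta)$ as a \emph{limit}, not as any frozen step-indexed value---and this limit can lie strictly below every step-indexed weight if the diverging body keeps executing \texttt{score}. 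Hence the cofinal-subsequence argument does not apply there; the conclusion still holds, but by a separate (easy) argument: tail terms equal $\mathbf{SC}_C^\sigma(\theta)$ exactly, head terms are frozen partial weights that dominate it. The paper's proof avoids this case split entirely by staying at the level of step-indexed scores: Corollary~\ref{corr:bisim-n-steps} gives $\mathbf{SC}_C^\sigma(\theta,l)=\mathbf{SC}_{C^k}^\sigma(\theta,l)$ for all $k\geq l$, irrespective of \emph{why} the loop diverges, and then the two infima (over steps $l$ and truncations $n$) are exchanged using monotonicity in both indices (Lemma~\ref{lemma:sc-decreasing-fixed-l}). Relatedly, the entropy-splitting results you cite (Propositions~\ref{lemma:o-sc-seq} and~\ref{lemma:rearrange-entropy}) are not the right tool for these claims: both sides of the present proposition are evaluated at the \emph{same} entropy $\theta$, and unfolding via sequencing compositionality recurses without terminating precisely in the infinite-iteration case. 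What is needed---and what the paper uses---are the lockstep simulation relations of Appendix~\ref{app:proofs-op-sem}; with those substituted for facts (i)--(ii), your argument goes through.
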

%\begin{proof}
%By analysis of the operational semantics, details in Appendix~\ref{app:proofs-sem-fun}.
%\qed \end{proof}

\begin{proposition} \label{lemma:o-sc-decreasing}
If $n \geq k$ and $f \leq 1$, then  
$$
\check{f}( \mathbf{O}_{C^n}^\sigma(\theta)) \cdot \mathbf{SC}_{C^n}^\sigma(\theta)
\ \leq \ 
\check{f}( \mathbf{O}_{C^k}^\sigma(\theta)) \cdot \mathbf{SC}_{C^k}^\sigma(\theta).
$$
\end{proposition}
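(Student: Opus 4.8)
The plan is to reduce the statement to its one-step instance and then chain. First I would prove, by induction on $m$, the single-step claim $P(m)$: for every $\sigma \in \fullstatespace$ and every $\theta \in \mathbb{S}$,
\[
\check{f}(\mathbf{O}_{C^{m+1}}^\sigma(\theta)) \cdot \mathbf{SC}_{C^{m+1}}^\sigma(\theta)
\ \leq \
\check{f}(\mathbf{O}_{C^{m}}^\sigma(\theta)) \cdot \mathbf{SC}_{C^{m}}^\sigma(\theta),
\]
where $C^n = \mathtt{while}^n(\phi)\{C'\}$. Given $P(m)$ for all $m$, the proposition for arbitrary $n \geq k$ follows by transitivity, i.e.\ a trivial induction on $n-k$. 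Throughout I use that $\check f \leq 1$ (since $f \leq 1$, $\check f(\diverge) = 1$ and $\check f(\failure) = 0$) and that every score is bounded by $1$ (Lemma~\ref{lemma:w-decreasing}), so that each product $\check f(\mathbf{O}) \cdot \mathbf{SC}$ lies in $[0,1]$.

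For the base case $P(0)$ I compare $C^1 = \mathtt{if}(\phi)\{C';\mathtt{diverge}\}$ with $C^0 = \mathtt{diverge}$. The improper initial states are immediate: both sides are $0$ when $\sigma = \failure$ and both equal $1$ when $\sigma = \diverge$ (using $\mathbf{O}^\diverge = \diverge$, $\mathbf{SC}^\diverge = 1$, $\check f(\diverge)=1$). For a proper $\sigma$, the program $C^0 = \mathtt{diverge}$ always diverges, so $\mathbf{O}_{C^0}^\sigma(\theta) = \diverge$ and the right-hand side equals $\check f(\diverge)\cdot 1 = 1$; since the left-hand side is a product in $[0,1]$, the inequality holds.

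For the inductive step I assume $P(m)$ and compare $C^{m+2}$ with $C^{m+1}$. Improper $\sigma$ are handled as before, and for a proper $\sigma$ with $\sigma(\phi)=\mathtt{false}$ both approximations (each of index $\geq 1$) reduce by (if-false) straight to $\modownarrow$ in state $\sigma$ with score $1$, so the two sides coincide. The crux is a proper $\sigma$ with $\sigma(\phi)=\mathtt{true}$. Here rule (if-true) performs a single step from $C^{m+2}=\mathtt{if}(\phi)\{C';C^{m+1}\}$ to $C';C^{m+1}$ leaving state, weight, continuation and entropy unchanged; by Lemma~\ref{lemma:change-n-w} this gives $\mathbf{O}_{C^{m+2}}^\sigma(\theta)=\mathbf{O}_{C';C^{m+1}}^\sigma(\theta)$ and $\mathbf{SC}_{C^{m+2}}^\sigma(\theta)=\mathbf{SC}_{C';C^{m+1}}^\sigma(\theta)$, and likewise $\mathbf{O}_{C^{m+1}}^\sigma(\theta)=\mathbf{O}_{C';C^{m}}^\sigma(\theta)$ and $\mathbf{SC}_{C^{m+1}}^\sigma(\theta)=\mathbf{SC}_{C';C^{m}}^\sigma(\theta)$. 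I then apply Proposition~\ref{lemma:rearrange-entropy} to both sequences. The essential point that makes the two sides line up is that the non-duplicating FSF $\psi$ produced there depends only on the \emph{first} component $C'$ and not on the continuation; this is visible from its inductive construction (which recurses on $C_1$ and merely carries $C_2$ along), so I would record it as a small strengthening of Proposition~\ref{lemma:rearrange-entropy}. Hence the same $\psi$, the same intermediate state $\tau = \mathbf{O}_{C'}^\sigma(\pi_L(\psi(\theta)))$ and the same entropy $\theta' = \pi_R(\psi(\theta))$ occur in both decompositions, and with $a = \mathbf{SC}_{C'}^\sigma(\pi_L(\psi(\theta))) \geq 0$ the two sides become $a \cdot \check f(\mathbf{O}_{C^{m+1}}^{\tau}(\theta'))\,\mathbf{SC}_{C^{m+1}}^{\tau}(\theta')$ and $a \cdot \check f(\mathbf{O}_{C^{m}}^{\tau}(\theta'))\,\mathbf{SC}_{C^{m}}^{\tau}(\theta')$. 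The bracketed products are precisely the two sides of $P(m)$ at $(\tau,\theta')$, so the induction hypothesis bounds the first by the second, and multiplying by the nonnegative $a$ yields $P(m+1)$.

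The main obstacle I anticipate is exactly this alignment: as stated, Proposition~\ref{lemma:rearrange-entropy} only asserts existence of some FSF, whereas I need one and the same FSF for $C';C^{m+1}$ and $C';C^{m}$. Pinning down that the FSF is a function of the loop body $C'$ alone is the one genuinely new piece of bookkeeping; once it is in place, the remainder is a routine case analysis resting on the already-established sequencing (Proposition~\ref{lemma:rearrange-entropy}) and step-count invariance (Lemma~\ref{lemma:change-n-w}) results.
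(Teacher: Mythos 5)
Your proposal is correct, but it follows a genuinely different route from the paper. The paper's proof is a three-line, factor-wise argument built on two monotonicity facts already established via the $\relu$ simulation relation: Corollary~\ref{corr:o-while-increasing-chain} gives $\mathbf{O}_{C^n}^\sigma(\theta) \geq \mathbf{O}_{C^k}^\sigma(\theta)$ in the flat CPO with bottom $\diverge$, and Lemma~\ref{lemma:sc-decreasing} gives $\mathbf{SC}_{C^n}^\sigma(\theta) \leq \mathbf{SC}_{C^k}^\sigma(\theta)$; since $\check{f}$ is antitone on that CPO (as $\check{f}(\tau) \leq \check{f}(\diverge) = 1$ for all $\tau \geq \diverge$), each factor on the left is dominated by the corresponding factor on the right, and the product inequality follows from nonnegativity. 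You instead prove a one-step version $P(m)$ by induction on the unfolding depth, using determinism and Lemma~\ref{lemma:change-n-w} to peel off the (if-true)/(if-false) step, then Proposition~\ref{lemma:rearrange-entropy} to decompose $C';C^{m+1}$ and $C';C^{m}$ and apply the induction hypothesis at the intermediate state. Your key extra ingredient---that the FSF of Proposition~\ref{lemma:rearrange-entropy} depends only on the left component, so the two decompositions share the same $\psi$, $\tau$ and $\theta'$---is indeed true by inspection of the construction ($\psi=\mathit{Id}$ in the base case, and the inductive $\hat\psi$ is assembled from the FSF for the split of $C_1'';C_2$, which by induction depends only on $C_1''$), and you are right to flag that it must be recorded as a strengthening, since the proposition as stated only asserts existence of some FSF. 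The trade-off: the paper's proof is essentially free given the simulation machinery, which is needed anyway for Lemma~\ref{lemma:sc-decreasing} and Proposition~\ref{lemma:inf-while-o-sc}, and it separates the outcome and score comparisons cleanly; your proof avoids the $\relu$-based lemmas altogether and leans on the sequencing/FSF infrastructure instead, treating outcome and score jointly at the product level, and it structurally mirrors the denotational unfolding of the loop's characteristic functional used in the proof of step $(*)$ in Theorems~\ref{thm:wp-op-equiv} and~\ref{thm:wlp-op-equiv}.
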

%\begin{proof}
%Follows from the fact that $\mathbf{SC}_{C^n}^\sigma(\theta)$ is decreasing in $n$,
%$\mathbf{O}_{C^n}^\sigma(\theta)$ is increasing in $n$ and $\check{f}$ is antitone 
%($\check{f}(\tau) \leq \check{f}(\diverge) = 1$ for all $\tau \geq \ \diverge$); cf.\ Appendix~\ref{app:proofs-sem-fun}.
% \qed \end{proof}

The rest of this section is the proof of Propositions~\ref{lemma:sup-while-o-sc}, \ref{lemma:o-sc-increasing}, 
\ref{lemma:inf-while-o-sc} and \ref{lemma:o-sc-decreasing}, which will be needed to prove
the case of $\mathtt{while}$-loops in Theorem~\ref{thm:wp-op-equiv} and Theorem~\ref{thm:wlp-op-equiv}.
%
%This section contains proofs of properties of bounded approximations of while-loops, %which are needed to prove TODO
%which are needed
%to prove the case of $\mathtt{while}$-loops in Theorem~\ref{thm:wp-op-equiv} and Theorem~\ref{thm:wlp-op-equiv}.
%
\newcommand{\reln}[1]{\sim^{#1}}
The first key fact that we want to show is that for non-diverging executions, a bounded while-loop of the form
$\mathtt{while}^n(\phi)\{C\}$ behaves just like $\mathtt{while}(\phi)\{C\}$ for
a sufficiently large $n$. We formalise and prove it using two auxiliary relations on configurations.

\subsubsection{Replacing $\mathtt{while}(\phi)\{C\}$ with $\mathtt{while}^n(\phi)\{C\}$}

We first prove that in all non-divering configurations, 
if the expression is of the form $\mathtt{while}(\phi)\{C\}$, we can replace it 
with $\mathtt{while}^n(\phi)\{C\}$ for a large enough $n$, without
changing the final configuration reached after reduction is completed. To this end, we first define an indexed
relation $(\reln{n})$ on configurations. We begin with auxiliary relations $C \reln{n} C'$ and  $K \reln{n} K'$,
defined inductively as follows:

\inference
{ }
{ C \reln{0} C'}

\ 

\noindent For $n > 0$:

\ 

\inference
{ }
{C \reln{n} C }

\ 

\inference
{ }
{\modownarrow \reln{n} \modownarrow }

\

\inference
{k \geq n}
{\mathtt{while}(\phi)\ \{ C \}  \reln{n} 
\mathtt{while}^k(\phi)\ \{ C \} }

\ 

\inference
{k \geq n}
{\mathtt{while}^k(\phi)\ \{ C \}  \reln{n} 
\mathtt{while}(\phi)\ \{ C \} }

\ 

\inference
{k \geq n \quad l\geq n}
{\mathtt{while}^k(\phi)\ \{ C \}  \reln{n} 
\mathtt{while}^l(\phi)\ \{ C \} }

\ 

\inference
{C_2 \reln{n} C_2'}
{C_1;C_2 \reln{n} 
C_1;C'_2 }

\ 

\inference
{\forall i \in 1..n \quad C_i \reln{n} C'_i}
{[C_1, \dots, C_n] \reln{n}  [C_1', \dots, C_n']}

\ 

We then naturally extend the definition to configurations:

\inference
{}
{\config{\theta}{C}{K}{\sigma}{\theta_K}{m}{w} \reln{0}  \config{\theta'}{C'}{K'}{\sigma'}{\theta'_K}{m'}{w'} }

\noindent For $n > 0$:
\ 

\inference
{C \reln{n} C' \quad K \reln{n} K'}
{\config{\theta}{C}{K}{\sigma}{\theta_K}{m}{w} \reln{n}  \config{\theta}{C'}{K'}{\sigma}{\theta_K}{m}{w} }

We can immediately check that if two configurations are related by $(\reln{n})$ for some $n>0$, then if we perform one step of reductions
on both of them, the resulting configurations are guaranteed to be related at least by $(\reln{n-1})$.

\begin{lemma}
$\reln{n}$ is a stratified bisumulation---that is, 
$\config{\theta}{C}{K}{\sigma}{\theta_K}{m}{w} \reln{0}  \config{\theta'}{C'}{K'}{\sigma'}{\theta'_K}{m'}{w'} $ and for $n>0$:
\begin{itemize}
\item if $\config{\theta}{C}{K}{\sigma}{\theta_K}{m}{w} \reln{n}  \config{\theta}{C'}{K'}{\sigma}{\theta_K}{m}{w}$
and 

\noindent $\config{\theta}{C}{K}{\sigma}{\theta_K}{m}{w} \vdash \config{\theta''}{C''}{K''}{\sigma''}{\theta''_K}{m+1}{w''}$, then
$\config{\theta}{C'}{K'}{\sigma}{\theta_K}{m}{w}  \vdash \config{\theta''}{C'''}{K'''}{\sigma''}{\theta''_K}{m+1}{w''}$
and $\config{\theta''}{C''}{K''}{\sigma''}{\theta''_K}{m+1}{w''} \reln{n-1}
\config{\theta''}{C'''}{K'''}{\sigma''}{\theta''_K}{m+1}{w''}$

\item 
if $\config{\theta}{C}{K}{\sigma}{\theta_K}{m}{w} \reln{n}  \config{\theta}{C'}{K'}{\sigma}{\theta_K}{m}{w}$
and 

\noindent $\config{\theta}{C'}{K'}{\sigma}{\theta_K}{m}{w}  \vdash \config{\theta''}{C'''}{K'''}{\sigma''}{\theta''_K}{m+1}{w''}$, then
$\config{\theta}{C}{K}{\sigma}{\theta_K}{m}{w} \vdash \config{\theta''}{C''}{K''}{\sigma''}{\theta''_K}{m+1}{w''}$
and $\config{\theta''}{C''}{K''}{\sigma''}{\theta''_K}{m+1}{w''} \reln{n-1}
\config{\theta''}{C'''}{K'''}{\sigma''}{\theta''_K}{m+1}{w''}$
\end{itemize}
\end{lemma}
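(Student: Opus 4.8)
The plan is to prove the two simulation clauses for each fixed $n > 0$ by a direct case analysis, after first recording one auxiliary monotonicity fact. The level-$0$ clause is immediate, since by the first defining rule every pair of configurations is related by $\reln{0}$. For the inductive index, I would first establish that the relation is downward closed in its index, i.e. $\kappa_1 \reln{n} \kappa_2$ implies $\kappa_1 \reln{m} \kappa_2$ for all $m \le n$ (and likewise for the statement- and continuation-level relations). This follows by a routine induction on the derivation of the relation, using that every defining rule carrying a side condition of the form $k \ge n$ remains valid when $n$ is decreased. This monotonicity is exactly what lets me ``spend'' only one index level per reduction step while keeping the unchanged components related at the lower level. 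I would prove the first clause in detail; the second is symmetric, the relation being symmetric by construction.

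Fix $n > 0$ and suppose $\config{\theta}{C}{K}{\sigma}{\theta_K}{m}{w} \reln{n} \config{\theta}{C'}{K'}{\sigma}{\theta_K}{m}{w}$, so that $C \reln{n} C'$ and $K \reln{n} K'$, and assume the left configuration reduces. Since every reduction rule requires $\sigma \neq \failure$, the shared state $\sigma$ already guarantees the right configuration is not stuck for state reasons; it remains to exhibit a matching rule and check the outcomes relate at level $n-1$. I would case on the shape of the derivation of $C \reln{n} C'$. When this derivation is by reflexivity, $C' = C$ is one of the ordinary statements (assignment, draw, observe, score, conditional, diverge), both sides fire the identical rule, and the resulting configurations agree on the new entropy, state, continuation entropy, weight, and step count $m{+}1$; the new current statements are syntactically equal, hence related at level $n-1$ by reflexivity, and the untouched continuations $K, K'$ stay related at level $n-1$ by monotonicity. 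The cases where $C = \modownarrow$ are driven by $K \reln{n} K'$: if $K = K' = []$ both fire (final); if $K = D \mathrel{::} K_1$ then $K' = D' \mathrel{::} K_1'$ with $D \reln{n} D'$ and $K_1 \reln{n} K_1'$, both fire (pop), and the outcomes again relate at level $n-1$ by monotonicity. The sequencing congruence $C_1; C_2 \reln{n} C_1; C_2'$ is handled by (seq) on both sides, which keeps the identical head $C_1$ and pushes $C_2, C_2'$ onto the continuations; the resulting continuations relate at level $n-1$ via the continuation rule together with monotonicity.

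The genuinely non-trivial cases, and the heart of the lemma, are those relating a loop $\mathtt{while}(\phi)\{C''\}$ to a finite unfolding $\mathtt{while}^k(\phi)\{C''\}$ with $k \ge n$ (and the two symmetric variants). Here the two sides take \emph{different} rules, so I must verify that they nonetheless produce identical entropy, state, continuation entropy, weight, and step increment, and land in $\reln{n-1}$. If $\sigma(\phi) = \mathtt{false}$, the left uses (while-false) and the right---noting $\mathtt{while}^k(\phi)\{C''\} = \mathtt{if}(\phi)\{C''; \mathtt{while}^{k-1}(\phi)\{C''\}\}$ for $k \ge 1$---uses (if-false); both yield $\modownarrow$ with the same continuation data, and $\modownarrow \reln{n-1} \modownarrow$. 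If $\sigma(\phi) = \mathtt{true}$, the left uses (while-true) to reach $C''; \mathtt{while}(\phi)\{C''\}$ while the right uses (if-true) to reach $C''; \mathtt{while}^{k-1}(\phi)\{C''\}$. The key arithmetic is that $k \ge n$ gives $k - 1 \ge n - 1$, so the loop-replacement rule yields $\mathtt{while}(\phi)\{C''\} \reln{n-1} \mathtt{while}^{k-1}(\phi)\{C''\}$, and the sequencing congruence lifts this to $C''; \mathtt{while}(\phi)\{C''\} \reln{n-1} C''; \mathtt{while}^{k-1}(\phi)\{C''\}$; the continuations relate at level $n-1$ by monotonicity. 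The degenerate subcase $k - 1 = 0$ (where $\mathtt{while}^0 = \mathtt{diverge}$) is harmless, because then $n - 1 = 0$ and everything is related by $\reln{0}$. I expect the bookkeeping of matching the unfolding depth $k$ against the index $n$---ensuring each reduction consumes exactly one level on each side in lockstep---to be the main point requiring care, while all the remaining cases are mechanical.
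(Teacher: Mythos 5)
Your proof is correct and takes essentially the same approach as the paper: the paper dismisses this lemma with ``By inspection,'' and your case analysis (reflexive cases, the $\modownarrow$/continuation cases, the sequencing congruence, and the $\mathtt{while}$-versus-$\mathtt{while}^k$ cases with the $k \ge n$ implies $k-1 \ge n-1$ bookkeeping) is precisely that inspection carried out in full, with the needed downward-closure of $\reln{n}$ in its index made explicit as an auxiliary fact.
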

\begin{proof}
By inspection.
\qed \end{proof}

This result naturally generalises to multi-step reduction.

\begin{corollary} \label{corr:bisim-n-steps}
If $\config{\theta}{C}{K}{\sigma}{\theta_K}{m}{w} \reln{n}  \config{\theta}{C'}{K'}{\sigma}{\theta_K}{m}{w}$
and $\config{\theta}{C}{K}{\sigma}{\theta_K}{m}{w} \vdash^{*} \config{\theta''}{C''}{K''}{\sigma''}{\theta''_K}{m+n'}{w''}$
and $n' < n$
then $\config{\theta}{C'}{K'}{\sigma}{\theta_K}{m}{w}  \vdash^{*} \config{\theta''}{C'''}{K'''}{\sigma''}{\theta''_K}{m+n'}{w''}$
and $\config{\theta''}{C''}{K''}{\sigma''}{\theta''_K}{m+n'}{w''} \reln{n-n'} \config{\theta''}{C'''}{K'''}{\sigma''}{\theta''_K}{m+n'}{w''}$
(and vice versa).
\end{corollary}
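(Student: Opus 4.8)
The plan is to prove the statement by induction on the number $n'$ of reduction steps, using the single-step stratified bisimulation property of the preceding lemma as the engine of the induction. The key observation that makes the bookkeeping light is that two configurations related by $\reln{m}$ for $m > 0$ agree on their entropy, state, continuation entropy, step counter and weight, differing only in the statement and continuation components (in the prescribed $\mathtt{while}(\phi)\{C\}$-versus-$\mathtt{while}^k(\phi)\{C\}$ fashion). Consequently, any pair of matched reductions automatically shares all of these components, and I only need to track how the relation index decreases along the reduction.

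For the base case $n' = 0$, the reduction $\kappa_1 \vdash^{*} \kappa_1'$ is empty, so $\kappa_1' = \kappa_1$; taking the witnessing reduction on the other side to be empty as well gives $\kappa_2' = \kappa_2$, and the required relation $\kappa_1' \reln{n-0} \kappa_2'$ is exactly the hypothesis $\kappa_1 \reln{n} \kappa_2$. For the induction step, let $\kappa_1 \vdash^{*} \kappa_1'$ consist of $n' \ge 1$ steps with $n' < n$, and split it as $\kappa_1 \vdash^{*} \bar\kappa_1 \vdash \kappa_1'$, the prefix having $n'-1$ steps. Since $n'-1 < n$, the induction hypothesis supplies a matching reduction $\kappa_2 \vdash^{*} \bar\kappa_2$ with $\bar\kappa_1 \reln{n-(n'-1)} \bar\kappa_2$. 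The index $n-n'+1$ is strictly positive because $n' < n$, so the single-step bisimulation property applies to the final step $\bar\kappa_1 \vdash \kappa_1'$, producing a step $\bar\kappa_2 \vdash \kappa_2'$ with $\kappa_1' \reln{(n-n'+1)-1} \kappa_2'$, that is $\kappa_1' \reln{n-n'} \kappa_2'$. Concatenating the two parts yields the desired $n'$-step reduction $\kappa_2 \vdash^{*} \kappa_2'$.

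The ``and vice versa'' direction, which starts from a reduction of $\kappa_2$ and recovers one of $\kappa_1$, is entirely symmetric: it runs the identical induction but invokes the second clause of the single-step lemma in place of the first, so no separate argument is needed. I expect the only genuine obstacle to be the index arithmetic, namely ensuring that the single-step lemma is applied at a strictly positive index. This is precisely where the standing assumption $n' < n$ is consumed: it guarantees $n-n'+1 > 0$ at the one point where the single-step bisimulation is needed, and it is also what forces the residual index to land exactly at $n-n'$ rather than becoming negative. Every remaining aspect—agreement of entropies, states, weights and step counts across the matched reductions—is inherited directly from the shape of $\reln{m}$ and so requires no further verification.
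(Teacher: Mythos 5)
Your proof is correct: the induction on the number of steps $n'$, discharging each step with the single-step stratified bisimulation lemma and tracking the index arithmetic ($n-n'+1>0$ at the point where the lemma is invoked), is exactly the routine generalisation the paper has in mind—the paper states the corollary without proof, remarking only that the single-step result "naturally generalises to multi-step reduction." Your handling of the shared components (entropy, state, weight, counter) via the shape of $\reln{m}$ for $m>0$, and the symmetric treatment of the converse direction, are both sound and complete.
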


This leads us to the desired result for terminating runs.

\begin{lemma} \label{lemma:while-approx-os-right}
If $\config{\theta}{\mathtt{while}(\phi)\{C\}}{[]}{\sigma}{\theta_K}{n}{w} \vdash^{*}
\config{\theta'}{\modownarrow}{[]}{\sigma'}{\theta_K}{n+n'}{w'}$,
then there exists $k$ such that
$\config{\theta}{\mathtt{while}^k(\phi)\{C\}}{[]}{\sigma}{\theta_K}{n}{w} \vdash^{*}
\config{\theta'}{\modownarrow}{[]}{\sigma'}{\theta_K}{n+n'}{w'}$
\end{lemma}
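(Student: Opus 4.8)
The plan is to exploit the stratified bisimulation $(\reln{n})$ introduced just above together with Corollary~\ref{corr:bisim-n-steps}. The intuition is that the unbounded loop $\mathtt{while}(\phi)\{C\}$ and its $k$-th unfolding $\mathtt{while}^k(\phi)\{C\}$ behave identically for the first $k$ reduction steps; so if the unbounded loop happens to terminate within $n'$ steps, then any unfolding with $k > n'$ will reproduce the very same reduction chain, reaching the very same terminal configuration. The whole argument therefore reduces to choosing $k$ large enough and then transporting the given terminating reduction across the bisimulation.

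First I would set $k := n'+1$ (any $k \geq n'+1$ works equally well). Since $k \geq n'+1$, the inference rule $\mathtt{while}(\phi)\{C\} \reln{n'+1} \mathtt{while}^k(\phi)\{C\}$ applies, and the empty continuation is trivially related to itself, $[] \reln{n'+1} []$. As the two initial configurations agree on the entropy $\theta$, the state $\sigma$, the continuation entropy $\theta_K$, the step count $n$ and the weight $w$, differing only in the statement component, they are related by $\reln{n'+1}$.

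Next I would apply Corollary~\ref{corr:bisim-n-steps}, instantiating its relation depth with $n'+1$, its starting step count with $n$, and its number of steps with $n'$; the side condition ``steps strictly below the depth'' is exactly $n' < n'+1$, which holds. Since the unbounded-loop configuration reduces to $\config{\theta'}{\modownarrow}{[]}{\sigma'}{\theta_K}{n+n'}{w'}$ in precisely $n'$ steps, the corollary produces a configuration $\config{\theta'}{C'''}{K'''}{\sigma'}{\theta_K}{n+n'}{w'}$ reachable from the bounded-loop configuration in the same $n'$ steps, agreeing on entropy, state, continuation entropy, step count and weight, and satisfying $\config{\theta'}{\modownarrow}{[]}{\sigma'}{\theta_K}{n+n'}{w'} \reln{1} \config{\theta'}{C'''}{K'''}{\sigma'}{\theta_K}{n+n'}{w'}$.

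It then remains to argue that the $\reln{1}$-relation forces $C''' = \modownarrow$ and $K''' = []$. Inspecting the clauses defining $(\reln{1})$ on statements, the only ones admitting $\modownarrow$ on the left are reflexivity and the clause $\modownarrow \reln{n} \modownarrow$, both of which demand $C''' = \modownarrow$ (the $\mathtt{while}$ and sequencing clauses have incompatible left-hand shapes); and the sole clause for continuations relates only lists of equal length, so $[] \reln{1} K'''$ gives $K''' = []$. Hence the bounded loop reaches exactly the desired terminal configuration, which closes the proof. I expect this final ``unwinding'' step, together with keeping the two competing index conventions straight (the bisimulation depth versus the step count) so that the strict inequality required by the corollary is preserved, to be the only delicate points; everything else is a direct instantiation of the machinery already in place.
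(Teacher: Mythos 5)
Your proposal is correct and follows essentially the same route as the paper's own proof: choose $k = n'+1$, relate the two initial configurations via $\reln{n'+1}$, transport the $n'$-step reduction through Corollary~\ref{corr:bisim-n-steps}, and conclude from the residual $\reln{1}$-relation that the reached configuration must have statement $\modownarrow$ and continuation $[]$. The only difference is that you spell out the final inspection of the $\reln{1}$ clauses in more detail than the paper, which states it as immediate.
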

\begin{proof}
%By induction on $n'$
Take $k = n'+1$. We clearly have
$\mathtt{while}(\phi)\{C\} \reln{n'+1} \mathtt{while}^{n'+1}(\phi)\{C\}$,
and so 
$\config{\theta}{\mathtt{while}(\phi)\{C\}}{[]}{\sigma}{\theta_K}{n}{w} \reln{n'+1}
\config{\theta}{\mathtt{while}^{n'+1}(\phi)\{C\}}{[]}{\sigma}{\theta_K}{n}{w}$.
By Corollary~\ref{corr:bisim-n-steps}, 
$\config{\theta}{\mathtt{while}^{n'+1}(\phi)\{C\}}{[]}{\sigma}{\theta_K}{n}{w} \vdash^{*}
\config{\theta'}{C'}{K'}{\sigma'}{\theta_K}{n+n'}{w'}$, where $\modownarrow \reln{1} C'$
and $[] \reln{1} K'$, which implies $C'= \modownarrow$ and $K' = []$. Thus, the statement always
holds for $k = n'+1$.
\qed \end{proof}

This result leads to the following statement about the $\mathbf{O}_{C}^\sigma$ and 
$\mathbf{SC}_{C}^\sigma$ functions:

\begin{lemma} \label{lemma:while-o-sc-sup}
\sloppy For each $\phi$, $C$, $\sigma$, $\theta$,
such that $\mathbf{O}_{\mathtt{while}(\phi)\{C\}}^\sigma(\theta) \in \statespace$
there is a $k$ such that
$\mathbf{O}_{\mathtt{while}(\phi)\{C\}}^\sigma(\theta)
= \mathbf{O}_{\mathtt{while}^k(\phi)\{C\}}^\sigma(\theta)$
and $ \mathbf{SC}_{\mathtt{while}(\phi)\{C\}}^\sigma(\theta) =  \mathbf{SC}_{\mathtt{while}^k(\phi)\{C\}}^\sigma(\theta)$
\end{lemma}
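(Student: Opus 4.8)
The plan is to reduce both equalities to Lemma~\ref{lemma:while-approx-os-right}, which already carries the hard part of the argument (the stratified-bisimulation relating $\mathtt{while}(\phi)\{C\}$ to its finite unfoldings). First I would unpack the hypothesis $\mathbf{O}_{\mathtt{while}(\phi)\{C\}}^\sigma(\theta) \in \statespace$ through the definition of $\mathbf{O}$: since the result is a proper state $\tau \neq \failure$ and $\tau \neq \diverge$, the first clause of the definition applies, so there is a terminating reduction $\config{\theta}{\mathtt{while}(\phi)\{C\}}{[]}{\sigma}{\theta_K}{0}{1} \vdash^{*} \config{\theta'}{\modownarrow}{[]}{\tau}{\theta_K}{n'}{w}$ for some step count $n'$, final entropy $\theta'$ and weight $w$, with $\mathbf{O}_{\mathtt{while}(\phi)\{C\}}^\sigma(\theta) = \tau$.

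Next I would invoke Lemma~\ref{lemma:while-approx-os-right} with $n = 0$ and initial weight $1$ on this reduction. It yields a $k$ (concretely $k = n'+1$, as in the proof of that lemma) such that $\config{\theta}{\mathtt{while}^k(\phi)\{C\}}{[]}{\sigma}{\theta_K}{0}{1} \vdash^{*} \config{\theta'}{\modownarrow}{[]}{\tau}{\theta_K}{n'}{w}$, i.e. the $k$-th unfolding reaches the very same fully-evaluated configuration, at the same step $n'$ and with the same weight $w$. By the definition of $\mathbf{O}$ this immediately gives $\mathbf{O}_{\mathtt{while}^k(\phi)\{C\}}^\sigma(\theta) = \tau = \mathbf{O}_{\mathtt{while}(\phi)\{C\}}^\sigma(\theta)$, which is the first claimed equality.

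For the score equality I would argue through the approximations $\mathbf{SC}(\theta,m)$. Both the original loop and its $k$-th unfolding reach a configuration with statement $\modownarrow$, empty continuation, proper state $\tau$, and weight $w$ at step $n'$; thereafter only the (final) rule applies, which increments the step counter but leaves the state and weight unchanged. Hence for every $m \geq n'$ we have $\mathbf{SC}_{\mathtt{while}(\phi)\{C\}}^\sigma(\theta,m) = w = \mathbf{SC}_{\mathtt{while}^k(\phi)\{C\}}^\sigma(\theta,m)$. Since scores are antitone (Lemma~\ref{lemma:w-decreasing}) and both sequences are eventually constant equal to $w$, their infima—equivalently, their limits—coincide and equal $w$, giving $\mathbf{SC}_{\mathtt{while}(\phi)\{C\}}^\sigma(\theta) = \mathbf{SC}_{\mathtt{while}^k(\phi)\{C\}}^\sigma(\theta)$.

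The main obstacle is genuinely discharged upstream, inside Lemma~\ref{lemma:while-approx-os-right}; what remains delicate here is purely bookkeeping about the definition of $\mathbf{SC}$ as a limit of finite approximations. The point to be careful about is that the weight is \emph{stationary} from step $n'$ onward—this is exactly what the (final) rule guarantees—so that the limit is pinned to $w$ independently of the (possibly different) intermediate weights of the two reductions before step $n'$. Once this observation is made explicit, the proof is a direct combination of the unpacking of $\mathbf{O}$, one application of Lemma~\ref{lemma:while-approx-os-right}, and the eventual-constancy argument for $\mathbf{SC}$.
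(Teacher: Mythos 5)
Your proposal is correct and follows essentially the same route as the paper's proof: unpack the definition of $\mathbf{O}$ to obtain a terminating reduction, apply Lemma~\ref{lemma:while-approx-os-right} to transfer that exact reduction to $\mathtt{while}^k(\phi)\{C\}$, and read off both equalities. Your explicit eventual-constancy argument for $\mathbf{SC}$ (via the (final) rule and antitonicity) simply spells out what the paper leaves implicit when it asserts $\mathbf{SC}_{\mathtt{while}(\phi)\{C\}}^\sigma(\theta) = w$ directly from the terminating reduction.
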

\begin{proof}
\sloppy If $\mathbf{O}_{\mathtt{while}(\phi)\{C\}}^\sigma(\theta) \in \statespace$,
then by definition of $\mathbf{O}$,
$\config{\theta}{\mathtt{while}(\phi)\{C\}}{[]}{\sigma}{\theta_K}{0}{1}
\vdash^{*} \config{\theta'}{\modownarrow}{[]}{\sigma'}{\theta'_K}{n}{w}$, where $\sigma' \neq \failure$.
This implies   $\mathbf{O}_{\mathtt{while}(\phi)\{C\}}^\sigma(\theta) = \sigma'$
and $\mathbf{SC}_{\mathtt{while}(\phi)\{C\}}^\sigma(\theta) = w$.
By Lemma~\ref{lemma:while-approx-os-right},
there is a $k$ such that
$\config{\theta}{\mathtt{while}^k(\phi)\{C\}}{[]}{\sigma}{\theta_K}{0}{1}
\vdash^{*} \config{\theta'}{\modownarrow}{[]}{\sigma'}{\theta'_K}{n}{w}$.
Thus, $\mathbf{O}_{\mathtt{while}^k(\phi)\{C\}}^\sigma(\theta) = \sigma'$
and 
%$\mathbf{SCT}_{\mathtt{while}^k(\phi)\{C\}}^\sigma(\theta) = w$,
%with the latter implying
$\mathbf{SC}_{\mathtt{while}^k(\phi)\{C\}}^\sigma(\theta) = w$.
%by Lemma~\ref{lemma:sc-eq-sct}.
\qed \end{proof}

We can also show that if the evaluation of $\mathtt{while}(\phi)\{C\}$ gets stuck, 
so does the evaluation of $\mathtt{while}^k(\phi)\{C\}$ for large enough $k$.

\begin{lemma} \label{lemma:while-error-some-k}
If $\config{\theta}{\mathtt{while}(\phi)\{C\}}{[]}{\sigma}{\theta_K}{n}{w} \vdash^{*}
\config{\theta'}{C'}{K}{\sigma'}{\theta_K'}{n+n'}{w'} \nvdash$, 
then there exists $k$ such that
$\config{\theta}{\mathtt{while}^k(\phi)\{C\}}{[]}{\sigma}{\theta_K}{n}{w} \vdash^{*}
\config{\theta'}{C''}{K'}{\sigma'}{\theta_K'}{n+n'}{w'} \nvdash$.
\end{lemma}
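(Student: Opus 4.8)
The plan is to prove Lemma~\ref{lemma:while-error-some-k} by the same stratified-bisimulation technique already used for the terminating case in Lemma~\ref{lemma:while-approx-os-right}, adapting it to a reduction sequence that ends in a \emph{stuck} configuration rather than a fully-evaluated one. The key observation is that the hypothesis asserts a finite reduction of length $n'$ ending in an irreducible configuration $\config{\theta'}{C'}{K}{\sigma'}{\theta_K'}{n+n'}{w'}$. Since the number of steps is finite, I only need the bisimulation $\reln{m}$ to agree for finitely many steps, so choosing the index large enough relative to $n'$ will suffice.

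Concretely, I would set $k = n'+1$, exactly as in Lemma~\ref{lemma:while-approx-os-right}. By the defining rule of $\reln{n}$ for loops, $\mathtt{while}(\phi)\{C\} \reln{n'+1} \mathtt{while}^{n'+1}(\phi)\{C\}$, and hence the two initial configurations are related by $\reln{n'+1}$. Applying Corollary~\ref{corr:bisim-n-steps} with the given $n'$-step reduction (and $n' < n'+1$) yields a matching reduction $\config{\theta}{\mathtt{while}^{n'+1}(\phi)\{C\}}{[]}{\sigma}{\theta_K}{n}{w} \vdash^{*} \config{\theta'}{C''}{K'}{\sigma'}{\theta_K'}{n+n'}{w'}$ with $\config{\theta'}{C'}{K}{\sigma'}{\theta_K'}{n+n'}{w'} \reln{1} \config{\theta'}{C''}{K'}{\sigma'}{\theta_K'}{n+n'}{w'}$. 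The states, entropies, step counts and weights of these two final configurations coincide by the definition of $\reln{n}$ on configurations, so the approximated run reaches the same $\sigma'$, $w'$, etc.

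The one genuine obstacle, and the place where this lemma differs from the terminating case, is arguing that the target configuration is \emph{also} stuck. In Lemma~\ref{lemma:while-approx-os-right} the bisimulation delivered $C' = \modownarrow$ and $K' = []$ directly from $\reln{1}$; here I instead must show that $\config{\theta'}{C''}{K'}{\sigma'}{\theta_K'}{n+n'}{w'} \nvdash$. The clean way to do this is to invoke the bisimulation property one more time, contrapositively: if the approximated configuration could take a step, then by the second clause of the stratified-bisimulation lemma the original $\config{\theta'}{C'}{K}{\sigma'}{\theta_K'}{n+n'}{w'}$ would also take a (matching) step, contradicting the hypothesis that it is stuck. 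I would need to check that the relation $\reln{1}$ is strong enough to transfer reducibility in this direction; inspecting the rules, a one-step relation already guarantees that the top-level statement and continuation have matching shapes (e.g.\ an $\mathtt{observe}$ against an $\mathtt{observe}$, a stuck $\failure$-free non-$\modownarrow$ statement against its counterpart), so no reduction rule can apply to one without applying to the other. Since $\sigma'$ is shared, any side condition on $\sigma'$ (such as $\sigma' \neq \failure$ or a guard evaluation) is identical for both, completing the argument.
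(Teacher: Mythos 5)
Your proposal is correct and follows essentially the same route as the paper: choose $k = n'+1$, relate the two initial configurations by $\reln{n'+1}$, and apply Corollary~\ref{corr:bisim-n-steps} to obtain a matching $n'$-step reduction whose final configurations are related by $\reln{1}$. The only cosmetic difference is the final step: the paper establishes stuck-ness by a direct case analysis on the derivation of $C' \reln{1} C''$ (using that $K$ and $K'$ have equal length), whereas you invoke the second clause of the stratified-bisimulation lemma contrapositively to transfer irreducibility --- an equivalent, and arguably tidier, way to finish, since that lemma was itself proved by exactly such a case analysis.
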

\begin{proof}
Again, take $k = n'+1$. We have
$\mathtt{while}(\phi)\{C\} \reln{n'+1} \mathtt{while}^{n'+1}(\phi)\{C\}$,
and so 
$\config{\theta}{\mathtt{while}(\phi)\{C\}}{[]}{\sigma}{\theta_K}{n}{w} \reln{n'+1}
\config{\theta}{\mathtt{while}^{n'+1}(\phi)\{C\}}{[]}{\sigma}{\theta_K}{n}{w}$.
By Corollary~\ref{corr:bisim-n-steps}, 
$\config{\theta}{\mathtt{while}^{n'+1}(\phi)\{C\}}{[]}{\sigma}{\theta_K}{n}{w} \vdash^{*}
\config{\theta'}{C''}{K'}{\sigma'}{\theta_K}{n+n'}{w'}$, where $C' \reln{1} C''$
and $K \reln{1} K'$. By case analysis on the derivation of 
$C' \reln{1} C''$, and using the fact that $K$ and $K'$ must have the same length,
we conclude that $\config{\theta'}{C'}{K}{\sigma'}{\theta_K'}{n+n'}{w'}$
reduces if and only if $\config{\theta'}{C''}{K'}{\sigma'}{\theta_K'}{n+n'}{w'}$ reduces.
\qed \end{proof}

\subsubsection{Replacing $\mathtt{while}^n(\phi)\{C\}$ with $\mathtt{while}(\phi)\{C\}$}

We now prove the converse to the above result---that if $\mathtt{while}^n(\phi)\{C\}$ evaluates
with some entropy $\theta$, the unbounded loop $\mathtt{while}(\phi)\{C\}$ evaluates to the same
configuration. We begin with another relation $\relu$ on configurations, which effectively states
that for two configurations $\kappa_1$ and $\kappa_2$, if $\kappa_1 \relu \kappa_2$ and
$\kappa_1$ evaluates, then $\kappa_2$ is guaranteed to evaluate to the same final configuration. 
This relation is defined inductively as follows:

%We define another relation $\relu$ on statements, continuations and configurations inductively as follows:

\ 

\inference
{}
{C \relu C}

\ 

\inference
{}
{\modownarrow \relu \modownarrow}

\ 

\inference
{ }
{\mathtt{while}^k(\phi)\ \{ C \}  \relu
\mathtt{while}(\phi)\ \{ C' \} }

\ 

\inference
{k \leq l}
{\mathtt{while}^k(\phi)\ \{ C \}  \relu
\mathtt{while}^l(\phi)\ \{ C' \} }

\ 

\inference
{}
{\mathtt{diverge}  \relu
C }

\ 

\inference
{ C_2 \relu C_2'}
{C_1;C_2 \relu
C_1;C'_2 }

\ 

\inference
{\forall i \in 1..n \quad C_i \relu C'_i}
{[C_1, \dots, C_n] \relu  [C_1, \dots, C_n]}

\ 

\inference
{C \relu C' \quad K \relu K'}
{\config{\theta}{C}{K}{\sigma}{\theta_K}{m}{w} \relu  \config{\theta}{C'}{K'}{\sigma}{\theta_K}{m}{w} }

\begin{lemma} \label{lemma:relu-sim}
$\relu$ is a simulation---that is, 
if $\config{\theta}{C}{K}{\sigma}{\theta_K}{m}{w} \relu  \config{\theta}{C'}{K'}{\sigma}{\theta_K}{m}{w}$
and $\config{\theta}{C}{K}{\sigma}{\theta_K}{m}{w} \vdash \config{\theta''}{C''}{K''}{\sigma''}{\theta''_K}{m+1}{w''}$
and $C \neq \mathtt{diverge}$, then
$\config{\theta}{C'}{K'}{\sigma}{\theta_K}{m}{w}  \vdash \config{\theta''}{C'''}{K'''}{\sigma''}{\theta''_K}{m+1}{w''}$
and $\config{\theta''}{C''}{K''}{\sigma''}{\theta''_K}{m+1}{w''} \relu
\config{\theta''}{C'''}{K'''}{\sigma''}{\theta''_K}{m+1}{w''}$
\end{lemma}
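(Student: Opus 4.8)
The plan is to prove Lemma~\ref{lemma:relu-sim} by a straightforward case analysis on the derivation of the relation $\config{\theta}{C}{K}{\sigma}{\theta_K}{m}{w} \relu \config{\theta}{C'}{K'}{\sigma}{\theta_K}{m}{w}$, combined with inspection of the reduction rule that fires on the left-hand configuration. The key observation that makes this manageable is that $\relu$ relates configurations with \emph{identical} entropies, states, step counts and weights; only the current statement $C$ and the continuation $K$ may differ, and they differ only in that an $\mathtt{while}$-statement on the left may be replaced by a ``larger'' loop (either an unbounded $\mathtt{while}$ or a $\mathtt{while}^l$ with $l \geq k$) on the right. Since the hypothesis excludes $C = \mathtt{diverge}$, we never have to match a $\mathtt{diverge}$-step, which is exactly the case the relation deliberately leaves unmatched.

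First I would dispose of the easy structural cases. When $C \relu C'$ is derived from reflexivity ($C = C'$) and $K \relu K'$ is componentwise reflexive, the two configurations are literally equal except inside nested loop-statements, and any reduction rule other than (while-true)/(while-false) acts identically, producing related successor configurations; here I appeal to the fact that the rules only inspect the top-level shape of $C$ and $\sigma(\phi)$. The genuinely interesting cases are when the top-level statement being reduced is a bounded loop $\mathtt{while}^k(\phi)\{C''\}$ on the left related to $\mathtt{while}(\phi)\{C''\}$ or $\mathtt{while}^l(\phi)\{C''\}$ (with $k \leq l$) on the right. For these I would argue by the guard value: if $\sigma(\phi) = \mathtt{false}$, both sides fire (while-false) and reduce to $\modownarrow$, which is related by the $\modownarrow \relu \modownarrow$ rule; if $\sigma(\phi) = \mathtt{true}$, the left unfolds via the definition $\mathtt{while}^{k}(\phi)\{C''\} = \mathtt{if}(\phi)\{C''; \mathtt{while}^{k-1}(\phi)\{C''\}\}$ into $C''; \mathtt{while}^{k-1}(\phi)\{C''\}$ while the right unfolds into $C''; \mathtt{while}(\phi)\{C''\}$ (or $C''; \mathtt{while}^{l-1}(\phi)\{C''\}$). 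I would then check that $\mathtt{while}^{k-1}(\phi)\{C''\} \relu \mathtt{while}(\phi)\{C''\}$ (resp.\ $\mathtt{while}^{l-1}$) still holds, using the corresponding $\relu$-rules and the fact that $k-1 \leq l-1$, and that the sequencing rule for $\relu$ lifts this to $C''; \mathtt{while}^{k-1}(\ldots) \relu C''; \mathtt{while}(\ldots)$.

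The subtle point I would flag as the main obstacle is the alignment of the bounded-loop unfolding with the small-step rules: in this semantics the loop is unfolded by the (while-true) rule directly to $C; \mathtt{while}(\phi)\{C\}$, whereas $\mathtt{while}^{k}$ is \emph{syntactic sugar} for an $\mathtt{if}$-statement, so the left-hand side must first take an (if-true) step before it looks like a sequence. I would therefore have to be careful that the number of reduction steps match (both sides advance the step counter by exactly one per rule), which is why the $\reln{n}$ relation earlier needed stratification but $\relu$ as a plain simulation suffices here: we only need existence of a matching single step preserving $\relu$, not a step-indexed bisimulation. Concretely, I expect to verify that reducing $\mathtt{while}^{k}(\phi)\{C\}$ by (if-true) and reducing $\mathtt{while}(\phi)\{C\}$ by (while-true) both yield configurations whose current statements are related by the sequencing rule for $\relu$, and that the entropy, state, and weight are untouched by both steps. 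Once these loop cases and the propagation through (seq)/(pop) via the continuation rule $K \relu K'$ are checked, the remaining cases are routine inspection, exactly as the proof sketch ``By inspection'' suggests.
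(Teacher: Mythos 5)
Your proposal is correct and takes essentially the same route as the paper, whose entire proof is the case analysis on the reduction rules (combined with the derivation of $\relu$) that you spell out; the crux you identify---that $\mathtt{while}^k(\phi)\{C\}$ is syntactic sugar for an $\mathtt{if}$-statement, so the left configuration steps by (if-true)/(if-false) while the right steps by (while-true)/(while-false), each in exactly one step with entropy, state and weight untouched, after which the sequencing rule for $\relu$ relates the unfoldings---is exactly the point that makes the simulation go through. One small slip to fix: early on you write that on a false guard ``both sides fire (while-false)'', whereas the left (bounded) side fires (if-false); you implicitly correct this yourself in the final paragraph.
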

\begin{proof}
By case analysis on the reduction rules.
\qed \end{proof}

\begin{corollary} \label{corr:sim-red}
If $\config{\theta}{C}{K}{\sigma}{\theta_K}{m}{w} \relu  \config{\theta}{C'}{K'}{\sigma}{\theta_K}{m}{w}$
and $\config{\theta}{C}{K}{\sigma}{\theta_K}{m}{w} \vdash^{*} \config{\theta''}{C''}{K''}{\sigma''}{\theta''_K}{m+n'}{w''}$
and $C'' \neq \mathtt{diverge}$,
then 

\noindent $\config{\theta}{C'}{K'}{\sigma}{\theta_K}{m}{w}  \vdash^{*} \config{\theta''}{C'''}{K'''}{\sigma''}{\theta''_K}{m+n'}{w''}$
and $\config{\theta''}{C''}{K''}{\sigma''}{\theta''_K}{m+n'}{w''} \relu \config{\theta''}{C'''}{K'''}{\sigma''}{\theta''_K}{m+n'}{w''}$
\end{corollary}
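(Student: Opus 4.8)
The plan is to prove the corollary by induction on the number $n'$ of reduction steps in the chain on the left, using Lemma~\ref{lemma:relu-sim} as the single-step engine. Write the left reduction explicitly as $\kappa_0 \vdash \kappa_1 \vdash \dots \vdash \kappa_{n'}$, where $\kappa_0$ is the given starting configuration (with statement $C$ and continuation $K$), $\kappa_{n'}$ is the final configuration (with statement $C''\neq\mathtt{diverge}$), and $\kappa_0 \relu \kappa'_0$ with $\kappa'_0$ the right-hand configuration (statement $C'$, continuation $K'$). The goal is to construct a parallel chain $\kappa'_0 \vdash \kappa'_1 \vdash \dots \vdash \kappa'_{n'}$ with $\kappa_i \relu \kappa'_i$ for all $i$; then $\kappa'_{n'}$ is the desired $\config{\theta''}{C'''}{K'''}{\sigma''}{\theta''_K}{m+n'}{w''}$. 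The base case $n'=0$ is immediate, since the two configurations are already related by hypothesis and the reduction is empty.

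For the inductive step the first task is to verify that the statement being reduced at each step is not $\mathtt{diverge}$, since this is exactly the side condition of Lemma~\ref{lemma:relu-sim}. This is where the hypothesis $C''\neq\mathtt{diverge}$ is used. I would argue that $\mathtt{diverge}$ is \emph{absorbing} under reduction: if the statement of some $\kappa_i$ with $i<n'$ were $\mathtt{diverge}$, then since $i<n'$ the configuration reduces, and the only rule applicable to a $\mathtt{diverge}$ statement is the (diverge) rule (which moreover forces $\sigma\neq\failure$) and leaves the statement as $\mathtt{diverge}$; by determinacy of reduction (Lemma~\ref{lemma:eval-det}) and a trivial induction, every later configuration, in particular $\kappa_{n'}$, would then carry the statement $\mathtt{diverge}$, contradicting $C''\neq\mathtt{diverge}$. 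Hence the statement of $\kappa_0$ is not $\mathtt{diverge}$, so Lemma~\ref{lemma:relu-sim} applies to the first step: from $\kappa_0 \relu \kappa'_0$ and $\kappa_0 \vdash \kappa_1$ we obtain $\kappa'_1$ with $\kappa'_0 \vdash \kappa'_1$ and $\kappa_1 \relu \kappa'_1$, the two sides agreeing on state, entropies, step count and weight. Applying the induction hypothesis to the shorter chain $\kappa_1 \vdash^{*} \kappa_{n'}$ (whose final statement is still $\neq\mathtt{diverge}$) together with $\kappa_1\relu\kappa'_1$ yields $\kappa'_1 \vdash^{*} \kappa'_{n'}$ with $\kappa_{n'}\relu\kappa'_{n'}$; prepending the step $\kappa'_0\vdash\kappa'_1$ gives $\kappa'_0 \vdash^{*} \kappa'_{n'}$, as required.

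The main obstacle is precisely this $\mathtt{diverge}$ side condition: the single-step simulation genuinely fails on a $\mathtt{diverge}$ statement, because the rule $\mathtt{diverge}\relu C$ relates a diverging configuration on the left to an \emph{arbitrary} configuration on the right, which need not reproduce the self-loop of (diverge). The role of the hypothesis $C''\neq\mathtt{diverge}$ is to guarantee that no such step is ever taken along the chain, and the observation that $\mathtt{diverge}$ is absorbing (so its absence at the end forces its absence throughout, via determinacy) is the one technical point needed to discharge the side condition uniformly at every step of the induction. Everything else is a routine lifting of a single-step simulation to its reflexive–transitive closure.
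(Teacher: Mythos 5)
Your proof is correct and is essentially the argument the paper intends: the corollary is stated there without explicit proof as the multi-step lifting of Lemma~\ref{lemma:relu-sim}, obtained precisely by your induction on $n'$. Your observation that $\mathtt{diverge}$ is absorbing under reduction---so that $C'' \neq \mathtt{diverge}$ at the end of the chain rules out $\mathtt{diverge}$ at every intermediate configuration, discharging the side condition of the single-step simulation at each step---is exactly the point the paper leaves implicit.
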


We can now show the desired result for terminating reductions.

\begin{lemma} \label{lemma:while-approx-os-left}
If
$\config{\theta}{\mathtt{while}^k(\phi)\{C\}}{[]}{\sigma}{\theta_K}{n}{w} \vdash^{*}
\config{\theta'}{\modownarrow}{[]}{\sigma'}{\theta_K}{n+n'}{w'}$,
then 
\noindent $\config{\theta}{\mathtt{while}(\phi)\{C\}}{[]}{\sigma}{\theta_K}{n}{w} \vdash^{*}
\config{\theta'}{\modownarrow}{[]}{\sigma'}{\theta_K}{n+n'}{w'}$.
%Works also for \sigma = \failure
\end{lemma}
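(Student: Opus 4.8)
The plan is to establish the converse direction to Lemma~\ref{lemma:while-approx-os-right}: whereas that lemma replaced the unbounded loop with a sufficiently large bounded approximation, here I want to show that any terminating run of a bounded loop $\mathtt{while}^k(\phi)\{C\}$ is matched step-for-step by the unbounded loop $\mathtt{while}(\phi)\{C\}$. The natural tool is the simulation relation $\relu$ just introduced, together with its multi-step version Corollary~\ref{corr:sim-red}. The key observation is that $\relu$ relates $\mathtt{while}^k(\phi)\{C\}$ to $\mathtt{while}(\phi)\{C\}$ directly (the third inference rule defining $\relu$), so the two starting configurations are related.

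First I would note that $\mathtt{while}^k(\phi)\{C\} \relu \mathtt{while}(\phi)\{C\}$ holds by the relevant defining rule of $\relu$, and since $[] \relu []$ and the remaining components of the configurations agree, the configuration rule for $\relu$ gives
\[
\config{\theta}{\mathtt{while}^k(\phi)\{C\}}{[]}{\sigma}{\theta_K}{n}{w}
\ \relu\
\config{\theta}{\mathtt{while}(\phi)\{C\}}{[]}{\sigma}{\theta_K}{n}{w}.
\]
Next I would apply Corollary~\ref{corr:sim-red} to the given reduction
$\config{\theta}{\mathtt{while}^k(\phi)\{C\}}{[]}{\sigma}{\theta_K}{n}{w} \vdash^{*}
\config{\theta'}{\modownarrow}{[]}{\sigma'}{\theta_K}{n+n'}{w'}$.
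The side condition of that corollary requires that the final statement be different from $\mathtt{diverge}$; here the final statement is $\modownarrow$, so the condition is satisfied. The corollary then yields a matching reduction from the unbounded-loop configuration ending in a configuration related to $\config{\theta'}{\modownarrow}{[]}{\sigma'}{\theta_K}{n+n'}{w'}$ by $\relu$, with identical entropy, state, step count and weight components.

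It then remains to read off that the related final configuration is exactly the one claimed. Since $\modownarrow \relu C'''$ forces $C''' = \modownarrow$ (by inspection of the defining rules of $\relu$, only $\modownarrow \relu \modownarrow$ applies), and $[] \relu K'''$ forces $K''' = []$ (the list rule relates lists of equal length, and the empty list only to itself), the resulting configuration must be precisely $\config{\theta'}{\modownarrow}{[]}{\sigma'}{\theta_K}{n+n'}{w'}$, which is what we need. The main obstacle I anticipate is the verification, via case analysis on the $\relu$-defining rules, that $\relu$ is genuinely a simulation respecting the side condition on $\mathtt{diverge}$ (Lemma~\ref{lemma:relu-sim} and its corollary); once those are in hand, the present lemma is a direct application. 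A minor subtlety worth checking is that the bounded loop run never passes through a $\mathtt{diverge}$ configuration as its \emph{final} statement en route — but since the run terminates in $\modownarrow$ and the corollary only constrains the endpoint, this causes no difficulty.
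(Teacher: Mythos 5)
Your proof is correct and follows essentially the same route as the paper's: both start from $\mathtt{while}^k(\phi)\{C\} \relu \mathtt{while}(\phi)\{C\}$, apply Corollary~\ref{corr:sim-red} to transfer the terminating run, and conclude by inverting $\modownarrow \relu C'$ and $[] \relu K'$ to pin down the final configuration. Your extra remarks on the $\mathtt{diverge}$ side condition are accurate but not needed beyond what the corollary already provides.
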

\begin{proof}
We have
$\config{\theta}{\mathtt{while}^k(\phi)\{C\}}{[]}{\sigma}{\theta_K}{n}{w} \relu
\config{\theta}{\mathtt{while}(\phi)\{C\}}{[]}{\sigma}{\theta_K}{n}{w}$,
so by Corollary~\ref{corr:sim-red}, 
$\config{\theta}{\mathtt{while}(\phi)\{C\}}{[]}{\sigma}{\theta_K}{n}{w} \vdash^{*}
\config{\theta'}{C'}{K'}{\sigma'}{\theta_K}{n+n'}{w'}$
where $\modownarrow \relu C'$ and $[] \relu K'$, which implies $C' = \modownarrow$ and $K' = []$.
\qed \end{proof}

If the evaluation of $\mathtt{while}^k(\phi)\{C\}$ gets stuck, so does the evaluation of 
$\mathtt{while}(\phi)\{C\}$.

\begin{lemma} \label{lemma:relu-both-reduce}
If $\config{\theta}{C}{K}{\sigma}{\theta_K}{n}{w} \vdash \config{\theta'}{C'}{K'}{\sigma'}{\theta_K'}{n'}{w'}$
and $\hat{C} \relu C$ and $ \hat{K} \relu K$, then 
$\config{\theta}{\hat{C}}{\hat{K}}{\sigma}{\theta_K}{n}{w} \vdash \config{\theta''}{C''}{K''}{\sigma''}{\theta_K''}{n''}{w''}$.
\end{lemma}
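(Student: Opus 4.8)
The plan is to prove the statement by a direct case analysis on the last rule used to derive $\hat{C} \relu C$, exploiting the fact that reducibility of a configuration depends only on its current statement, its current state, and---in one case---whether its continuation is empty. The role of the hypothesis that $\langle\theta, C, K, \sigma, \theta_K, n, w\rangle$ reduces is twofold: it supplies the side condition $\sigma \neq \failure$ (every reduction rule carries it), and it rules out the two degenerate stuck statements discussed below.

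First I would record the two standing observations that drive every case. Since $\langle\theta, C, K, \sigma, \theta_K, n, w\rangle$ reduces, we have $\sigma \neq \failure$; and since $\hat{C}$ is evaluated in the very same state $\sigma$, this side condition is equally available to $\langle\theta, \hat{C}, \hat{K}, \sigma, \theta_K, n, w\rangle$. The second observation is that, for $\sigma \neq \failure$, a configuration fails to reduce \emph{only} when its statement is $\mathtt{skip}$ (which has no rule) or $\mathtt{score}(E)$ with $\sigma(E)\notin(0,1]$; every other statement---$\mathtt{diverge}$, an assignment, a draw, an $\mathtt{observe}$, a sequence, an $\mathtt{if}$, a bounded or unbounded $\mathtt{while}$, or $\modownarrow$---admits an applicable rule whose firing is independent of the continuation, the single exception being $\modownarrow$, which reduces by (pop) when the continuation is non-empty and by (final) when it is empty.

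Then I would walk through the $\relu$-rules for statements. In the reflexive case $\hat{C}=C$ the same rule that reduced $\langle C,K\rangle$ applies to $\langle \hat{C},\hat{K}\rangle$; the only dependence on the continuation arises when $C=\modownarrow$, and there $\langle\modownarrow,\hat{K}\rangle$ reduces by (final) or (pop) according to whether $\hat{K}$ is empty, so reducibility is preserved regardless of how $\hat{K}$ relates to $K$. Moreover $C$ cannot be $\mathtt{skip}$, and if $C=\mathtt{score}(E)$ the hypothesis forces $\sigma(E)\in(0,1]$, which is exactly what (score) needs for $\hat{C}$. For the cases $\mathtt{while}^k(\phi)\{C_0\}\relu\mathtt{while}(\phi)\{C_0'\}$ and $\mathtt{while}^k(\phi)\{C_0\}\relu\mathtt{while}^l(\phi)\{C_0'\}$, the statement $\hat{C}$ is a bounded unfolding, hence equal to $\mathtt{diverge}$ (when $k=0$) or to $\mathtt{if}(\phi)\{\cdots\}$ (when $k>0$), both of which reduce---by (diverge), (if-true) or (if-false)---under $\sigma\neq\failure$. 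The case $\mathtt{diverge}\relu C$ is immediate by (diverge). Finally, in the sequencing case $\hat{C}=C_1;C_2\relu C_1;C_2'=C$, the two statements share the prefix $C_1$ and therefore the same leftmost atomic statement, so (seq) applies to $\hat{C}$ with the same split point it used on $C$, independently of the tail $C_2$ and of the continuation.

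I expect no deep obstacle; the one point needing care is the sequencing case, where (seq) may be applied only when its head satisfies $C_1 \neq C_1';C_1''$. This is harmless precisely because $\relu$ relates sequences that share their head, so the leftmost atomic statement, and hence the split dictated by (seq), is identical on both sides. A second mild subtlety is the $\modownarrow$ continuation case, where I rely on the fact that (pop) and (final) together cover every continuation, making the exact shape of $\hat{K}$ irrelevant; consequently the hypothesis $\hat{K}\relu K$ is used only through the trivial fact that $\hat{K}$ is a well-formed continuation.
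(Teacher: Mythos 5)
Your proof is correct and takes essentially the same approach as the paper: the paper's proof is exactly a case analysis on the derivation of $\hat{C} \relu C$, which it leaves implicit, and your write-up supplies those details. Your two key observations---that the hypothesis yields $\sigma \neq \failure$ and rules out the stuck statements ($\mathtt{skip}$, or $\mathtt{score}(E)$ with $\sigma(E) \notin (0,1]$) in the reflexive case, and that the continuation affects reducibility only through the (pop)/(final) dichotomy for $\modownarrow$---are precisely the points the paper's one-line proof relies on.
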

\begin{proof}
%TODO
By case analysis on the derivation of $\hat{C} \relu C$.
%By simultaneous induction on the derivation of $\hat{C} \relu C$ and $ \hat{K} \relu K$ ?
\qed \end{proof}

\begin{lemma} \label{lemma:while-error}
If 
$\config{\theta}{\mathtt{while}^k(\phi)\{C\}}{[]}{\sigma}{\theta_K}{n}{w} \vdash^{*}
\config{\theta'}{C'}{K}{\sigma'}{\theta_K'}{n+n'}{w'} \nvdash$, then
$\config{\theta}{\mathtt{while}(\phi)\{C\}}{[]}{\sigma}{\theta_K}{n}{w} \vdash^{*}
\config{\theta'}{C''}{K'}{\sigma'}{\theta_K'}{n+n'}{w'} \nvdash$.
\end{lemma}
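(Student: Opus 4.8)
The plan is to adapt the proof of Lemma~\ref{lemma:while-approx-os-left} from terminating runs to stuck runs, using the simulation relation $\relu$ together with Corollary~\ref{corr:sim-red} and, in contrapositive form, Lemma~\ref{lemma:relu-both-reduce}. First I would note that the defining rules of $\relu$ give $\mathtt{while}^k(\phi)\{C\} \relu \mathtt{while}(\phi)\{C\}$ and $[] \relu []$, so that $\config{\theta}{\mathtt{while}^k(\phi)\{C\}}{[]}{\sigma}{\theta_K}{n}{w} \relu \config{\theta}{\mathtt{while}(\phi)\{C\}}{[]}{\sigma}{\theta_K}{n}{w}$. This is exactly the pair of related configurations required to invoke Corollary~\ref{corr:sim-red} on the given reduction $\config{\theta}{\mathtt{while}^k(\phi)\{C\}}{[]}{\sigma}{\theta_K}{n}{w} \vdash^{*} \config{\theta'}{C'}{K}{\sigma'}{\theta_K'}{n+n'}{w'}$.

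The step I expect to be the main obstacle is discharging the side condition of Corollary~\ref{corr:sim-red}, namely that the final statement $C'$ of the stuck run is not $\mathtt{diverge}$. I would establish this by a brief case analysis on why the configuration is stuck. If the statement were $\mathtt{diverge}$ and the state were different from $\failure$, the (diverge) rule would fire, contradicting $\nvdash$; and the only rule producing the state $\failure$ is (condition-false), which simultaneously sets the statement to $\modownarrow$ and empties the continuation, after which no rule applies. Hence a stuck configuration with state $\failure$ carries statement $\modownarrow$, and in all cases $C' \neq \mathtt{diverge}$. This is precisely the point where the hypothesis matters: because $\mathtt{while}^0(\phi)\{C\} = \mathtt{diverge}$ loops forever rather than getting stuck, a stuck bounded loop can never have simply exhausted its iteration budget.

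With the side condition secured, Corollary~\ref{corr:sim-red} produces a matching reduction $\config{\theta}{\mathtt{while}(\phi)\{C\}}{[]}{\sigma}{\theta_K}{n}{w} \vdash^{*} \config{\theta'}{C''}{K'}{\sigma'}{\theta_K'}{n+n'}{w'}$ together with $\config{\theta'}{C'}{K}{\sigma'}{\theta_K'}{n+n'}{w'} \relu \config{\theta'}{C''}{K'}{\sigma'}{\theta_K'}{n+n'}{w'}$, whose defining rule for configurations yields $C' \relu C''$ and $K \relu K'$. It then remains to check that the reached configuration is itself stuck. I would argue by contradiction: if $\config{\theta'}{C''}{K'}{\sigma'}{\theta_K'}{n+n'}{w'}$ reduced, then, since $C' \relu C''$ and $K \relu K'$, Lemma~\ref{lemma:relu-both-reduce} would supply a reduction step for $\config{\theta'}{C'}{K}{\sigma'}{\theta_K'}{n+n'}{w'}$, contradicting the assumed $\nvdash$. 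Hence $\config{\theta'}{C''}{K'}{\sigma'}{\theta_K'}{n+n'}{w'} \nvdash$, which is the desired conclusion.
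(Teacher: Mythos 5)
Your overall strategy is exactly the paper's for the main case: relate the two initial configurations by $\relu$, push the stuck run through Corollary~\ref{corr:sim-red}, and then use Lemma~\ref{lemma:relu-both-reduce} in contrapositive to see that the matching configuration is also stuck. However, the step where you discharge the side condition of Corollary~\ref{corr:sim-red} contains a false claim: it is \emph{not} true that $C' \neq \mathtt{diverge}$ in all cases. Your argument --- ``the only rule producing the state $\failure$ is (condition-false), hence a stuck configuration with state $\failure$ carries statement $\modownarrow$'' --- only constrains configurations that were \emph{reached by at least one reduction step}. The hypothesis of the lemma also admits zero-step runs from an initial state that is already $\failure$ (the paper explicitly allows this; cf.\ the remark in the definition of $\mathbf{O}_C^{\sigma}$ that the error case ``is also applicable to the initial state $\sigma = \failure$''). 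Concretely, take $k = 0$ and $\sigma = \failure$: then the initial configuration is $\config{\theta}{\mathtt{diverge}}{[]}{\failure}{\theta_K}{n}{w}$, which is stuck because every rule, including (diverge), requires $\sigma \neq \failure$; so the hypothesis holds with $n' = 0$ and $C' = \mathtt{diverge}$, contradicting your claimed invariant. Your intuition that ``a stuck bounded loop can never have exhausted its iteration budget'' fails for the same reason: $\mathtt{diverge}$ does get stuck, precisely in state $\failure$.

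The gap is easy to repair, and the paper does so by splitting into two cases rather than ruling one out. If $C' \neq \mathtt{diverge}$, your argument goes through verbatim. If $C' = \mathtt{diverge}$, then $\sigma' = \failure$ (otherwise (diverge) would fire); since no reduction rule produces a configuration whose statement is $\mathtt{diverge}$ and whose state is $\failure$ (the only rule yielding $\failure$ is (condition-false), which produces $\modownarrow$ and $[]$), such a configuration is derivable only from itself, forcing $n' = 0$, $k = 0$ and $\sigma = \failure$. In that case $\config{\theta}{\mathtt{while}(\phi)\{C\}}{[]}{\failure}{\theta_K}{n}{w}$ is itself stuck, so the conclusion holds with a zero-step reduction. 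Without this second case your proof, as written, rests on an assertion that is refuted by a legitimate instance of the hypothesis.
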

\begin{proof}
%Almost identical to the proof of Lemma~\ref{lemma:while-k-error}:

If $C' \neq \texttt{diverge}$, then by Corollary~\ref{corr:sim-red}, 
$\config{\theta}{\mathtt{while}(\phi)\{C\}}{[]}{\sigma}{\theta_K}{n}{w} \vdash^{*}
\config{\theta'}{C''}{K'}{\sigma'}{\theta_K}{n+n'}{w'}$
where $C' \relu C''$ and $K \relu K'$. By Lemma~\ref{lemma:relu-both-reduce}, if
$\config{\theta'}{C''}{K'}{\sigma'}{\theta_K}{n+n'}{w'}$ reduces, then 
$\config{\theta'}{C'}{K}{\sigma'}{\theta_K}{n+n'}{w'}$ also reduces, contradicting the assumption.
Hence, $\config{\theta'}{C''}{K'}{\sigma'}{\theta_K}{n+n'}{w'} \nvdash$, as required.

If $C' = \texttt{diverge}$, then $\sigma' = \failure$, as otherwise
$\config{\theta'}{\texttt{diverge}}{K}{\sigma'}{\theta_K'}{n+n'}{w'}$ would reduce by (diverge).
However, $\config{\theta'}{\texttt{diverge}}{K}{\failure}{\theta_K'}{n+n'}{w'}$ is not derivable from
any initial configuration other than itself. Hence, $n' = 0$ and $k=0$ and $\sigma = \failure$.
Since no configuration with state $\failure$ reduces, we have
$\config{\theta}{\mathtt{while}(\phi)\{C\}}{[]}{\failure}{\theta_K}{n}{w} \nvdash$, as required. 
\qed \end{proof}
\begin{corollary} \label{corr:o-while-error}
$\mathbf{O}_{\mathtt{while}(\phi)\{C\}}^\sigma(\theta) \geq \mathbf{O}_{\mathtt{while}^k(\phi)\{C\}}^\sigma(\theta)$
for all $k$.
%(w.r.t. flat CPO with bottom $\diverge$).
\end{corollary}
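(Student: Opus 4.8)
The plan is to obtain the corollary directly from the two immediately preceding lemmas by a case analysis on the outcome $\mathbf{O}_{\mathtt{while}^k(\phi)\{C\}}^\sigma(\theta)$ of the $k$-th approximation. Here $\geq$ refers to the ordering on $\fullstatespace$ under which $\diverge$ is the least element; I only need this bottom property of $\diverge$, because in every other case I will in fact establish \emph{equality} of the two sides, which implies the inequality irrespective of how proper states and $\failure$ are ordered among themselves.

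Before the main split I would dispatch the two exceptional initial states. If $\sigma = \diverge$ then both sides equal $\diverge$ by the defining clause $\mathbf{O}_C^{\diverge}(\theta) = \diverge$, and if $\sigma = \failure$ then every configuration is already irreducible, so both sides equal $\failure$; in both situations the inequality is in fact an equality. I may therefore assume $\sigma \in \statespace$ from here on.

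The heart of the proof is the three-way split on $\mathbf{O}_{\mathtt{while}^k(\phi)\{C\}}^\sigma(\theta)$. When this value is $\diverge$ the inequality holds vacuously, since $\diverge$ is the bottom of the order. When it is a proper state $\tau \in \statespace$, the definition of $\mathbf{O}$ supplies a completed reduction $\config{\theta}{\mathtt{while}^k(\phi)\{C\}}{[]}{\sigma}{\theta_K}{0}{1} \vdash^{*} \config{\theta'}{\modownarrow}{[]}{\tau}{\theta_K}{n}{w}$ with $\tau \neq \failure$, and Lemma~\ref{lemma:while-approx-os-left} lifts it to the identical terminating reduction for $\mathtt{while}(\phi)\{C\}$, whence $\mathbf{O}_{\mathtt{while}(\phi)\{C\}}^\sigma(\theta) = \tau$. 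When it is $\failure$, the definition of $\mathbf{O}$ gives a stuck reduction $\config{\theta}{\mathtt{while}^k(\phi)\{C\}}{[]}{\sigma}{\theta_K}{0}{1} \vdash^{*} \config{\theta'}{C'}{K}{\sigma'}{\theta_K'}{n}{w} \nvdash$, and Lemma~\ref{lemma:while-error} produces a matching stuck reduction for the unbounded loop, so $\mathbf{O}_{\mathtt{while}(\phi)\{C\}}^\sigma(\theta) = \failure$ as well. In both of the latter cases the two sides coincide, which is stronger than the stated inequality.

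The only delicate point, and the step I expect to require the most attention, is aligning the hypotheses of Lemma~\ref{lemma:while-approx-os-left} and Lemma~\ref{lemma:while-error} with the exact definition of $\mathbf{O}$: both lemmas are stated for an arbitrary initial step count and weight, so I would instantiate them at the initial configuration with step count $0$ and weight $1$. One must also check, in the terminating case, that the reduction recovered for the unbounded loop genuinely ends in a final configuration with empty continuation rather than merely reaching $\modownarrow$ with a nonempty stack --- but this is precisely the form guaranteed by the conclusion of Lemma~\ref{lemma:while-approx-os-left}. With this bookkeeping settled, the corollary follows.
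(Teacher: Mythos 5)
Your proposal is correct and is exactly the argument the paper intends: the corollary is stated without proof as an immediate consequence of Lemma~\ref{lemma:while-approx-os-left} (terminating runs) and Lemma~\ref{lemma:while-error} (stuck runs), combined via the case split on $\mathbf{O}_{\mathtt{while}^k(\phi)\{C\}}^\sigma(\theta)$ in the flat order with bottom $\diverge$, which is what you carry out. Your handling of the exceptional initial states and of the instantiation at step count $0$ and weight $1$ matches the definition of $\mathbf{O}$, so there is no gap.
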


\subsubsection{Replacing one bounded loop with another}

We now prove that a bounded loop $\mathtt{while}^k(\phi)\{C\}$ can be safely replaced by 
another bounded loop with a higher bound.

\begin{lemma} \label{lemma:o-while-increasing}
If $m \geq k$ and 
$\config{\theta}{\mathtt{while}^k(\phi)\{C\}}{[]}{\sigma}{\theta_K}{n}{w} \vdash^{*}
\config{\theta'}{\modownarrow}{[]}{\sigma'}{\theta_K}{n+n'}{w'}$, then
$\config{\theta}{\mathtt{while}^m(\phi)\{C\}}{[]}{\sigma}{\theta_K}{n}{w} \vdash^{*}
\config{\theta'}{\modownarrow}{[]}{\sigma'}{\theta_K}{n+n'}{w'}$

\end{lemma}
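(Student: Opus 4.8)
The plan is to mirror the proof of Lemma~\ref{lemma:while-approx-os-left}, reusing the simulation relation $\relu$ together with its multi-step version Corollary~\ref{corr:sim-red}. The only new ingredient is the $\relu$-clause relating two bounded loops $\mathtt{while}^k$ and $\mathtt{while}^l$ under the premise $k \leq l$, which is exactly tailored to this statement: it lets us put the lower-bound loop on the left, matching the orientation in which $\relu$ propagates reductions.

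First I would instantiate that clause with $l = m$. Since $m \geq k$, it gives $\mathtt{while}^k(\phi)\{C\} \relu \mathtt{while}^m(\phi)\{C\}$. Combining this with $[] \relu []$ (the list clause at length $0$) and the configuration clause for $\relu$ yields
\[
\config{\theta}{\mathtt{while}^k(\phi)\{C\}}{[]}{\sigma}{\theta_K}{n}{w} \; \relu \; \config{\theta}{\mathtt{while}^m(\phi)\{C\}}{[]}{\sigma}{\theta_K}{n}{w}.
\]
Here the more constrained loop sits on the left, which is the side whose reductions are simulated by the right.

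Next I would feed this related pair, together with the hypothesised reduction $\config{\theta}{\mathtt{while}^k(\phi)\{C\}}{[]}{\sigma}{\theta_K}{n}{w} \vdash^{*} \config{\theta'}{\modownarrow}{[]}{\sigma'}{\theta_K}{n+n'}{w'}$, into Corollary~\ref{corr:sim-red}. Its side condition is that the terminal statement differ from $\mathtt{diverge}$; since that statement is $\modownarrow$, the condition holds, and the corollary delivers a matching reduction $\config{\theta}{\mathtt{while}^m(\phi)\{C\}}{[]}{\sigma}{\theta_K}{n}{w} \vdash^{*} \config{\theta'}{C'''}{K'''}{\sigma'}{\theta_K}{n+n'}{w'}$ whose final configuration is $\relu$-related to $\config{\theta'}{\modownarrow}{[]}{\sigma'}{\theta_K}{n+n'}{w'}$, hence agreeing with it on the entropy $\theta'$, the state $\sigma'$, the continuation entropy $\theta_K$, the step count and the weight.

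Finally I would invert this relation to pin down $C'''$ and $K'''$. Inspecting the clauses defining $\relu$, every clause with $\modownarrow$ on the left forces $C''' = \modownarrow$, and the list clause (which relates lists only of equal length) forces $K''' = []$ from $[] \relu K'''$. This collapses the derived reduction to exactly the desired $\config{\theta}{\mathtt{while}^m(\phi)\{C\}}{[]}{\sigma}{\theta_K}{n}{w} \vdash^{*} \config{\theta'}{\modownarrow}{[]}{\sigma'}{\theta_K}{n+n'}{w'}$. The step I expect to require the most care is this inversion, together with the observation that the configuration-level clause of $\relu$ equates every component except the statement and the continuation; this is precisely what guarantees that the already-fixed final state, weight and entropies are carried over unchanged rather than merely replaced by some related values. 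Everything else is a direct instantiation of results already established in this subsection.
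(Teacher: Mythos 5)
Your proposal is correct and follows essentially the same route as the paper's proof: establish $\config{\theta}{\mathtt{while}^k(\phi)\{C\}}{[]}{\sigma}{\theta_K}{n}{w} \relu \config{\theta}{\mathtt{while}^m(\phi)\{C\}}{[]}{\sigma}{\theta_K}{n}{w}$ via the bounded-loop clause of $\relu$, apply Corollary~\ref{corr:sim-red} (whose side condition is met since the terminal statement $\modownarrow$ is not $\mathtt{diverge}$), and invert $\modownarrow \relu C'$ and $[] \relu K'$ to conclude $C' = \modownarrow$ and $K' = []$. Your extra care about the inversion step and the fact that the configuration-level clause of $\relu$ fixes all components except the statement and continuation is exactly what the paper's terser argument implicitly relies on.
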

%Works also for \sigma' = \failure
\begin{proof}
We have
$\config{\theta}{\mathtt{while}^k(\phi)\{C\}}{[]}{\sigma}{\theta_K}{n}{w} \relu
\config{\theta}{\mathtt{while}^m(\phi)\{C\}}{[]}{\sigma}{\theta_K}{n}{w}$,
so by Corollary~\ref{corr:sim-red}, 
$\config{\theta}{\mathtt{while}^m(\phi)\{C\}}{[]}{\sigma}{\theta_K}{n}{w} \vdash^{*}
\config{\theta'}{C'}{K'}{\sigma'}{\theta_K}{n+n'}{w'}$
where $\modownarrow \relu C'$ and $[] \relu K'$, which implies $C' = \modownarrow$ and $K' = []$.
\qed \end{proof}

We show the same property for reductions which get stuck.

\begin{lemma} \label{lemma:while-k-error}
If $m \geq k$ and 
$\config{\theta}{\mathtt{while}^k(\phi)\{C\}}{[]}{\sigma}{\theta_K}{n}{w} \vdash^{*}
\config{\theta'}{C'}{K}{\sigma'}{\theta_K'}{n+n'}{w'} \nvdash$, then
$\config{\theta}{\mathtt{while}^m(\phi)\{C\}}{[]}{\sigma}{\theta_K}{n}{w} \vdash^{*}
\config{\theta'}{C''}{K'}{\sigma'}{\theta_K'}{n+n'}{w'} \nvdash$.
\end{lemma}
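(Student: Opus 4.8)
The plan is to mirror the proof of Lemma~\ref{lemma:while-error} almost verbatim, replacing the unbounded loop $\mathtt{while}(\phi)\{C\}$ by the bounded loop $\mathtt{while}^m(\phi)\{C\}$ and exploiting the simulation relation $\relu$ developed above. The crucial observation is that the $\relu$-rule carrying the side condition $k \leq l$ yields $\mathtt{while}^k(\phi)\{C\} \relu \mathtt{while}^m(\phi)\{C\}$ whenever $m \geq k$; combined with $[] \relu []$ and the configuration rule for $\relu$, this gives $\config{\theta}{\mathtt{while}^k(\phi)\{C\}}{[]}{\sigma}{\theta_K}{n}{w} \relu \config{\theta}{\mathtt{while}^m(\phi)\{C\}}{[]}{\sigma}{\theta_K}{n}{w}$. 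So the two initial configurations are $\relu$-related and the whole simulation machinery becomes available.

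First I would establish that configuration-level relation and then case-split on whether the irreducible statement $C'$ equals $\mathtt{diverge}$. In the case $C' \neq \mathtt{diverge}$, Corollary~\ref{corr:sim-red} transports the given reduction chain of $\mathtt{while}^k(\phi)\{C\}$ to a reduction chain of $\mathtt{while}^m(\phi)\{C\}$ of the same length, reaching a configuration $\config{\theta'}{C''}{K'}{\sigma'}{\theta_K'}{n+n'}{w'}$ with $C' \relu C''$ and $K \relu K'$. Lemma~\ref{lemma:relu-both-reduce} then shows that any reduction step out of this $\mathtt{while}^m$-configuration would induce a reduction step out of the original $\mathtt{while}^k$-configuration, contradicting $\nvdash$. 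Hence the reached $\mathtt{while}^m$-configuration is itself stuck, which is exactly the conclusion.

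The only delicate point—and the main obstacle—is the case $C' = \mathtt{diverge}$, precisely because Corollary~\ref{corr:sim-red} carries the side condition $C'' \neq \mathtt{diverge}$ and therefore does not apply. Here I would reuse the degenerate-case argument from Lemma~\ref{lemma:while-error}: a stuck configuration whose statement is $\mathtt{diverge}$ must have state $\failure$ (otherwise (diverge) would fire), but a configuration with statement $\mathtt{diverge}$ and state $\failure$ is reachable from no other configuration, since $\failure$ is produced only by (condition-false), which installs $\modownarrow$ rather than $\mathtt{diverge}$. This forces $n' = 0$, $\sigma = \failure$, and $k = 0$ (so that $\mathtt{while}^0(\phi)\{C\} = \mathtt{diverge}$). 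Since every reduction rule requires $\sigma \neq \failure$, the configuration $\config{\theta}{\mathtt{while}^m(\phi)\{C\}}{[]}{\failure}{\theta_K}{n}{w}$ is irreducible for every $m \geq 0$, giving the conclusion with $n' = 0$. I expect this diverge edge case to be the sole real subtlety; the remainder is a direct instance of the simulation argument already used for Lemma~\ref{lemma:while-error}.
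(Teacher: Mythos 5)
Your proposal is correct and follows essentially the same route as the paper's own proof: establish $\mathtt{while}^k(\phi)\{C\} \relu \mathtt{while}^m(\phi)\{C\}$, transport the reduction via Corollary~\ref{corr:sim-red} and rule out reducibility with Lemma~\ref{lemma:relu-both-reduce} when $C' \neq \mathtt{diverge}$, and handle $C' = \mathtt{diverge}$ by the same degenerate-case argument ($\sigma' = \failure$, hence $n'=0$, $k=0$, and the $\mathtt{while}^m$ configuration with state $\failure$ is stuck). The diverge edge case you flag as the sole subtlety is exactly the second case of the paper's proof.
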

\begin{proof}
If $C' \neq \texttt{diverge}$, then by Corollary~\ref{corr:sim-red}, 
$\config{\theta}{\mathtt{while}^m(\phi)\{C\}}{[]}{\sigma}{\theta_K}{n}{w} \vdash^{*}
\config{\theta'}{C''}{K'}{\sigma'}{\theta_K}{n+n'}{w'}$
where $C' \relu C''$ and $K \relu K'$. By Lemma~\ref{lemma:relu-both-reduce}, if
$\config{\theta'}{C''}{K'}{\sigma'}{\theta_K}{n+n'}{w'}$ reduces, then 
$\config{\theta'}{C'}{K}{\sigma'}{\theta_K}{n+n'}{w'}$ also reduces, contradicting the assumption.
Hence, $\config{\theta'}{C''}{K'}{\sigma'}{\theta_K}{n+n'}{w'} \nvdash$, as required.

If $C' = \texttt{diverge}$, then $\sigma' = \failure$, as otherwise
$\config{\theta'}{\texttt{diverge}}{K}{\sigma'}{\theta_K'}{n+n'}{w'}$ would reduce by (diverge).
However, $\config{\theta'}{\texttt{diverge}}{K}{\failure}{\theta_K'}{n+n'}{w'}$ is not derivable from
any initial configuration other than itself. Hence, $n' = 0$ and $k=0$ and $\sigma = \failure$.
Since no configuration with state $\failure$ reduces, we have
$\config{\theta}{\mathtt{while}^m(\phi)\{C\}}{[]}{\failure}{\theta_K}{n}{w} \nvdash$, as required.
\qed \end{proof}

The above results lead to the following properties of semantic functions:

%increasing chain
\begin{corollary} \label{corr:o-while-increasing-chain}
If $n \geq k$, then  $\mathbf{O}_{\mathtt{while}^n(\phi)\{C\}}^\sigma(\theta) \geq \mathbf{O}_{\mathtt{while}^k(\phi)\{C\}}^\sigma(\theta)$
(w.r.t. flat CPO with bottom $\diverge$).
\end{corollary}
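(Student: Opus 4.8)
The plan is to read off the flat-CPO inequality directly from the definition of $\mathbf{O}$ together with the two reduction-preservation lemmas already established. Recall that in the flat CPO on $\fullstatespace$ with least element $\diverge$, we have $y \geq x$ exactly when $x = \diverge$ or $x = y$; hence it suffices to show that whenever $a := \mathbf{O}_{\mathtt{while}^k(\phi)\{C\}}^\sigma(\theta) \neq \diverge$, the value $b := \mathbf{O}_{\mathtt{while}^n(\phi)\{C\}}^\sigma(\theta)$ satisfies $b = a$. I would first dispose of the degenerate initial states: if $\sigma = \diverge$ then $a = b = \diverge$ by definition, and if $\sigma = \failure$ then every configuration $\config{\theta}{\mathtt{while}^m(\phi)\{C\}}{[]}{\failure}{\theta_K}{0}{1}$ is irreducible (all reduction rules require a non-$\failure$ state), so $a = b = \failure$; in both cases the inequality holds. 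Assume henceforth that $\sigma$ is a proper state.

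The core is a case split on the value of $a$. If $a = \diverge$ there is nothing to prove. If $a = \tau \in \statespace$, then by the first clause of the definition of $\mathbf{O}$ we have $\config{\theta}{\mathtt{while}^k(\phi)\{C\}}{[]}{\sigma}{\theta_K}{0}{1} \vdash^{*} \config{\theta'}{\modownarrow}{[]}{\tau}{\theta_K}{n'}{w}$ with $\tau \neq \failure$. Since $n \geq k$, Lemma~\ref{lemma:o-while-increasing} (instantiated with initial step count $0$ and weight $1$) yields the identical terminating reduction for $\mathtt{while}^n(\phi)\{C\}$, and therefore $b = \tau = a$. If instead $a = \failure$, then by the second clause the configuration reduces to some irreducible configuration $\config{\theta'}{C'}{K}{\tau'}{\theta_K'}{n'}{w}$; Lemma~\ref{lemma:while-k-error} then produces a corresponding irreducible configuration for $\mathtt{while}^n(\phi)\{C\}$, so $b = \failure = a$ by the same second clause. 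This exhausts the cases and establishes $b \geq a$.

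I expect this corollary to be routine rather than genuinely hard, being essentially a repackaging of Lemma~\ref{lemma:o-while-increasing} and Lemma~\ref{lemma:while-k-error} through the definition of $\mathbf{O}$. The one point that needs care is keeping the case analysis exhaustive and matched to the correct lemma: the flat-CPO inequality is vacuous precisely when the shorter loop diverges, the terminating case is handled by the monotonicity Lemma~\ref{lemma:o-while-increasing}, and the stuck case by Lemma~\ref{lemma:while-k-error}. It is worth noting that in the $\failure$ case I do not need the stuck configuration to carry state $\failure$: the second clause of $\mathbf{O}$ returns $\failure$ as soon as \emph{any} irreducible configuration is reached, and Lemma~\ref{lemma:while-k-error} guarantees that such a configuration is still reached after raising the loop bound.
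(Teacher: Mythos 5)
Your proof is correct and is exactly the derivation the paper intends: the corollary is stated immediately after Lemma~\ref{lemma:o-while-increasing} and Lemma~\ref{lemma:while-k-error} as a direct consequence of them, and your case split (divergence of the shorter loop makes the flat-CPO inequality vacuous, termination is handled by Lemma~\ref{lemma:o-while-increasing}, stuckness by Lemma~\ref{lemma:while-k-error}, with the degenerate initial states $\failure$ and $\diverge$ read off the definition of $\mathbf{O}$) spells out precisely that argument.
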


\begin{lemma} \label{lemma:sc-term-increasing-chain}
\sloppy If  $\mathbf{O}_{\mathtt{while}^k(\phi)\{C\}}^\sigma(\theta) \in \statespace$
and $\mathbf{O}_{\mathtt{while}^l(\phi)\{C\}}^\sigma(\theta) \in \statespace$, then 
$\mathbf{SC}_{\mathtt{while}^k(\phi)\{C\}}^\sigma(\theta) = \mathbf{SC}_{\mathtt{while}^l(\phi)\{C\}}^\sigma(\theta)$.
\end{lemma}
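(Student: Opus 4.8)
The plan is to reduce the whole statement to the monotonicity lemma for bounded loops already proved, namely Lemma~\ref{lemma:o-while-increasing}, together with determinism of reduction. Without loss of generality assume $k \le l$. First I would dispose of the exceptional initial states: if $\sigma = \diverge$ then $\mathbf{O}_{\mathtt{while}^k(\phi)\{C\}}^\sigma(\theta) = \diverge \notin \statespace$ by definition, and if $\sigma = \failure$ the initial configuration cannot reduce at all (every reduction rule carries the side condition $\sigma \neq \failure$), so the final state is again $\failure \notin \statespace$. In both cases the hypothesis of the lemma is violated, so we may assume $\sigma \in \statespace$.

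Next, from the hypothesis $\mathbf{O}_{\mathtt{while}^k(\phi)\{C\}}^\sigma(\theta) \in \statespace$ and the definition of $\mathbf{O}$, I would extract a terminating reduction
\[
\config{\theta}{\mathtt{while}^k(\phi)\{C\}}{[]}{\sigma}{\theta_K}{0}{1} \ \vdash^{*}\ \config{\theta'}{\modownarrow}{[]}{\sigma'}{\theta_K}{n_0}{w}
\]
with $\sigma' \neq \failure$, so that $\mathbf{O}_{\mathtt{while}^k(\phi)\{C\}}^\sigma(\theta) = \sigma'$. Once this final configuration is reached, only the (final) rule applies, and (final) leaves the weight unchanged; hence $\mathbf{SC}_{\mathtt{while}^k(\phi)\{C\}}^\sigma(\theta, m) = w$ for every $m \geq n_0$, and taking the limit (equivalently the infimum) gives $\mathbf{SC}_{\mathtt{while}^k(\phi)\{C\}}^\sigma(\theta) = w$.

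The key step is then to transport this run to the larger bound. Since $l \geq k$, Lemma~\ref{lemma:o-while-increasing} yields the identical terminating reduction for $\mathtt{while}^l(\phi)\{C\}$, ending in the very same configuration $\config{\theta'}{\modownarrow}{[]}{\sigma'}{\theta_K}{n_0}{w}$ with the same weight $w$. By the same bookkeeping as above—the (final) rule preserves the weight—it follows that $\mathbf{SC}_{\mathtt{while}^l(\phi)\{C\}}^\sigma(\theta) = w$, and Lemma~\ref{lemma:eval-det} (determinism) guarantees this is the unique reduction of $\mathtt{while}^l(\phi)\{C\}$, so no other terminating weight is possible. Combining the two evaluations gives $\mathbf{SC}_{\mathtt{while}^k(\phi)\{C\}}^\sigma(\theta) = w = \mathbf{SC}_{\mathtt{while}^l(\phi)\{C\}}^\sigma(\theta)$, as required.

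I do not expect a serious obstacle: the substantive content is precisely Lemma~\ref{lemma:o-while-increasing}, which already states that a successful run of a bounded loop survives verbatim when the bound is raised. The only point requiring care is the bookkeeping connecting the weight in the terminating configuration to the value of $\mathbf{SC}$—that is, noting that after a run terminates the (final) rule keeps the weight constant, so the limiting score coincides with the weight recorded at termination.
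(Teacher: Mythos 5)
Your proposal is correct and follows essentially the same route as the paper's proof: assume w.l.o.g.\ $k \leq l$ and invoke Lemma~\ref{lemma:o-while-increasing} to transport the terminating run (with its weight) from the smaller bound to the larger one. The additional details you supply---ruling out the exceptional initial states, and the bookkeeping that the (final) rule preserves the weight so the limiting score equals the weight at termination---are exactly what the paper leaves implicit in its one-line argument.
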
 
\begin{proof}
Assume w.l.o.g. that $l \geq k$. Then the result follows directly from Lemma~\ref{lemma:o-while-increasing}.
\qed \end{proof}

\subsubsection{Proofs of Propositions~\ref{lemma:sup-while-o-sc} and~\ref{lemma:o-sc-increasing}}

Having shown the above properties of while-loop approximations, we are now ready to prove 
Propositions~\ref{lemma:sup-while-o-sc} and~\ref{lemma:o-sc-increasing}.

\begin{restate}{Proposition~\ref{lemma:o-sc-increasing}}
\sloppy If $n \geq k$, %and $\hat{f}(\failure) = \hat{f}(\diverge) = 0$, 
then  
$\hat{f}(\mathbf{O}_{\mathtt{while}^n(\phi)\{C\}}^\sigma(\theta)) \allowbreak\mathbf{SC}_{\mathtt{while}^n(\phi)\{C\}}^\sigma(\theta)
 \geq \hat{f}(\mathbf{O}_{\mathtt{while}^k(\phi)\{C\}}^\sigma(\theta)) \mathbf{SC}_{\mathtt{while}^k(\phi)\{C\}}^\sigma(\theta)$.
\end{restate}
\begin{proof}[of Proposition~\ref{lemma:o-sc-increasing}]
If $\mathbf{O}_{\mathtt{while}^k(\phi)\{C\}}^\sigma(\theta) = \failure$
or $\mathbf{O}_{\mathtt{while}^k(\phi)\{C\}}^\sigma(\theta) = \diverge$, then $RHS=0$, so the
inequality holds trivially.

\sloppy If $\mathbf{O}_{\mathtt{while}^k(\phi)\{C\}}^\sigma(\theta) \in \statespace$, then by Corollary~\ref{corr:o-while-increasing-chain}, 
$\mathbf{O}_{\mathtt{while}^n(\phi)\{C\}}^\sigma(\theta) = \mathbf{O}_{\mathtt{while}^k(\phi)\{C\}}^\sigma(\theta) $
and by Lemma~\ref{lemma:sc-term-increasing-chain},
$\mathbf{SC}_{\mathtt{while}^n(\phi)\{C\}}^\sigma(\theta) = \mathbf{SC}_{\mathtt{while}^k(\phi)\{C\}}^\sigma(\theta)$. 
Hence, $f(\mathbf{O}_{\mathtt{while}^n(\phi)\{C\}}^\sigma(\theta)) \allowbreak\mathbf{SC}_{\mathtt{while}^n(\phi)\{C\}}^\sigma(\theta)
 = f(\mathbf{O}_{\mathtt{while}^k(\phi)\{C\}}^\sigma(\theta)) \mathbf{SC}_{\mathtt{while}^k(\phi)\{C\}}^\sigma(\theta)$.
\qed \end{proof}

\begin{restate}{Proposition~\ref{lemma:sup-while-o-sc}}
%\begin{align*}
%\begin{autobreak}
\sloppy $\hat{f}(\mathbf{O}_{\mathtt{while}(\phi)\{C\}}^\sigma(\theta))  \mathbf{SC}_{\mathtt{while}(\phi)\{C\}}^\sigma(\theta)
= \sup_n   \hat{f}( \mathbf{O}_{\mathtt{while}^n(\phi)\{C\}}^\sigma(\theta)) \mathbf{SC}_{\mathtt{while}^n(\phi)\{C\}}^\sigma(\theta)$
%\end{autobreak}
%\end{align*}
\end{restate}
\begin{proof}[of Proposition~\ref{lemma:sup-while-o-sc}]
\sloppy If $\mathbf{O}_{\mathtt{while}(\phi)\{C\}}^\sigma(\theta) \notin \statespace$, then $LHS = 0$. 
If $\mathbf{O}_{\mathtt{while}^n(\phi)\{C\}}^\sigma(\theta) \in \statespace$ 
for some $n$, then we get a contradiction by Lemma~\ref{lemma:while-approx-os-left}, so we have
$\mathbf{O}_{\mathtt{while}^n(\phi)\{C\}}^\sigma(\theta)  \notin \statespace$, which implies $RHS=0$.

Now, assume that $\mathbf{O}_{\mathtt{while}(\phi)\{C\}}^\sigma(\theta) \in \statespace$.
Then by Lemma~\ref{lemma:while-o-sc-sup}, 
there exists $k$ such that 
$\mathbf{O}_{\mathtt{while}(\phi)\{C\}}^\sigma(\theta)
= \mathbf{O}_{\mathtt{while}^k(\phi)\{C\}}^\sigma(\theta)$
and $ \mathbf{SC}_{\mathtt{while}(\phi)\{C\}}^\sigma(\theta) =  \mathbf{SC}_{\mathtt{while}^k(\phi)\{C\}}^\sigma(\theta)$.

By Corollary~\ref{corr:o-while-increasing-chain} we know that
$\mathbf{O}_{\mathtt{while}^l(\phi)\{C\}}^\sigma(\theta) = \mathbf{O}_{\mathtt{while}^k(\phi)\{C\}}^\sigma(\theta)$
for all $l \geq k$ and either
$\mathbf{O}_{\mathtt{while}^k(\phi)\{C\}}^\sigma(\theta) = \mathbf{O}_{\mathtt{while}^{l'}(\phi)\{C\}}^\sigma(\theta)$
or $\mathbf{O}_{\mathtt{while}^{l'}(\phi)\{C\}}^\sigma(\theta) = \diverge$ for all $l' \leq k$.
Hence, for all $l$, either 
$\hat{f}(\mathbf{O}_{\mathtt{while}^l(\phi)\{C\}}^\sigma(\theta)) = \hat{f}(\mathbf{O}_{\mathtt{while}^k(\phi)\{C\}}^\sigma(\theta))$
or
$\hat{f}(\mathbf{O}_{\mathtt{while}^l(\phi)\{C\}}^\sigma(\theta)) = 0$.

By Lemma \ref{lemma:sc-term-increasing-chain}, for all  $l$, either
$\mathbf{O}_{\mathtt{while}^{l}(\phi)\{C\}}^\sigma(\theta) \notin \statespace$ or
$\mathbf{SC}_{\mathtt{while}^l(\phi)\{C\}}^\sigma(\theta) = \mathbf{SC}_{\mathtt{while}^k(\phi)\{C\}}^\sigma(\theta)$.
Hence, for all  $l$, either
$\hat{f}(\mathbf{O}_{\mathtt{while}^l(\phi)\{C\}}^\sigma(\theta)) \mathbf{SC}_{\mathtt{while}^l(\phi)\{C\}}^\sigma(\theta) 
= \hat{f}(\mathbf{O}_{\mathtt{while}^k(\phi)\{C\}}^\sigma(\theta)) \mathbf{SC}_{\mathtt{while}^k(\phi)\{C\}}^\sigma(\theta)$
or $\hat{f}(\mathbf{O}_{\mathtt{while}^l(\phi)\{C\}}^\sigma(\theta)) \mathbf{SC}_{\mathtt{while}^l(\phi)\{C\}}^\sigma(\theta)  = 0$.

\sloppy Thus, $\sup_n \hat{f}( \mathbf{O}_{\mathtt{while}^n(\phi)\{C\}}^\sigma(\theta)) \mathbf{SC}_{\mathtt{while}^n(\phi)\{C\}}^\sigma(\theta) 
= \hat{f}(\mathbf{O}_{\mathtt{while}^k(\phi)\{C\}}^\sigma(\theta)) \allowbreak\mathbf{SC}_{\mathtt{while}^k(\phi)\{C\}}^\sigma(\theta)$, 
and so $\hat{f}(\mathbf{O}_{\mathtt{while}(\phi)\{C\}}^\sigma(\theta)) \mathbf{SC}_{\mathtt{while}(\phi)\{C\}}^\sigma(\theta) = 
\sup_n \hat{f}( \mathbf{O}_{\mathtt{while}^n(\phi)\{C\}}^\sigma(\theta)) \allowbreak\mathbf{SC}_{\mathtt{while}^n(\phi)\{C\}}^\sigma(\theta) $, as required.
\qed \end{proof}

\subsubsection{Proofs of Propositions~\ref{lemma:inf-while-o-sc} and~\ref{lemma:o-sc-decreasing}}

Finally, we prove Propositions~\ref{lemma:inf-while-o-sc} and~\ref{lemma:o-sc-decreasing}, which
are required by Theorem~\ref{thm:wlp-op-equiv}. One final additional result needed for these proofs
is that $\mathbf{SC}_{\mathtt{while}^n(\phi)\{C\}}^\sigma(\theta) $ and 
$\mathbf{SC}_{\mathtt{while}^n(\phi)\{C\}}^\sigma(\theta, l) $ (for any $l$) are decreasing as functions of $n$.

\begin{lemma} \label{lemma:sc-decreasing}
If $n \geq k$, then 
$\mathbf{SC}_{\mathtt{while}^n(\phi)\{C\}}^\sigma(\theta) 
\leq \mathbf{SC}_{\mathtt{while}^k(\phi)\{C\}}^\sigma(\theta)$.
\end{lemma}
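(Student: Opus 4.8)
The plan is to reduce everything to the single-step score approximations $\mathbf{SC}_{\mathtt{while}^n}^\sigma(\theta,l)$ and then pass to the limit (equivalently, the infimum), exploiting that $\mathbf{SC}_C^\sigma(\theta) = \inf_l \mathbf{SC}_C^\sigma(\theta,l)$. Fixing $n \geq k$, I would first split on the value of $\mathbf{O}_{\mathtt{while}^k}^\sigma(\theta)$, since by Corollary~\ref{corr:o-while-increasing-chain} this already constrains the behaviour of $\mathtt{while}^n$.

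If $\mathbf{O}_{\mathtt{while}^k}^\sigma(\theta) \in \statespace$, then because $\diverge$ is the bottom of the flat CPO and every proper state is maximal, Corollary~\ref{corr:o-while-increasing-chain} forces $\mathbf{O}_{\mathtt{while}^n}^\sigma(\theta) = \mathbf{O}_{\mathtt{while}^k}^\sigma(\theta) \in \statespace$, so both loops terminate and Lemma~\ref{lemma:sc-term-increasing-chain} yields equality of the two scores outright. If $\mathbf{O}_{\mathtt{while}^k}^\sigma(\theta) = \failure$, the reduction of $\mathtt{while}^k$ is stuck at some step $N$; Lemma~\ref{lemma:while-k-error} (with the larger bound taken to be $n$) makes $\mathtt{while}^n$ stuck at the same step, so by determinacy (Lemma~\ref{lemma:eval-det}) neither has a configuration beyond step $N$, both approximations vanish for $l > N$, and both limit scores are $0$. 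In either case the inequality holds with equality, so these are quick.

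The real work is the case $\mathbf{O}_{\mathtt{while}^k}^\sigma(\theta) = \diverge$, where $\mathtt{while}^k$ runs forever without failing, so $\mathbf{SC}_{\mathtt{while}^k}^\sigma(\theta,l)$ equals the weight $w_l$ of its step-$l$ configuration for every $l$. Here I would track both reductions in parallel via the simulation relation $\relu$, using $\mathtt{while}^k \relu \mathtt{while}^n$. As long as the statement of $\mathtt{while}^k$ has not yet become the sentinel $\mathtt{diverge}$ (i.e.\ it has not exhausted its $k$ unfoldings), Corollary~\ref{corr:sim-red} keeps $\mathtt{while}^n$ in lock-step at identical state and weight, so the approximations agree. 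If the sentinel is never reached, the two scores coincide at every $l$; otherwise, once $\mathtt{while}^k$ reaches the sentinel at some step $M$ (handling the hand-off with Corollary~\ref{corr:sim-red} up to step $M-1$ and one further application of Lemma~\ref{lemma:relu-sim}), its weight freezes at $w_M$ under the (diverge) rule, while Lemma~\ref{lemma:w-decreasing} guarantees that the weight of $\mathtt{while}^n$ can only stay equal or decrease for $l \geq M$. Thus $\mathbf{SC}_{\mathtt{while}^n}^\sigma(\theta,l) \leq w_M = \mathbf{SC}_{\mathtt{while}^k}^\sigma(\theta,l)$ for all $l \geq M$, with equality for $l < M$, and taking infima gives the claim.

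The main obstacle I anticipate is precisely this divergent case, and within it the bookkeeping around the sentinel step $M$: justifying that $\mathtt{while}^n$ sits in a $\relu$-related configuration with the \emph{same} weight exactly when $\mathtt{while}^k$ first hits $\mathtt{diverge}$, and then verifying $\mathbf{SC}_{\mathtt{while}^n}^\sigma(\theta,l) \leq w_M$ uniformly across what $\mathtt{while}^n$ may do at later steps (it may itself terminate, get stuck on a failed observation, or keep running). Each possibility is controlled by weight monotonicity together with the definition of $\mathbf{SC}(\theta,l)$, but the argument must be phrased so that the ``no configuration at step $l$'' and ``state $\failure$ at step $l$'' situations are both covered, each contributing $0 \leq w_M$.
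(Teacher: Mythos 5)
Your proposal is correct and follows essentially the same route as the paper's own proof: the paper likewise works with the step-indexed approximations $\mathbf{SC}(\theta,l)$, uses the simulation $\relu$ (via Corollary~\ref{corr:sim-red} and Lemma~\ref{lemma:sim-red-div}) to keep $\mathtt{while}^k$ and $\mathtt{while}^n$ in lock-step until the $\mathtt{diverge}$ sentinel is first reached, freezes the weight of $\mathtt{while}^k$ there, bounds the later weights of $\mathtt{while}^n$ by Lemma~\ref{lemma:w-decreasing} (covering the stuck/failure cases as contributing $0$), and passes to the limit. The only cosmetic difference is the top-level case split: the paper distinguishes $\failure$ from non-$\failure$ and absorbs proper termination into the ``sentinel never reached'' lock-step case, whereas you split off the terminating case and dispatch it with Corollary~\ref{corr:o-while-increasing-chain} and Lemma~\ref{lemma:sc-term-increasing-chain}; both are valid.
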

\begin{proof}
If $\mathbf{O}_{\mathtt{while}^k(\phi)\{C\} }^\sigma(\theta) = \failure$, then 
$\mathbf{O}_{\mathtt{while}^n(\phi)\{C\} }^\sigma(\theta) = \failure$ by
Corollary~\ref{corr:o-while-increasing-chain}. Hence, 
$\mathbf{SC}_{\mathtt{while}^k(\phi)\{C\} }^\sigma(\theta) =  \mathbf{SC}_{\mathtt{while}^n(\phi)\{C\} }^\sigma(\theta) = 0$.
%$\config{\theta}{\mathtt{while}^k(\phi)\{C\}}{[]}{\sigma}{\theta_K}{0}{1} \vdash^{*}
%\config{\theta'}{C'}{K}{\tau}{\theta_K'}{n }{w} \nvdash$ TODO

Now, suppose that $\mathbf{O}_{\mathtt{while}^k(\phi)\{C\} }^\sigma(\theta) \neq \failure$.
If there exists $l$ such that $\config{\theta}{\mathtt{while}^k(\phi)\{C\}}{[]}{\sigma}{\theta_K}{0}{1} \vdash^{*}_{\mathtt{min}}
\config{\theta'}{\mathtt{diverge}}{K}{\tau}{\theta_K'}{l }{w}$, then by Lemma~\ref{lemma:sim-red-div},
$\config{\theta}{\mathtt{while}^n(\phi)\{C\}}{[]}{\sigma}{\theta_K}{0}{1} \vdash^{*}%_{\mathtt{min}}
\config{\theta'}{C}{K'}{\tau}{\theta_K'}{l }{w}$ and
$\config{\theta'}{\mathtt{diverge}}{K}{\tau}{\theta_K'}{l }{w} \relu 
\config{\theta'}{C}{K'}{\tau}{\theta_K'}{l }{w}$.
Since $\config{\theta'}{\mathtt{diverge}}{K}{\tau}{\theta_K'}{m }{w} \vdash \config{\theta'}{\mathtt{diverge}}{K}{\tau}{\theta_K'}{m+1 }{w}$,
for all $l' \geq l$, we have $\mathbf{SC}_{\mathtt{while}^{l'}(\phi)\{C\} }^\sigma(\theta, l') = w$.
For each $l' \geq l$, we either have 
$\config{\theta}{\mathtt{while}^n(\phi)\{C\}}{[]}{\sigma}{\theta_K}{0}{1} \vdash^{*}
\config{\theta'}{C}{K'}{\tau}{\theta_K'}{l }{w}  \vdash^{*}
\config{\theta''}{C'}{K''}{\tau'}{\theta_K''}{l' }{w'}$, where $w' \leq w'$ by Lemma~\ref{lemma:w-decreasing}, and so 
$\mathbf{SC}_{\mathtt{while}^n(\phi)\{C\}}^\sigma(\theta, l') = w'$
or $\mathtt{while}^n(\phi)\{C\}$ does not reduce in $l'$ steps under $\theta$,
in which case \sloppy$\mathbf{SC}_{\mathtt{while}^n(\phi)\{C\}}^\sigma(\theta, l') = 0$.
In either case, $\mathbf{SC}_{\mathtt{while}^n(\phi)\{C\}}^\sigma(\theta, l') \leq \mathbf{SC}_{\mathtt{while}^k(\phi)\{C\}}^\sigma(\theta, l')$
for all $l' \geq l$, so the result holds by a property of the limit of a sequence.

If there exists no $l$ such that $\config{\theta}{\mathtt{while}^k(\phi)\{C\}}{[]}{\sigma}{\theta_K}{0}{1} \vdash^{*}_{\mathtt{min}}
\config{\theta'}{\mathtt{diverge}}{K}{\tau}{\theta_K'}{l }{w}$, then for all $l$, we have 
$\config{\theta}{\mathtt{while}^k(\phi)\{C\}}{[]}{\sigma}{\theta_K}{0}{1} \vdash^{*}
\config{\theta'}{C}{K}{\tau}{\theta_K'}{l }{w}$, where $C \neq \mathtt{diverge}$.
By Corollary~\ref{corr:sim-red},  $\config{\theta}{\mathtt{while}^n(\phi)\{C\}}{[]}{\sigma}{\theta_K}{0}{1} \vdash^{*}
\config{\theta'}{C'}{K'}{\tau}{\theta_K'}{l }{w}$ for some $C'$, $K'$, and so 
$\mathbf{SC}_{\mathtt{while}^k(\phi)\{C\}}^\sigma(\theta, l) = \mathbf{SC}_{\mathtt{while}^n(\phi)\{C\}}^\sigma(\theta, l)$ for all $l$,
which implies $\mathbf{SC}_{\mathtt{while}^k(\phi)\{C\}}^\sigma(\theta) = \mathbf{SC}_{\mathtt{while}^n(\phi)\{C\}}^\sigma(\theta)$.
\qed \end{proof}

\begin{lemma} \label{lemma:sim-red-div}
If $\config{\theta}{C}{K}{\sigma}{\theta_K}{m}{w} \relu  \config{\theta}{C'}{K'}{\sigma}{\theta_K}{m}{w}$
and $\config{\theta}{C}{K}{\sigma}{\theta_K}{m}{w} \vdash^{*}_{\mathtt{min}} \config{\theta''}{\mathtt{diverge}}{K''}{\sigma''}{\theta''_K}{m+n'}{w''}$
then 

\noindent $\config{\theta}{C'}{K'}{\sigma}{\theta_K}{m}{w}  \vdash^{*} \config{\theta''}{C'''}{K'''}{\sigma''}{\theta''_K}{m+n'}{w''}$
and $\config{\theta''}{\mathtt{diverge}}{K''}{\sigma''}{\theta''_K}{m+n'}{w''} \relu \config{\theta''}{C'''}{K'''}{\sigma''}{\theta''_K}{m+n'}{w''}$
\end{lemma}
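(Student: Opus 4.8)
The plan is to peel off the final reduction step and exploit the minimality encoded in $\vdash^{*}_{\mathtt{min}}$. Write $n'$ for the length of the given reduction $\config{\theta}{C}{K}{\sigma}{\theta_K}{m}{w} \vdash^{*}_{\mathtt{min}} \config{\theta''}{\mathtt{diverge}}{K''}{\sigma''}{\theta''_K}{m+n'}{w''}$, read here (as in the proof of Lemma~\ref{lemma:sc-decreasing}) as ``reaches a $\mathtt{diverge}$ statement for the first time'', and proceed by cases on $n'$.

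If $n' = 0$, then $C = \mathtt{diverge}$, and the target configuration coincides with the source, so I take the matching configuration $\config{\theta}{C'}{K'}{\sigma}{\theta_K}{m}{w}$ itself via the reflexive closure (zero steps). The required relation $\config{\theta}{\mathtt{diverge}}{K}{\sigma}{\theta_K}{m}{w} \relu \config{\theta}{C'}{K'}{\sigma}{\theta_K}{m}{w}$ then holds because $\mathtt{diverge} \relu C'$ is an axiom of $\relu$ and $K \relu K'$ is inherited from the hypothesis $\config{\theta}{C}{K}{\sigma}{\theta_K}{m}{w} \relu \config{\theta}{C'}{K'}{\sigma}{\theta_K}{m}{w}$.

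For $n' > 0$, the essential observation is that minimality guarantees $\mathtt{diverge}$ is first reached only at the last step. Splitting the chain as $\config{\theta}{C}{K}{\sigma}{\theta_K}{m}{w} \vdash^{*} \config{\hat\theta}{\hat C}{\hat K}{\hat\sigma}{\hat\theta_K}{m+n'-1}{\hat w} \vdash \config{\theta''}{\mathtt{diverge}}{K''}{\sigma''}{\theta''_K}{m+n'}{w''}$, the intermediate statement satisfies $\hat C \neq \mathtt{diverge}$, so Corollary~\ref{corr:sim-red} applies to the prefix: from $\config{\theta}{C}{K}{\sigma}{\theta_K}{m}{w} \relu \config{\theta}{C'}{K'}{\sigma}{\theta_K}{m}{w}$ I obtain a matching reduction $\config{\theta}{C'}{K'}{\sigma}{\theta_K}{m}{w} \vdash^{*} \config{\hat\theta}{\hat C'}{\hat K'}{\hat\sigma}{\hat\theta_K}{m+n'-1}{\hat w}$ with $\config{\hat\theta}{\hat C}{\hat K}{\hat\sigma}{\hat\theta_K}{m+n'-1}{\hat w} \relu \config{\hat\theta}{\hat C'}{\hat K'}{\hat\sigma}{\hat\theta_K}{m+n'-1}{\hat w}$. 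Since $\hat C \neq \mathtt{diverge}$, Lemma~\ref{lemma:relu-sim} governs the final single step, yielding $\config{\hat\theta}{\hat C'}{\hat K'}{\hat\sigma}{\hat\theta_K}{m+n'-1}{\hat w} \vdash \config{\theta''}{C'''}{K'''}{\sigma''}{\theta''_K}{m+n'}{w''}$ together with $\config{\theta''}{\mathtt{diverge}}{K''}{\sigma''}{\theta''_K}{m+n'}{w''} \relu \config{\theta''}{C'''}{K'''}{\sigma''}{\theta''_K}{m+n'}{w''}$. Concatenating the two reductions gives the claim.

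The main obstacle, and the reason the ordinary simulation corollary cannot be applied to the entire chain at once, is the side condition $C \neq \mathtt{diverge}$ in Lemma~\ref{lemma:relu-sim}: the relation $\relu$ carries $\mathtt{diverge} \relu C$ as an axiom precisely so that a bounded loop which has exhausted its unfoldings (and thus becomes $\mathtt{diverge}$) may still be simulated by the unbounded loop, yet once the simulated statement is $\mathtt{diverge}$ the matching step is no longer guaranteed. The role of $\vdash^{*}_{\mathtt{min}}$ is exactly to confine this degenerate situation to the final step, so that Corollary~\ref{corr:sim-red} handles the genuinely non-degenerate prefix while a single appeal to Lemma~\ref{lemma:relu-sim} handles the transition into $\mathtt{diverge}$.
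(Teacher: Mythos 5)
Your proof is correct and takes essentially the same route as the paper, whose proof is just the one-line remark that the result ``follows from Corollary~\ref{corr:sim-red} and Lemma~\ref{lemma:relu-sim}''; your write-up is exactly the intended expansion of that remark: minimality confines the $\mathtt{diverge}$ statement to the final configuration, so Corollary~\ref{corr:sim-red} simulates the prefix and a single application of Lemma~\ref{lemma:relu-sim} handles the last step (with the $n'=0$ case discharged by the hypothesis itself).
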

\begin{proof}
Follows from Corollary \ref{corr:sim-red} and Lemma~\ref{lemma:relu-sim}.
\qed \end{proof}

\begin{lemma} \label{lemma:sc-decreasing-fixed-l}
If $n \geq k$, then for all $l$,
$ \mathbf{SC}_{\mathtt{while}^n(\phi)\{C\}}^\sigma(\theta, l) 
\leq \mathbf{SC}_{\mathtt{while}^k(\phi)\{C\}}^\sigma(\theta, l)$. 
\end{lemma}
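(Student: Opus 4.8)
The plan is to fix $l$ and argue by a case analysis on the behaviour of the bounded loop $W = \mathtt{while}^k(\phi)\{C\}$ during its first $l$ reduction steps, matched against $W' = \mathtt{while}^n(\phi)\{C\}$ through the simulation $\relu$. Since $k \leq n$, the rules defining $\relu$ give $W \relu W'$, hence $\config{\theta}{W}{[]}{\sigma}{\theta_K}{0}{1} \relu \config{\theta}{W'}{[]}{\sigma}{\theta_K}{0}{1}$. This is exactly the per-step core that is already used (and then pushed to a limit) inside the proof of Lemma~\ref{lemma:sc-decreasing}, isolated here for a single fixed $l$. I would split on the value of $\mathbf{SC}_{W}^\sigma(\theta,l)$, writing $\tau$ for the current state of a configuration and $\kappa_0, \kappa_0'$ for the two initial configurations above.

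In the first case $\mathbf{SC}_{W}^\sigma(\theta,l) = 0$, so $W$ fails to reduce in $l$ steps to a proper state: by determinacy (Lemma~\ref{lemma:eval-det}) its unique trajectory gets stuck at some step $l_0 \leq l$, and if $l_0 = l$ the stuck state must be $\failure$, since otherwise $W$ would reach step $l$ in a proper state and $\mathbf{SC}_{W}^\sigma(\theta,l)$ would be positive. Lemma~\ref{lemma:while-k-error} then yields a stuck configuration of $W'$ at the same step $l_0$ with the same state; if $l_0 < l$ this means $W'$ never reaches step $l$, and if $l_0 = l$ the shared state is $\failure$, so in either case $\mathbf{SC}_{W'}^\sigma(\theta,l) = 0$ and the claimed inequality reduces to $0 \leq 0$. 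In the second case $W$ reaches step $l$ in a proper state $\tau \neq \failure$ with weight $w$ and with active statement different from $\mathtt{diverge}$; then Corollary~\ref{corr:sim-red} applies and $W'$ reduces to a matching configuration at step $l$ with the same proper state and the same weight $w$, giving $\mathbf{SC}_{W'}^\sigma(\theta,l) = w = \mathbf{SC}_{W}^\sigma(\theta,l)$.

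The remaining, and main, case is when $W$ reaches step $l$ in a proper state but has already turned $\mathtt{diverge}$ into its active statement at some first step $l_0 \leq l$, in state $\tau \neq \failure$ with weight $w_0$. Here the (diverge) rule freezes both state and weight for all subsequent steps, so $\mathbf{SC}_{W}^\sigma(\theta,l) = w_0$. Applying Lemma~\ref{lemma:sim-red-div} to the minimal reduction of $\kappa_0$ into $\mathtt{diverge}$ produces a matching reduction of $\kappa_0'$ reaching step $l_0$ in state $\tau$ with weight $w_0$; from that configuration, iterating the antitonicity of weights (Lemma~\ref{lemma:w-decreasing}) shows that whenever $W'$ does reach step $l$ its weight $w'$ satisfies $w' \leq w_0$, while if $W'$ gets stuck or ends in $\failure$ at or before step $l$ its approximant is $0$. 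In every subcase $\mathbf{SC}_{W'}^\sigma(\theta,l) \leq w_0 = \mathbf{SC}_{W}^\sigma(\theta,l)$. I expect this divergence case to be the only real obstacle: it is the one place where $W$ and $W'$ genuinely part ways in behaviour, with $W$ frozen at $\mathtt{diverge}$ while $W'$ may keep iterating, so Corollary~\ref{corr:sim-red} is inapplicable and one must instead combine the divergence-simulation Lemma~\ref{lemma:sim-red-div} with weight monotonicity; the delicate points are seeing that the weight of $W$ at step $l$ is exactly the frozen value $w_0$ and that every continuation of $W'$ can only lose further weight.
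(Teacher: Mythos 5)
Your proof is correct and follows essentially the same route as the paper's: the same three-way case split on the behaviour of the $k$-bounded loop (stuck/failure, reaching step $l$ with a non-$\mathtt{diverge}$ statement, and having entered $\mathtt{diverge}$), discharged with the same lemmas --- Lemma~\ref{lemma:while-k-error} for the stuck case, Corollary~\ref{corr:sim-red} for the non-diverging case, and Lemma~\ref{lemma:sim-red-div} combined with Lemma~\ref{lemma:w-decreasing} for the divergence case. Your slightly different organisation of the first case (partitioning by $\mathbf{SC}_{W}^\sigma(\theta,l)=0$ rather than by where the run gets stuck) is only cosmetic.
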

\begin{proof}
If 
$\config{\theta}{\mathtt{while}^k(\phi)\{C\}}{[]}{\sigma}{\theta_K}{0}{1} \vdash^{*}
\config{\theta'}{C'}{K}{\sigma'}{\theta_K'}{l'}{w} \nvdash$ for some $l' < l$, then

\noindent
$\config{\theta}{\mathtt{while}^n(\phi)\{C\}}{[]}{\sigma}{\theta_K}{0}{1} \vdash^{*}
\config{\theta'}{C''}{K'}{\sigma'}{\theta_K'}{l'}{w} \nvdash$
by Lemma~\ref{lemma:while-k-error}, and so 
$ \mathbf{SC}_{\mathtt{while}^n(\phi)\{C\}}^\sigma(\theta, l) 
=  \mathbf{SC}_{\mathtt{while}^k(\phi)\{C\}}^\sigma(\theta, l) = 0$.

\sloppy If 
$\config{\theta}{\mathtt{while}^k(\phi)\{C\}}{[]}{\sigma}{\theta_K}{0}{1} \vdash^{*}
\config{\theta'}{\mathtt{diverge}}{K}{\sigma'}{\theta_K'}{l}{w}$, then  
$\mathbf{SC}_{\mathtt{while}^k(\phi)\{C\}}^\sigma(\theta, l) = w$ and there must exist
a $l' \leq l$ such that
$\config{\theta}{\mathtt{while}^k(\phi)\{C\}}{[]}{\sigma}{\theta_K}{0}{1} \vdash^{*}_{\mathtt{min}}
\config{\theta'}{\mathtt{diverge}}{K}{\sigma'}{\theta_K'}{l'}{w}$. 
Moreover, by Lemma~\ref{lemma:sim-red-div},
$\config{\theta}{\mathtt{while}^n(\phi)\{C\}}{[]}{\sigma}{\theta_K}{0}{1} \vdash^{*}%_{\mathtt{min}}
\config{\theta'}{C''}{K'}{\tau}{\theta_K'}{l' }{w}$ and
$\config{\theta'}{\mathtt{diverge}}{K}{\tau}{\theta_K'}{l' }{w} \relu 
\config{\theta'}{C''}{K'}{\tau}{\theta_K'}{l'}{w}$.
If we have 
$\config{\theta'}{C''}{K'}{\tau}{\theta_K'}{l'}{w} \vdash^{*} 
\config{\theta''}{C'''}{K''}{\tau}{\theta_K''}{l}{w'}$,
then $\mathbf{SC}_{\mathtt{while}^n(\phi)\{C\}}^\sigma(\theta, l) 
= w' \leq w$ by Lemma~\ref{lemma:w-decreasing}. Otherwise, $\mathbf{SC}_{\mathtt{while}^n(\phi)\{C\}}^\sigma(\theta, l) = 0$.
In either case, $ \mathbf{SC}_{\mathtt{while}^n(\phi)\{C\}}^\sigma(\theta, l) 
\leq  \mathbf{SC}_{\mathtt{while}^k(\phi)\{C\}}^\sigma(\theta, l) = 0$.

If 
$\config{\theta}{\mathtt{while}^k(\phi)\{C\}}{[]}{\sigma}{\theta_K}{0}{1} \vdash^{*}
\config{\theta'}{C'}{K}{\sigma'}{\theta_K'}{l}{w}$
and $C' \neq \mathtt{diverge}$, then  by Corollary~\ref{corr:sim-red},
$\config{\theta}{\mathtt{while}^n(\phi)\{C\}}{[]}{\sigma}{\theta_K}{0}{1} \vdash
\config{\theta'}{C''}{K'}{\tau}{\theta_K'}{l }{w}$ and
$\config{\theta'}{C'}{K}{\tau}{\theta_K'}{l }{w} \relu 
\config{\theta'}{C''}{K'}{\tau}{\theta_K'}{l}{w}$.
Thus, $ \mathbf{SC}_{\mathtt{while}^n(\phi)\{C\}}^\sigma(\theta, l) 
\leq  \mathbf{SC}_{\mathtt{while}^k(\phi)\{C\}}^\sigma(\theta, l) = w$.
\qed \end{proof}

\begin{restate}{Proposition~\ref{lemma:inf-while-o-sc}}
%If $\hat{f}(\failure) = \hat{f}(\diverge) = 0$, then
For all $f \leq 1$,

\noindent
$\check{f}(\mathbf{O}_{\mathtt{while}(\phi)\{C\}}^\sigma(\theta))  \mathbf{SC}_{\mathtt{while}(\phi)\{C\}}^\sigma(\theta)
= \inf_n   \check{f}( \mathbf{O}_{\mathtt{while}^n(\phi)\{C\}}^\sigma(\theta)) \mathbf{SC}_{\mathtt{while}^n(\phi)\{C\}}^\sigma(\theta)$
\end{restate}
\begin{proof}[of Proposition~\ref{lemma:inf-while-o-sc}]
If $\mathbf{O}_{\mathtt{while}(\phi)\{C\}}^\sigma(\theta) \in \statespace$, then 
by Lemma~\ref{lemma:while-o-sc-sup}, 
there exists $k$ such that 
$\mathbf{O}_{\mathtt{while}(\phi)\{C\}}^\sigma(\theta)
= \mathbf{O}_{\mathtt{while}^k(\phi)\{C\}}^\sigma(\theta)$
and $ \mathbf{SC}_{\mathtt{while}(\phi)\{C\}}^\sigma(\theta) =  \mathbf{SC}_{\mathtt{while}^k(\phi)\{C\}}^\sigma(\theta)$.
By similar reasoning as in the proof of Proposition~\ref{lemma:sup-while-o-sc}, for all $l$, either 
$\check{f}(\mathbf{O}_{\mathtt{while}^l(\phi)\{C\}}^\sigma(\theta)) = \check{f}(\mathbf{O}_{\mathtt{while}^k(\phi)\{C\}}^\sigma(\theta))$
or
$\check{f}(\mathbf{O}_{\mathtt{while}^l(\phi)\{C\}}^\sigma(\theta)) = 1$, 
so $\check{f}(\mathbf{O}_{\mathtt{while}^l(\phi)\{C\}}^\sigma(\theta)) \geq  \check{f}(\mathbf{O}_{\mathtt{while}^k(\phi)\{C\}}^\sigma(\theta))$
for all $l$.

\ 

\sloppy By Lemma \ref{lemma:sc-term-increasing-chain}, for all  $l$, either
$\mathbf{O}_{\mathtt{while}^{l}(\phi)\{C\}}^\sigma(\theta) \notin \statespace$ or
$\mathbf{SC}_{\mathtt{while}^l(\phi)\{C\}}^\sigma(\theta) = \mathbf{SC}_{\mathtt{while}^k(\phi)\{C\}}^\sigma(\theta)$.
If $\mathbf{O}_{\mathtt{while}^{l}(\phi)\{C\}}^\sigma(\theta) \notin \statespace$, then $l < k$ 
%and  $\mathbf{O}_{\mathtt{while}^{l}(\phi)\{C\}}^\sigma(\theta) = \diverge$, 
because of Corollary~\ref{corr:o-while-increasing-chain}.
Moreover, by Lemma~\ref{lemma:sc-decreasing}, if $l < k$, then $ \mathbf{SC}_{\mathtt{while}^k(\phi)\{C\}}^\sigma(\theta) 
\leq \mathbf{SC}_{\mathtt{while}^l(\phi)\{C\}}^\sigma(\theta)$. 
Hence, $ \mathbf{SC}_{\mathtt{while}^k(\phi)\{C\}}^\sigma(\theta) 
\leq \mathbf{SC}_{\mathtt{while}^l(\phi)\{C\}}^\sigma(\theta)$ for all $l$.
This implies
$\inf_n   \check{f}( \mathbf{O}_{\mathtt{while}^n(\phi)\{C\}}^\sigma(\theta)) \mathbf{SC}_{\mathtt{while}^n(\phi)\{C\}}^\sigma(\theta)
= \check{f}( \mathbf{O}_{\mathtt{while}^k(\phi)\{C\}}^\sigma(\theta)) \mathbf{SC}_{\mathtt{while}^k(\phi)\{C\}}^\sigma(\theta)
= \check{f}(\mathbf{O}_{\mathtt{while}(\phi)\{C\}}^\sigma(\theta))  \mathbf{SC}_{\mathtt{while}(\phi)\{C\}}^\sigma(\theta)$.

\ 

If $\mathbf{O}_{\mathtt{while}(\phi)\{C\}}^\sigma(\theta) = \failure$, then
by Lemma~\ref{lemma:while-error-some-k}, $\mathbf{O}_{\mathtt{while}^k(\phi)\{C\}}^\sigma(\theta) = \failure$
for some $k$. Thus, 
$\inf_n \check{f}( \mathbf{O}_{\mathtt{while}^n(\phi)\{C\}}^\sigma(\theta)) 
\mathbf{SC}_{\mathtt{while}^n(\phi)\{C\}}^\sigma(\theta) = 0 = 
\check{f}(\mathbf{O}_{\mathtt{while}(\phi)\{C\}}^\sigma(\theta))  \mathbf{SC}_{\mathtt{while}(\phi)\{C\}}^\sigma(\theta)$.

\ 

If $\mathbf{O}_{\mathtt{while}(\phi)\{C\}}^\sigma(\theta) = \diverge$, then 
$\check{f}(\mathbf{O}_{\mathtt{while}(\phi)\{C\}}^\sigma(\theta)) = 1$.
By Lemma~\ref{corr:o-while-error}, $\mathbf{O}_{\mathtt{while}^k(\phi)\{C\}}^\sigma(\theta) = \diverge$ for all $k$.
Since $\check{f}(\diverge) = 1$, we only need to show that 
$\mathbf{SC}_{\mathtt{while}(\phi)\{C\}}^\sigma(\theta) = \inf_n \mathbf{SC}_{\mathtt{while}^n(\phi)\{C\}}^\sigma(\theta)$.

First, observe that from Corollary~\ref{corr:bisim-n-steps}, it follows that for all $l$,
%there exists $k$ such that 
for all $k \geq l$, 
$\mathbf{SC}_{\mathtt{while}(\phi)\{C\}}^\sigma(\theta, l) = \mathbf{SC}_{\mathtt{while}^k(\phi)\{C\}}^\sigma(\theta, l)$.
Thus, for such fixed $l$, 
$\mathbf{SC}_{\mathtt{while}(\phi)\{C\}}^\sigma(\theta, l) = \inf_n \mathbf{SC}_{\mathtt{while}^n(\phi)\{C\}}^\sigma(\theta, l)$.
Hence,
\begin{eqnarray*}
\mathbf{SC}_{\mathtt{while}(\phi)\{C\}}^\sigma(\theta) &=&\inf_l\ \mathbf{SC}_{\mathtt{while}(\phi)\{C\}}^\sigma(\theta, l) \\
&=& \inf_l\ \inf_n\ \mathbf{SC}_{\mathtt{while}^n(\phi)\{C\}}^\sigma(\theta, l) \\
%Change order of summation  - possible because both sequences decreasing
&=& \inf_n\ \inf_l\ \mathbf{SC}_{\mathtt{while}^n(\phi)\{C\}}^\sigma(\theta, l) \\
&=& \inf_n\ \mathbf{SC}_{\mathtt{while}^n(\phi)\{C\}}^\sigma(\theta)
\end{eqnarray*}

In the equality $\inf_l\ \inf_n\ \mathbf{SC}_{\mathtt{while}^n(\phi)\{C\}}^\sigma(\theta, l)
%Change order of summation  - possible because both sequences decreasing
= \inf_n\ \inf_l\ \mathbf{SC}_{\mathtt{while}^n(\phi)\{C\}}^\sigma(\theta, l)$,
we used the fact that 
$\inf_l\ \inf_n\ \mathbf{SC}_{\mathtt{while}^n(\phi)\{C\}}^\sigma(\theta, l)
= \lim_{l -> \infty} \ \mathtt{lim}_{n -> \infty}\ \mathbf{SC}_{\mathtt{while}^n(\phi)\{C\}}^\sigma(\theta, l)$
and that $\mathbf{SC}_{\mathtt{while}^n(\phi)\{C\}}^\sigma(\theta, l)$ is decreasing in both $n$ and $l$,
which means that by Theorem 4.2 from \citep{Habil05},
$ \lim_{l -> \infty} \ \mathtt{lim}_{n -> \infty}\ \mathbf{SC}_{\mathtt{while}^n(\phi)\{C\}}^\sigma(\theta, l)
=  \lim_{n-> \infty} \ \mathtt{lim}_{l -> \infty}\ \mathbf{SC}_{\mathtt{while}^n(\phi)\{C\}}^\sigma(\theta, l)$.
\qed \end{proof}

Below, we write
$\config{\theta}{C}{K}{\sigma}{\theta_K}{n}{w}
\vdash^{*}_{\mathtt{min}}
\config{\theta'}{\mathtt{diverge}}{K'}{\sigma'}{\theta'_K}{n + n'}{w'}$
if $\config{\theta}{C}{K}{\sigma}{\theta_K}{n}{w}
\vdash^{*}
\config{\theta'}{\mathtt{diverge}}{K'}{\sigma'}{\theta'_K}{n + n'}{w'}$
and there is no $n'' < n'$ such that
$\config{\theta}{C}{K}{\sigma}{\theta_K}{n}{w}
\vdash^{*}
\config{\theta''}{\mathtt{diverge}}{K''}{\sigma''}{\theta''_K}{n + n''}{w''}$
(or, equivalently,
$\config{\theta}{C}{K}{\sigma}{\theta_K}{n}{w}
\vdash^{*}
\config{\theta'}{\mathtt{diverge}}{K'}{\sigma'}{\theta'_K}{n + n'}{w'}$
was derived without (diverge)).

\begin{restate}{Proposition~\ref{lemma:o-sc-decreasing}}
If $n \geq k$ and $f \leq 1$, then  

\noindent
$\check{f}(\mathbf{O}_{\mathtt{while}^n(\phi)\{C\}}^\sigma(\theta)) \mathbf{SC}_{\mathtt{while}^n(\phi)\{C\}}^\sigma(\theta)
 \leq \check{f}(\mathbf{O}_{\mathtt{while}^k(\phi)\{C\}}^\sigma(\theta)) \mathbf{SC}_{\mathtt{while}^k(\phi)\{C\}}^\sigma(\theta)$.
\end{restate}
\begin{proof}[of Proposition~\ref{lemma:o-sc-decreasing}]
\sloppy By Corollary~\ref{corr:o-while-increasing-chain}, $\mathbf{O}_{\mathtt{while}^n(\phi)\{C\}}^\sigma(\theta)
\geq \mathbf{O}_{\mathtt{while}^k(\phi)\{C\}}^\sigma(\theta)$. Since $\check{f}$ is antitone (we have
$\check{f}(\tau) \leq \check{f}(\diverge) = 1$ for all $\tau \geq \diverge$), this implies
$\check{f}(\mathbf{O}_{\mathtt{while}^n(\phi)\{C\}}^\sigma(\theta))  \leq \check{f}(\mathbf{O}_{\mathtt{while}^k(\phi)\{C\}}^\sigma(\theta))$.
By Lemma~\ref{lemma:sc-decreasing}, $ \mathbf{SC}_{\mathtt{while}^n(\phi)\{C\}}^\sigma(\theta) 
\leq \mathbf{SC}_{\mathtt{while}^k(\phi)\{C\}}^\sigma(\theta)$, so 
$\check{f}(\mathbf{O}_{\mathtt{while}^n(\phi)\{C\}}^\sigma(\theta)) \mathbf{SC}_{\mathtt{while}^n(\phi)\{C\}}^\sigma(\theta)
 \leq \check{f}(\mathbf{O}_{\mathtt{while}^k(\phi)\{C\}}^\sigma(\theta)) \mathbf{SC}_{\mathtt{while}^k(\phi)\{C\}}^\sigma(\theta)$,
as required.
\qed \end{proof}

\section{Proofs of Theorems~\ref{thm:wp-op-equiv} and \ref{thm:wlp-op-equiv}} \label{section:thms-1-2}

%\subsection{Approximating while-loops.} 

%\subsection{Proofs of Theorem 1 and 2}

\begin{restate}{Theorem ~\ref{thm:wp-op-equiv}}
For all measurable functions $f \colon \statespace -> \extposreals$, \ccpgcl programs $C$ and initial states $\sigma \in \statespace$:
$$
\mathtt{wp} |[ C |](f)(\sigma) \ = \ \int f(\tau) |[C|]_{\sigma}(d\tau). 
$$
\end{restate}

\begin{proof}
By Lemma~\ref{lemma:unfold-exp-wrt-sem}, it suffices to prove that for all $f$:
$$
\int \hat{f}(\mathbf{O}_{C}^\sigma(\theta)) \cdot \mathbf{SC}_C^\sigma(\theta)\, \mu_{\mathbb{S}}(d\theta) \ = \mathtt{wp}  |[ C |](f)(\sigma).
$$
This can be proven by induction on the structure of $C$.   
We refrain from treating all cases but restrict ourselves to some interesting cases:
\begin{itemize}
\item Case $C = x :\approx U$.
\begin{eqnarray*}
& &  \int \hat{f}(\mathbf{O}_{x :\approx U}^\sigma(\theta)) \cdot \mathbf{SC}_{x :\approx U}^\sigma(\theta)\, \mu_{\mathbb{S}}(d\theta) \\
&= \ &\int f(\sigma[x \mapsto \pi_U(\pi_L(\theta))])  \, \mu_{\mathbb{S}}(d\theta) 
\\ \text{\tiny (property entropy)} &= \ &\int_{[0,1]} f(\sigma[x \mapsto v])  \, \mu_L(dv) \\ 
\text{\tiny (definition $\mathtt{wp}$)} &=&\mathtt{wp}|[ x :\approx U |](f)(\sigma).
\end{eqnarray*}
\item Case $C = C_1; C_2$ with $C_1 \neq C'_1; C'_2$.
\begin{eqnarray*}
& & \int \hat{f}(\mathbf{O}_{C_1;C_2}^\sigma(\theta)) \cdot \mathbf{SC}_{C_1;C_2}^\sigma(\theta)\, \mu_{\mathbb{S}}(d\theta) \\
\text{\tiny (Proposition~\ref{lemma:o-sc-seq})} &=& 
 \int \hat{f}(\mathbf{O}_{C_2}^{\tau}(\pi_R(\theta))) \cdot
%% \\ && \qquad 
\mathbf{SC}_{C_2}^{\tau}(\pi_R(\theta))\cdot\mathbf{SC}_{C_1}^\sigma(\pi_L(\theta))\, \mu_{\mathbb{S}}(d\theta) \\
\text{\tiny (property entropy)} &=&
\int  \underbrace{\int \hat{f}(\mathbf{O}_{C_2}^{\rho}(\theta_R))\cdot
 %% \\ && \qquad
\mathbf{SC}_{C_2}^{\rho}(\theta_R)  
\, \mu_{\mathbb{S}}(d\theta_R)}_{= g(\rho)} \cdot \mathbf{SC}_{C_1}^\sigma(\theta_L) \mu_{\mathbb{S}}(d\theta_L)
\end{eqnarray*}
where $\tau = \mathbf{O}_{C_1}^\sigma(\pi_L(\theta))$ and $\rho = \mathbf{O}_{C_1}^\sigma(\theta_L)$.
%%Now, let
%%$%\[
%%g(\tau) = \int \hat{f}(\mathbf{O}_{C_2}^{\tau}(\theta_R)) \mathbf{SC}_{C_2}^{\tau}(\theta_R)  \, \mu_{\mathbb{S}}(d\theta_R) 
%% $ %\]
%\noindent 
%% for $\tau \in \statespace$. 
We have:
\begin{eqnarray*}
& & 
\int \hat{g}(\mathbf{O}_{C_1}^\sigma(\theta_L))\cdot\mathbf{SC}_{C_1}^\sigma(\theta_L) \mu_{\mathbb{S}}(d\theta_L)\\
\text{\tiny (induction hypothesis)}&=&
\mathtt{wp}|[ C_1|](g)(\sigma)\\
&\ = \ & \mathtt{wp}|[ C_1|](\lambda \tau . \int \hat{f}(\mathbf{O}_{C_2}^{\tau}(\theta_R)) 
\cdot\mathbf{SC}_{C_2}^{\tau}(\theta_R)  \, \mu_{\mathbb{S}}(d\theta_R)  )(\sigma)\\
\text{\tiny (induction hypothesis)}&=& \mathtt{wp}|[ C_1|](\lambda \tau .\mathtt{wp}|[ C_2|](f)(\tau))(\sigma)\\
&=& \mathtt{wp}|[C_1 |] (\mathtt{wp}|[ C_2|](f)) (\sigma)\\
\text{\tiny (definition $\mathtt{wp}$)} &=&\mathtt{wp}|[ C_1;C_2|](f)(\sigma)
\end{eqnarray*}
\item Case $C = \mathtt{score}(E)$.
By inspecting the reduction rules, it follows:
\[
\mathbf{O}_{ \mathtt{score}(E)}^\sigma(\theta) \ = \
\begin{cases}
\sigma & \text{if}\ \sigma(E) \in (0,1] \\
\failure & \text{otherwise} \\
\end{cases}
\]
\noindent which implies $\hat{f}(\mathbf{O}_{ \mathtt{score}(E)}^\sigma(\theta)) \ = \
[\sigma(E) \in (0,1]] \cdot \hat{f}(\sigma)$
 and
\[
\mathbf{SC}_{ \mathtt{score}(E)}^\sigma(\theta) \ = \
\begin{cases}
\sigma(E) & \text{if}\ \sigma(E) \in (0,1] \\
0 & \text{otherwise} \\
\end{cases}
\ = \ 
[\sigma(E) \in (0,1]] \cdot \sigma(E).
\]
Thus, we have:
\begin{eqnarray*}
& & \int \hat{f}(\mathbf{O}_{ \mathtt{score}(E)}^\sigma(\theta)) \cdot \mathbf{SC}_{ \mathtt{score}(E)}^\sigma(\theta)\, \mu_{\mathbb{S}}(d\theta) \\
& \ = \ & \int [\sigma(E) \in (0,1]] \cdot \hat{f}(\sigma) \cdot \sigma(E)\, \mu_{\mathbb{S}}(d\theta) \\
&=& [\sigma(E) \in (0,1]] \cdot \hat{f}(\sigma) \cdot \sigma(E) \\
\text{\tiny ($\sigma \in \statespace$ by assumption)} &=& 
[\sigma(E) \in (0,1]] \cdot f(\sigma) \cdot \sigma(E)\\
&=& \mathtt{wp}|[ \mathtt{score}(E) |](f)(\sigma).
\end{eqnarray*}
\item Case $C = \mathtt{while}(\phi)\{C'\}$. Let $C^n = \mathtt{while}^n(\phi)\{C'\}$. We derive:
\begin{eqnarray*}
& &  
\int \hat{f}(\mathbf{O}_{C}^\sigma(\theta)) \cdot \mathbf{SC}_{C}^\sigma(\theta)\, \mu_{\mathbb{S}}(d\theta) \\
%\text{\tiny (by Proposition~\ref{lemma:sup-while-o-sc})} &\ = \ & \int \sup_n 
%\hat{f}( \mathbf{O}_{\mathtt{while}^n(\phi)\{C\}}^\sigma(\theta))  
%\mathbf{SC}_{\mathtt{while}^n(\phi)\{C\}}^\sigma(\theta)\, \mu_{\mathbb{S}}(d\theta) \\
\text{\tiny (Proposition~\ref{lemma:sup-while-o-sc})} &=& \int \sup_n 
\hat{f}( \mathbf{O}_{C^n}^\sigma(\theta)) \cdot \mathbf{SC}_{C^n}^\sigma(\theta)\, \mu_{\mathbb{S}}(d\theta) \\
\text{\tiny (Beppo Levi's Theorem)} &=& \sup_n \int \hat{f}( \mathbf{O}_{C^n}^\sigma(\theta))  
\cdot\mathbf{SC}_{C^n}^\sigma(\theta)\, \mu_{\mathbb{S}}(d\theta) \\
\text{\tiny ($*$)} &=& \sup_n {}^{\mathtt{wp}}_{\langle \phi, C' \rangle} \Phi_f^n (0) (\sigma) \\
\text{\tiny (Kleene's Fixpoint Theorem)}&=& \mathtt{wp}|[\mathtt{while}(\phi)\{C'\}|](f)(\sigma).
\end{eqnarray*}
%%
%Note that in the penultimate equality, the assumption of Lemma~\ref{lemma:approx-while-wp} is discharged
%by the induction hypothesis, which states that 
%$\int  \hat{f}(\mathbf{O}_{C}^\sigma(\theta))  \mathbf{SC}_C^\sigma(\theta)
%\, \mu_{\mathbb{S}}(d\theta)  =  \mathtt{wp}  |[ C |](f)(\sigma) $ for \emph{all} measurable functions $f$.
When applying the Beppo Levi's Theorem, we used the fact that the sequence
$\hat{f}( \mathbf{O}_{C^n}^\sigma(\theta))  \cdot \mathbf{SC}_{C^n}^\sigma(\theta)$
is monotonic in $n$ (Proposition~\ref{lemma:o-sc-increasing}).
In order to show that the proof step $(*)$ is correct, we need to show:
$$
\int \hat{f}(\mathbf{O}_{C^n}^\sigma(\theta)) 
\cdot\mathbf{SC}_{C^n}^{\sigma}(\theta)\, \mu_{\mathbb{S}}(d\theta) 
\ = \ 
{}^{\mathtt{wp}}_{\langle \phi, C' \rangle} \Phi_f^n (0) (\sigma) \mbox{ for all } n.
$$
We prove this statement by induction on $n$, using Proposition~\ref{lemma:rearrange-entropy}:
\begin{itemize}
\item Base case: $n=0$:
$$
\int \underbrace{\hat{f}(\mathbf{O}_{\mathtt{diverge}}^\sigma(\theta))}_{=0} \cdot \underbrace{\mathbf{SC}_{\mathtt{diverge}}^{\sigma}(\theta)\, \mu_{\mathbb{S}}(d\theta)}_{=1} \ = \ 0 \ = \ {}^{\mathtt{wp}}_{\langle \phi, C' \rangle} \Phi_f^0 (0) (\sigma)
$$
\item Induction step: we distinguish $\sigma(\phi) = \mathtt{true}$ and $\sigma(\phi) = \mathtt{false}$. For the latter case we have:
$$
\int \hat{f}(\sigma) \cdot 1\, \mu_{\mathbb{S}}(d\theta) \ = \ f(\sigma).
$$
For the case $\sigma(\phi) = \mathtt{true}$ we derive:
\begin{eqnarray*}
\!\!\!\!\!\!\!
& & \int \hat{f}(\mathbf{O}_{C^{n+1}}^\sigma(\theta)) \cdot
  \mathbf{SC}_{C^{n+1}}^{\sigma}(\theta)\, \mu_{\mathbb{S}}(d\theta) \\
&\ = \ & \int \hat{f}(\mathbf{O}_{C'; C^n}^\sigma(\theta)) \cdot
  \mathbf{SC}_{C'; C^n}^{\sigma}(\theta)\, \mu_{\mathbb{S}}(d\theta) \\
\text{ \tiny (Prop.~\ref{lemma:rearrange-entropy})} &=&
 \int \hat{f}(\mathbf{O}_{C^n}^\tau (\pi_R(\psi(\theta))))
{\cdot} \mathbf{SC}_{C'}^\sigma  (\pi_L(\psi(\theta))) {\cdot} 
%% \\ && \qquad \qquad
  \mathbf{SC}_{C^n}^{\tau}  (\pi_R(\psi(\theta)))\, \mu_{\mathbb{S}}(d\theta) \\
%\text{\tiny ($\psi$ is measure-preserving)}&=&
\text{\tiny (Prop\.~\ref{lemma:fsf-meas-pres})}&=&
\int \hat{f}(\mathbf{O}_{C^n}^{\rho}  (\pi_R(\theta))) 
  \cdot \mathbf{SC}_{C'}^\sigma  (\pi_L(\theta)) \cdot 
  %% \\ && \qquad \qquad
  \mathbf{SC}_{C^n}^{\rho}  (\pi_R(\theta))\, \mu_{\mathbb{S}}(d\theta) \\
\text{\tiny (entropy)}&=& 
\int \int \hat{f}(\mathbf{O}_{C^n}^{\rho}  (\theta_R)) \cdot
\mathbf{SC}_{C^n}^{\rho}  (\theta_R)
\, \mu_{\mathbb{S}}(d\theta_R) \cdot 
%% \\ && \qquad \qquad
\mathbf{SC}_{C'}^\sigma (\theta_L)
\, \mu_{\mathbb{S}}(d\theta_L)
\end{eqnarray*}
where $\tau = \mathbf{O}_{C'}^\sigma(\pi_L(\psi(\theta)))$ and $\rho = \mathbf{O}_{C'}^\sigma(\pi_L(\theta))$.

Now let $p(\tau) = \int \hat{f}(\mathbf{O}_{C^n}^{\tau}  (\theta_R)) \cdot \mathbf{SC}_{C^n}^{\tau}  (\theta_R)\, \mu_{\mathbb{S}}(d\theta_R)$
for $\tau \in \statespace$.
Then:
\begin{eqnarray*}
& & \int \hat{p}(\mathbf{O}_{C}^{\sigma}(\theta_L)) \cdot 
\mathbf{SC}_{C}^\sigma  (\theta_L)
\, \mu_{\mathbb{S}}(d\theta_L)\\
\text{\tiny (outer IH)}&=&
\mathtt{wp}  |[ C |]( p)(\sigma) \\
&\ = \ & \mathtt{wp}  |[ C |] \biggl( \lambda \tau . \int \hat{f}(\mathbf{O}_{C^n}^{\tau}  (\theta_R)) %% \\&&\qquad
\cdot \mathbf{SC}_{C^n}^{\tau}  (\theta_R)\, \mu_{\mathbb{S}}(d\theta_R)\biggr)(\sigma)\\
\text{\tiny (inner IH)}&=& \mathtt{wp}  |[ C |] \left ( \lambda \tau .{}^{\mathtt{wp}}_{\langle \phi, C' \rangle} \Phi_f^n (0) (\tau) \right)(\sigma)\\
&=&  \mathtt{wp}  |[ C |] \left ( {}^{\mathtt{wp}}_{\langle \phi, C' \rangle} \Phi_f^n (0) \right)(\sigma)\\
\text{\tiny (definition ${}^{\mathtt{wp}}_{\langle \phi, C \rangle} \Phi_f$)} &=& 
{}^{\mathtt{wp}}_{\langle \phi, C' \rangle} \Phi_f^{n+1} (0)(\sigma).
\end{eqnarray*}
\end{itemize}
\end{itemize}
Hence, the equality $(*)$ is correct, which finishes the proof.
 \qed \end{proof}

The second main theorem of this paper states that the weakest liberal preexpectation of a non-negative function $f$ bounded by $1$ is equivalent to the expected value of $f$ with respect to the distribution defined by the operational semantics plus the probability of divergence weighted by scores.

\begin{restate}{Theorem~\ref{thm:wlp-op-equiv}}
For every measurable non-negative function $f \colon \statespace -> \extposreals$ with $f(\sigma) \leq 1$ for all states $\sigma$, \ccpgcl program $C$ and initial state $\sigma \in \statespace$:
$$
\mathtt{wlp} |[ C |](f)(\sigma) \ = \ 
\int f(\tau) \cdot \restr{|[C|]_{\sigma}}{\statespace}(d\tau) + 
\underbrace{\int [\mathbf{O}_C^{\sigma}(\theta) =\ \diverge] \cdot \mathbf{SC}_C^{\sigma}(\theta)  \, \mu_{\mathbb{S}} (d\theta)}_{\mbox{\footnotesize probability of divergence multiplied by the score}}.
$$
\end{restate}
\begin{proof}
By induction on the structure of $C$. The proof is essentially the same as the proof of Theorem~\ref{thm:wp-op-equiv}, except that in the case of $\mathtt{while}$-loops, we use Proposition~\ref{lemma:inf-while-o-sc} instead of Proposition~\ref{lemma:sup-while-o-sc} to show that the $\mathtt{while}$-loop can be replaced by the limit of its finite approximations.

Similarly to Theorem~\ref{thm:wp-op-equiv}, the equation we want to prove can be rewritten as:
$$
\mathtt{wlp} |[ C |](f)(\sigma) \ = \  \int \check{f}(\mathbf{O}_{C}^\sigma(\theta)) \cdot
\mathbf{SC}_{C}^\sigma(\theta)\, \mu_{\mathbb{S}}(d\theta) 
$$
The proof goes as follows. Let $C = \mathtt{while}(\phi)\{C'\}$ and $C^n = \mathtt{while}^n(\phi)\{C'\}$.
\begin{eqnarray*}
 \int \check{f}(\mathbf{O}_{C}^\sigma(\theta)) \cdot
\mathbf{SC}_{C}^\sigma(\theta)\, \mu_{\mathbb{S}}(d\theta) \\
\text{\tiny (Proposition~\ref{lemma:inf-while-o-sc})} &=& 
\int \inf_n \check{f}( \mathbf{O}_{C^n}^\sigma(\theta)) \cdot \mathbf{SC}_{C^n}^\sigma(\theta)\, \mu_{\mathbb{S}}(d\theta) \\
\text{\tiny (Beppo Levi's Theorem)} &=& \inf_n \int \check{f}( \mathbf{O}_{C^n}^\sigma(\theta)) \cdot 
\mathbf{SC}_{C^n}^\sigma(\theta)\, \mu_{\mathbb{S}}(d\theta) \\
\text{\tiny ($*$)} &=& \inf_n {}^{\mathtt{wlp}}_{\langle \phi, C' \rangle} \Phi_f^n (1) (\sigma) \\
\text{\tiny (Kleene's Fixpoint Theorem)}&=& \mathtt{wlp}|[\mathtt{while}(\phi)\{C'\}|](f)(\sigma)
\end{eqnarray*}
In order to show that step $(*)$ is correct, we need to show that 
$\int \check{f}( \mathbf{O}_{C^n}^\sigma(\theta)) \cdot
\mathbf{SC}_{C^n}^\sigma(\theta)\, \mu_{\mathbb{S}}(d\theta)
= \inf_n {}^{\mathtt{wlp}}_{\langle \phi, C' \rangle} \Phi_f^n (1) (\sigma)$ 
for all $n$.
This can be proven by induction on $n$; the proof is almost identical to the proof of $(*)$ from Theorem~\ref{thm:wp-op-equiv}.
When applying the Beppo Levi's Theorem, we used the fact that the sequence
$\check{f}( \mathbf{O}_{C^n}^\sigma(\theta)) \cdot  \mathbf{SC}_{C^n}^\sigma(\theta)$
is decreasing in $n$ (Proposition~\ref{lemma:o-sc-decreasing}) and that
$\int \check{f}(\mathbf{O}_{C^0}^\sigma(\theta)) \cdot
\mathbf{SC}_{C^0}^\sigma(\theta)\, \mu_{\mathbb{S}}(d\theta) < \infty$, 
which can be checked immediately.
 \qed \end{proof}

\section{Proving measurability}
\label{app:proofs-meas}

The proofs of measurability are similar to \citep{SzymczakPhD}, with the %important 
difference that
we are working with an imperative language. In this section, we sketch the proofs of measurability of
functions $\mathbf{O}_C^\sigma(\cdot)$ and $\mathbf{SC}_C^\sigma(\cdot, n)$, without going into the details, %ed proofs,
which are conceptually the same as in \citep{SzymczakPhD}.

%\subsection{Measurable space of statements}
%
%In addition to states and traces, we also need to define a measurable space of statements $C$. This can be easily done
%by using a straightforward metric on syntactic terms, as in \citep{SzymczakPhD}. We omit the details.

\subsection{Measurability of single-step reduction}

\newcommand{\configtuple}[7]{(#1, #2 ,#3, #4, #5, #6, #7 )}

%Let $\mathbb{C}$ be expressions, $\mathbb{V}$ values, $\mathcal{C}$, $\mathcal{V}$ respective $\sigma$-algebras.
%
%First define a function $g_D :  \mathbb{C} \times \fullstatespace -> \mathbb{V}$ for evaluating deterministic expressions.
%\[
%g_D(E, \sigma)  = \sigma(E)
%\]
%
%Measurability of this would require defining expression evaluation formally. Alternatively we can assume that the language only supports measurable operations.
%
%Now, define:

Let us define:
\[
g\configtuple{\theta}{C}{K}{\sigma}{\theta_K}{n}{w} = 
\begin{cases}
\configtuple{\theta'}{C'}{K'}{\sigma'}{\theta'_K}{n+1}{w'}\\
\qquad  \text{if}\ \config{\theta}{C}{K}{\sigma}{\theta_K}{n}{w} \vdash \config{\theta'}{C'}{K'}{\sigma'}{\theta'_K}{n+1}{w'} \\
 \configtuple{\theta}{C}{K}{\failure}{\theta_K}{n+1}{0} \qquad \text{otherwise}
\end{cases}
\]
%\[
%g(\config{\theta}{C}{K}{\sigma}{\theta_K}{n}{w}) = \config{\theta'}{C'}{K'}{\sigma'}{\theta'_K}{n+1}{w'}
%\]
%\noindent if $\config{\theta}{C}{K}{\sigma}{\theta_K}{n}{w} \vdash \config{\theta'}{C'}{K'}{\sigma'}{\theta'_K}{n+1}{w'}$ and
%$g(\config{\theta}{C}{K}{\sigma}{\theta_K}{n}{w}) = \config{\theta}{C}{K}{\failure}{\theta_K}{n+1}{w} $ otherwise. 

We need to show that $g$ is measurable.
The only interesting cases are (assign), which modifies state (we need to show $g$ is still continuous in this case) and (draw), which modifies
both state and trace, and (seq) and (pop), which modify both the main trace and the trace for continuation.

%IDEA: Like in proof of Lemma 159 in \citep{SzymczakPhD}, treat components separately! - But we still need the space
%of traces to be a separable metric space... Or do we? TODO: check this!

%From this point on the proof should be bery sim

%From this point the rest of the proof is similar to \citep{SzymczakPhD}, section E.3.

%TODO

We can show that $g$ is measurable by considering $g$ as a disjoint union of sub-functions
defined on measurable subsets of combinations corresponding to given reduction rules
(e.g. $g_{\mathit{if-true}}$ and $g_{\mathit{if-false}}$ reducing conditional choices, $g_{\mathit{while-true}}$ and 
$g_{\mathit{while-false}}$ reducing while-loops,
$g_{\mathit{sample}}$ reducing sampling statements etc.)
and showing that each sub-function is measurable. The reasoning is very similar to the one presented in
Appendix E.1 of \citep{SzymczakPhD}, so we omit the full proof and only show measurability of
sub-functions modifying states and infinite traces, which were not present in \citep{SzymczakPhD}.

%TODO: finish this!

\subsubsection{From continuity to measurability}

The easiest way of proving measurability of a function is often proving that this function is continuous
as a function between the metric spaces which gave rise to the domain and codomain measurable 
spaces---by Corollary~\ref{corr:cont-fun-meas}, continuity implies measurability. 
Moreover, Corollary~\ref{corr:cont-fun-prod-meas} states that
if a function $f$ between products of separable metric spaces is continuous with respect to the Manhattan products of
metrics, then it is measurable with respect to products of the given measurable spaces. We will make heavy use of these
results in the proofs below.

%with respect to the metrics which generated the underlying $\sigma$-algebras.
%
%We will use Lemma~\ref{??}  and Corollary~\ref{corr:cont-fun-prod-meas} extensively

%If we want to prove that a function between products of measurable spaces is measurable,
%and each measurable space in question has a Borel $\sigma$-algebra induced by a separable metric space, it 
%suffices to prove that the function is continuous as a function between products of underlying metric spaces 
%(taken with the Manhattan metric, i.e. $(d((x_1,y_1), (x_2, y_2)) = d(x_1, y_1) + d(x_2, y_2)$). This fact 
%follows from standard results in metric and topological spaces and measure theory.
%% 
%We will often use it to prove measurability of subfunctions defining the reduction relation. It is easy
%to show that all metric spaces we are working with are separable.

%\paragraph{Metric space of entropies}  If we use the abstract entropy space, we will not 
%have a metric of entropies which we could use in proving continuity of transformations.
%Hence, to carry out measurability proofs, we need to assume a concrete representation 
%of the entropy space $\mathbb{S}$, such that  IS IT NEEDED?? CHECK!
%
%To carry out the proofs, we would need to use the concrete representation of the entropy space $\mathbb{S}$ as 
%the Hilbert cube $[0,1]^\omega$, as suggested in \citep[Section 4.3]{Culpepper17}, for which the $\sigma$-algebra can be defined as
%a Borel $\sigma$-algebra induced by a metric.

\paragraph{Additional Borel $\sigma$-algebras}

In order to carry out the proofs, we need to define separable metric spaces on statements $C$, expressions $E$ and continuations $K$, 
which will induce Borel $\sigma$-algebras. 
%
%Metrics on statements $C$ and continuations $K$  
These metrics are straightforward metrics on syntactic terms, 
similar to the metrics on lambda-terms in \citep{SzymczakPhD}. We omit the details, but these metrics
 would be defined so 
that $d_C(C_1;C_2, C_1';C_2') = d_C(C_1; C_1') + d_C(C_2; C_2')$ and 
$d_K(C\mathrel{::}K, C'\mathrel{::}K') = d_C(C; C') + d_K(K, K')$ (where $d_K(K, K') = \infty$ if $K$ and $K'$
have different lengths).
 
It is easy to check that all the above metric spaces are separable---for each of them, a dense subset can be
obtained by replacing reals with rationals. All subspaces of separable metric spaces can also be shown to be separable.

We also need to define $\sigma$-algebras on step sizes $n$ and weights $w$---these will be the standard 
discrete $\sigma$-algebra on $\mathbb{Z}_{+}$ and the Borel $\sigma$-algebra on $[0,1]$, respectively.
%The latter two are simply standard metric
%spaces $\mathbb{Z}$ and $\mathbb{R}_{+}$, with the Euclidean metric. 

\subsubsection{Measurability of (assign)}

We define:

\begin{eqnarray*}
% g_{assign} : ??? \\
 g_{\mathit{assign}}\configtuple{\theta}{x := E}{K}{\sigma}{\theta_K}{n}{w} &=&
\configtuple{\theta}{\modownarrow}{K}{\sigma[x \mapsto \sigma(E)]}{\theta_K}{n+1}{w} \\
&=&(g_{\mathit{assign1}}\configtuple{\theta}{x := E}{K}{\sigma}{\theta_K}{n}{w},\\
&& g_{\mathit{assign2}}\configtuple{\theta}{x := E}{K}{\sigma}{\theta_K}{n}{w}, \\
&&\dots,\\
&&g_{\mathit{assign7}}\configtuple{\theta}{x := E}{K}{\sigma}{\theta_K}{n}{w})
\end{eqnarray*}
\noindent where:
\begin{eqnarray*}
 g_{\mathit{assign1}}\configtuple{\theta}{x := E}{K}{\sigma}{\theta_K}{n}{w} &=& \theta \\
 g_{\mathit{assign2}}\configtuple{\theta}{x := E}{K}{\sigma}{\theta_K}{n}{w} &=& \modownarrow \\
%&\dots&\\
 g_{\mathit{assign3}}\configtuple{\theta}{x := E}{K}{\sigma}{\theta_K}{n}{w} &=& K \\
 g_{\mathit{assign4}}\configtuple{\theta}{x := E}{K}{\sigma}{\theta_K}{n}{w} &=& \sigma[x \mapsto \sigma(E)] \\
%&\dots&\\
g_{\mathit{assign5}}\configtuple{\theta}{x := E}{K}{\sigma}{\theta_K}{n}{w} &=& \theta_K \\
g_{\mathit{assign6}}\configtuple{\theta}{x := E}{K}{\sigma}{\theta_K}{n}{w} &=& n+1\\
g_{\mathit{assign7}}\configtuple{\theta}{x := E}{K}{\sigma}{\theta_K}{n}{w} &=& w
%&\dots&
\end{eqnarray*}

\begin{lemma}\label{lemma:assign-measurable}
$g_{\mathit{assign}}$ is measurable.
\end{lemma}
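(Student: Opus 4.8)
The plan is to reduce the claim to a coordinatewise check. Since the codomain of $g_{\mathit{assign}}$ is a finite product of measurable spaces and, by the separability lemma of Appendix~\ref{app:basics-measure-theory}, the Borel $\sigma$-algebra induced on the configuration space by its Manhattan product metric coincides with the product of the component Borel $\sigma$-algebras, a map into this space is measurable iff each of its seven coordinate maps $g_{\mathit{assign1}},\dots,g_{\mathit{assign7}}$ is measurable. Six of these are immediate: $g_{\mathit{assign1}}$, $g_{\mathit{assign3}}$, $g_{\mathit{assign5}}$ and $g_{\mathit{assign7}}$ are the projections onto the $\theta$-, $K$-, $\theta_K$- and $w$-components, which are continuous with respect to the product metric and hence measurable by Corollary~\ref{corr:cont-fun-meas}; $g_{\mathit{assign2}}$ is the constant map $\lambda\,\cdot\,.\,\modownarrow$; and $g_{\mathit{assign6}}$ is the successor map $n\mapsto n{+}1$, which is trivially measurable on the discrete $\sigma$-algebra on $\mathbb{Z}_{+}$. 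This isolates all the content of the lemma in the single coordinate $g_{\mathit{assign4}}$, the state-update map.

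To handle $g_{\mathit{assign4}}$, I would write it as a composition $g_{\mathit{assign4}} = h \circ F$, where $h\colon \mathcal{N}\times\statespace\times(\mathbb{R}\uplus\mathbb{Z}) \to \statespace$ is the update function $h(x,\sigma,v)=\sigma[x\mapsto v]$ of Lemma~\ref{lemma:subst-measurable}, and $F$ maps the input configuration $\configtuple{\theta}{x := E}{K}{\sigma}{\theta_K}{n}{w}$ to the triple $(x,\ \sigma,\ \sigma(E))$. By the same product criterion, $F$ is measurable once its three components are. The first component extracts the variable name $x$ from the assignment statement $x := E$, and the third requires extracting the expression $E$ from $x := E$ and then evaluating it in $\sigma$. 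Here I would use that the domain of $g_{\mathit{assign}}$ is the subset of configurations whose statement is an assignment, which is a measurable (indeed clopen) subset of the statement space since distinct top-level constructors lie at infinite distance in the syntactic metric $d_C$; on this subset the projections $(x := E)\mapsto x$ and $(x := E)\mapsto E$ are $1$-Lipschitz, hence continuous and so measurable. The middle component $\sigma$ is a projection.

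The main obstacle, and the only step needing real care, is the third component $\sigma(E)$: it is the composition of the continuous extraction $(x := E)\mapsto E$ with the evaluation map $\lambda(\sigma,E).\,\sigma(E)$, which is measurable \emph{by the standing assumption} on expression evaluation made in Section~\ref{section:denotational-semantics}. Feeding the measurable pair $(\sigma, E)$ into this assumed-measurable evaluation yields measurability of $\sigma(E)$ as a function of the configuration. Assembling $F=(F_1,F_2,F_3)$ and composing with $h$ from Lemma~\ref{lemma:subst-measurable} then gives measurability of $g_{\mathit{assign4}}$ by closure of measurable functions under composition, invoking Corollary~\ref{corr:cont-fun-prod-meas} to pass from the separable factor spaces to their product. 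Combining this with the six trivial coordinates completes the proof.
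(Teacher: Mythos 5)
Your proof is correct and follows essentially the same route as the paper: componentwise decomposition into the seven coordinate maps, trivial measurability of the projections, the constant, and the successor map, and factoring $g_{\mathit{assign4}}$ as the state-update function of Lemma~\ref{lemma:subst-measurable} composed with the map $\configtuple{\theta}{x := E}{K}{\sigma}{\theta_K}{n}{w} \mapsto (x, \sigma, \sigma(E))$, whose measurability rests on the continuity of syntactic extraction and the standing assumption that expression evaluation is measurable. The only difference is that you spell out details the paper leaves implicit (the product-Borel identification via separability and the clopen-ness of the set of assignment configurations), which is a refinement, not a departure.
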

\begin{proof}
The functions $g_{\mathit{assign1}}$, $g_{\mathit{assign3}}$, $g_{\mathit{assign5}}$, $ g_{\mathit{assign7}}$ 
are simple projections, so they are trivially 
measurable.
The function $ g_{\mathit{assign2}}$ is a constant function, so it is also measurable.
Function $ g_{\mathit{assign4}}$ is a composition of a function returning the tuple $(x, \sigma, \sigma(E))$ from the configuration,
which can easily be shown measurable (projections are measurable, the function extracting
$E$ from $x:=E$ can be shown continuous and substitution $\sigma(E)$ is measurable
by assumption), and the state update function, which is measurable
by Lemma~\ref{lemma:subst-measurable}.
Function $ g_{\mathit{assign6}}$ is a composition of a projection (returning the sixth component $n$ from a tuple)
and a function adding $1$ to a number, which is continuous and measurable. 

Hence, $ g_{\mathit{assign}}$ is measurable, as all its components are measurable.
\qed \end{proof}

\subsubsection{Measurability of (draw)}

Let us define:
\begin{eqnarray*}
% g_{assign} : ??? \\
 g_{\mathit{draw}}(\configtuple{\theta}{x :\approx U}{K}{\sigma}{\theta_K}{n}{w}) &=&
\configtuple{\pi_R(\theta)}{\modownarrow}{K}{\sigma[x \mapsto \pi_U(\pi_L(\theta))]}{\theta_K}{n+1}{w} \\
&=&(g_{\mathit{draw1}}\configtuple{\theta}{x := E}{K}{\sigma}{\theta_K}{n}{w},\\
&& g_{\mathit{draw2}}\configtuple{\theta}{x := E}{K}{\sigma}{\theta_K}{n}{w}, \\
&&\dots,\\
&&g_{\mathit{draw7}}\configtuple{\theta}{x := E}{K}{\sigma}{\theta_K}{n}{w})
\end{eqnarray*}
\noindent where:
\begin{eqnarray*}
 g_{\mathit{draw1}}\configtuple{\theta}{x := E}{K}{\sigma}{\theta_K}{n}{w} &=& \pi_R(\theta) \\
 g_{\mathit{draw2}}\configtuple{\theta}{x := E}{K}{\sigma}{\theta_K}{n}{w} &=& \modownarrow \\
%&\dots&\\
g_{\mathit{draw3}}\configtuple{\theta}{x := E}{K}{\sigma}{\theta_K}{n}{w} &=& K \\
 g_{\mathit{draw4}}\configtuple{\theta}{x := E}{K}{\sigma}{\theta_K}{n}{w} &=& \sigma[x \mapsto \pi_U(\pi_L(\theta))] \\
%&\dots&\\
g_{\mathit{draw3}}\configtuple{\theta}{x := E}{K}{\sigma}{\theta_K}{n}{w} &=& \theta_K \\
 g_{\mathit{draw6}}\configtuple{\theta}{x := E}{K}{\sigma}{\theta_K}{n}{w} &=& n+1\\
%&\dots&
g_{\mathit{draw7}}\configtuple{\theta}{x := E}{K}{\sigma}{\theta_K}{n}{w} &=& w\\
\end{eqnarray*}

\begin{lemma}
$g_{\mathit{assign}}$ is measurable.
\end{lemma}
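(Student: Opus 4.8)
The plan is to prove measurability of $g_{\mathit{assign}}$ by observing that its codomain is a finite product of separable metric spaces, so it suffices to establish measurability of each of the seven coordinate maps $g_{\mathit{assign1}}, \dots, g_{\mathit{assign7}}$. By the results relating Manhattan products of separable metric spaces to products of Borel $\sigma$-algebras (Corollary~\ref{corr:cont-fun-prod-meas} together with the surrounding lemmas of Appendix~\ref{app:basics-measure-theory}), the product Borel $\sigma$-algebra on the configuration space is generated by the coordinate projections; hence a map into this product is measurable precisely when all of its coordinates are. I would therefore reduce the whole claim to seven independent subclaims.

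First I would dispatch the routine coordinates. The maps $g_{\mathit{assign1}}$, $g_{\mathit{assign3}}$, $g_{\mathit{assign5}}$ and $g_{\mathit{assign7}}$ are the projections returning $\theta$, $K$, $\theta_K$ and $w$ unchanged, and projections are measurable. The map $g_{\mathit{assign2}}$ is the constant function $\lambda \kappa . \modownarrow$, which is trivially measurable. The map $g_{\mathit{assign6}}$ is the composition of the projection onto the step counter with the successor map $n \mapsto n+1$; since the step counter ranges over $\mathbb{Z}_{+}$ with the discrete $\sigma$-algebra, every map on it is measurable, so this coordinate is measurable as a composition of measurable functions.

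The only substantive coordinate is $g_{\mathit{assign4}}$, which sends the configuration to the updated state $\sigma[x \mapsto \sigma(E)]$; this is where I expect the real work to lie. I would factor it as $g_{\mathit{assign4}} = h \circ F$, where $F$ first extracts the triple $(x, \sigma, \sigma(E))$ from the configuration and $h(x, \sigma, V) = \sigma[x \mapsto V]$ is the state-update map. Measurability of $F$ decomposes coordinatewise: the variable name $x$ is read off from the statement $x := E$, which is continuous (hence measurable) with respect to the syntactic metric $d_C$ on statements; the state $\sigma$ is a projection; and $\sigma(E)$ is measurable as a function of the configuration by the standing assumption that $\lambda(\sigma, E). \sigma(E)$ is measurable, precomposed with the (continuous) extraction of $E$ from $x := E$. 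Finally, $h$ is measurable by Lemma~\ref{lemma:subst-measurable}, so $g_{\mathit{assign4}}$ is measurable as a composition of measurable maps.

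Assembling the seven coordinates yields measurability of $g_{\mathit{assign}}$. The main obstacle is the $g_{\mathit{assign4}}$ coordinate: one must verify that the syntactic extraction of the variable $x$ and of the expression $E$ from the statement $x := E$ is continuous under $d_C$, and that the expression-evaluation assumption is correctly threaded through the configuration projections before Lemma~\ref{lemma:subst-measurable} is invoked. Everything else reduces to bookkeeping with projections, constants and the successor map.
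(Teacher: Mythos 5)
Your proof is correct and follows essentially the same route as the paper's: a coordinatewise decomposition into the seven component maps, with the projections, the constant $\modownarrow$, and the successor on the step counter dispatched trivially, and the only substantive coordinate $g_{\mathit{assign4}}$ handled exactly as in the paper --- factoring it through the extraction of $(x,\sigma,\sigma(E))$ (using continuity of the syntactic extraction under $d_C$ and the measurability assumption on expression evaluation) followed by the state-update map of Lemma~\ref{lemma:subst-measurable}. The only cosmetic difference is that you make explicit the appeal to Corollary~\ref{corr:cont-fun-prod-meas} to justify reducing measurability into the product space to measurability of the coordinates, which the paper leaves implicit.
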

\begin{proof}
We only need to show the measurability of $ g_{\mathit{draw1}}$ and $ g_{\mathit{draw4}}$, as the other functions are identical
to the ones used in the definition of $g_{\mathit{assign}}$.

The function $ g_{\mathit{draw1}}$ is a composition of the projection returning the first component $\theta$ of the
configuration, and the function $\pi_R$, which is measurable by the axiomatisation of the entropy space, so it is measurable.

Function $ g_{\mathit{draw4}}$ is measurable by the same argument as $ g_{\mathit{assign4}}$, except that the measurable evaluation
$\sigma(E)$ is replaced by $\pi_U(\pi_L(\theta))$, which as a composition of two measurable (by assumption) functions 
and the measurable projection returning $\theta$ is also measurable.
\qed \end{proof}

\subsubsection{Measurability of (seq) and (pop)}

Define:
\begin{eqnarray*}
% g_{assign} : ??? \\
 g_{\mathit{seq}}(\configtuple{\theta}{C_1;C_2}{K}{\sigma}{\theta_K}{n}{w}) &=&
\configtuple{\pi_L(\theta)}{C_1}{C_2 \mathrel{::} K}{\sigma}{\pi_R(\theta) \mathrel{::} \theta_K}{n+1}{w} \\
&=&(g_{\mathit{seq1}}\configtuple{\theta}{x := E}{K}{\sigma}{\theta_K}{n}{w},\\
&& g_{\mathit{seq2}}\configtuple{\theta}{x := E}{K}{\sigma}{\theta_K}{n}{w}, \\
&&\dots,\\
&&g_{\mathit{seq7}}\configtuple{\theta}{x := E}{K}{\sigma}{\theta_K}{n}{w})
\end{eqnarray*}
\noindent where:
\begin{eqnarray*}
 g_{\mathit{seq1}}\configtuple{\theta}{C_1;C_2}{K}{\sigma}{\theta_K}{n}{w} &=& \pi_L(\theta) \\
 g_{\mathit{seq2}}\configtuple{\theta}{C_1;C_2}{K}{\sigma}{\theta_K}{n}{w} &=& C_1 \\
 g_{\mathit{seq3}}\configtuple{\theta}{C_1;C_2}{K}{\sigma}{\theta_K}{n}{w} &=& C_2 \mathrel{::} K  \\
%&\dots&\\
 g_{\mathit{seq4}}\configtuple{\theta}{C_1;C_2}{K}{\sigma}{\theta_K}{n}{w} &=& \sigma  \\
 g_{\mathit{seq5}}\configtuple{\theta}{C_1;C_2}{K}{\sigma}{\theta_K}{n}{w} &=&\pi_R(\theta) \mathrel{::} \theta_K \\
%&\dots&\\
 g_{\mathit{seq6}}\configtuple{\theta}{C_1;C_2}{K}{\sigma}{\theta_K}{n}{w} &=& n+1\\
 g_{\mathit{seq7}}\configtuple{\theta}{C_1;C_2}{K}{\sigma}{\theta_K}{n}{w} &=& w\\
%&\dots&
\end{eqnarray*}

\begin{lemma}
$g_{\mathit{seq}}$ is measurable.
\end{lemma}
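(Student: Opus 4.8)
The plan is to prove measurability of $g_{\mathit{seq}}$ exactly as in the (assign) and (draw) cases: show that each of the seven coordinate maps $g_{\mathit{seq1}},\dots,g_{\mathit{seq7}}$ is measurable, and then appeal to the standard fact that a map into a finite product of measurable spaces is measurable precisely when all of its coordinate maps are. Since the configuration space is a finite product of the separable metric spaces on entropies, statements (with $d_C$), continuations (with $d_K$), states (with $d_\sigma$), step counts and weights, Corollary~\ref{corr:cont-fun-prod-meas} licenses this reduction, and the domain of $g_{\mathit{seq}}$ is the measurable set of configurations whose current statement has the form $C_1;C_2$ with $C_1$ not itself a sequence.

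First I would dispatch the four routine coordinates. The maps $g_{\mathit{seq4}}$ and $g_{\mathit{seq7}}$, returning $\sigma$ and $w$, are plain projections onto a coordinate of the configuration tuple, hence trivially measurable; $g_{\mathit{seq6}}$ is the projection onto the step count $n$ followed by the successor map on $\mathbb{Z}_{+}$, which is continuous (hence measurable) for the discrete metric. For the two coordinates touching the entropy, $g_{\mathit{seq1}}$ is the composition of the projection returning $\theta$ with $\pi_L$, which is measurable by the axioms of the entropy space, and $g_{\mathit{seq5}}$ applies the pairing function $(\mathrel{::})$ to $\pi_R(\theta)$ and $\theta_K$; since $(\mathrel{::})$, $\pi_R$ and the projections are all measurable and a pairing of measurable maps into a product is again measurable, $g_{\mathit{seq5}}$ is measurable.

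The genuinely new ingredients are $g_{\mathit{seq2}}$ and $g_{\mathit{seq3}}$, which decompose the statement $C = C_1;C_2$. Here I would use the defining additive properties $d_C(C_1;C_2,\, C_1';C_2') = d_C(C_1,C_1') + d_C(C_2,C_2')$ and $d_K(C\mathrel{::}K,\, C'\mathrel{::}K') = d_C(C,C') + d_K(K,K')$. On the domain of $g_{\mathit{seq}}$ the split into the leftmost atomic statement $C_1$ and the remainder $C_2$ is unique, and the additive structure of $d_C$ makes the maps $C_1;C_2 \mapsto C_1$ and $C_1;C_2 \mapsto C_2$ non-expansive, hence continuous; by Corollary~\ref{corr:cont-fun-meas} they are measurable. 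The map $g_{\mathit{seq2}}$ is the first of these, while $g_{\mathit{seq3}}$ composes the second with the cons operation $C_2 \mapsto C_2 \mathrel{::} K$, which is continuous by the additive property of $d_K$, so both are measurable.

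Combining the seven measurable coordinate maps then yields measurability of $g_{\mathit{seq}}$, just as in Lemma~\ref{lemma:assign-measurable}. I expect the statement-decomposition maps $g_{\mathit{seq2}}$ and $g_{\mathit{seq3}}$ to be the main obstacle: one must pin down the metrics $d_C$ and $d_K$ on the syntactic spaces carefully, so that the parse of $C_1;C_2$ into its leftmost atomic statement and remainder is both well defined and continuous. The key simplifying observation is that two sequences of different syntactic shape lie at infinite distance and hence in distinct connected components of the space, on each of which the decomposition is merely a continuous coordinate projection; this is what makes the continuity argument, and therefore measurability, go through.
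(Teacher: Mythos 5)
Your proposal is correct and follows essentially the same route as the paper's proof: coordinate-wise measurability, with $g_{\mathit{seq1}}$ and $g_{\mathit{seq5}}$ handled via the entropy-space axioms for $\pi_L$, $\pi_R$ and $(\mathrel{::})$, and $g_{\mathit{seq2}}$, $g_{\mathit{seq3}}$ handled via the additive properties of the syntactic metrics $d_C$ and $d_K$. Your explicit non-expansiveness argument for the statement-decomposition maps merely spells out what the paper leaves as "easy to show \dots as compositions of projections and continuous functions," so there is no substantive difference.
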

\begin{proof}
The function $ g_{\mathit{seq1}}$ is measurable as a composition of projection and a function measurable
by assumption. The metrics $d_C$ and $d_K$ on statements and continuations 
(whose formal definitions are omitted) satisfy 
$d_C(C_1;C_2, C_1';C_2') = d_C(C_1; C_1') + d_C(C_2; C_2')$ and 
$d_K(C\mathrel{::}K, C'\mathrel{::}K') = d_C(C; C') + d_K(K, K')$, 
which makes it easy to show that  $g_{\mathit{seq2}}$ and $g_{\mathit{seq3}}$ are measurable,
as compositions of projections and continuous functions. Meanwhile, $g_{\mathit{seq5}}$ is composed from measurable projections and the 
functions $\pi_R$ and $(\mathrel{::})$, measurable by assumption, so it is measurable.
\qed \end{proof}

The proof of measurability of  (pop) is analogous.

\subsection{Measurability of $\mathbf{O}_C^{\sigma}(\cdot) $ and $\mathbf{SC}_C^{\sigma}(\cdot, n) $}
%From this point the rest of the proofs is similar to \citep{SzymczakPhD}, section E.3.

Once we have proven the measurability of state updates, the proof of Lemma~\ref{lemma:o-sc-measurable} (measurability
of $\mathbf{O}_C^{\sigma}(\cdot) $) is analogous to the proof of Lemma~92 in \cite{DBLP:conf/icfp/BorgstromLGS16}.

The proof of measurability of $\mathbf{SC}_C^{\sigma}(\cdot, n)$ is even simpler---for each fixed $n$, we can represent
$\mathbf{SC}_C^{\sigma}(\cdot, n)$ as an $n$-fold composition of $g$, followed by a projection returning
the weight $w$ from the configuration. The projection is obviously continuous, and so measurable. Since a composition of
measurable functions is measurable, this shows that $\mathbf{SC}_C^{\sigma}(\cdot, n)$ is measurable.

%\begin{lemma} \label{lemma:o-sc-measurable}
%For all $C$ and $\sigma \in \fullstatespace$, $\mathbf{O}_C^{\sigma}(\cdot) $ is $\mathcal{S} / \fullstatespace$ measurable.
%\end{lemma}
%\begin{proof}
%Analogous to the proof of Lemma~92 in \cite{DBLP:conf/icfp/BorgstromLGS16}, more detailed discussion in Appendix~\ref{app:proofs-meas}.
% \qed \end{proof}

%\begin{restate}{Lemma~\ref{lemma:o-measurable}}
%For all $C$ and $\sigma \in \fullstatespace$, $\mathbf{O}_C^{\sigma}(\cdot) $ is $\mathcal{S} / \fullstatespace$ measurable.
%\end{restate}
%\begin{proof}
%Similar to \citep{SzymczakPhD}.
%\qed \end{proof}

%\begin{restate}{Lemma~\ref{lemma:scn-measurable}}
%For all $C$ and $\sigma \in \fullstatespace$ and $n$, $\mathbf{SC}_C^{\sigma}(\cdot,n) $ is $\mathcal{S} / \mathcal{R}$ measurable.
%\end{restate}
%\begin{proof}
%Similar to \citep{SzymczakPhD}.
%%Here the difference is that we are only considering a fixed number of steps.
%\qed \end{proof}

\bibliography{bibl}
\end{document}